\DeclareMathAlphabet{\mathbbe}{U}{bbold}{m}{n}
\def\bbSigma{\ifmmode\mbox\expandafter\else\fi{\scalebox{1.2}[1]{$\mathbbe{\Sigma}$}}}
\newcommand{\hCc}{\widehat\Cc}
\newcommand{\hZc}{\widehat\Zc}
\tikzset{snake it/.style={decorate, decoration=snake}}
\newcommand{\sbullet}{\hbox{\fontfamily{lmr}\fontsize{.4\dimexpr(\f@size pt)}{0}\selectfont\textbullet}}
\DeclarePairedDelimiterX\setc[2]{\{}{\}}{\,#1 \;\delimsize\vert\; #2\,}
\numberwithin{equation}{section}
\numberwithin{table}{section}
\newcommand{\lra}{\longrightarrow}
\newcommand{\lmt}{\longmapsto}
\newcommand{\One}{\mathbf{1}}
\def\tdim{\mathrm{D}}
\newcommand{\myitem}[1]{
\item[(#1)]\protected@edef\@currentlabel{#1}
}
\newcommand{\myitemnb}[1]{
\item[#1]\protected@edef\@currentlabel{#1}
}
\newcommand{\Tr}{\mathrm{Tr}}
\newcommand{\be}{\begin{equation}}
\newcommand{\ee}{\end{equation}}
\newcommand{\eps}{{\epsilon}}
\newtheoremstyle{important-thm}
     {3pt}
     {3pt}
     {\slshape}
     {}
     {\bfseries}
     {.}
     {.5em}
     {}
\theoremstyle{important-thm}
\newtheorem{theorem}{Theorem}
\newtheorem{lemma}[theorem]{Lemma}
\newtheorem{proposition}[theorem]{Proposition}
\newtheorem{corollary}[theorem]{Corollary}
\theoremstyle{definition}
\newtheorem{remark}[theorem]{Remark}
\newtheorem{example}[theorem]{Example}
\newtheorem{definition}[theorem]{Definition}
 \numberwithin{equation}{section}
 \numberwithin{theorem}{section}
\renewcommand{\hat}[1]{\widehat{#1}}
\renewcommand{\tilde}[1]{\widetilde{#1}}
\DeclareMathOperator{\Hom}{Hom}
\DeclareMathOperator{\Rep}{Rep}
\DeclareMathOperator{\ev}{ev}
\DeclareMathOperator{\coev}{coev}
\DeclareMathOperator{\tev}{\tilde{ev}}
\DeclareMathOperator{\tcoev}{\tilde{coev}}
\DeclareMathOperator{\id}{id}
\DeclareMathOperator{\Bl}{Bl}
\DeclareMathOperator{\Corr}{Corr}
\newcommand\Bc            {\mathcal{B}}
\newcommand\Cb            {\mathbb{C}}
\newcommand\Cc            {\mathcal{C}}
\newcommand\Ccrev            {\mathcal{C}^\mathrm{rev}}
\newcommand\Cl            {C\hspace{-1.5pt}\ell}
\newcommand\Dc          {\mathcal{D}}
\newcommand\Hb            {\mathbb{H}}
\newcommand\Hc            {\mathcal{H}}
\newcommand\Ic            {\mathcal{I}}
\newcommand\Jc            {\mathcal{J}}
\newcommand\Mc            {\mathcal{M}}
\newcommand\Rb            {\mathbb{R}}
\newcommand\Pc            {\mathcal{P}}
\newcommand\Sc            {\mathcal{S}}
\newcommand\Tb            {\mathbb{T}}
\newcommand\Tc            {\mathcal{T}}
\newcommand\Vc            {\mathcal{V}}
\newcommand\Wc            {\mathcal{W}}
\newcommand\Zc            {\mathcal{Z}}
\newcommand\Zb            {\mathbb{Z}}
\newcommand\Bord{{\mathcal{B}\hspace{-.5pt}or\hspace{-1.0pt}d\hspace{1.5pt}}}
\newcommand\Vect{{\mathcal{V}\hspace{-.5pt}ect}}
\newcommand\SVect{{\mathcal{SV}\hspace{-.5pt}ect}}
\newcommand\SV{{\mathsf{SV}}}
\newcommand\Mod[1]{{~~(\mathrm{mod}\ #1 )}}
\newcommand\mycircled[1]{\raisebox{.5pt}{\textcircled{\raisebox{-.9pt} {#1}}}}
\g@addto@macro\bfseries{\boldmath}
\begin{document}

\thispagestyle{empty}
\def\thefootnote{\fnsymbol{footnote}}

\begin{center} \vskip 24mm
{\Large\bf Parity and Spin CFT with boundaries and defects}
\\[10mm] 
{\large 
Ingo Runkel\,${}^{1}$,
L\'or\'ant Szegedy\,${}^{2}$,
and
G\'erard M.\ T.\ Watts\,${}^{3}$\,\footnote{Emails: {\tt ingo.runkel@uni-hamburg.de}~,~{\tt lorant.szegedy@univie.ac.at}~,~{\tt gerard.watts@kcl.ac.uk}}
}
\\[5mm]
${}^1$ Fachbereich Mathematik, Universit\"at Hamburg,\\
Bundesstra\ss e 55, 20146 Hamburg, Germany
\\[5mm]
${}^2$ Faculty of Physics, University of Vienna,\\
Boltzmangasse 5, 1090 Wien, Austria
\\[5mm]
${}^3$ Department of Mathematics, King's College London,\\
Strand, London WC2R\;2LS, UK
\\[5mm]

\vskip 4mm
\end{center}

\begin{quote}{\bf Abstract}\\[1mm]
This paper is a follow-up to  \cite{Runkel:2020zgg} in which two-dimensional conformal field theories in the presence of spin structures are studied.
In the present paper we define four types of CFTs, distinguished by whether they need a spin structure or not in order to be well-defined, and whether their fields have parity or not. The cases of spin dependence without parity, and of parity without the need of a spin structure, have not, to our knowledge, been investigated in detail so far.

We analyse these theories by extending the description of CFT correlators via three-dimensional topological field theory developed in \cite{Fuchs:2002tft} to include parity and spin. 
In each of the four cases, the defining data are a special Frobenius algebra $F$ in a suitable ribbon fusion category, such that the Nakayama automorphism of $F$ is the identity (oriented case) or squares to the identity (spin case). We use the TFT to define correlators in terms of $F$ and we show that these satisfy the relevant factorisation and single-valuedness conditions.

We allow for
world sheets with boundaries and topological line defects, and we specify the categories of boundary labels and the fusion categories of line defect labels for each of the four types.

The construction can be understood in terms of topological line defects as gauging a possibly non-invertible symmetry. We analyse the case of a $\Zb_2$-symmetry in some detail and provide examples of all four types of CFT, with Bershadsky-Polyakov models illustrating the two new types.
\end{quote}
\setcounter{footnote}{0}
\def\thefootnote{\arabic{footnote}}

\newpage

{\small
\setcounter{tocdepth}{2}
\tableofcontents
}

\newpage

\section{Introduction and summary}
\label{sec:intro}

The earliest and most thoroughly investigated two-dimensional conformal field theories (CFTs) are those where the world sheets are just complex one-dimensional manifolds, and where no further geometric structure or parity grading is needed to define the theory. 
Such CFTs were the focus of the foundational paper \cite{Belavin:1984vu} on rational conformal field theory, and of early classification results \cite{Cappelli:1986hf}. These theories are algebraically well-understood 
via their relation to three-dimensional topological field theory \cite{Fuchs:2002tft}.

The second most studied kind of CFTs are those which require a spin structure to be well-defined, and where the state spaces are $\Zb_2$-graded by ``fermion number''. The most notable example is given by massless free fermions. But already for this class of theories a systematic bootstrap formulation in terms of crossing relations for operator product expansion coefficients was not available until recently \cite{Runkel:2020zgg}. 
In that paper, correlators of spin CFTs are described via topological line defects in a ``parity enhancement'' of an underlying bosonic theory \cite{Novak:2015ela}. 
A related approach to fermionic CFTs is via gauging suitable $\Zb_2$-symmetries \cite{Karch:2019lnn,Hsieh:2020uwb,Gaiotto:2020iye} giving a continuum version of the Jordan-Wigner transformation. 
Minimal model CFTs with half-integer spin fields have also been investigated earlier in \cite{Petkova:1988cy,Furlan:1989ra}, but without reference to spin structures.

In fact, the paper \cite{Runkel:2020zgg} and the present follow-up paper grew out of the desire to develop a systematic approach to consistency conditions and their solutions for these ``fermionic CFTs''. However, the term ``fermionic'' can stand for different properties of the theory: it can refer to the presence of half-integer spin fields, i.e.\ fields that transform under the double cover of the rotation group $SO(2)$, or it can refer to the statistics, i.e.\ to parity signs that arise when reordering the fields. 

\subsection{Four types of conformal field theory}

To illustrate this, let $z(t)$, $w(t)$ be two paths in the complex plane as follows:
\begin{figure}[htb]
$$
    \includegraphics[scale=1.0]{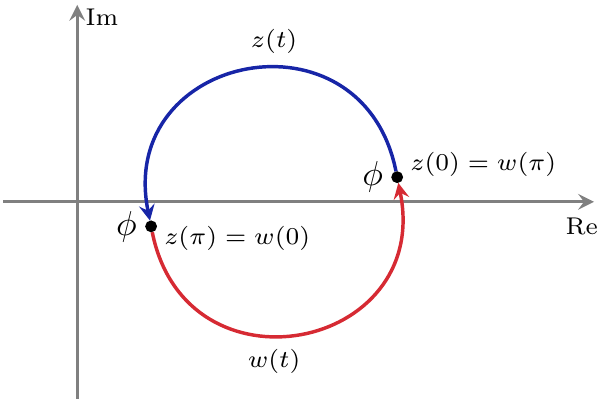}
$$
\caption{The choice of path to interchange the positions of two identical fields}
\label{fig:dance}
\end{figure}
consider a correlator where two copies of a field $\phi$ are inserted, one at $z(t)$ and one at $w(t)$, plus possibly other fields whose positions remain fixed. Continue the correlator from $t=0$ to $t=\pi$, so that the two copies of $\phi$ exchange places,
as in Figure~\ref{fig:dance}. In the theories we study in this paper there are three sources of signs which can arise in this process,
\begin{equation}\label{eq:intro:field_exchange}
\big\langle \phi(z(t)) \phi(w(t)) \cdots \big\rangle\big|_{t=0} ~=~ (-1)^{M+P+S} \, \big\langle \phi(z(t)) \phi(w(t)) \cdots \big\rangle\big|_{t=\pi}
\end{equation}
Here, $M$ is the contribution from analytic continuation
controlled by the conformal weights of the fields, $P$ gives the contribution of parity, and $S$ arises from the effect the monodromy has on the spin structure. Single-valuedness requires $(-1)^{M+P+S} = 1$. 

It turns out that parity and spin-structure dependence can independently be present or not, leading to four types of CFT:
\begin{equation}\label{eq:intro_type-table}
{\renewcommand{\arraystretch}{1.4}
	\begin{tabular}{c|c|c}
		type of CFT & no parity & parity \\
		\hline
		oriented  & \mycircled1 & \mycircled2
		\\
		\hline
		spin & \mycircled3 & \mycircled4
	\end{tabular}}
\end{equation}
Here, ``oriented'' refers to the fact that no spin structure is needed to obtain well-defined correlators, just the orientation induced by
the complex structure on the world sheet. 
The CFTs mentioned in the first paragraph are of type~\mycircled1, in this case we have $(-1)^P=1=(-1)^S$ for all fields. Those in the second paragraph are of type~\mycircled4, where $(-1)^P$ and $(-1)^S$ take both values. 
In fact, the situation is a little more subtle than \eqref{eq:intro_type-table}, as we elaborate on in sections \ref{sec:intro-or} and \ref{sec:intro-spin}.

In this paper we present consistency conditions on correlators -- namely compatibility with gluing and monodromy-freeness with or without spin structure or parity -- and a    
construction of solutions to these conditions for all four types of CFTs. Our construction includes theories with boundaries and topological line defects. 
We do this by extending the approach via 3d\,TFTs which was developed for theories of type \mycircled1 in \cite{Fuchs:2002tft} to the remaining three types.
That is, we express the consistency conditions and solutions in terms of a 3d\,TFT.
For type \mycircled4, the present paper provides the framework to prove the claims in the prequel \cite{Runkel:2020zgg}; the proofs themselves will be presented in a further part of this series.

Consistency conditions and their solutions for CFTs of types \mycircled2 and \mycircled3 have, to the best of our knowledge, not been systematically studied in the literature so far. 

\subsection{Oriented parity CFT via topological field theory}\label{sec:intro-or}

Let us describe our approach in more detail. The starting point is a rational vertex operator algebra $\Vc$ and the modular fusion category $\Cc$ formed by its representations. The category $\Cc$ defines the Reshetikhin-Turaev TFT used in 
\cite{Felder:1999mq,Fuchs:2002tft,Fuchs:2004tft4,Fjelstad:2005ua,Fuchs:2012def} 
to describe type \mycircled1 theories. 

To treat theories of type \mycircled2, we slightly extend this TFT to include parity. Namely, let $\hCc$ be the product 
\begin{equation}
\hCc \,:=\, \Cc \boxtimes \SVect~, 
\end{equation}
where $\SVect$ is the category of finite-dimensional super-vector spaces.
Each simple object of $\Cc$ comes in two variants in $\hCc$, one parity even, and one parity odd. For two parity-odd objects, the braiding acquires an additional minus sign. Note that $\hCc$ itself is not a modular fusion category as it has symmetric centre $\SVect$.

Let $\Bord_3(\hat{\Cc})$ denote the category of three-dimensional bordisms with embedded $\hCc$-decorated ribbon graphs.
We consider the TFT
\begin{equation}\label{eq:intro:Zhat-def}
\hZc_{\Cc} \colon \Bord_3(\hat{\Cc})\lra\SVect\,,
\end{equation}
which is basically the product of the Reshetikhin-Turaev TFT for $\Cc$ and the trivial $\SVect$-valued TFT, see Section~\ref{sec:3d-super-TFT} for details. 

Consider a surface $\Sigma$ with marked points $p_1,\dots,p_n$ labelled by $X_1,\dots,X_n \in \hCc$. To this, the TFT $\hZc_{\Cc}$ assigns a vector space, 
which can be interpreted as the space of conformal blocks for the VOA $\Vc$ on $\Sigma$, but where the $\Vc$-representations are now of even or odd parity. This affects the monodromy-behaviour of conformal blocks by including parity signs.

From here on, the construction of CFT correlators is the same as in \cite{Fuchs:2002tft}, but we go through it in some detail in Section~\ref{sec:oriented-CFT-from-TFT} to stress the effect of including parity.

Let $\Sigma$ be an oriented world sheet, possibly with boundaries and defect lines. 
Bulk insertions are labelled by a tuple $(U,\bar V, \delta, \phi)$ and boundary fields by $(W,\nu,\psi)$, where $U,\bar V \in \Cc$ (not $\hCc$) give the holomorphic and antiholomorphic representation the bulk field transforms in, $W\in \Cc$ is the representation for the boundary field,  $\delta,\nu \in \{\pm 1\}$ describe the parity of the fields, and $\phi,\psi$ take values in appropriate multiplicity spaces which depend on the theory under consideration.

From $\Sigma$ one constructs the double $\widetilde{\Sigma}=(\Sigma\sqcup \Sigma^\mathrm{rev}) / \sim$, where $\Sigma^\mathrm{rev}$ is an orientation reversed copy of $\Sigma$ and $\sim$ identifies boundary points of $\Sigma$ and $\Sigma^\mathrm{rev}$. For example, if $\Sigma$ is a disc, then $\widetilde\Sigma$ is a sphere. A bulk insertion on $\Sigma$ splits into two points on $\widetilde\Sigma$, one on $\Sigma$ labelled by $U$ and one on $\Sigma^\mathrm{rev}$ labelled by $\bar V$. Boundary insertions result in a single marked point on $\widetilde\Sigma$ labelled by $W$. The first key ingredient of the TFT construction is:

\begin{center}
\fbox{\begin{minipage}{.9\textwidth}
	The correlator for a world sheet $\Sigma$ of an oriented parity CFT is an element of the space of conformal blocks on the double $\widetilde\Sigma$, i.e.\ in $\Bl(\Sigma) := \hZc_{\Cc}(\widetilde\Sigma)$.
\end{minipage}}
\end{center}

Specifying an oriented parity CFT now amounts to, firstly, giving the multiplicity spaces for bulk fields in terms of $U,\bar V$ and the parity $\eps$ (which may be zero-dimensional), and ditto for boundary fields. This 
specifies the field content of the theory. Secondly, one has to give a collection of vectors $\mathrm{Corr}^\mathrm{or}(\Sigma) \in \Bl(\Sigma)$. The (bi)linear combinations of conformal blocks described by these vectors are then the correlators of the oriented parity CFT.

The collection $\{\Corr^\mathrm{or}(\Sigma)\}_{\Sigma}$ has to satisfy two consistency conditions: it has to be
monodromy free, which amounts to mapping class group invariance, and it has to be compatible with gluing of world sheets, see Section~\ref{sec:parityCFT-consistency} for details. These consistency conditions can be expressed via the TFT $\hZc_{\Cc}$.

The second key point of the TFT construction is that consistent collections of correlators for CFTs of type \mycircled2 can be constructed from a suitable algebraic input:

\begin{center}
\fbox{\begin{minipage}{.9\textwidth}
	From a symmetric special Frobenius algebra $B \in \hCc$ one can construct a consistent collection of correlators $\Corr^\mathrm{or}_B(\Sigma) \in \Bl(\Sigma)$. Boundaries of $\Sigma$ are labelled by $B$-modules and defect lines by $B$-$B$-bimodules. 
\end{minipage}}
\end{center}

The correlators $\Corr^\mathrm{or}_B(\Sigma)$ are described via the TFT $\hZc_{\Cc}$ as follows: From $\Sigma$ one obtains a bordism $M_\Sigma \colon  \emptyset \lra \widetilde\Sigma$ which contains a $\hCc$-decorated ribbon graph defined in terms of $B$ and the field insertions, boundaries and line defects on $\Sigma$. 
As a 3-manifold, $M_\Sigma = \Sigma \times [-1,1] /\sim$, where $(z,t) \sim (z,-t)$ for all $z \in \partial\Sigma$, $t \in [-1,1]$. For example, if $\Sigma$ is a disc, then $M_\Sigma$ is a 3-ball.
One then sets
\be
\Corr^\mathrm{or}_B(\Sigma) := \hZc_\Cc(M_\Sigma) ~\in~ \Bl(\Sigma)  ~.
\ee
If in this construction one takes $B \in \Cc \subset \hCc$, i.e.\ $B$ is purely parity even, one recovers precisely the construction in \cite{Fuchs:2002tft} of correlators of oriented CFTs without parity, which is type \mycircled1 in the above table.

A more intuitive way to understand this construction is to think of $M_{\Sigma}$ as a fattening of the world sheet $\Sigma$, together with a surface defect inserted in place of the embedded copy of $\Sigma$ in $M_\Sigma$.
The surface defect determines how the holomorphic and antiholomorphic fields combine to form a consistent collection of correlators. This interpretation of the construction in \cite{Fuchs:2002tft} was given in \cite{KS}, and a detailed study of surface defects in Reshetikhin-Turaev TFTs can be found in \cite{FSV,CRS2}. We will, however, not elaborate more on this point of view in the present paper.

Our first main result is (Theorems~\ref{thm:parityCFT-transport} and~\ref{thm:paritCFT-glue}):

\begin{theorem}\label{thm:intro-parity}
Let $B \in \hCc$ be a symmetric special Frobenius algebra. Then the collection of correlators $\{\Corr^\mathrm{or}_B(\Sigma)\}_{\Sigma}$, where $\Sigma$ runs over oriented world sheets with boundaries and defects, is monodromy free and compatible with gluing.
\end{theorem}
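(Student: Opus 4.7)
The plan is to adapt the strategy of \cite{Fuchs:2002tft,Fjelstad:2005ua} for type \mycircled1 correlators, taking advantage of the fact that $\hCc = \Cc \boxtimes \SVect$ is still a ribbon fusion category and that $\hZc_\Cc$ is built from the Reshetikhin-Turaev TFT for $\Cc$ combined with a trivial $\SVect$-valued TFT; consequently, every $\hCc$-decorated ribbon graph identity reduces to an identity in $\Cc$ dressed with $\SVect$-signs. The algebraic input consists exclusively of the hypotheses on $B$ (unit, product, coproduct, Frobenius compatibility, symmetry and specialness), which yield the standard local moves for $B$-ribbons inside $M_\Sigma$ as in the purely bosonic case.

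For \emph{monodromy freeness}, I would argue that any element $f$ of the mapping class group $\Gamma(\Sigma)$ lifts, via the presentation $M_\Sigma \cong \Sigma \times [-1,1]/{\sim}$, to a self-homeomorphism $\tilde f = f \times \id$ of $M_\Sigma$. Its restriction to $\partial M_\Sigma = \widetilde\Sigma$ realises the action of $\Gamma(\Sigma)$ on $\Bl(\Sigma) = \hZc_\Cc(\widetilde\Sigma)$ through the doubling map. Functoriality of $\hZc_\Cc$ on $\Bord_3(\hCc)$ then gives $\hZc_\Cc(\tilde f).\hZc_\Cc(M_\Sigma) = \hZc_\Cc(M_\Sigma \text{ with transported ribbon graph})$, and one shows that the transported graph is equivalent to the original under the local algebraic moves listed above (essentially the Pachner/handle-slide manipulations used in \cite{Fuchs:2002tft}, now carried out inside $\hCc$). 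The invariance of the correlator follows.

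For \emph{gluing compatibility}, suppose $\Sigma$ is obtained from $\Sigma_1 \sqcup \Sigma_2$ by identifying two circles or two boundary intervals. Then $\widetilde\Sigma$ is obtained from $\widetilde\Sigma_1 \sqcup \widetilde\Sigma_2$ by identifying two pairs of circles, and $M_\Sigma$ arises from $M_{\Sigma_1} \sqcup M_{\Sigma_2}$ by gluing a neighbourhood of the cut. Along the glued region the ribbon graph picks up a short segment of $B$-ribbon. Writing $M_\Sigma = M_{\Sigma_2} \circ M_{\Sigma_1}$ and applying the gluing axiom of $\hZc_\Cc$ expresses $\hZc_\Cc(M_\Sigma)$ as a contraction of $\hZc_\Cc(M_{\Sigma_i})$ against the natural pairing on conformal blocks on the cutting sphere with $B$-insertions. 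Resolving the remaining $B$-ribbon by inserting a sum over simple objects (or, at boundary/defect cuts, over simple $B$-modules and $B$-$B$-bimodules), using specialness to remove the resulting bubbles and symmetry to reorganise the Frobenius couplings, produces the desired contraction against a basis of $\Bl(\Sigma_1)$ and its dual in $\Bl(\Sigma_2)$.

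The main obstacle is the careful bookkeeping of parity signs generated by the $\SVect$-factor of $\hCc$. Each time the transported or resolved ribbon graph moves an odd strand past another strand, a Koszul sign $(-1)^{|x||y|}$ appears, and these signs must match (i) the signs already present in the action of $\Gamma(\Sigma)$ on $\hZc_\Cc(\widetilde\Sigma)$ and (ii) the signs implicit in the $\SVect$-valued pairing used to glue conformal blocks. Because $\hZc_\Cc$ is \emph{defined} as a super-TFT by incorporating exactly these signs at the ribbon-graph level, the match is automatic, but verifying it requires upgrading each local lemma of \cite{Fuchs:2002tft} to its parity-aware analogue; once this is done, the purely topological content of the argument is a direct translation of the type \mycircled1 proof, and Theorems \ref{thm:parityCFT-transport} and \ref{thm:paritCFT-glue} follow.
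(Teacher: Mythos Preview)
Your argument for monodromy freeness is essentially the paper's: the transport bordism composed with the connecting manifold, $E_{\widetilde\gamma}\circ M_\Sigma$, is diffeomorphic to $M_{\Sigma'}$ up to the choice of $B$-ribbon network, and one concludes via the well-definedness of the correlator under such changes (Proposition~\ref{prop:orientedCFT-corr-welldef}). Your observation that the Koszul signs are absorbed into the very definition of $\hZc_\Cc$ and therefore need no separate bookkeeping is also correct and is precisely how the paper avoids redoing every lemma of \cite{Fuchs:2002tft} with explicit signs.

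The gluing argument, however, has a genuine gap. The connecting manifold $M_{\Sigma'}$ of the glued world sheet is \emph{not} a composition $M_{\Sigma_2}\circ M_{\Sigma_1}$: these are bordisms $\emptyset\to\widetilde\Sigma_i$ and share no boundary surface along which to compose. What the paper does instead (following \cite{Fjelstad:2005ua,Fuchs:2012def}) is a surgery argument for the closed case: one excises from $M_{\Sigma'}$ a solid-torus neighbourhood $\Tc_X$ of the gluing circle, writing $M_{\Sigma'}=M_{\Sigma',T^2}\circ\Tc_X$, and then expands $\hZc_\Cc(\Tc_X)$ in an explicit basis $\{v^{\alpha,\bar\beta}_{U,\bar V,\eps}\}$ of the torus state space $\hZc_\Cc(T^2_X)$. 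The non-trivial steps are (i) to show that an idempotent $\hZc_\Cc(\Pc_X)$ built from the $B$-action restricts the expansion to those basis vectors with $\alpha$ in the image of $Q^{(\mathrm{in})}_0$, i.e.\ in $\Mc^{(\mathrm{in})}_{0,\eps}(U,\bar V;X)$, and (ii) to identify each surviving term with $G_{U,\bar V,\eps}\circ M_{\Sigma(U,\bar V,\eps,\alpha,\bar\alpha)}$. Your ``resolve the $B$-ribbon into a sum over simple objects'' does not by itself yield this identity; moreover the sum runs over simple $U,\bar V\in\Cc$, a parity $\eps$, and a basis of the multiplicity space, not over simple $B$-modules or $B$-bimodules as you suggest. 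The open gluing is indeed closer in spirit to your sketch (it reduces to a decomposition of $\id_{M^*\otimes_B X\otimes_B N}$), but the closed case genuinely requires the solid-torus surgery.
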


One notable difference of oriented CFT with parity to that without parity is that the state spaces are now super-vector spaces, and that the modular invariant torus partition function is obtained as a super-trace. It is therefore a $\Zb$-bilinear combination of characters, rather than a $\Zb_{\ge 0}$-bilinear combination as in the parity-less case. See Section~\ref{sec:ex-torus-pf} and the example in Section~\ref{sec:BP-example} for more on this point.

\medskip

After this discussion we can be more detailed about the relation between CFTs of types \mycircled1 and \mycircled2. Firstly, one should distinguish whether one is working with just the bulk CFT without boundaries and defects, or with a CFT on surfaces with boundaries and defects. Let us denote these as type ${\mycircled{$i$}}_\text{bulk}$ and ${\mycircled{$i$}}_\text{bnd\&{}def}$, respectively, for $i=1,2$. We have the inclusions
\begin{equation}\label{eq:intro-12-inclusion}
{\mycircled1}_\text{bulk}
~\subset~
{\mycircled2}_\text{bulk}
\qquad \text{and} \qquad
{\mycircled1}_\text{bnd\&{}def}
~\subset~
{\mycircled2}_\text{bnd\&{}def} ~.
\end{equation}
For the first inclusion one considers a bulk theory without parity as a bulk theory with parity which happens to involve only even parity fields, and similarly for the second inclusion for all bulk, boundary and defect fields.
In terms of Theorem~\ref{thm:intro-parity} this happens if $B \in \hCc$ is actually contained in $\Cc$, and if one only considers $B$-modules and $B$-bimodules contained in $\Cc$, rather than in $\hCc$. 

We say that a CFT is \textit{strictly of type} \mycircled2 if it is not in the image of the inclusions in \eqref{eq:intro-12-inclusion}. In this sense, the table in \eqref{eq:intro_type-table} is about bulk theories without boundaries and defects which are strictly of the given type.

By restricting to surfaces without boundaries and defects, a 
CFT of type  ${\mycircled{$i$}}_\text{bnd\&{}def}$ becomes a theory of type ${\mycircled{$i$}}_\text{bulk}$. However, the latter may be in the image of the first inclusion in \eqref{eq:intro-12-inclusion}, i.e.\ a CFT of type ${\mycircled{2}}_\text{bnd\&{}def}$ can restrict to ${\mycircled{1}}_\text{bulk}$. This happens if $B \in \Cc$ but one considers $B$-modules and $B$-bimodules contained in $\hCc$.

Turning this around, one can always \textit{enhance a CFT of type} ${\mycircled{1}}_\text{bnd\&{}def}$ \textit{to a CFT of type} ${\mycircled{2}}_\text{bnd\&{}def}$ by adding parity to all boundaries and defects, resulting in a doubling of boundary and defect conditions. In terms of Theorem~\ref{thm:intro-parity}, this means that $B \in \Cc$, but rather than considering only modules and bimodules in $\Cc$ (which gives ${\mycircled{1}}_\text{bnd\&{}def}$) one allows modules and bimodules in $\hCc$ (which gives ${\mycircled{2}}_\text{bnd\&{}def}$). The enhanced theory is strictly of type ${\mycircled{2}}_\text{bnd\&{}def}$ (there are boundary and defect fields of either parity), but it restricts to a theory of type ${\mycircled{1}}_\text{bulk}$.

This will actually be an important intermediate step for example when construction the free fermion CFT from the Ising CFT later on.

\subsection{Spin CFT via oriented CFT with defects}\label{sec:intro-spin}

Here we discuss the situation where the world sheet is equipped with a spin structure, that is, with a double cover of the oriented frame bundle such that the fibre over each point of the world sheet is connected. This results in CFTs of types \mycircled3 and \mycircled4. We will only treat the case with parity in detail. The case without parity can be obtained from this by choosing purely even input data, analogous to the oriented case above.

One can define correlators for spin CFTs with parity directly in terms of the TFT $\hZc_{\Cc}$ in a construction that resembles that for oriented CFTs with parity. However, we find it convenient to take a slightly different route. Namely, we define correlators of spin CFTs in terms of correlators of oriented CFTs equipped with a specific choice of defect network. This is also the approach taken in \cite{Novak:2015ela,Runkel:2020zgg}.

Let $\bbSigma$ be a world sheet with spin structure, and possibly with boundaries and line defects, and denote by $\Sigma$ the underlying oriented 2-manifold. Our main technical tool will be the combinatorial model for spin structures developed in \cite{Novak:2014sp,Runkel:2018rs,Stern:2020oc} and reviewed  in Section~\ref{sec:spin-comb}.    
The spin structure on $\bbSigma$ is encoded by choosing a decomposition $P_{\Sigma}$ of $\Sigma$ into polygons, and assigning an index $s_e \in \{0,1\}$ to each edge $e$ of the decomposition, subject to an admissibility condition at each vertex.
These indices encode the spin structure, and we write $P_{\Sigma}(\bbSigma)$ for this combinatorial presentation of the spin structure of $\bbSigma$.

The spin structure on $\bbSigma$ is not required to extend to insertion points of bulk fields (it can always be extended to boundary insertions). We distinguish two types of bulk insertions: those where the spin structure does nonetheless extend are called Neveu-Schwarz (NS) insertions, and those where it does not extend are Ramond (R) insertions.

The input for the construction of correlators is a special Frobenius algebra $F \in \hCc$ whose Nakayama automorphism $N_F$ squares to the identity, 
\begin{equation}\label{eq:intro:N2=1}
N_F^{\,2} = \id_F~. 
\end{equation}
Detailed definitions are given in Section~\ref{sec:alg1}. Here we just note that the Nakayama automorphism measures the failure of the invariant pairing on $F$ to be symmetric. In particular, $F$ is symmetric iff $N_F=\id_F$. Symmetric algebras therefore satisfy \eqref{eq:intro:N2=1}, but it turns out that the resulting correlators are independent of the spin structure of $\bbSigma$.

In the spin CFT constructed from $F$, 
boundaries on $\bbSigma$ are labelled by $\Zb_2$-equivariant $F$-modules, and defects by $\Zb_2$-equivariant $F$-$F$-bimodules, see Section~\ref{sec:alg2}. The action of $\Zb_2$ is defined by twisting the $F$-action with $N_F$. 

From $\bbSigma$ we will now construct an oriented world sheet with boundaries and defects,
\be
W_\text{or}(\bbSigma;F)~,
\ee 
for the oriented parity CFT defined by the trivial symmetric special Frobenius algebra $B = \One \in \hCc$. Note that for $B=\One$, the boundary and defect labels are just objects in $\hCc$.
 
The underlying oriented 2-manifold of $W_\text{or}(\bbSigma;F)$ is $\Sigma$, the underlying manifold for $\bbSigma$.
Let $P_\Sigma(\bbSigma)$ be a combinatorial presentation of the spin structure of $\bbSigma$. Then $W_\text{or}(\bbSigma;F)$ is obtained by placing a defect graph dual to the polygonal decomposition on $\Sigma$, where the defects are labelled by $F$ and the vertices are built from products and coproducts of $F$ in a specific way. The $F$-defect lines that cross an edge of $P_\Sigma$ are equipped with a power of $N_F$ depending on the index $s_e$ of that edge. The precise description of this deliberately vague formulation is given in Sections~\ref{sec:spin-def-net-closed} and~\ref{sec:spin-bnd-def}.

We now define the correlator of the spin CFT as
\begin{equation}
\Corr^\mathrm{spin}_F(\bbSigma)
\,:=\,
\Corr^\mathrm{or}_{B=\One}(W_\text{or}(\bbSigma;F))~.
\end{equation}
Our second main result (Theorems~\ref{thm:spin-CFT-transport}~and~\ref{thm:paritCFT-glue-spin})
is:

\begin{theorem}
Let $F \in \hCc$ be a special Frobenius algebra with $N_F^{\,2}=\id_F$. Then the collection of correlators $\{ \Corr^\mathrm{spin}_F(\bbSigma) \}_{\bbSigma}$ is monodromy free and compatible with gluing.
\end{theorem}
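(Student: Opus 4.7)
The plan is to deduce both statements from the already-established Theorem~\ref{thm:intro-parity} applied to the trivial algebra $B=\One \in \hCc$, by reducing every structural property of the spin correlators to a structural property of the oriented world sheet $W_\mathrm{or}(\bbSigma;F)$ decorated by the $F$-defect network. Since by definition $\Corr^\mathrm{spin}_F(\bbSigma) = \Corr^\mathrm{or}_{\One}(W_\mathrm{or}(\bbSigma;F))$, and $\Corr^\mathrm{or}_{\One}$ is already known to be monodromy free and compatible with gluing, it suffices to check three things: (i) the assignment $\bbSigma \mapsto W_\mathrm{or}(\bbSigma;F)$ is well defined, i.e.\ independent of the combinatorial presentation $P_\Sigma(\bbSigma)$ used to encode the spin structure; (ii) the spin mapping class group action on $\bbSigma$ corresponds, via $W_\mathrm{or}$, to an equivalence of decorated oriented world sheets; (iii) gluing of spin world sheets lifts to gluing of the associated oriented decorated world sheets.

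For (i), I would argue locally. Two combinatorial presentations of the same spin structure are related by a finite sequence of local moves: Pachner-type moves on the underlying polygonal decomposition $P_\Sigma$, and moves that redistribute the edge indices $s_e$ while preserving the admissibility condition at each vertex, as established in the combinatorial model of \cite{Novak:2014sp,Runkel:2018rs,Stern:2020oc}. Invariance of the $F$-defect network under the decomposition moves is the standard lattice-TFT computation and uses exactly that $F$ is special Frobenius; this is the well-known input that allowed \cite{Fuchs:2002tft} to construct oriented correlators in the first place. Invariance under the index-redistribution moves is where $N_F^{\,2}=\id_F$ is essential: at each vertex the composition of Nakayama factors assigned to the incident edges must close up consistently, and the admissibility condition together with $N_F^{\,2}=\id_F$ is precisely what makes the resulting configurations equivalent as morphisms in $\hCc$. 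I expect this vertex bookkeeping to be the main obstacle, and I would carry it out by verifying a short, explicit list of local moves of $P_\Sigma$ with edge indices, exactly as in the closed spin case of \cite{Runkel:2018rs}, and then extending the verification to decomposition cells that carry boundaries, bulk NS or R insertions, and line defect labels. The boundary case is easier because the spin structure extends across boundary insertions; the R case needs special care to see that the non-extendability at a bulk point matches a non-trivial insertion of $N_F$ on the surrounding defect.

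For (ii), the spin mapping class group fits in a central extension over the oriented mapping class group, and its generators act on $\bbSigma$ by either (a) an ambient diffeomorphism of the underlying oriented surface $\Sigma$ followed by a change of combinatorial presentation, or (b) a change of combinatorial presentation alone (corresponding to monodromy around spin structures). Under $W_\mathrm{or}$, case (a) produces two oriented decorated world sheets related by an ambient diffeomorphism, so they yield the same oriented correlator by Theorem~\ref{thm:intro-parity}, while case (b) reduces to statement (i). Hence the spin correlator is invariant, giving Theorem~\ref{thm:spin-CFT-transport}.

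For (iii), gluing two spin world sheets along boundary circles amounts to identifying their spin structures on the glued circle, which in the combinatorial model is implemented by a prescribed combination of edge indices along the gluing curve. I would first choose the polygonal decompositions of $\bbSigma_1$ and $\bbSigma_2$ so that their boundary decompositions match, and then show by a local computation that the $F$-defect network produced by $W_\mathrm{or}$ on the glued spin surface equals the one obtained by gluing $W_\mathrm{or}(\bbSigma_1;F)$ and $W_\mathrm{or}(\bbSigma_2;F)$ as oriented decorated world sheets. Once this identification of defect networks is in place, gluing compatibility of $\Corr^\mathrm{spin}_F$ follows from the gluing compatibility of $\Corr^\mathrm{or}_{\One}$, establishing Theorem~\ref{thm:paritCFT-glue-spin}. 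The only nontrivial algebraic input here is, once again, the combination of the Frobenius relations and $N_F^{\,2}=\id_F$ used to check that the Nakayama powers accumulated on edges crossing the gluing circle combine correctly; this is the same identity that made statement (i) work, applied to the cylindrical neighbourhood of the gluing curve.
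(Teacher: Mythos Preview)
Your overall strategy is exactly the paper's: reduce to Theorem~\ref{thm:intro-parity} for $B=\One$ via the identity $\Corr^\mathrm{spin}_F(\bbSigma)=\Corr^\mathrm{or}_{\One}(W_\mathrm{or}(\bbSigma;F))$. Point~(i) is Proposition~\ref{prop:spinCFT-indep-T-choice}, whose proof (Appendix~\ref{app:invariance}) is precisely the local-move verification you outline; point~(ii) is Theorem~\ref{thm:spin-CFT-transport}, whose one-line proof is literally ``apply Theorem~\ref{thm:parityCFT-transport} to the path $W_\mathrm{or}(\gamma(t);F)$''.

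There is, however, one substantive step you gloss over in~(iii). When you apply the oriented gluing theorem for $B=\One$ to $W_\mathrm{or}(\bbSigma;F)$, the sum runs over a basis $\alpha$ of the \emph{full} Hom-space $\hCc(U\otimes\bar V\otimes K^\eps,X)$, whereas the spin gluing statement (Theorem~\ref{thm:paritCFT-glue-spin}) requires the sum to run only over a basis of the subspace $\Mc^{(\mathrm{in})}_{x,\eps}(U,\bar V;X)=\mathrm{im}\,Q^{(\mathrm{in})}_x$. Identifying defect networks across the gluing does not by itself collapse the larger sum to the smaller one. The paper's key observation (proof of Theorem~\ref{thm:paritCFT-glue-spin}, Figure~\ref{fig:cut-surface-cm-detail-spin}) is that the $F$-defect network near each in/outgoing insertion point automatically produces the idempotent $Q^{(\mathrm{in/out})}_x$ acting on the inserted morphism; hence every summand with $\alpha\notin\Mc^{(\mathrm{in})}_{x,\eps}$ vanishes, and the oriented sum restricts to the spin multiplicity space. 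This projector mechanism, not the edge-index bookkeeping you mention, is the actual content of the gluing proof.
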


In the situation considered in \eqref{eq:intro:field_exchange}, the monodromy does affect the spin structure on the surface, and so monodromy freeness is now the statement that $(-1)^{M+P+S}=1$, where both $(-1)^S$ and $(-1)^P$ can be $+1$ or $-1$.

We can summarise this discussion as follows:

\begin{center}
	\fbox{\begin{minipage}{.9\textwidth}
For $F \in \hCc$ a special Frobenius algebra with $N_F^{\,2}=\id_F$, the family $\{ \Corr^\mathrm{spin}_F(\bbSigma) \}_{\bbSigma}$, defined via $F$-defect networks in the $B=\One$ oriented CFT, provides a consistent collection of correlators on spin surfaces with boundaries and defects.
Boundaries are labelled by $\Zb_2$-equivariant $F$-modules and defect lines by $\Zb_2$-equivariant $F$-$F$-bimodules.
	\end{minipage}}
\end{center}
In other words, we start from the diagonal theory of type ${\mycircled1}_\text{bnd\&{}def}$ (we consider $B=\One$ in $\Cc$). Its conformal boundary conditions and topological defects are labelled by objects in $\Cc$. We then enhance it to type ${\mycircled2}_\text{bnd\&{}def}$ as described in the end of the previous section (by allowing $B$-modules in $\hCc$). This does not affect the bulk theory, which is still of type ${\mycircled1}_\text{bulk}$, but boundary conditions and defects are now labelled by objects in $\hCc$. Among the defects of the latter theory, we look for a defect $F$ (i.e.\ an object $\hCc$) which can be equipped with the structure of a special Frobenius algebra such that $(N_F)^2=\id_F$. Thinking of $F$ as a possibly non-invertible symmetry, we can gauge $F$. In our setting, this amounts to inserting a network of $F$-defects as described above -- we will come back to this point of view on gauging in Section~\ref{sec:gauging} below. 
Since in general $F$ is not symmetric but only satisfies $(N_F)^2=\id_F$, this is a priori not well-defined on oriented manifolds but is on world sheets with a spin structure$.^1$

The resulting theory is in general of type ${\mycircled4}_\text{bnd\&{}def}$. If $F$ happens to be contained in $\Cc$, the bulk theory does only involve fields of even parity, and one can restrict oneself to boundaries and defects labelled by $F$-modules and bimodules in $\Cc$, rather than $\hCc$. The restricted theory is of type ${\mycircled3}_\text{bnd\&{}def}$. This is the spin-version of the inclusion \eqref{eq:intro-12-inclusion}, i.e.
\begin{equation}\label{eq:intro-34-inclusion}
{\mycircled3}_\text{bulk}
~\subset~
{\mycircled4}_\text{bulk}
\qquad \text{and} \qquad
{\mycircled3}_\text{bnd\&{}def}
~\subset~
{\mycircled4}_\text{bnd\&{}def} ~.
\end{equation}
In the context of bulk CFTs on spin world sheets, there are the additional inclusions
\begin{equation}\label{eq:intro-1324-inclusion}
{\mycircled1}_\text{bulk}
~\subset~
{\mycircled3}_\text{bulk}
\qquad \text{and} \qquad
{\mycircled2}_\text{bulk}
~\subset~
{\mycircled4}_\text{bulk} ~.
\end{equation}
This happens if the bulk CFT turns out to be independent of the spin structure, i.e.\ if there is an underlying CFT defined on oriented world sheets, and the correlators on spin world sheets are obtained by first forgetting the spin structure and then evaluating the oriented theory. In our setting this happens if $F$ is symmetric, i.e.\ if $(N_F)^2 = \id$ is satisfied because we already have\footnote{ 
    To obtain an oriented theory, it is actually enough for $N_F$ to be an inner automorphism (see Remark~\ref{rem:spin-no-spin}). The condition $(N_F)^2 = \id$ can be analogously weakened. For a Morita-invariant formulation of this condition in the context of fully extended topological field theory, see \cite{Carqueville:2021cfa}.}
$N_F = \id$. There are corresponding inclusions for types ${\mycircled i}_\text{bnd\&{}def}$, but we will skip a more detailed discussion.

As an example of the construction of spin CFT correlators, in Section~\ref{sec:spin-torus-example} we treat the torus with its four possible spin structures. We illustrate how $F$ determines the coefficients in the bilinear combination of characters giving the torus partition function in each of the four cases.

\subsection{Examples from a self-dual invertible object}\label{sec:intro:self-dual}

It turns out that one can already obtain examples of CFTs which are strictly of type ${\mycircled{$i$}}_\text{bulk}$ for each of $i=1,2,3,4$ by considering algebras of the form $A_+ = \One \oplus G \in \Cc$ and $A_- = \One \oplus \Pi G \in \hCc$, where $G \in \Cc$ satisfies $G \otimes G \cong \One$ and  $\Pi G \in \hCc$ denotes the parity-odd copy of $G$.
The objects $A_\pm$ each carry an up-to-isomorphism unique structure of a special Frobenius algebra whose Nakayama automorphism squares to the identity, or is already equal to the identity. They can thus serve as an input for the constructions presented in Sections~\ref{sec:intro-or} and \ref{sec:intro-spin}.
We consider the trivial case $G \cong \One$ first, and then turn to $G \ncong \One$.

\subsubsection{The case $G\cong \One$}
\label{sec:extra-case}

The case of $A_- = \One \oplus \Pi\One$ is instructive, both for its simplicity and for the fact that, when applied to the trivial $c=0$ CFT, it gives the Arf-invariant CFT, or rather, 2d\,TFT of type $\mycircled4$.
In that example $\Cc = \Vect$, $\hCc = \SVect$, and $A_-$ is the Clifford algebra in one odd generator, see Section~\ref{sec:extra-case-example} for more details. 

The case $A_+ = \One \oplus \One$ simply results in two copies of the initial theory and we do not consider it further.

\subsubsection{The case $G \ncong \One$}

Let now $G \in \Cc$ satisfy $G \ncong \One$ and $G \otimes G \cong \One$. This implies that $G$ is invertible and self-dual, or, in other words, that $G$ is an order two simple current. The quantum dimension of $G$ necessarily satisfies $\dim(G)^2=1$. Let $\theta_G$ be the eigenvalue of the ribbon twist on $G$. It is related to the conformal weight $h$ via $\theta_G = \exp( 2 \pi i h )$. We require that $h \in \frac12 \Zb$, i.e.\ that 
\begin{equation}
\theta_G \in \{ \pm 1\} ~.
\end{equation}

For each of the eight possible choices of $\dim G$, $\nu$, $\theta_G$, the algebra $A_\nu$ can serve as an input to determine a consistent collection of correlators. If $N_{A_\nu}=\id_{A_\nu}$, i.e.\ if $A_\nu$ is symmetric, one obtains a theory of type $\mycircled1$ ($\nu=1$) or  type $\mycircled2$ ($\nu=-1$). If $N_{A_\nu} \neq \id_{A_\nu}$ one correspondingly gets a theory of type $\mycircled3$ ($\nu=1$) or $\mycircled4$ ($\nu=-1$). 

The dimension and twist multiply under taking products in the following sense. Let $\Cc$ and $\Cc'$ be modular fusion categories, obtained as representation categories of rational VOAs $\Vc$ and $\Vc'$, respectively. Suppose there are order two simple currents $G \in \Cc$ and $G' \in \Cc'$. Then $\Cc \boxtimes \Cc'$ is the representation category for the product VOA $\Vc \otimes_{\Cb} \Vc'$, and it contains the order two simple currents $G \boxtimes \One'$, $\One \boxtimes G'$, and $G \boxtimes G' =: \mathbb{G}$. Then 
\begin{equation}\label{eq:dim-theta-product}
	\dim(\mathbb{G}) = \dim(G) \, \dim(G')
	\quad , \quad
	\theta_{\mathbb{G}} = \theta_G \, \theta_{G'} ~.
\end{equation}
The parity is also multiplicative, and so the eight cases for the triple $(\dim G, \nu, \theta_G)$ form a $(\Zb_2)^{\times 3}$.

\begin{table}[tb]

\begin{center}
{\renewcommand{\arraystretch}{1.4}
\begin{tabular}{c|c|c|c|c|c|c|c|l}
$G$ & $\dim G$ & $\nu$ & $\theta_G$ & type & $A_\nu$ sym.  & $A_\nu$ com. & $G$ hol. & example
\\
\hline
$\One$ &
$\phantom{-}1$ & $-1$ & $\phantom{-}1$ & \mycircled4 & $\times$  & $\times$ & \checkmark\,[$-$,R] & 
Arf invariant, $c=0$
\\
\hline
$\ncong\One$ &
$\phantom{-}1$ & $\phantom{-}1$ & $\phantom{-}1$ & \mycircled1 & \checkmark  & \checkmark & \checkmark\,[$+$] & Potts $c=\frac45$
\\
& $\phantom{-}1$ & $\phantom{-}1$ & $-1$ & \mycircled1 & \checkmark  & $\times$ & $\times$ & Ising $c=\frac12$
\\
& $\phantom{-}1$ & $-1$ & $\phantom{-}1$ & \mycircled4 & $\times$  & $\times$ & \checkmark\,[$-$,R] & fermionic tetra-Ising $c=\frac45$
\\
& $\phantom{-}1$ & $-1$ & $-1$ & \mycircled4 & $\times$  & \checkmark & \checkmark\,[$-$,NS] & free fermion $c=\frac12$
\\
& $-1$ & $\phantom{-}1$ & $\phantom{-}1$ & \mycircled3 & $\times$  & $\times$ & \checkmark\,[$+$,R] & (spin BP)$\times$(Ising) $c=\frac9{10}$
\\
& $-1$ & $\phantom{-}1$ & $-1$ & \mycircled3 & $\times$  & \checkmark & \checkmark\,[$+$,NS] & spin BP $c=\frac25$
\\
& $-1$ & $-1$ & $\phantom{-}1$ & \mycircled2 & \checkmark  & \checkmark & \checkmark\,[$-$] & (parity BP)$\times$(Ising) $c=\frac9{10}$
\\
& $-1$ & $-1$ & $-1$ & \mycircled2 & \checkmark  & $\times$ & $\times$ & parity BP $c=\frac25$
\\
\end{tabular}
}
\end{center}

\caption{Properties of the CFT defined by $A_\nu$ in dependence on the dimension and twist of the invertible self-dual object $G$ and on the parity $\nu$ of the second summand of $A_\nu$.}
\label{tab:eight-cases}
\end{table}

We collect the case $G \cong \One$ and the eight cases for $G \ncong\One$ in Table~\ref{tab:eight-cases}. Let us explain the columns in turn. The first four columns are clear. The meaning of the remaining columns is as follows:
\begin{itemize}
	\item The column ``$A_\nu$ sym.'' states whether the algebra $A_\nu$ is symmetric, and as just explained that corresponds to the type being $\mycircled1$/$\mycircled2$ or $\mycircled3$/$\mycircled4$. 
	\item 
	The column ``$A_\nu$ com.'' states whether $A_\nu$ is commutative in $\hCc$ or not. 
    Commutative algebras define extensions of the VOA $\Vc$ \cite{Huang:2014ixa,Creutzig:2017}.
	\item
	In the column  ``$G$ hol.'' 
    we consider the bulk state space of the theory defined by $A_\nu$. If this space contains a field that transforms in the holomorphic / antiholomorphic representation $(G,\One)$, i.e.\ if the theory contains the elements of $G$ as holomorphic fields, 
    we mark a ``\checkmark'', and we indicate the parity $\pm$ of those fields, and, in the spin case, whether they are in the NS or R sector. It turns out that a holomorphic copy of $G$ exists iff an antiholomorphic copy of $G$ exists.
	\item
	The last column lists an example of the corresponding theory. These are discussed in more detail in Section~\ref{sec:examples}. BP stands for Bershadsky-Polyakov. The table also contains two product theories which illustrate \eqref{eq:dim-theta-product}.
\end{itemize}

In particular, we obtain explicit examples of CFTs of type \mycircled2 and \mycircled3 in terms of Bershadsky-Polyakov vertex operator algebras (Section~\ref{sec:BP-example}), which to the best of our knowledge are new.

For spin theories, the torus partition function with R-R spin structure is itself modular invariant. For the free fermion, this particular torus partition function is zero, and for supersymmetric models it is constant, but for the fermionic tetra-critical Ising model it is neither zero nor constant. We compute the corresponding $\Zb$-bilinear combination of $c=\frac45$ Virasoro minimal model characters in Section~\ref{sec:Potts}, and find that it is given by the difference of the $A$- and $D$-type modular invariant, i.e.\ by tetra-critical Ising minus three-state Potts. This agrees with \cite{Hsieh:2020uwb}, where this model was first considered.

\subsection{Relation to gauging topological symmetries}
\label{sec:gauging}

A very useful point of view on the construction of consistent collections of CFT correlators is to understand it as a generalised orbifold or, equivalently, as a gauging of topological symmetries, or as an internal state sum.

Orbifolds of CFTs were first considered for finite groups \cite{Dixon:1986qv,Dijkgraaf:1989hb}, and amount to adding twisted sectors to the theory and then passing to suitably invariant states. This procedure has a conceptually important reinterpretation as gauging a discrete symmetry of the original theory
 \cite{Dijkgraaf:1989pz}, which also provides a connection to state sum constructions.

The study of topological line defects in CFT  \cite{Petkova:2000ip,Fuchs:2002tft,FFRS2} allows one to understand discrete symmetries of CFTs as invertible line defects. Topological line defects are typically not invertible, and it is fruitful to think of the collection of all topological defects as an extension of the notion of symmetry of a field theory. We will refer to such not-necessarily-invertible defects as a ``topological symmetry'' of a given theory. For rational CFTs, and in fact for 2d QFTs in general, these form a pivotal monoidal category, or a bicategory with adjoints in the case where one includes topological interfaces \cite{FFRS2,DKR,Thorngren:2019iar}.

The idea to orbifold a given theory by a not-necessarily-invertible topological symmetry was first formulated in \cite{Frohlich:2009gb} in the context of rational conformal field theory as a reinterpretation of the TFT construction of CFT correlators on oriented world sheets given in \cite{Fuchs:2002tft}. This procedure was called a ``generalised orbifold'', but recently is more often referred to as ``gauging a topological symmetry''.

For oriented topological field theories in arbitrary dimension, taking generalised orbifolds / gauging topological symmetries is discussed in \cite{CRS1}. For Reshetikhin-Turaev TFTs there is a completeness result analogous to the one above \cite{Carqueville:2021edn,Mulevicius:2022gce}: 
any two such TFTs based on modular fusion categories in the same Witt class are obtained from one another by gauging a topological symmetry.

The idea to use gauging of topological symmetry to relate oriented and spin topological field theories was developed in \cite{Gaiotto:2015zta,Bhardwaj:2016clt,Aasen:2017ubm}. In two-dimensional CFTs it was used in \cite{Novak:2015ela,Runkel:2020zgg} in the guise of defect networks. In \cite{Karch:2019lnn,Hsieh:2020uwb,Kulp:2020iet,Gaiotto:2020iye} the gauging of $\Zb_2$-symmetries is used to link oriented and spin CFTs. The examples considered in Section~\ref{sec:intro:self-dual} can be understood as such a $\Zb_2$-gauging.
We compare the results of our construction with those in \cite{Hsieh:2020uwb} in Section~\ref{sec:HNT-comp}.

As described at the end of Section~\ref{sec:intro-spin}, the construction of spin CFTs in this paper can be reformulated in the context of gauging as follows: 
Starting from a diagonal CFT of type ${\mycircled1}_\text{bnd\&{}def}$, we pass to the theory with parity enhanced defects of type ${\mycircled2}_\text{bnd\&{}def}$ and gauge a suitable topological symmetry $F$ with the property that the gauging-procedure needs a spin structure on the world sheet to be well-defined. This results in a spin theory of type ${\mycircled4}_\text{bnd\&{}def}$.

An example which makes use of gauging a non-invertible symmetry is the exceptional spin CFT for $su(2)$ at level 10. The starting point is the usual WZW model for $su(2)$ at level 10, by which we mean the oriented CFT with diagonal modular invariant. After parity enhancement one can identify a non-invertible topological defect $F$ whose gauging produces a spin CFT, which we call ``exceptional'' as the sum over spin structures produces the $E_6$-type modular invariant oriented $su(2)_{10}$ WZW CFT. The relevant defect decomposes as $F = \underline 0 + \underline 6 + \Pi \underline 4 + \Pi \underline{10}$, where the underlined numbers are Dynkin-labels and $\Pi$ refers to the parity shift. The summands $\underline 6$ and $\Pi \underline 4$ describe non-invertible topological defects. However, we will not discuss this model in the present paper.

Boundaries and topological defects in spin theories were studied  using the folding trick in~\cite{MW}, from the point of view of classifying algebras in \cite{Runkel:2020zgg}, via gauging of a $\Zb_2$-symmetry in \cite{Smith:2021luc,Fukusumi:2021zme,Ebisu:2021acm}, via the Cardy consistency condition in \cite{Kikuchi:2022jbl}, and in relation to super-fusion categories in \cite{Chang:2022hud}.

\medskip

To conclude the introduction, let us discuss in which sense we expect our construction to be complete, and how it can generalise beyond rational CFTs.

Given a rational VOA $\Vc$, one can ask what are all the oriented and spin bulk theories which contain a holomorphic and an anti-holomorphic copy of $\Vc$ in the space of bulk fields, which have a unique vacuum (in the NS-sector in the spin case), and which have non-degenerate two-point functions. 
Let us refer to the collection of such theories as $\mathcal{T}(\Vc)$.
We expect that all theories in $\mathcal{T}(\Vc)$ can be obtained by our construction for an appropriate choice of Frobenius algebra in $\hCc$, generalising the existence and uniqueness result for type \mycircled1 theories in \cite{Fjelstad:2005ua,Fjelstad:2006aw,Kong:2009inh}. 
What is more, it is known from \cite{Frohlich:2009gb} that any two type \mycircled1 CFTs in $\mathcal{T}(\Vc)$ are generalised orbifolds of one another 
(i.e.\ can be obtained from each other by gauging). 
We expect the same to hold for any two CFTs $C_1, C_2 \in \mathcal{T}(\Vc)$, even if one is spin and the other oriented, or if one involves parity and the other does not. That is, there should exist a topological defect $F$ in $C_1$ so that the generalised orbifold (or gauging) 
by it produces $C_2$. 
Interfaces between spin and oriented theories have already been studied in \cite{Gaiotto:2012np,MW,Runkel:2020zgg}.

The construction we present in this paper is not restricted to rational CFTs. Indeed, the same gauging procedure works in any 2d QFT, conformal or not, for which one can find a topological defect $F$ and topological junctions for product, coproduct, unit and counit, such that it forms a special Frobenius algebra with $N_F=\id$ or $(N_F)^2=\id$ in the pivotal monoidal category of topological defects of the QFT (as defined in \cite[Sec.\,2.4]{DKR}). The massive Ising CFT and the massive free fermion would be an example of this.

\subsection*{Organisation of this paper}

In Section~\ref{sec:spin-comb}, we review the combinatorial description of spin structures and give our conventions for bordisms and world sheets with boundaries and defect lines. We discuss $r$-spin structures in general, rather than just $2$-spin structures, as having $r \in \Zb_{\ge 0}$ as a parameter does not add any difficulty and makes the description more transparent.

Section~\ref{sec:TFT-CFT} gives a short overview of the relation between modular fusion categories and Reshetikhin-Turaev type 3d\,TFTs on the one side, and representations of rational VOAs and their conformal blocks on the other side. This section prepares the ground and sets the notation for the parity-extension in the next section.

In Section~\ref{sec:3d-super-TFT} we show how to include parity in the 3d\,TFT. On the TFT side we describe the product of the TFT in Section~\ref{sec:TFT-CFT} with the trivial $\SVect$-valued TFT. On the VOA side we explain how to interpret the state spaces of the parity-enhanced TFT in terms of spaces of conformal blocks, where insertion points in addition have even or odd parity.

Sections~\ref{sec:oriented-CFT-from-TFT} and~\ref{sec:spin-parity-CFT} contain the main new constructions presented in this paper. In Section~\ref{sec:oriented-CFT-from-TFT} we extend the TFT description of CFT correlators with boundaries and topological line defects to oriented CFTs with parity: correlators are given as elements in the corresponding spaces of conformal blocks. Both are given by evaluating the TFT from Section~\ref{sec:3d-super-TFT} on suitable three-manifolds and surfaces. 

Section~\ref{sec:spin-parity-CFT}
treats spin CFTs with parity. Here, the world sheets are equipped with a spin structure, and they can have boundaries and line defects. The approach we take is to express correlators on world sheets with spin structure in terms of correlators for the oriented CFT introduced in Section~\ref{sec:oriented-CFT-from-TFT}, which now feature a defect network that encodes the spin structure.

In Section~\ref{sec:examples} we present a number of explicit examples which illustrate the abstract constructions. This includes in particular the Bershadsky-Polyakov models which provide examples for spin CFTs without parity, as well as for oriented parity CFTs, both of which have not been systematically studied before.

Finally, in the appendix we collect several proofs and detailed computations we omitted in the main text.

\bigskip

\subsubsection*{Acknowledgements} 

GW would like to thank Andreas Honecker for discussions on the Bershadsky-Polyakov algebra and its representations.
We would like to thank Nils Carqueville and the anonymous referees for helpful comments on a draft of this article.

IR is grateful to King's College London for hospitality during the summer term 2022, and to the Aspen Center for Physics for a 2-week stay supported by National Science Foundation grant PHY-1607611.
LS is grateful to King's College London for hospitality and support during the week of the Defects and Symmetry meeting in Summer 2022. 

IR is partially supported by the Deutsche Forschungsgemeinschaft via
the Cluster of Excellence EXC 2121 ``Quantum Universe'' - 390833306. 
LS is supported by the Walter Benjamin Fellowship of the Deutsche Forschungsgemeinschaft.

\section{Bordisms with boundaries, defects, and spin structures}
\label{sec:spin-comb}

In this section we collect the geometric structures on two-dimensional bordisms that we will need to define the oriented and spin CFTs with boundaries and defects. We start with the geometric and combinatorial description of spin structures, and then turn to boundaries and line defects. 

In the description of spin structures we will be slightly more general than needed later. Namely, we review $r$-spin structures while later we only work with $2$-spin structures. The reason is that $r$-spin structures are not more complicated to describe and having $r$ as a parameter removes the degeneracy $1=-1 \in \Zb_2$ one has in the $2$-spin case. After this review section, spin structures only appear again in Sections~\ref{sec:spin-parity-CFT} and~\ref{sec:examples}.

\subsection{Geometric description of open-closed spin surfaces}
\label{sec:geom-oc-r-spin-surf}

Here we describe open-closed bordisms with parametrised boundaries, as well as including spin structures and line defects.

\subsubsection*{Open-closed bordisms}

By a \textit{surface} we mean an oriented smooth 2d manifold, possibly with boundaries and corners. For a 2d manifold with corners, each point has a coordinate chart to some open subset of the upper right closed quadrant $\Rb_{\ge 0}^2 = \{ (x,y) \in \Rb^2 | x,y \ge 0 \}$, see e.g.\ \cite[Sec.\,3]{Lauda:2005wn} for more details. The coordinate chart changing maps for points on the boundary or for corner points are required to extend to diffeomorphisms between open subset of $\Rb^2$, even if initially defined just on open subsets of $\Rb_{\ge 0}^2$. This allows one to assign a (two-dimensional) tangent space to boundary points and corners of $\Sigma$.

\begin{figure}
    \centering
    \begin{align*}
        \mathrm{a})&&\mathrm{b})&\\[-1em]
    &\includegraphics[scale=1.0]{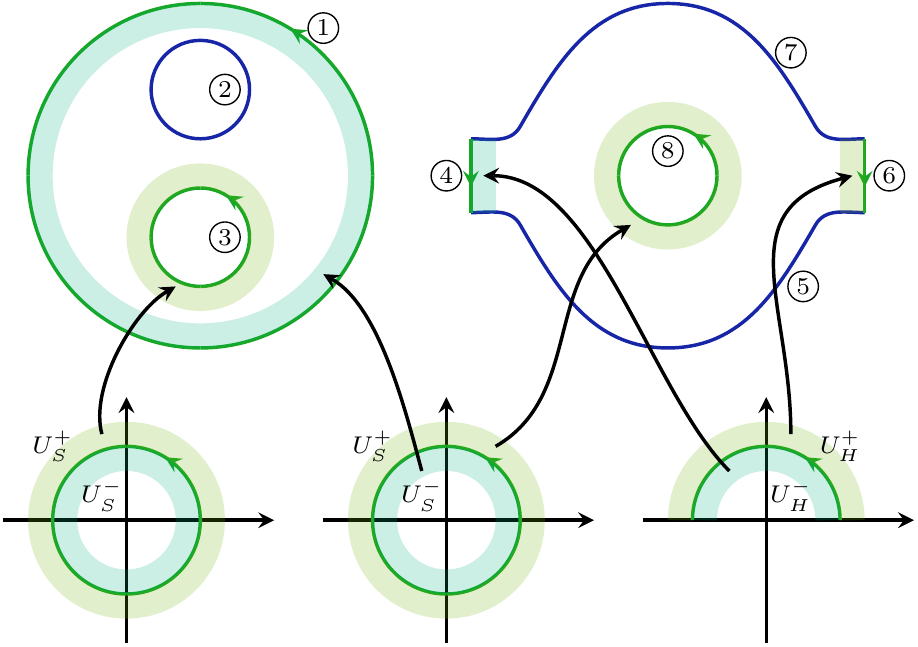}
    &&\includegraphics[scale=1.0]{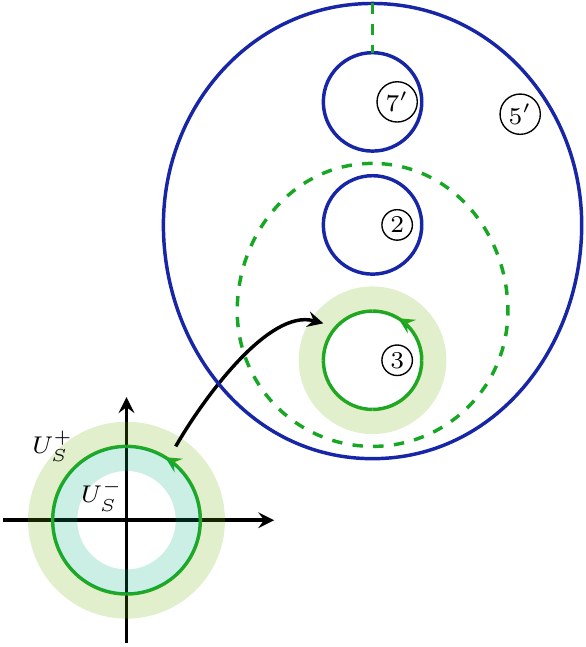}
    \end{align*}
    \caption{a) An open-closed bordism $\Sigma$. The boundary $\partial\Sigma$ decomposes as follows:
$\partial^\mathrm{f} \Sigma$ consists of the components labelled $2,5,7$, 
$\partial^\mathrm{c}_\mathrm{in} \Sigma$ consists of $3,8$, 
$\partial^\mathrm{c}_\mathrm{out} \Sigma$ of $1$, 
$\partial^\mathrm{o}_\mathrm{in} \Sigma$ of $6$, 
$\partial^\mathrm{o}_\mathrm{out} \Sigma$ of $4$. 
b) The bordism obtained by gluing components 1 and 8, as well as 4 and 6. The location of the now erased gluing boundary is shown as dashed lines.
} 
    \label{fig:gluing-oc-surfaces}
\end{figure}

An \textit{open-closed bordism} is a compact surface $\Sigma$, where parts of the boundary are parametrised, allowing bordisms to be glued together. While this is standard, see e.g.\ \cite[Sec.\,3.1]{Schommer-Pries:2011} 
and Figure~\ref{fig:gluing-oc-surfaces}, we need some details to describe the spin case later. Boundary components and their neighbourhood will be parametrised by certain subsets of $\Rb^2$ and of the closed upper half plane $\overline\Hb = \Rb \times \Rb_{\ge 0}$:
\begin{itemize}
    \item $U_S \subset \Rb^2$ denotes an open neighbourhood of the unit circle, and $U_S^+$ is the intersection of such a neighbourhood with $\{ p \in \Rb^2 | \,|p|\ge 1\}$, 
    i.e.\ $\Rb^2$ minus the open unit disc, while $U_S^-$ stands for the intersection with the closed unit disc. 
\item $U_H \subset \overline\Hb$ and $U^\pm_H$ are the same as above, but in addition intersected with $\overline{\Hb}$. Note that $U^\pm_H$ contain corners.
\end{itemize}
We will take $U_S$, $U_S^\pm$, etc, to denote neighbourhoods of the type described above, not a fixed choice made once and for all.

The boundary $\partial\Sigma$ (which includes the corners) is decomposed into disjoint subsets as
\be
    \partial\Sigma = \partial^\mathrm{g} \Sigma \cup \partial^\mathrm{f} \Sigma ~,
\ee
where $\partial^\mathrm{g} \Sigma$ is referred to as the \textit{gluing boundary} and $\partial^\mathrm{f} \Sigma$ as the \textit{free boundary}. The free boundary consists of circles and closed intervals in $\partial \Sigma$, where the endpoints of the intervals are required to be corners. 
The gluing boundary is further decomposed into an in- and outgoing and open/closed part:
\be\label{eq:decomp-of-gluing-bnd}
    \partial^\mathrm{g} \Sigma = 
    \partial^\mathrm{c}_\mathrm{in} \Sigma 
    \cup
    \partial^\mathrm{c}_\mathrm{out} \Sigma 
    \cup
    \partial^\mathrm{o}_\mathrm{in} \Sigma 
    \cup
    \partial^\mathrm{o}_\mathrm{out} \Sigma 
    ~,
\ee
The closed gluing boundary consists of circular components of $\partial\Sigma$, and the open gluing boundary of open intervals between two corners in $\partial\Sigma$.
A connected component of $\partial \Sigma$ without corners can either be a closed gluing boundary or a free boundary. If a connected component of $\partial \Sigma$ has corners, its further decomposition has to alternate between free and open gluing boundaries at the corners.

Each component $B$ of the gluing boundary $\partial^\mathrm{g} \Sigma$ is equipped with a parametrising map, cf.\ Figure~\ref{fig:gluing-oc-surfaces}:
\begin{itemize}
    \item \textit{in-going closed gluing boundary} $B \subset \partial^\mathrm{c}_\mathrm{in} \Sigma$:
    A smooth map $\varphi \colon  U_S^+ \lra \Sigma$ whose image is open in $\Sigma$ and which is a diffeomorphism onto its image. It follows that $\varphi$ maps $S^1 \subset \Rb^2$ to a boundary component of $\Sigma$, and we require this boundary component to be $B$.
    \item \textit{out-going closed gluing boundary} $B \subset \partial^\mathrm{c}_\mathrm{out} \Sigma$: A map $\varphi \colon  U_S^- \lra \Sigma$ with the same properties as above.

    \item \textit{in-going open gluing boundary} $B \subset \partial^\mathrm{o}_\mathrm{in} \Sigma$: A map $\varphi \colon  U_H^+ \lra \Sigma$ with the same properties as above. It follows that $\varphi$ maps the closed upper half circle to part of the boundary of $\Sigma$, and the two endpoints to corners of $\Sigma$. We require $B$ to be the image of the open upper half circle.
    
    \item \textit{out-going open gluing boundary} $B \subset \partial^\mathrm{o}_\mathrm{out} \Sigma$: A map $\varphi \colon  U_H^- \lra \Sigma$ as above.
\end{itemize}

When it is clear from the context, in the following we will say  e.g.\ ``closed boundary'' or ``gluing boundary'' instead of ``closed boundary component'' or ``gluing boundary component'' for brevity.

A \textit{diffeomorphism of open-closed bordisms} is an orientation preserving diffeomorphism of the underlying surface compatible with the (germs of the) boundary parametrisations.

Given a (not necessarily connected) open-closed world sheet, one can use the parametrising maps to glue an in-going to an out-going closed boundary, and ditto for open boundaries. See again Figure~\ref{fig:gluing-oc-surfaces} for an example.

\subsubsection*{Spin structures on surfaces}

We now review the definition of $r$-spin structures on surfaces, where $r \in \Zb_{\ge 0}$. 
The case $r=2$ is the usual spin case, and the case $r=0$ is equivalent to considering a framing on the surface, while for $r=1$ one just recovers the underlying oriented surface. More details can be found in e.g.\ \cite{Lawson:1989,Novak:2015phd,Szegedy:2018phd}.

Let us first discuss the simpler case of a surface $\Sigma$ which is equipped with a metric. Denote by $F_\Sigma^{SO} \lra\Sigma$ the bundle of oriented orthonormal frames in the tangent space of $\Sigma$. The group $SO_2$ acts freely and transitively on the fibres of $F_\Sigma^{SO}$, i.e.\ $F_\Sigma^{SO}$ is an $SO_2$-principal bundle. The frame bundle is well-defined also over the boundary and over corner points of $\Sigma$ due to the requirement that changes of charts extend smoothly to a neighbourhood of the boundary of $\Rb_{\ge 0}^2 \subset \Rb^2$.

Write $\widetilde{SO_2}^r$ for the $r$-fold cover of $SO_2$ if $r>0$, and for the universal cover if $r=0$, and denote the covering map by 
\be
p^r_{SO} \colon  \widetilde{SO_2}^r \lra SO_2~.
\ee
For $r>0$, $\widetilde{SO_2}^r$ is isomorphic to $SO_2$, but for $r=0$ it is not. An explicit description of $\widetilde{SO_2}^r$ valid for all $r$ is
\be
p^r_{SO} \colon  \Rb / r \Zb  \lra SO_2
\quad , \quad
x \lmt e^{2 \pi i x} ~.
\ee

An \textit{$r$-spin structure} on $\Sigma$ is an $\widetilde{SO_2}^r$-principal bundle $P\lra\Sigma$ together with a bundle map $p \colon  P \lra F_\Sigma^{SO}$ which intertwines the $\widetilde{SO_2}^r$ and $SO_2$ actions. Some properties of $r$-spin structures are:
\begin{itemize}
    \item Being a principal bundle, the fibre of $P$ over a point $z\in \Sigma$ looks like $\widetilde{SO_2}^r$, in particular it is connected.
    \item The fibre of $P$ above a point in the frame bundle $F_\Sigma^{SO}$ (via the bundle map $p$) consists of $r$ points (for $r>0$), respectively of infinitely many points (for $r=0$). They are in bijection with the kernel $\Zb_r := \Zb/r\Zb$ of the covering map $p^r_{SO}$.
    \item A closed path in the frame bundle $F_\Sigma^{SO}$ defines a holonomy in $\Zb_r$ by lifting the path to the spin bundle $P$. For example, the path obtained by acting on a given frame by $x \lmt e^{2 \pi i x}$, $x \in [0,1]$, lifts to a non-closed (for $r \neq 1$) path in $P$ with holonomy $1 \in \Zb_r$.
\end{itemize}

If $\Sigma$ does not carry a metric, one can no longer speak of orthonormal frames, but only of oriented frames. The oriented frame bundle $F_\Sigma \lra \Sigma$ is a principal bundle for $GL_2^+$, the group of linear endomorphisms of positive determinant. As in the metric case, one considers its $r$-fold cover (for $r>0$), respectively the universal cover (for $r=0$),
\be
p^r_{GL} \colon  \widetilde{GL_2}^r \lra GL_2^+ ~.
\ee
For $r=2$ an explicit realisation of $\widetilde{GL_2}^r$ is given in \cite[Sec.\,2.2]{Novak:2014sp}. One now simply replaces $SO$ by $GL$ in the above discussion:

\begin{definition}
An \textit{$r$-spin structure on $\Sigma$} is an 
$\widetilde{GL_2}^r$-principal bundle $P$ over $\Sigma$ together with a bundle map $p \colon  P \lra F_\Sigma$ which intertwines the $\widetilde{GL_2}^r$ and $GL_2^+$ actions.
\end{definition}

We will call a surface with $r$-spin structure an \textit{$r$-spin surface} and denote it by $\bbSigma = (\Sigma,P,p)$. An \textit{isomorphism of $r$-spin surfaces} $F$ from $\bbSigma = (\Sigma,P,p)$ to $\bbSigma' = (\Sigma',P',p')$
is a map $F \colon  P \lra P'$ of $\widetilde{GL_2}^r$-principal bundles with underlying diffeomorphism $f \colon  \Sigma \lra \Sigma'$ such that 
\begin{equation}
  \begin{tikzcd}[column sep=large, row sep=large]
	  P \rar{F} \dar[swap]{p} & P'  \dar{p'} \\
	  F_\Sigma \rar{df_*} \dar  & F_{\Sigma'} \dar \\
	\Sigma \rar{f}&  \Sigma'
  \end{tikzcd}
  \label{eq:spin-frame-iso}
\end{equation}
commutes. Here, $df_*$ is the map induced by $f$ on the frame bundle, acting by $df$ on each vector in a given frame. An \textit{isomorphism of $r$-spin structures} on a given surface is an isomorphism of $r$-spin surfaces whose underlying diffeomorphism is the identity.

Since $GL_2^+$ retracts to $SO_2$, dropping the metric does not change the number of isomorphism classes of $r$-spin structures on a given surface.

Given a diffeomorphism $f \colon  \Sigma \lra \Sigma'$ as above, we automatically get a bundle isomorphism $df_* \colon  F_\Sigma \lra F_{\Sigma'}$ as in \eqref{eq:spin-frame-iso}. We can use this to pull back the bundle $p' \colon  P' \lra F_{\Sigma'}$ to $F_{\Sigma}$, resulting in a spin structure on $\Sigma$. Thus, if $\bbSigma' = (\Sigma',P',p')$ is an $r$-spin surface, $\Sigma$ is a surface, and $f : \Sigma \lra \Sigma'$ is a diffeomorphism, we obtain a \textit{pull-back $r$-spin structure} on $\Sigma$, which we denote by $f^* P'$ with resulting $r$-spin surface
\be\label{eq:pullback-rspin-structure}
  (\Sigma, f^*P', f^*p') ~.
\ee
By construction, $(\Sigma, f^*P', f^*p')$ and $\bbSigma'$ are isomorphic as $r$-spin surfaces.

\subsubsection*{Spin structures via holonomies}

We have already noted that given an $r$-spin surface $\bbSigma$, a closed path in the frame bundle $F_\Sigma$ defines a $\Zb_r$-valued holonomy by lifting the path from $F_\Sigma$ to the $r$-spin bundle $P$. In fact, these holonomies determine the isomorphism class of the $r$-spin structure. More formally, isomorphism classes of $r$-spin structures on $\Sigma$ are in 1-1 correspondence with elements
$H^1(F_\Sigma, \Zb_r)$ 
which assign the value $1$ to each path in $F_\Sigma$ given by rotating a frame once around itself in anticlockwise direction while keeping the point in 
$\Sigma$ fixed, see e.g.\ \cite[Sec.\,2]{Randal:2014rs}. 

It turns out that one can also assign a holonomy to a closed path in the surface $\Sigma$, rather than in $F_\Sigma$, provided the path is smooth with nowhere vanishing derivative. Namely, let $\gamma \colon  S^1 \lra \Sigma$ be a smooth curve such that $\frac{d}{dt} \gamma(t) \neq 0$ for all $t$. Pick a smooth lift $\hat\gamma \colon  S^1 \lra F_\Sigma$ of $\gamma$ to the frame bundle, such that the first vector of the frame at a point $z = \gamma(t)$ is the derivative $\tfrac{d}{dt}\gamma(t)$, and the second vector is chosen to produce an oriented frame at $z$. 
All such lifts $\hat\gamma$ are homotopic, and the further lift from $F_\Sigma$ to $P$ defines the holonomy 
\be
\zeta(\gamma) \in \Zb_r~.
\ee
Again, knowing the holonomies for a generating set of smooth closed curves determines the $r$-spin structure on $\Sigma$ up to isomorphism. 
The possible holonomies are constrained by the condition that a path $\gamma$ bounding a disc and running anticlockwise must have holonomy $\zeta(\gamma)=1 \in \Zb_r$. 
For example, by considering $\gamma$ to be the equator on a sphere, this implies that a sphere can be equipped with an $r$-spin structure only if $1=-1 \mod r$, i.e.\ for $r \in \{1,2\}$.

\subsubsection*{Open-closed spin bordisms}

To describe parametrised boundaries for spin surfaces, we need to equip the open sets $U_S$, $U_H$ in the description of open-closed bordisms with an $r$-spin structure. The two cases behave differently:
\begin{itemize}
    \item \textit{Closed gluing boundary:} $r$-spin structures on the punctured complex plane $\Cb^\times$ are characterised by their holonomy along the unit circle. We pick a standard $r$-spin surface $\Cb^y$ with underlying surface $\Cb^\times$ for each $y \in \Zb_r$. We use the explicit representatives from \cite[Sec.\,3.4]{Novak:2015phd}, for which the holonomy along the unit circle $S^1$ with anticlockwise orientation is actually
    \be\label{eq:holo-via-label}
    \zeta(S^1)=1-y ~.
    \ee
    The minus sign is just a convention which is natural from the point of view of the explicit construction. The shift by 1, however, is convenient as the resulting $\Zb_r$ grading by $y$ of bulk fields in the CFT will be respected by the OPE (see Figure~\ref{fig:3-bulk-insertions-spin-dn} below, though a more detailed discussion of OPEs will only be given in a follow-up paper).
    Geometrically, this is the observation that $y=0$ corresponds to the unique spin structure which extends from $\Cb^\times$ to $\Cb$, and on the level of $r$-spin TFT in 2d this
    follows from \cite[Sec.\,5.1]{Stern:2020oc}\footnote{In \cite{Stern:2020oc} grading by holonomy is used, and for the pair of pants with two ingoing and one outgoing closed gluing boundary, the holonomy at the outgoing circle is obtained by adding those of the ingoing circles and subtracting $1$. Thus the holonomy grading is not preserved by this bordism.}.
    
    By $U_S^y$ we denote open neighbourhoods of $S^1 \subset \Cb^\times$, together with the $r$-spin structure obtained by restricting that of $\Cb^y$.
    
    \item \textit{Open gluing boundary:} There is a unique $r$-spin structure on $\overline\Hb$ (including the origin or not does not make a difference). We write $\overline{\Hb}$ for the $r$-spin surface obtained by restricting $\Cb^0$ to $\overline{\Hb}$, and we denote by $U_H$ open neighbourhoods of the closed unit half-circle with the $r$-spin structure obtained by restricting that of $\overline\Hb$.
\end{itemize}

An \textit{open-closed $r$-spin bordism} $\bbSigma$ is now defined in exactly the same way as in the non-spin case above, except for two modifications: firstly, all parametrising maps are now $r$-spin maps, and secondly, for in/out-going gluing boundaries the parametrising maps $\varphi$ have domain $U_S^{y,\pm}$ for a $y$ determined by the $r$-spin structure on $\Sigma$ by \eqref{eq:holo-via-label} via the holonomy $\zeta(\varphi(S))$. Here, $S$ is a rescaled version of the unit circle that lies in the relevant domain $U^\pm_S$. 
A closed gluing boundary whose parametrising map has domain $U_S^{y,\pm}$ will be called of
\begin{equation}\label{eq:type-of-closed-glue}
    \text{type}~ y \in \Zb_r ~.
\end{equation}    

As in the non-spin case, for open-closed spin bordisms in- and out-going boundaries can be glued using the parametrisation by $r$-spin maps. In the closed case, this requires the in- and out-going boundaries to be of the same type $y$.

The $r$-spin structure on an open-closed bordism can again be characterised up to isomorphism by its holonomies $\zeta$, provided we also include smooth arcs between gluing boundaries rather than just smooth closed curves. 
We will skip the detailed description of how holonomies for such arcs are defined geometrically, as they will not be used in this paper, and refer to \cite[Sec.\,3.3.4]{Carqueville:2021cfa} for details.

\subsection{Combinatorial description of spin structures}
\label{sec:oc-r-spin-surf}

In this section we recall the combinatorial description of $r$-spin structures developed in \cite{Novak:2014sp,Novak:2015phd,Runkel:2018rs,Stern:2020oc}.
The idea is to decompose the surface into polygons, glued along their edges. Each polygon is contractible and hence allows for a unique-up-to-isomorphism $r$-spin structure. The data of the global $r$-spin structure is then encoded in the transition functions across the edges, given by values in $\Zb_r$ that are called \textit{edge indices} below. Not all collections of edge-indices are allowed, but only those where the $r$-spin structure extends to the vertices of the polygonal decomposition, giving linear conditions on the edge indices.
The easiest way to link the geometric and combinatorial point of view is by computing holonomies.

\subsubsection*{Marked polygonal decompositions}

\begin{figure}[tb]
    \centering
    \includegraphics[scale=1.0]{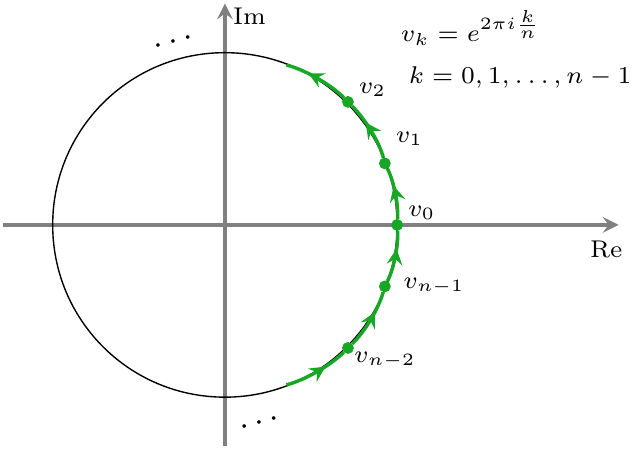}
    \hspace{2em}
    \includegraphics[scale=1.0]{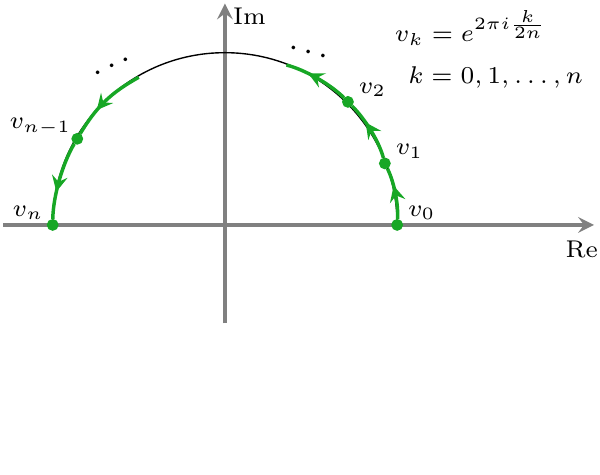}
    \caption{Standard position of $n$ vertices (resp.\ $n+1$ vertices) and $n$ edges on the unit circle (resp.\ unit half-circle).}
    \label{fig:standard-decomposed-circle}
\end{figure}

Let $\Sigma$ be an open-closed world sheet (not already equipped with an $r$-spin structure). By a \textit{polygonal decomposition} $T_\Sigma$ of $\Sigma$ we mean writing the surface as a disjoint union of embedded polygons with identified edges and vertices as prescribed by the embedding. Technically this is a PLCW decomposition \cite{Kirillov:2012pl}, see also \cite{Runkel:2018rs} for a discussion in the present context.

At gluing boundaries we require that the edges and vertices of $T_\Sigma$ are given by the images of those on the standard (half)circle with $n$ edges 
(see Figure~\ref{fig:standard-decomposed-circle}) under the corresponding parametrising map. 
This will ensure a consistent gluing procedure below.
The number of edges covering a gluing boundary may vary from boundary to boundary.\footnote{
In the combinatorial model in \cite{Runkel:2018rs,Stern:2020oc} only one edge is allowed on each gluing boundary. For the treatment of bordisms with defects it is helpful to allow several edges on a given gluing boundary, and we use this slightly more flexible combinatorial model. The relation to \cite{Runkel:2018rs,Stern:2020oc} is explained in Appendix~\ref{app:comb-model_glue}.}

Let $V$ denote the set of vertices, $E$ the set of edges and $F$ the set of faces, i.e.\ the polygons.
Let $E^\textrm{f} \subset E$ denote the edges that lie on the free boundary $\partial^\mathrm{f}\Sigma$ of $\Sigma$.

A \textit{marking of a polygonal decomposition $T_\Sigma$} consists of
\begin{itemize}
  \item an orientation $o$ of each each edge in $E \setminus E^\textrm{f}$, 

  \item a choice of edge $m$ for each polygon (before gluing the edges of the polygons together to give the decomposition of $\Sigma$),

  \item for each edge $e \in E \setminus E^\textrm{f}$ an \textit{edge index} $s_e \in \Zb_r$.
\end{itemize}
For edges on the gluing boundary we require that their orientation agrees with those of the edges on the standard (half)circle under the parametrising maps.
An example of a marking is shown in Figure~\ref{fig:clockwise-marking}. We write 
\be
T_\Sigma(o,m,s)
\ee
for a marked polygonal decomposition of $\Sigma$.
To see why in the second bullet point we demand to choose the edge of a polygon before gluing, consider a torus decomposed into one square. One has to select one of the four edges of the square, but after gluing there are only two edges (and one vertex).

\begin{figure}[tb]
    \centering
    \includegraphics[scale=1.0]{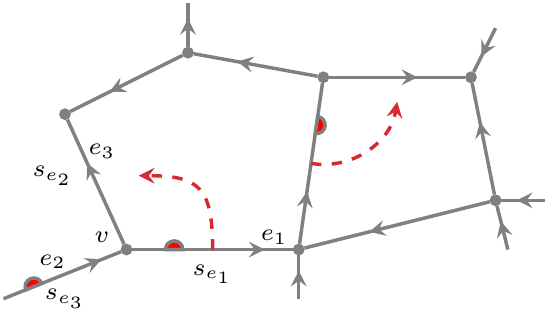}
    \caption{Example of a marked polygonal decomposition. 
    The red half-dots indicate chosen edges for the polygons, the red dotted arrows indicate the anticlockwise orientation around the vertices. 
    Vertex $v$ is used in Figure~\ref{fig:node-counting}.}
    \label{fig:clockwise-marking}
\end{figure}

For a given decomposition $T_\Sigma$ with orientations $o$ and choice of marked edges $m$ to give rise to an $r$-spin structure on $\Sigma$, the edge indices have to satisfy a consistency condition.
To describe this condition, we will need some notation. 

Firstly, we split each edge $e \in E\setminus E^\mathrm{f}$ into two half-edges. 
For a vertex $v$, we write $H_v$ for the set of half-edges starting or ending at $v$, and $D_v \subset H_v$ for the subset 
of half edges which come from the marked edge of the polygon immediately anticlockwise of that half-edge\footnote{To be precise, we consider the polygon before gluing the edges into the decomposition of $\Sigma$. In the example of the torus above, before gluing, the polygon is a square and there is only one half-edge of the marked edge for which the square lies anticlockwise of the half-edge and only this half-edge contributes to $D_v$.}.
We denote the induced edge index of a half-edge $h$ as $s_h$.
This is illustrated in Figure \ref{fig:node-counting}.

\begin{figure}[tb]
    \centering
    \includegraphics[scale=1.0]{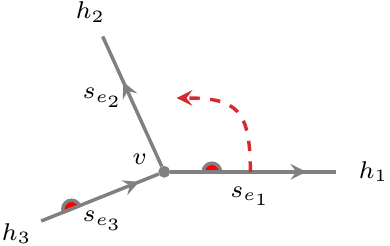}
    \caption{Illustration of the half-edges around the vertex $v$ showing which contribute to $D_v$. 
    Here one has $H_v=\{h_1, h_2, h_3\}$, $|H_v|=3$, $D_v=\{h_1\}$ $|D_v|=1$,
    $\hat{s}_{h_1}=s_{e_1}$, $\hat{s}_{h_2}=s_{e_2}$, $\hat{s}_{h_3}=-1-s_{e_3}$.
    The consistency condition is $s_{h_1}+s_{h_2}-1-s_{h_3}\equiv 1-3+1\Mod{r}$. }
    \label{fig:node-counting}
\end{figure}

Given a half edge $h \in H_v$, we write
\be
\hat{s_h}=\begin{cases}
s_h & \text{if the half edge is leaving $v$}
\\
-1-s_h & \text{if the half edge is entering $v$}
\end{cases}
\quad .
\ee
We call a vertex \textit{constrained} if it is an interior vertex, or if it lies on a gluing boundary and is not the image of $v_0$ in case of a closed gluing boundary (cf.\ Figure~\ref{fig:standard-decomposed-circle}), and not the image of $v_0$ or of $v_n$ in case of an open gluing boundary. 
Let $V^\mathrm{c} \subset V$ be the subset of constrained vertices.
In other words, the complement $V \setminus V^\mathrm{c}$ consists of vertices on the free boundary and of the image of $v_0$ on each closed gluing boundary.

We call a marking $(o,m,s)$ on $T_\Sigma$ \textit{admissible} if for every vertex $v \in V^\mathrm{c}$ we have\footnote{
  To obtain an $r$-spin structure on $\Sigma$ it is enough to impose the admissibility condition on inner vertices only. Imposing the condition also for the  boundary vertices in $V^\mathrm{c}$, as we do here, simplifies the gluing prescription below. In the case of a closed gluing boundary it will also result in the simple expression \eqref{eq:holonomy-inner} for the holonomy along a curve parallel to the boundary.}
  \be
      \sum_{h\in H_v}\hat{s_h} \equiv |D_v|-|H_v|+1 \Mod{r} ~.  \label{eq:admissibility-inner}
  \ee
There is no condition for unconstrained vertices, i.e.\ vertices that lie in $V \setminus V^\mathrm{c}$.
However, we will later prescribe a fixed spin structure near closed gluing boundaries, which leads to a condition similar to \eqref{eq:admissibility-inner} for each $v \in V \setminus V^c$ on a closed gluing boundary, see \eqref{eq:admissibility-boundary} below.

\begin{figure}[tb]
\begin{align*}
    \mathrm{(M1)} &&\mathrm{(M2)}& \\[-1em]
    &\raisebox{-0.5\height}{\includegraphics[scale=1.0]{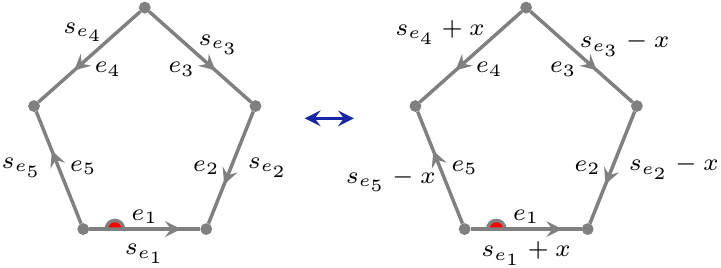}}
    &&\raisebox{-0.5\height}{\includegraphics[scale=1.0]{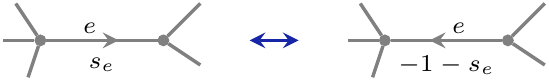}}\\
    \mathrm{(M3)} &&& \\[-1em]
    &\raisebox{-0.5\height}{\includegraphics[scale=1.0]{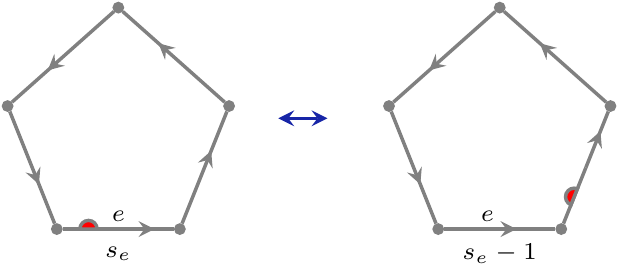}}
    &&\raisebox{-0.5\height}{\includegraphics[scale=1.0]{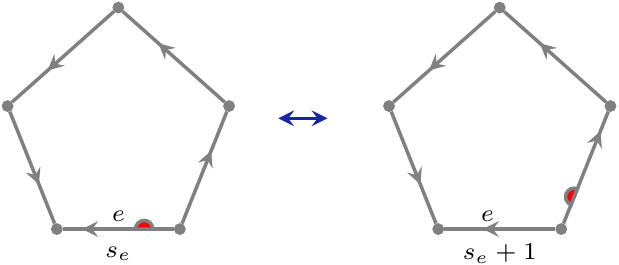}}
\end{align*}
    \caption{Moves \eqref{move:sheet-translation}--\eqref{move:marked-edge-bulk} changing the marking of a fixed polygonal decomposition.
    } 
    \label{fig:moves-fix}
\end{figure}

Not all admissible markings $(o,m,s)$ on $T_\Sigma$ describe distinct $r$-spin structures on $\Sigma$. The redundancy is most easily described by fixing $o,m$ and by considering the following move on edge indices (Figure~\ref{fig:moves-fix}):
\begin{enumerate}
  \myitem{M1} 
  For a given polygon $p$ and $x \in \Zb_r$, shift the edge index $s_e$ of an edge $e \in E \setminus E^\mathrm{f}$ on the boundary of $p$ by $+x$ if it is oriented clockwise with respect to the orientation of $p$ and by $-x$ if it is oriented anticlockwise. If $p$ lies on both sides of $e$ (i.e.\ $e$ arises by gluing two edges of $p$ together), $s_e$ is not changed. 
  \label{move:sheet-translation}
\end{enumerate}
Geometrically, this amounts to acting with $x \in \Zb_r$ on the fibres of the spin structure $P$ seen as a $\Zb_r$-principal bundle over the frame bundle $F_\Sigma$, both restricted to the polygon $p$.

The move \eqref{move:sheet-translation} preserves admissibility and we call two admissible assignments $s,s'$ of edge signs \textit{equivalent} if they are related by a sequence of \eqref{move:sheet-translation}-moves.

\begin{theorem}
\label{thm:markings-r-spin-surfaces}
Let $\Sigma$ be an open-closed bordism, $T_\Sigma$ a polygonal decomposition, and $o,m$ choices of edge orientations and chosen edges. Then equivalence classes of admissible edge indices are in bijection with isomorphism classes of $r$-spin structures on $\Sigma$.
\end{theorem}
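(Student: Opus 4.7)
The strategy is the local-to-global assembly of a principal bundle from transition data: build an $r$-spin bundle over $\Sigma$ polygon-by-polygon, using the edge indices $s_e \in \Zb_r$ as the transition functions across shared edges, and then recognize the admissibility condition at each vertex $v \in V^{\mathrm{c}}$ as precisely the cocycle condition needed for the bundle to extend across $v$. In parallel, one can reduce to the known single-edge-per-gluing-boundary case of \cite{Runkel:2018rs,Stern:2020oc} by comparing with the refined combinatorial model (the subject of Appendix~\ref{app:comb-model_glue}); I focus on the direct approach because it makes the role of admissibility most transparent.

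First, each polygon $p \in F$ is contractible, so it admits an $r$-spin structure unique up to the $\Zb_r$-torsor of deck transformations; the marked edge $m$ of $p$ together with the ambient orientation singles out a canonical choice $P_p$. For each edge $e \in E \setminus E^{\mathrm{f}}$, the two neighbouring trivialisations (or the trivialisation on one side together with the fixed boundary model $U_S^{y}$ or $U_H$ on a gluing boundary) differ by an element of $\Zb_r$, and I would declare this element to be $s_e$ with signs determined by the orientation $o$ of $e$ relative to the adjacent polygon orientations. This defines an $\widetilde{GL_2}^r$-bundle over $\Sigma$ minus the vertex set. To extend it across a constrained vertex $v$, one computes the total $\Zb_r$-transition accumulated by travelling once anticlockwise around $v$: contributions $\hat{s_h}$ from crossing each edge at $v$, a contribution of $+1$ from the full anticlockwise rotation of the canonical frame, and a contribution of $-1$ every time one crosses a marked half-edge in $D_v$ (where the canonical frame on the next polygon is offset by a full rotation relative to the previous one). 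Demanding that this total vanish in $\Zb_r$ and rearranging yields exactly \eqref{eq:admissibility-inner}. For vertices in $V \setminus V^{\mathrm{c}}$ only a half-neighbourhood or a boundary-fixed trivialisation is seen, so no closed cocycle condition is imposed.

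For the descent and the bijectivity, I would observe that move \eqref{move:sheet-translation} is exactly the change of canonical trivialisation on a single polygon $p$ by some $x \in \Zb_r$: this shifts each bounding $s_e$ by $\pm x$ according to whether $p$ lies clockwise or anticlockwise of $e$ (and by $0$ if $e$ appears twice on $\partial p$ with opposite orientations), so equivalent markings yield isomorphic $r$-spin structures. Surjectivity follows again from contractibility of the polygons: any $r$-spin structure can be locally trivialised on each face and compared with the canonical trivialisation to extract admissible edge indices. For injectivity, any global isomorphism of $r$-spin structures restricts on each polygon $p$ to a deck transformation by some $x_p \in \Zb_r$, and the resulting difference between the two markings is precisely a composition of \eqref{move:sheet-translation}-moves, one per polygon. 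The main obstacle will be the careful sign and frame-rotation bookkeeping that produces the right-hand side $|D_v| - |H_v| + 1$ of \eqref{eq:admissibility-inner} from the ``$+1$ per loop, $-1$ per marked half-edge'' accounting; a secondary subtlety is the compatibility with the gluing-boundary models, in particular verifying that fixing $U_S^y$ near a closed gluing boundary of type $y$ is consistent with the holonomy relation \eqref{eq:holo-via-label} as computed combinatorially from the edge indices along that boundary, which in turn motivates excluding the image of $v_0$ from $V^{\mathrm{c}}$.
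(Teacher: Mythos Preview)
Your proposal is correct in outline but takes a genuinely different route from the paper. The paper does not rebuild the bijection from scratch; it invokes \cite[Prop.\,3.1.9]{Stern:2020oc}, which already gives a bijection between isomorphism classes of $r$-spin structures and equivalence classes of admissible markings (with equivalence generated by moves \eqref{move:edge-or-bulk} and \eqref{move:marked-edge-bulk}), and then argues two residual points: (i) once $o$ and $m$ are fixed, the equivalence relation collapses to that generated by \eqref{move:sheet-translation} alone, since \eqref{move:sheet-translation} is an iterated \eqref{move:marked-edge-bulk} and the moves commute (cf.\ \cite[Rem.\,2.14]{Runkel:2018rs}); and (ii) the cited result still applies without fixing the types of closed gluing boundaries, by dropping the admissibility condition at the vertices $v_0$. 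Your direct local-to-global assembly is essentially the content of the references \cite{Novak:2015phd,Runkel:2018rs,Stern:2020oc} that the paper cites: it is more self-contained and makes the geometric meaning of admissibility and of \eqref{move:sheet-translation} transparent, but it is considerably more work. The delicate frame-rotation bookkeeping you flag as ``the main obstacle'' (your sketch accounts for a $+1$ from the closed loop and $-|D_v|$ from marked half-edges, but the $|H_v|$ term from passing through successive faces is not yet visible) is exactly the labour the paper outsources to the literature.
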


We explain in Appendix~\ref{app:comb-model_proof} how this follows from similar results in \cite{Novak:2015phd,Runkel:2018rs,Stern:2020oc}.

Given an $r$-spin structure $P$ on $\Sigma$, by the above theorem there exists an admissible marking on $T_\Sigma$ which encodes $P$. We write 
\be\label{eq:admissible-marking-from-rspin}
T_\Sigma(P)
\ee
for such a choice of marking.

\begin{figure}[t]
\begin{align*}
    \mathrm{(M4)} &&\mathrm{(M5)}& \\
    &\raisebox{-0.5\height}{\includegraphics[scale=1.0]{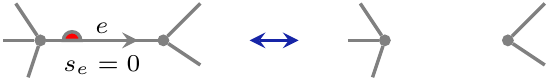}}
    &&\raisebox{-0.5\height}{\includegraphics[scale=1.0]{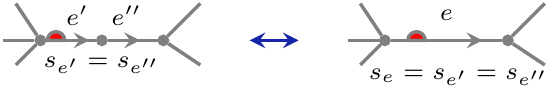}}
\end{align*}
    \caption{Moves  \eqref{move:remove-edge} and \eqref{move:remove-vertex} changing a marked polygonal decomposition.}
    \label{fig:moves-PLCW}
\end{figure}

The move \eqref{move:sheet-translation} does not change the isomorphism class of the $r$-spin structure defined by the combinatorial data. 
There are four more moves which change one or more of $T_\Sigma$, $o$, $m$ and $s$ without changing the $r$-spin structure 
\cite[Thm.\,2,13,\,Prop.\,2.18]{Runkel:2018rs}~and~\cite[Prop.\,3.1.9,\,Prop.\,3.1.11]{Stern:2020oc}:
\begin{enumerate}
  \myitem{M2} Change the edge orientation for an inner edge $e$ and change the edge index $s_e$ to $-1-s_e$ (Figure \ref{fig:moves-fix}).
  \label{move:edge-or-bulk}
  \myitem{M3} 
Shift the marked edge of a face anticlockwise. Change the edge index by $\pm 1$ depending on the edge orientation, 
unless the marked edge is on the free boundary in which case it does not carry an orientation or an edge index (Figure \ref{fig:moves-fix}). 
\ref{fig:moves-fix}). 
  \label{move:marked-edge-bulk}
  \myitem{M4} Remove an inner edge $e$ which is adjacent to two distinct polygons. We require that $e$ is the chosen edge of exactly one of the two polygons, and that $e$ has orientation and edge index as in Figure~\ref{fig:moves-PLCW}.
  \label{move:remove-edge}
    \myitem{M5} Remove a 2-valent vertex whose two adjacent edges are distinct. If the vertex is on the free boundary, there are no additional conditions. If the the vertex is an inner vertex, the orientations, chosen edges and edge indices have to be as shown in Figure~\ref{fig:moves-PLCW}. On a gluing boundary, no vertices can be removed or added.
  \label{move:remove-vertex}
\end{enumerate}
For the moves \eqref{move:marked-edge-bulk}--\eqref{move:remove-vertex} it is understood that the inverse moves are included as well. 
The move \eqref{move:edge-or-bulk} is self-inverse.
Note that \eqref{move:sheet-translation} is redundant because it can be obtained by
by iterating \eqref{move:marked-edge-bulk}. We include it anyway because it is the only move that operates solely on the edge indices and thereby simplifies the formulation of Theorem~\ref{thm:markings-r-spin-surfaces}.

Let $\Sigma, \Sigma'$ be surfaces, and let $T_{\Sigma'}(o',m',s')$ be a polygonal decomposition of $\Sigma'$ with admissible marking. 
Given a diffeomorphism $f \colon  \Sigma \lra \Sigma'$, we can pull back the decomposition $T_{\Sigma'}$ and the marking $(o',m',s')$ along $f$. We write $f^*T_{\Sigma'}(o',m',s')$ for this pull-back.
Let now $P'$ be the $r$-spin structure on $\Sigma'$ described by $T_{\Sigma'}(o',m',s')$.
By construction, the pull-back marked decomposition $f^*T_{\Sigma'}(o',m',s')$ describes the pull-back $r$-spin structure $f^* P'$ from \eqref{eq:pullback-rspin-structure}. 
This can be seen by noting that holonomies do not change under pullback.
In the case that $\Sigma = \Sigma'$, we arrive at the following statement:

\begin{proposition}
Let $P$ be an $r$-spin structure on $\Sigma$, $T_\Sigma$ a polygonal decomposition, and $f \colon  \Sigma \lra \Sigma$ be a diffeomorphism.
Then the combinatorial presentation $T_\Sigma(f^* P)$ of the pull-back $r$-spin structure $f^* P$ on the polygonal decomposition $P_\Sigma$ of
$\Sigma$ can be obtained from the pull-back decomposition $f^*T_{\Sigma}(P)$ 
by a sequence of moves \eqref{move:sheet-translation}-\eqref{move:remove-vertex}.
\end{proposition}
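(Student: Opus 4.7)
My plan is to reduce the proposition to the general fact that the moves \eqref{move:sheet-translation}--\eqref{move:remove-vertex} suffice to connect any two admissibly marked polygonal decompositions of a fixed open-closed surface that encode the same isomorphism class of $r$-spin structure.

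First I would verify that both marked decompositions in the statement present the same $r$-spin structure on $\Sigma$. By the definition of the pullback, the diffeomorphism $f$ induces a bundle map $df_*\colon F_\Sigma \to F_\Sigma$ covering $f$, and $f^*P$ is obtained by pulling back $P$ along this map. Since holonomies along smooth curves are computed by lifting paths through the bundle maps, they are preserved: the holonomy of $f^*P$ along a curve $\gamma$ equals the holonomy of $P$ along $f\circ \gamma$. On the combinatorial side, the edge indices of $T_\Sigma(P)$ are designed precisely to record these holonomies (via the admissibility condition \eqref{eq:admissibility-inner} and its interpretation in terms of frame rotations). Pulling the marked decomposition $T_\Sigma(P)$ back along $f$ therefore produces a marked polygonal decomposition $f^*T_\Sigma(P)$ of $\Sigma$ whose edge indices encode $f^*P$. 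By the notational convention of \eqref{eq:admissible-marking-from-rspin}, $T_\Sigma(f^*P)$ is by definition another admissibly marked decomposition of $\Sigma$ encoding $f^*P$.

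Next I would invoke the completeness of the move list \eqref{move:sheet-translation}--\eqref{move:remove-vertex}. This is the statement that any two admissibly marked polygonal decompositions of an open-closed surface $\Sigma$ representing isomorphic $r$-spin structures are related by a finite sequence of these moves. This is essentially Theorem~2.13 and Proposition~2.18 of \cite{Runkel:2018rs} together with Propositions~3.1.9 and 3.1.11 of \cite{Stern:2020oc}, extended mildly to accommodate the slightly more flexible model that allows several edges on a single gluing boundary component (the relation to the stricter model of those references is explained in Appendix~\ref{app:comb-model_glue}). Theorem~\ref{thm:markings-r-spin-surfaces} is the fixed-decomposition version of this statement; the moves \eqref{move:remove-edge}--\eqref{move:remove-vertex} are exactly what is needed to bridge the gap between different underlying polygonal decompositions. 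Applying this result to $f^*T_\Sigma(P)$ and $T_\Sigma(f^*P)$ yields the claimed sequence of moves.

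The main point requiring care is that both decompositions must respect the rigid structure at the gluing boundaries: the number, positions and orientations of edges there are prescribed by the standard model of Figure~\ref{fig:standard-decomposed-circle} via the parametrising maps. Since $f$ is a diffeomorphism of open-closed bordisms, it is by definition compatible with the germs of the boundary parametrisations, so $f^*T_\Sigma$ inherits the required standard form on the gluing boundaries. The relevant move list must therefore be applied in a way that preserves this boundary data; the restrictions built into \eqref{move:remove-edge} and \eqref{move:remove-vertex} (no removal of vertices on gluing boundaries) ensure that the move sequence produced by the completeness theorem indeed stays in the class of decompositions with the correct boundary structure, so no further argument is needed there.
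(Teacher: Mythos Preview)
Your proposal is correct and follows essentially the same approach as the paper: observe (via preservation of holonomies under pullback) that $f^*T_\Sigma(P)$ and $T_\Sigma(f^*P)$ both encode the $r$-spin structure $f^*P$, and then invoke the completeness of the moves \eqref{move:sheet-translation}--\eqref{move:remove-vertex} as established in \cite{Runkel:2018rs,Stern:2020oc}. The paper's argument is in fact given only as the short paragraph preceding the proposition, with the move-completeness left implicit from the earlier discussion; your write-up makes this second step and the boundary-compatibility point more explicit, but the content is the same.
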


\subsubsection*{Combinatorial model and holonomies}

We described in Section~\ref{sec:geom-oc-r-spin-surf} how an $r$-spin structure is determined by its holonomies up to isomorphism. Since a marked polygonal decomposition $T_\Sigma(o,m,s)$ determines an $r$-spin structure, it is possible to compute the holonomies in terms of the combinatorial data. The procedure is not complicated but slightly technical, and is described in detail in \cite[Sec.\,2.4]{Runkel:2018rs} and reviewed in Appendix~\ref{app:comb-model_comb}. Two important instances are:
\begin{itemize}
    \item Consider a small circular path running anticlockwise around an interior vertex $v$. In the notation used in \eqref{eq:admissibility-inner}, the holonomy is given by
  \be
      |H_v|-|D_v| + \sum_{h\in H_v}\hat{s_h} ~\in \Zb_r ~.
      \label{eq:holonomy-inner}
  \ee 
  The condition that this holonomy is $1$ amounts to the condition for the spin structure to extend from a punctured disc to the whole disc. Note that this is also the admissibility condition \eqref{eq:admissibility-inner}, explaining its geometric meaning.
  
      \item Consider a circular path running parallel to a
    closed gluing boundary oriented in the same way as the image of the unit circle under the parametrising map. Let $v$ be the unique unconstrained vertex on that gluing boundary
    (i.e.\ the image of $v_0$ in Figure~\ref{fig:standard-decomposed-circle}).  It is shown in Appendix~\ref{app:comb-model_comb} that the holonomy is given by 
  \be
      \delta_v \Big(|H_v|-|D_v| -1+  \sum_{h\in H_v}\hat{s}_h \Big)~,
      \label{eq:holonomy-boundary-vertex}
  \ee
where 
$\delta_v=+1$ if $v \in \partial^\mathrm{c}_\mathrm{in} \Sigma$ and
$\delta_v=-1$ if $v \in \partial^\mathrm{c}_\mathrm{out} \Sigma$.
Recall from \eqref{eq:type-of-closed-glue} the definition of the type $y \in \Zb_r$ of a closed gluing boundary. By \eqref{eq:holo-via-label}, the holonomy is related to the type via
  \be
  \text{(hol.\ in \eqref{eq:holonomy-boundary-vertex})} ~=~ 1-y~,
      \label{eq:admissibility-boundary}
  \ee
which can also be thought of as a constraint on the edge labels if the type is fixed.
\end{itemize}

\subsubsection*{Examples of \texorpdfstring{$r$}{r}-spin surfaces}

Let us consider two examples of polygonal decompositions, where in both cases we use a single polygon.

\begin{figure}[tb]
\begin{align*}
    \mathrm{a})&&\mathrm{b})&\\[-1em]
    &\raisebox{-0.5\height}{\includegraphics[scale=1.0]{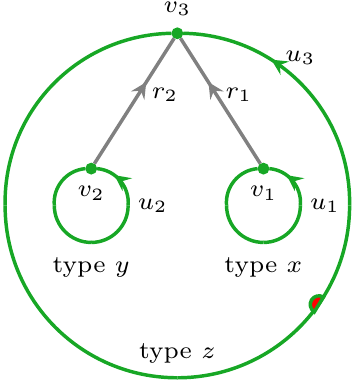}}
    &&\raisebox{-0.5\height}{\includegraphics[scale=1.0]{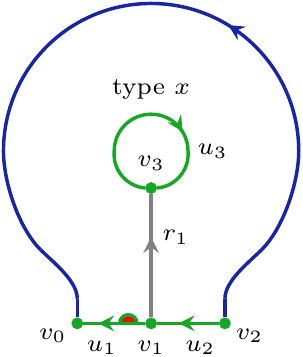}}
\end{align*}

\caption{A marked polygonal decomposition for a) an $r$-spin sphere with two incoming and one outgoing closed gluing boundary, and b) an $r$-spin disc with one ingoing open and one outgoing closed gluing boundary, as well as a free boundary interval. In this example, the open gluing boundary is covered by two edges.
}

\label{fig:pants_and_disc}
\end{figure}

The first example is a 3-holed sphere with two in-going closed gluing boundaries of types $x,y \in \Zb_r$ and an out-going closed gluing boundary of type $z$. The decomposition is shown in Figure~\ref{fig:pants_and_disc}\,a). 
The admissibility conditions \eqref{eq:admissibility-boundary} at the three vertices are:
\begin{align}
v_1 &:~ \underbrace{|H_{v_1}|}_{=3}-\underbrace{|D_{v_1}|}_{=0}-1+
\underbrace{\textstyle{\sum_{h\in H_{v_1}}\hat{s_h}}}_{=r_1-1} = 1 - x
&\Leftrightarrow~~& r_1 = -x~,~~
 \nonumber \\
v_2 &:~ 3-0-1+r_2-1=1-y
&\Leftrightarrow~~& r_2 = -y
~,~~
  \nonumber \\
v_3 &:~ -(4-1-1+(-1-r_1)+(-1-r_2)-1)=1-z 
&\Leftrightarrow~~& z = x+y ~.
\end{align}
Note that there exists an $r$-spin structure if and only if $z = x+y$.
In this case, by Theorem~\ref{thm:markings-r-spin-surfaces} the isomorphism classes of $r$-spin structures are parametrised by $u_1,u_2,u_3 \in \Zb_r$ up to the move \eqref{move:sheet-translation}. E.g.\ we can use \eqref{move:sheet-translation} to set $u_3=0$, so that the isomorphism classes of $r$-spin structures are parametrised by $u_1,u_2 \in \Zb_r$.

The second example is an annulus with one in-going open gluing boundary and one out-going closed gluing boundary of type $x$, see Figure~\ref{fig:pants_and_disc}~b).
The constrained vertices are $v_1$ and $v_3$, and the conditions are:
\begin{align}
\text{\eqref{eq:admissibility-boundary} at $v_3$ :} \quad&
-(3-0-1+(-1-r_1)-1) =1-x 
\quad &\Leftrightarrow& \quad r_1 = 1-x
\nonumber \\
\text{\eqref{eq:admissibility-inner} at $v_1$ :} \quad &
u_1 + (-1-u_2) + r_1 = 0-3+1 
\quad &\Leftrightarrow& \quad u_1 = u_2+x-2 ~.
\end{align}
There thus exist $r$-spin structures for every $x\in\Zb_r$. Using \eqref{move:sheet-translation} we can set $u_3=0$, so that the isomorphism classes of $r$-spin structures are parametrised by $u_1 \in \Zb_r$.

\subsubsection*{Gluing of marked polygonal decompositions}

Consider a (not necessarily connected) open-closed bordism $\Sigma$ 
with admissible marked polygonal decomposition $T_\Sigma(o,m,s)$.
Let $\Sigma'$ be the open-closed bordism obtained by either identifying an out-going and an in-going open gluing boundary, or an out-going and an in-going closed gluing boundary of the same type in the sense of \eqref{eq:admissibility-boundary}. 
The identification is defined via the boundary parametrisation maps by the standard (half) circle (recall Figure~\ref{fig:standard-decomposed-circle}), so that in particular the in- and outgoing gluing boundaries must contain the same number of edges and vertices.

\begin{figure}[tb]
    \centering
    \includegraphics[scale=1.0]{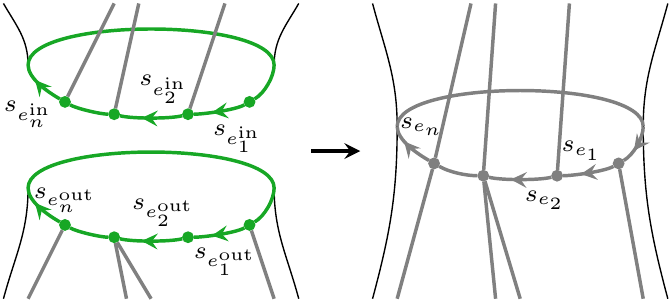}
    \hspace{2em}
    \includegraphics[scale=1.0]{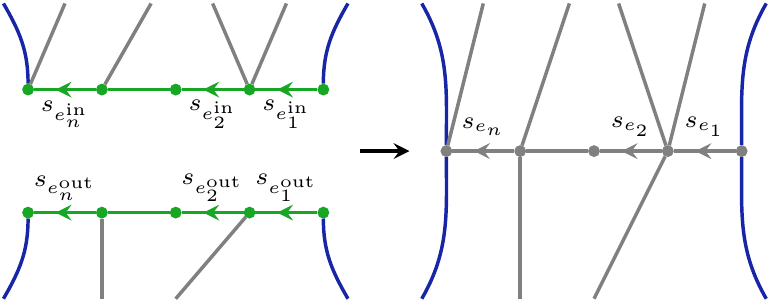}
    \caption{Gluing cell decompositions along closed and open boundaries. The new edge labels are 
    $s_{e_i}:=s_{e^\mathrm{in}_i} +s_{e^\mathrm{out}_i}$.}
    \label{fig:gluing-cell-decompositions}
\end{figure}

On $\Sigma'$ we obtain the induced admissible polygonal decomposition as follows. 
Denote the edges on the in- and outgoing boundary that are glued together by $e^\mathrm{in}_i$ and $e^\mathrm{out}_i$, $i=1,\dots,n$. By construction, the orientations of $e^\mathrm{in}_i$ and $e^\mathrm{out}_i$ agree when compared via the parametrising maps, and we keep this orientation for the glued edges $e_i$ (see Figure~\ref{fig:gluing-cell-decompositions}). The edge index of $e_i$ is given by
\be\label{eq:glued-edge-label}
    s_{e_i}:=s_{e^\mathrm{in}_i} +s_{e^\mathrm{out}_i}\,,
\ee
and the rest of the marked decomposition is not affected by the gluing. Denote the resulting admissible marked decomposition by $T_{\Sigma'}(o',m',s')$.

It is shown in Appendix~\ref{app:comb-model_glue} that \eqref{eq:glued-edge-label} does indeed produce an admissible marking for $T_{\Sigma'}$ and that the $r$-spin structure on $\Sigma'$ defined by $T_{\Sigma'}(o',m',s')$ agrees with the one obtained by gluing boundaries of open-closed $r$-spin bordisms via their boundary parametrisation maps as described in Section~\ref{sec:geom-oc-r-spin-surf}.

\subsection{Open-closed spin bordisms with defects}
\label{sec:surf-w-def}

\begin{figure}[tb]
    \centering
    \begin{align*}
        \mathrm{a})&&\mathrm{b})&&\mathrm{c})&\\[-1em]
    &\raisebox{-0.5\height}{\includegraphics[scale=1.0]{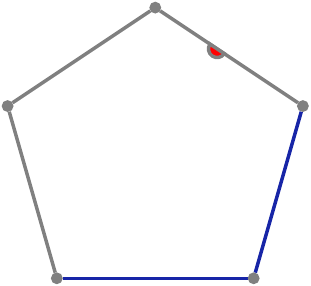}}
    &&\raisebox{-0.5\height}{\includegraphics[scale=1.0]{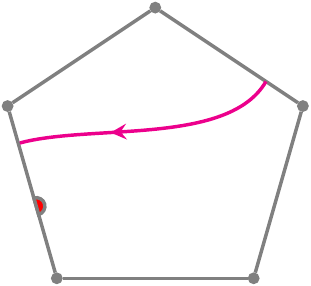}}
    &&\raisebox{-0.5\height}{\includegraphics[scale=1.0]{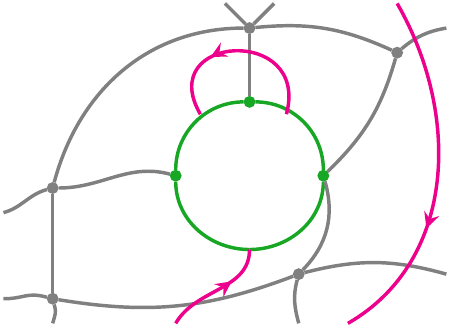}}
    \end{align*}
   \caption{
a) Plaquette with some edges on the free boundary of $\Sigma$. The edge following the last edge on the free boundary in anticlockwise 
direction is the marked edge, the orientation of the edges can be arbitrary (edges on the free boundary carry no orientation).
   b) Plaquette with defect arc. The defect has to leave the plaquette at the marked edge, the orientation of the edges can again be arbitrary.
   c) Example of an allowed polygonal decomposition near a closed gluing boundary in the presence of defects. The marking is not shown.}
    \label{fig:defects-PLCW}
\end{figure}

Let $\Sigma$ be an open-closed bordism. A \textit{defect line} on $\Sigma$ is an embedded loop or an embedded arc, whose endpoints lie on open or closed gluing boundaries, but not on the free boundary. 
Accordingly, an \textit{open-closed bordisms with defects} is an open-closed bordism with a finite collection of pairwise disjoint line defects. 
To avoid corners when gluing, we demand that arcs end on the gluing boundary orthogonally when mapped to the complex plane via the parametrising map.

If $\Sigma$ carries in addition an $r$-spin structure, then we obtain an \textit{open-closed $r$-spin bordisms with defects}. 
Note that we do not consider separate $r$-spin structures on individual patches obtained by removing the line defects from $\Sigma$, but rather an $r$-spin structure on the entire surface $\Sigma$. This can be understood as a restriction on the type of defects we consider in this paper.

If the $r$-spin structure is encoded by an admissible marked polygonal decomposition $T_\Sigma(o,m,s)$, this means that the line defects do not affect the admissibility conditions at the vertices, and that the polygonal decomposition can be deformed freely across the line defects. Nonetheless, the description of CFT correlators in Section~\ref{sec:spin-parity-CFT} 
becomes simpler if we impose the following constraints on $T_\Sigma(o,m,s)$ (see Figure~\ref{fig:defects-PLCW}):

\begin{itemize}
\item No vertex of $T_\Sigma$ lies on a defect line.
Defect lines intersect the edges of $T_\Sigma$ transversally.
\item A polygon can either intersect a defect line, or have edges on the free boundary, or none of these, but not both. 
\item If a polygon intersects the free boundary, it does so in a single vertex or in a sequence of consecutive edges, and not all of its edges lie on the free boundary. The edge following the last edge on the free boundary in anticlockwise direction is the marked edge of the polygon (Figure~\ref{fig:defects-PLCW}\,a).
\item If a polygon intersects a defect line, it does so in a single arc. The two endpoints of the arc lie on distinct edges of the polygon (before identification), and the defect arc leaves the polygon at its marked edge
(Figure~\ref{fig:defects-PLCW}\,b).
\item A closed gluing boundary that does not contain an endpoint of a defect line is covered by a single edge. A closed gluing boundary that does contain endpoints of defect lines has as many edges as endpoints.
\item An open gluing boundary that does not contain an endpoint of a defect line is covered by two edges. An open gluing boundary that does contain endpoints of defect lines has two edges more than it has defect endpoints, with no defect endpoint lying on the edges touching the free boundary.
\end{itemize}
One can account for the restrictions on polygons touching the free boundary in terms of those for polygons intersecting defect lines by thinking of the free boundary as accompanied by a parallel line defect.

Subject to these requirements, the moves \eqref{move:sheet-translation}--\eqref{move:remove-vertex} can be applied in the same way, and one can convince oneself that they relate any two polygonal decompositions satisfying the requirements.
Some examples are shown in Figure~\ref{fig:remove-edge-vertex-defect}. 

\begin{figure}[tb]
    \centering
    \begin{align*}
    \mathrm{(M4)} &&\mathrm{(M5)} &\\
    &\raisebox{-0.5\height}{\includegraphics[scale=1.0]{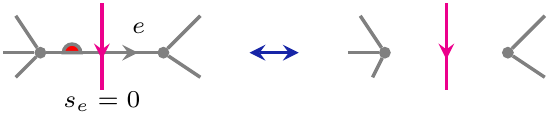}}
    &&\raisebox{-0.5\height}{\includegraphics[scale=1.0]{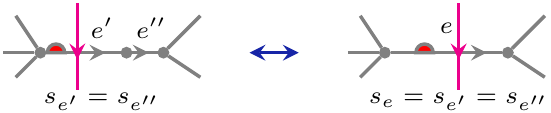}}
    \end{align*}
    \caption{Examples of moves \eqref{move:remove-edge} and \eqref{move:remove-vertex} in the presence of defects. 
    The marking changes in the same way as without defects (cf.\ Figure~\ref{fig:moves-PLCW}).} 
    \label{fig:remove-edge-vertex-defect}
\end{figure}

\begin{figure}[tb]
    \centering
    \includegraphics[scale=1.0]{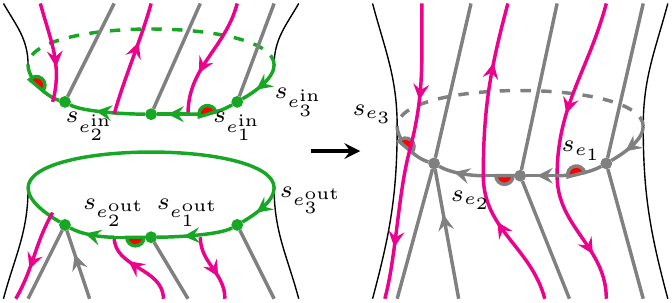}
    \hspace{2em}
    \includegraphics[scale=1.0]{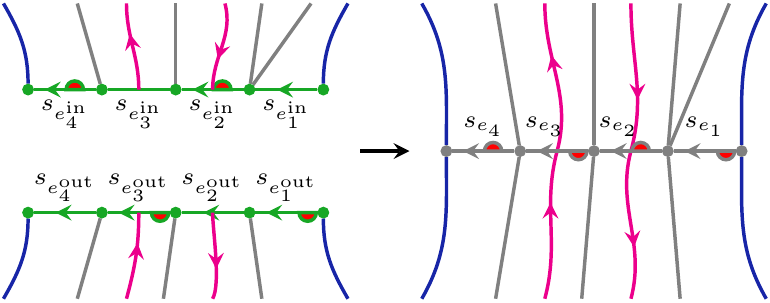}
    \caption{Gluing cell decompositions along closed and open boundary components with defects. 
    In the closed case we have three defect line end points on the gluing boundary and three corresponding edges. 
        In the open case two defect line end points lie on the gluing boundary and we have two corresponding edges in addition to the two edges for the endpoints of the free boundary. 
    The new edge labels are
    $s_{e_i}:=s_{e^\mathrm{in}_i} +s_{e^\mathrm{out}_i}$.
}
    \label{fig:gluing-cell-decompositions-defect}
\end{figure}
The gluing procedure is the same as in the case without defects. We illustrate this in Figure~\ref{fig:gluing-cell-decompositions-defect}.

\subsection{World sheets and bordisms}\label{sec:world-sheet}

When describing CFT correlators in terms of conformal blocks below, we will use world sheets, i.e.\ surfaces with marked points for field insertions, rather than surfaces with gluing boundaries carrying open or closed states. 

In more detail, a \textit{world sheet $\Sigma$ with boundaries and defects}, or just \textit{world sheet} for short, consists of 
\begin{itemize}
    \item 
a surface with possibly non-empty boundary 
    \item 
an  ordered, finite set of marked points, possibly on the boundary, 
    \item 
a tangent vector at each marked point,
\item
    a partition of the marked points into in-going and out-going marked points,
    \item a finite set of embedded oriented loops in the interior of $\Sigma$, and a finite set of
    embedded oriented arcs which intersect the boundary at most at their endpoints. 
\end{itemize}
For the tangent vectors and for 
the embedded arcs and loops we require the following conditions:
\begin{itemize}
    \item We take the boundary of the surface to be oriented by the inward pointing normal, i.e.\ as the real axis on the upper half plane. 
For a marked point on the boundary, the tangent vector has to be parallel to the boundary and to point in the direction given by the orientation of the boundary.

\item The embedded loops are mutually non-intersecting and are disjoint from the marked points. The embedded arcs have endpoints at marked points, and do not intersect loops. They can intersect other arcs only at their endpoints.

\end{itemize}

We will refer to a marked point with the corresponding tangent vector as an \textit{(ingoing or outgoing) insertion point}. The embedded loops and arcs are the \textit{defect lines}.

Accordingly, a \textit{diffeomorphism of world sheets} is an orientation preserving diffeomorphism of the underlying surfaces which preserves the in- and outgoing marked points, the tangent vectors at the marked points, and the embedded loops and arcs together with their orientation.

\begin{figure}[tb]
    \centering
\begin{align*}
    \mathrm{a})&&\mathrm{b})&\\[-1em]
    &\raisebox{-0.5\height}{\includegraphics[scale=1.0]{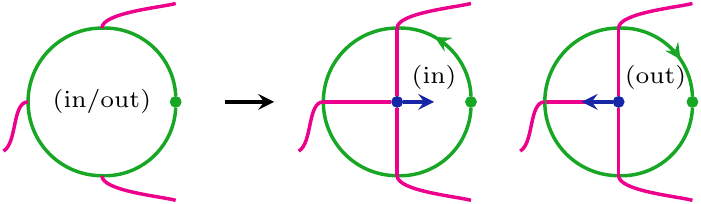}}
    &&\raisebox{-0.5\height}{\includegraphics[scale=1.0]{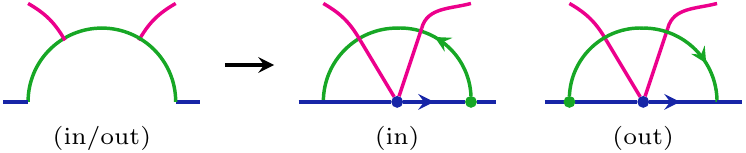}}
\end{align*}
    \caption{Turning an open-closed bordism $\Sigma_b$ into a world sheet $\Sigma$ by gluing in (half)discs with a marked point at zero.
The green dot on the (half)circle gives the image of $1 \in \Cb$ under the parametrising map. In our convention, the tangent vector points towards $1$ for ingoing gluing boundaries and away from $1$ for outgoing ones. 
    }
    \label{fig:surf-bp-ws}
\end{figure}

One can turn an open-closed bordism $\Sigma_b$ into a world sheet $\Sigma$ by gluing in unit discs and unit half-discs via the parametrising maps. The unit disc has zero as an insertion point with tangent vector point along the positive real axis 
    (and along the negative real axis in case of an outgoing closed gluing boundary only).
For out-going boundaries one needs to compose the parametrising map with $z \lmt 1/z$ (closed case) or $z \lmt -1/z$ (open case) first.
The insertion points keep the in/outgoing label of the gluing boundary they replace.
If there are defect lines ending on a gluing boundary of the bordism, these are extended as straight radial arcs to the marked point at zero. This is illustrated in Figure~\ref{fig:surf-bp-ws}.

The same procedure works in the presence of $r$-spin structures. Starting from an $r$-spin bordism $\bbSigma_b$ one obtains an \textit{$r$-spin world sheet} 
$\bbSigma$. Here it is understood that for closed gluing boundaries the $r$-spin structure in general
only extends to the punctured disc, i.e.\ the disc one glues in minus the marked point. If the boundary is ingoing of type $y$, the $r$-spin structure on the punctured disc is the restriction of $\Cb^y$, and if it is outgoing that of $\Cb^{2-y}$. 
Thus the $r$-spin structure extends to the whole disc iff the closed gluing boundary is of type $y=0$ (ingoing) or $y=2$ (outgoing).

The difference between in- and outgoing $r$-spin structure on the punctured disc arises from the precomposition with $z \lmt 1/z$. This is most easily understood in terms of holonomy: by \eqref{eq:holo-via-label}, a anticlockwise unit circle $S^1$ on $\Cb^y$ produces holonomy $\zeta(S^1)=1-y$; under the map $z \lmt 1/z$, the curve changes direction, so that the holonomy becomes $y-1 = 1-(2-y)$.

\section{Reshetikhin-Turaev TFT and conformal blocks}
\label{sec:TFT-CFT}

The holomorphic fields of a CFT form a vertex operator algebra, as do the anti-holomorphic fields. We will consider CFTs where the holomorphic and the anti-holomorphic fields both contain a given VOA $\Vc$, and where this VOA is rational in the sense that its 
category of representations $\Cc=\Rep(\Vc)$ is a modular fusion category. Via the Reshetikhin-Turaev construction, $\Cc$ defines a 3d TFT which (conjecturally) encodes the spaces of conformal blocks of $\Vc$ as well as their monodromy and gluing properties. 

In this section we briefly review these connections to the extend needed in the following.

\subsection{Modular categories}
\label{sec:modular}

A modular fusion category $\Cc$ is finitely semisimple, and it is equipped with a tensor product, duals, a non-degenerate braiding, and a ribbon twist. In particular, $\Cc$ is a ribbon category, and we will use the standard graphical calculus to represent morphisms in $\Cc$.
For the conventions and definitions stated below, we do in fact not need the braiding and the ribbon twist, but we prefer to stay in the framework of ribbon categories to avoid changing the setting too often. For more details we refer to \cite{Etingof:2016tensor}.

An object $U \in \Cc$ is said to have a \textit{left dual} if there is an object $U^*$ together with evaluation and coevaluation morphisms $\ev_U \colon  U^* \otimes U \lra \One$ and $\coev_U \colon  \One \lra U \otimes U^*$ satisfying the zigzag identities, see \cite[Sec.\,2.10]{Etingof:2016tensor}.
For a \textit{right dual object} ${}^*U$, the order of the tensor product in the (co)evaluation morphisms is reversed.
We use the following graphical notation:
\begin{equation}
        \includegraphics[scale=1.0]{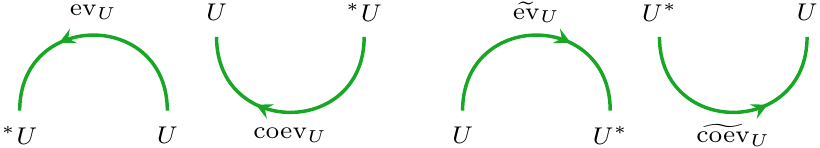}
\end{equation}
In particular, we read our diagrams from bottom to top.

In a ribbon category, we can always take ${}^*U=U^*$, and below we will write $U^*$ for both duals.

We write $\Cc(U,V)$ for the space of morphisms from $U$ to $V$ in $\Cc$, as a shorthand for $\Hom_{\Cc}(U,V)$. 
Out of evaluation and coevaluation one can form a morphism in $\Cc(\One,\One)$, which we identify with $\Cb$ via $\lambda \lmt \lambda \, \id_\One \in \Cc(\One,\One)$. The resulting number is the \textit{quantum dimension}, or  \textit{dimension} for short, of $U$: 
\begin{equation}
\dim(U) := \raisebox{-.5\height}{\includegraphics[scale=1.0]{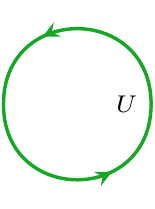}}
= \raisebox{-.5\height}{\includegraphics[scale=1.0]{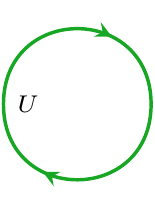}}\,.
    \label{eq:dimensions}
\end{equation}
The fact that the two expressions for $\dim(U)$ coincide is a property of ribbon categories (and more generally of spherical categories). For ribbon categories one can see this by thinking of both sides as ribbons in $\Rb^3$ (drawn here flat in the paper plane) and deforming one into the other.

The collection of invariants of the Hopf link coloured by (representatives of isomorphism classes of) simple objects $U,V \in \Cc$ is called the $\mathsf{s}$-matrix,
\begin{equation}\label{eq:modcat-s-matrix-def}
    \mathsf{s}_{U,V} = \Tr_{U \otimes V}( \sigma_{V,U} \circ \sigma_{U,V} ) 
= \raisebox{-.5\height}{\includegraphics[scale=1.0]{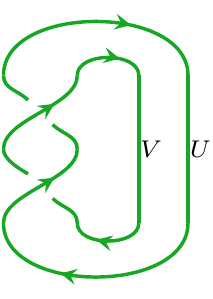}}
= \raisebox{-.5\height}{\includegraphics[scale=1.0]{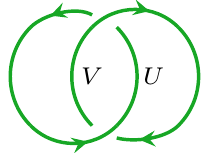}}\,.
\end{equation}
where $\sigma_{U,V} \colon  U \otimes V \lra V \otimes U$ denotes the braiding of $\Cc$. For a modular fusion category, the matrix $\mathsf{s}$ with entries $\mathsf{s}_{U,V}$ is non-degenerate.

\subsection{Conformal blocks and 3d TFT}\label{sec:confblock-TFT}

Let $\Vc$ be a rational VOA such that $\Cc=\Rep(\Vc)$ is a a modular fusion category (see \cite{Huang2005} for the precise conditions and the proof). In this section we will review a further relation between $\Vc$ and $\Cc$, namely that the spaces of conformal blocks obtained from $\Vc$ agree with the state spaces of the Reshetikhin-Turaev (RT) TFT defined by $\Cc$. Standard references for this section are \cite{Moore:1989,Turaev:1994,baki,Frenkel:2004jn,Lepowski:2004}. A proof of factorisation for spaces of conformal blocks has recently been given in \cite{Damiolini:2019}. 

The main aim of this and the next section is to explain why the TFT considerations in the later chapters are indeed the ones relevant to describe CFT correlators and their properties.
The construction of parity CFTs and spin CFTs presented in Sections~\ref{sec:oriented-CFT-from-TFT} and \ref{sec:spin-parity-CFT} will be given in the TFT setting an will not make direct use of VOAs.

\medskip

The RT TFT $\Zc_{\Cc}^\mathrm{RT}$ is a symmetric monoidal functor from the category $\Bord_3(\Cc)$ of $\Cc$-extended surfaces and 3-bordisms with
embedded $\Cc$-coloured ribbon graphs to the category of vector spaces $\Vect$.
A \textit{$\Cc$-extended surface} is a closed surface with a finite ordered set of marked points, where each marked point
$p$ consists of a tuple $\boldsymbol{p} = (p,U,v,\delta)$, where $p\in\Sigma$, $U\in\Cc$,  $v$ is a tangent vector at $p$, $\delta\in\{\pm\}$, 
and which is equipped with a Lagrangian subspace $\lambda\in H_1(\Sigma,\Rb)$. The Lagrangian subspace, as well as an integer assigned to each bordism, is required to compensate a gluing anomaly, we refer to \cite[Sec.\,IV.9]{Turaev:1994} for details. We will mostly not mention this data explicitly in the following.

There is a direct relation between the $\Hom$-spaces of $\Cc$ and the state spaces of the RT TFT, i.e.\ the vector spaces that $\Zc_{\Cc}^\mathrm{RT}$ assigns to $\Cc$-extended surfaces. To describe it, let $\Ic$ be the set labelling isomorphism classes of simple objects in $\Cc$, and let
\be\label{eq:I-Si-def}
	S_i ~~,~~ i \in \Ic
\ee
be a choice are representatives. We write
$L=\bigoplus_{i\in \Ic}S_i\otimes S_i^*$ for the coend in $\Cc$.
Let $\Sigma$ be a $\Cc$-extended surface of genus $g$ with marked points $\boldsymbol{p_i} = (p_i,U_i,v_i,\delta_i=-)$, $i=1,\dots,n$, and consider the handle body 
\be
H_f = \raisebox{-.5\height}{\includegraphics[scale=1.0]{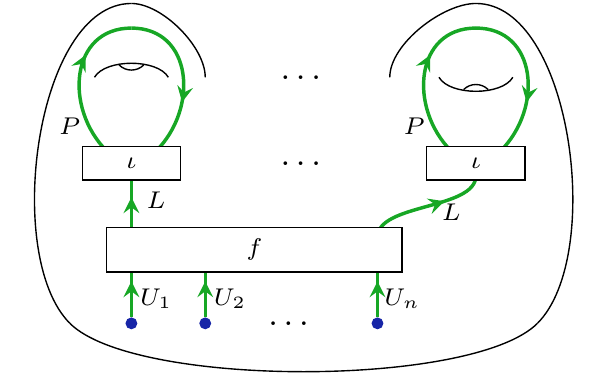}}
\ee
where $f \in \Cc(U_1\otimes \dots \otimes U_n, L^{\otimes g})$, $P = \bigoplus_{i \in \Ic} S_i$ and $\iota \colon  L \lra P \otimes P^*$ is the diagonal embedding. 
We take $H_f$ to be a bordism $\emptyset \lra \Sigma$ in $\Bord_3(\Cc)$. Applying the TFT functor gives a linear map
\be
\Cb \xrightarrow{~\Zc_{\Cc}^\mathrm{RT}(H_f)~} \Zc_{\Cc}^\mathrm{RT}(\Sigma) ~.
\ee
Evaluating at $1 \in \Cb$ gives an element in the state space $\Zc_{\Cc}^\mathrm{RT}(\Sigma)$. By construction of the RT TFT, the map
\be\label{eq:hom-space-TFT-state-space}
\Cc(U_1\otimes \dots \otimes U_n, L^{\otimes g})
\lra 
\Zc_{\Cc}^\mathrm{RT}(\Sigma)
\quad , \quad
f \lmt \Zc_{\Cc}^\mathrm{RT}(H_f)(1)~.
\ee
is an isomorphism of vector spaces.
Changing $\delta_i = -$ to $\delta_i = +$ amounts to replacing $U_i$ by the dual $U_i^*$.

\medskip

Next we outline the relation between state spaces of the RT TFT and spaces of conformal blocks. Let $\Sigma^c$ be a $\Cc$-extended surface equipped with a complex structure, and with a holomorphic local coordinate $\varphi$ for each insertion $p$ such that $\varphi(0)=p$ and $\varphi'(0)=v$. The VOA $\Vc$ assigns to $\Sigma^c$ the \textit{space of conformal blocks} $\beta_\Vc(\Sigma^c)$. This is the subspace 
\begin{equation}\label{eq:conf-block-subspace}
    \beta_\Vc(\Sigma^c) \subset \Hom_\Cb(U_1 \otimes_{\Cb} \cdots \otimes_{\Cb} U_n , \Cb) 
\end{equation}
of the linear maps $U_1 \otimes_{\Cb} \cdots \otimes_{\Cb} U_n \lra \Cb$ whose elements satisfy the Ward identities determined by $\Vc$. 

The subspace $\beta_\Vc(\Sigma^c)$ depends on the moduli of $\Sigma^c$, that is on the insertion points, local coordinates and complex structure, and one can combine the $\beta_\Vc(\Sigma^c)$ into a vector bundle over the corresponding moduli space. This bundle is equipped with a projectively flat connection, so that a path in moduli space can be lifted to a path in the bundle of conformal blocks, and up to an overall scalar, the endpoint of the lift depends only on the homotopy class of $\gamma$.

By forgetting the complex structure and the local coordinates of $\Sigma^c$ one recovers the underlying $\Cc$-extended surface $\Sigma$. The spaces of conformal blocks for $\Vc$ conjecturally\footnote{For certain examples this is known, see e.g.\ \cite{Anderson:2011}, and the results in \cite{Damiolini:2019} might allow to show this in general (for rational $\Vc$).} agree with the state spaces of the RT TFT for $\Cc = \Rep(\Vc)$ in the sense that there is a linear isomorphism
\be\label{eq:block-statespace}
F(\Sigma^c) : \beta_\Vc(\Sigma^c) \xrightarrow{~\sim~} \Zc_{\Cc}^\mathrm{RT}(\Sigma) 
\ee
(more precisely, one has an equivalence of modular functors, but we will not go into this).
One way to make the above isomorphism explicit is to first use that conformal blocks on a sphere with insertions at points $1$, $0$, and $\infty$
labelled by $(U,-)$, $(V,-)$ and $(W,+)$, respectively,
are intertwining operators of type $\binom{W}{U\,V}$ for the $\Vc$-modules $U,V,W$. By construction, these define the tensor product on $\Cc$, i.e.\ are canonically identified with $\Cc(U \otimes V,W)$. Then one can use factorisation of conformal blocks to reduce more complicated surfaces to this case.

\subsection{Compatibility with transport}\label{sec:compat-transport}

Here we briefly recall the relation between paths in the fine moduli space of complex structures and families of complex curves with base given by an interval. We then consider families of $\Cc$-extended surfaces with complex structure to formulate the compatibility of the isomorphism \eqref{eq:block-statespace} with transport along paths.

Let $\Sigma_0$ be a surface without complex structure and without marked points. The fine moduli space of complex structures, or Teichm\"uller space, is defined as
\begin{align}
    \Tc(\Sigma_0):=\left\{ (\Sigma^c,\phi\colon \Sigma_0\lra \Sigma^c) \right\}/\sim\,,
\end{align}
where $\Sigma^c$ is a surface with complex structure, $\phi$ is an orientation preserving diffeomorphism and the equivalence relation is defined as follows:
$(\Sigma^c,\phi)\sim ({\Sigma'}^c,\phi')$ if there is a biholomorphic map
$\psi\colon \Sigma^c\lra {\Sigma'}^c$ such that $\psi\circ\phi$ and $\phi'$ are homotopic.

Let $\gamma\colon [0,1]\lra \Tc(\Sigma_0)$ be a path in the fine moduli space from $\gamma(0)=\Sigma^c$ to $\gamma(1)=\tilde\Sigma^c$. One can turn $\gamma$ into a bordisms $E_\gamma \colon \Sigma \lra \tilde\Sigma$ as follows.
Consider the family of complex curves
$E_{\gamma}\lra [0,1]$ obtained by pulling back the universal curve over $\Tc(\Sigma_0)$ (where over each point is the corresponding complex curve) along $\gamma$. 
Thus the fibre at $t\in[0,1]$ is $\gamma(t)$. 
The total space $E_{\gamma}$ can also be thought of as a bordism between the fibres over $0$ and~$1$.
Note that every family of complex surfaces over $[0,1]$ defines a path in $\Tc(\Sigma_0)$, so that pulling back along the path reproduces the original family \cite{BenZvi:2009pcm}. 

In the following we will use the notions path and family interchangeably, and we will denote the family obtained from a path $\gamma$ by $E_\gamma$. 

\medskip

Let $E_\gamma$ be a family of $\Cc$-extended surfaces with complex structure and with local coordinates around the marked points. Write $\Sigma^c := \gamma(0)$ and $\tilde\Sigma^c:=\gamma(1)$, so that when forgetting the complex structure on the fibres, $E_\gamma$ defines a bordism $\Sigma \lra \tilde\Sigma$.
The isomorphism 
\eqref{eq:block-statespace} is compatible with transport along $E_\gamma$ in the following sense: 
parallel transport via the projectively flat connection on the bundle of conformal blocks gives a linear isomorphism $T_\gamma \colon  \beta_\Vc(\Sigma^c) \lra \beta_\Vc(\tilde\Sigma^c)$. On the other hand, the bordisms $E_\gamma \colon \Sigma \lra \tilde\Sigma$ defines a linear isomorphism 
$\Zc_{\Cc}^\mathrm{RT}(E_\gamma) \colon  \Zc_{\Cc}^\mathrm{RT}(\Sigma) \lra \Zc_{\Cc}^\mathrm{RT}(\tilde\Sigma)$. The compatibility relation (again conjectural) is
\be\label{eq:transport-compatible}
F(\tilde\Sigma^c) \circ T_\gamma ~\propto~ \Zc_{\Cc}^\mathrm{RT}(E_\gamma) \circ F(\Sigma^c) ~,
\ee
where the two sides may differ by a multiplicative constant.

\medskip

Finally, we need to say how both sides of \eqref{eq:block-statespace} behave under changing the order of the $n$ marked points on $\Sigma^c$ by some permutation $\pi \in S_n$.
Note that the labels $U,v,\delta$ of a given marked point $p$ do not change, just its position in the total order on the marked points.
    
Let $\Sigma^c_\pi$ be the surface with the new ordering of its insertion points. For an element $b \in \beta_\Vc(\Sigma^c)$, this amounts to precomposing $b$ with the corresponding permutation of factors in the tensor product $U_1 \otimes_\Cb \cdots \otimes_\Cb U_n$ (cf.\ \eqref{eq:conf-block-subspace}), i.e.\ with the permutation build from the symmetric braiding in vector spaces. Let $T_\pi \colon  \beta_\Vc(\Sigma^c) \lra \beta_\Vc(\Sigma^c_\pi)$ be the resulting linear map.

For $\Zc_{\Cc}^\mathrm{RT}(\Sigma)$, one simply considers the cylinder $E_\pi  := \Sigma \times [0,1]$ with vertical ribbons and changes the ordering of marked points in the target object, so that $E_\pi$ becomes a bordism $\Sigma \lra \Sigma_\pi$. Then
\be\label{eq:permute-compatible-v1}
F(\Sigma^c_\pi) \circ T_\pi ~=~ \Zc_{\Cc}^\mathrm{RT}(E_\pi) \circ F(\Sigma^c) ~.
\ee

\medskip

Let us illustrate compatibility with transport in three examples: rotating the local coordinate frame by $2\pi$, moving one insertion point around another, and executing a Dehn-twist on a torus. Some more examples can be found in \cite[Sec.\,5]{Fuchs:2004tft4}.\footnote{
When comparing results, one has to take into account the slightly different orientation conventions used in \cite{Fuchs:2004tft4} and the fact that there the ribbon twist is taken to be $e^{-2 \pi i h_U}$.}

\begin{figure}[tb]
    \begin{align*}
        \mathrm{a})&&\mathrm{c})&\\[-1em]
    &\raisebox{-1.5em}{\includegraphics[scale=1.0]{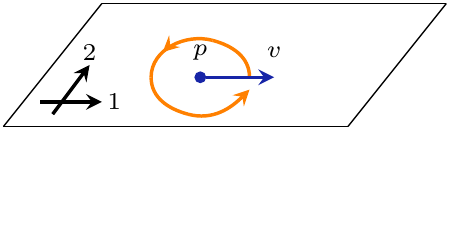}}
    &&\raisebox{-0.5\height}{\includegraphics[scale=1.0]{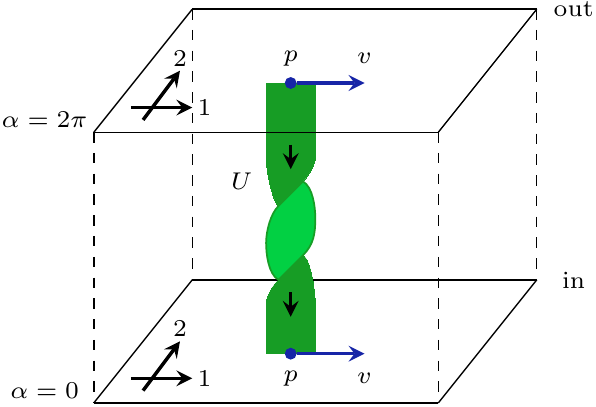}}
        \\[-5em]
        \mathrm{b})&&&\\[-1em]
    &\raisebox{-0.5\height}{\includegraphics[scale=1.0]{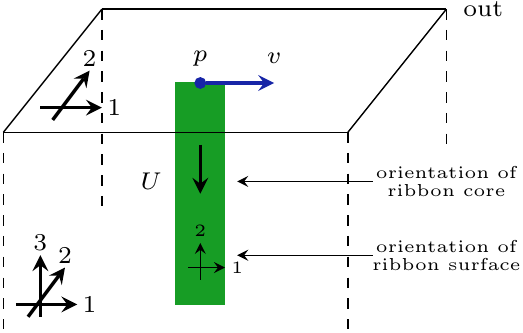}}&&
    \end{align*}
    \caption{a) Closed path $\gamma$ in the moduli space given by rotating the local coordinate once by $2\pi$. The fibres of the corresponding family $E_\gamma$ differ only in the angle $\alpha$ of the tangent vector at $v$. b) The orientation conventions for an outgoing boundary component of a bordisms, and for how the tangent vector $v$ and the orientation of core and surface of a ribbon attached to the marked point $(p,U,v,-)$ are related. c) The bordism $E_\gamma \colon  \Sigma(0) \lra \Sigma(2\pi) = \Sigma(0)$ defined by the path $\gamma$.}
    \label{fig:example-rotate-coord}
\end{figure}

\subsubsection*{Rotating the coordinate frame}

Consider the case where $\Sigma^c(\alpha)$ is the Riemann sphere $\Cb \cup \{ \infty \}$ with an insertion $\boldsymbol{p} = (p=0, U, v=e^{i \alpha}, \delta=-)$ and local coordinate $\varphi(z)= e^{i \alpha} z$, and other insertions elsewhere, see Figure~\ref{fig:example-rotate-coord}\,a). Suppose that $U$ is an irreducible $\Vc$-module of lowest conformal weight $h_U$. For the family $E_\gamma$ given by taking $\alpha$ from $0$ to $2 \pi$ in $\Sigma^c(\alpha)$, $T_\gamma$ acts by precomposing with $e^{2 \pi i L_0} = e^{2 \pi i h_U}$: for $b \in \beta_\Vc(\Sigma^c)$ 
\be
T_\gamma b = b \circ e^{2 \pi i L_0}|_U = e^{2 \pi i h_U} b ~,
\ee
where in the second expression it is understood that $e^{2 \pi i L_0}|_U$ only acts on the tensor factor $U$ and is extended as the identity to the other factors.

In Figure~\ref{fig:example-rotate-coord}\,b) we give our conventions for the relative orientations of three-manifold, an out-going boundary component, and ribbons ending on it.
The family $E_\gamma$ thought of as a bordism is shown in Figure~\ref{fig:example-rotate-coord}\,c).
Applying the TFT functor gives $\Zc_{\Cc}^\mathrm{RT}(E_\gamma) = \theta_{U^*} \id_{U_*}$, where $\theta_{U^*} = \theta_{U} = e^{2 \pi i h_U}$ is the ribbon twist on the simple object $U \in \Cc$. Thus \eqref{eq:transport-compatible} holds in this case (with proportionality constant $1$).

\begin{figure}[tb]
    \begin{align*}
        \mathrm{a})&&\mathrm{c})&\\[-1em]
    &\raisebox{2.0em}{\includegraphics[scale=1.0]{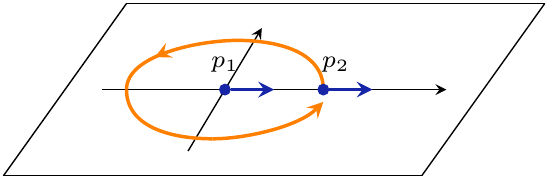}}
    &&\raisebox{-0.5\height}{\includegraphics[scale=1.0]{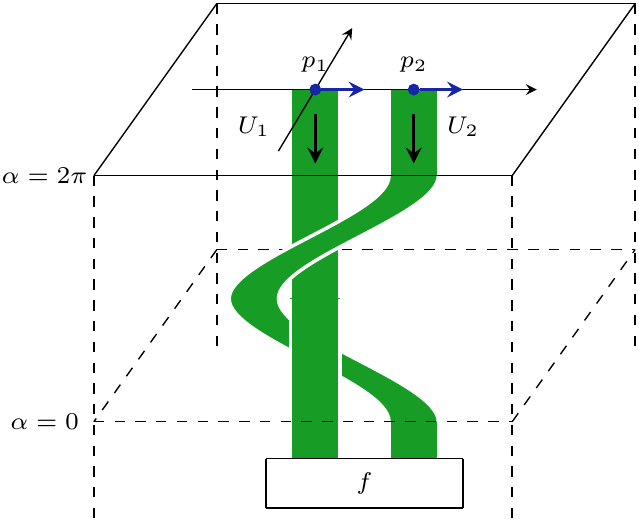}}
        \\[-7em]
        \mathrm{b})&&&\\[-1em]
    N=&\raisebox{-0.5\height}{\includegraphics[scale=1.0]{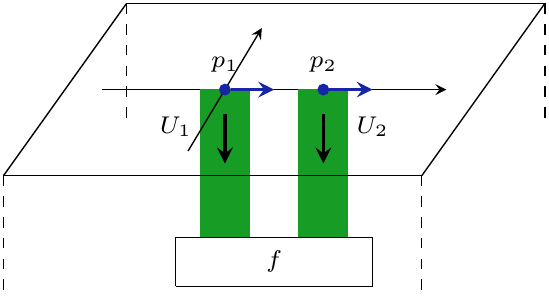}}&&
    \end{align*}    
    \caption{a) Closed path $\gamma$ in the moduli space given by moving $p_2$ around $p_1$, each point describes a fibre in the family $E_\gamma$. b) Bordism $N \colon  \emptyset \lra \Sigma(0)$ given by a solid 3-ball (only part of which is shown) and embedded ribbon graph. c) The composition $E_\gamma \circ N$. The dotted line $\alpha=0$ indicates where the boundary between $E_\gamma$ and $N$ was before composition.}
    \label{fig:example-encircle-point}
\end{figure}

\subsubsection*{Taking one field insertion around another}

We again take $\Sigma^c(\alpha)$ to be the Riemann sphere $\Cb \cup \{ \infty \}$, but this time with exactly two insertions 
$\boldsymbol{p_1} = (0, U_1, 1, -)$ and 
$\boldsymbol{p_2} = (e^{i \alpha}, U_2, 1, -)$, see Figure~\ref{fig:example-encircle-point}\,a). We assume that $U_1$ and $U_2$ are irreducible and satisfy $U_1 \cong U_2^*$, and that both have the same lowest conformal weight $h$. For $b \in \beta_\Vc(\Sigma^c)$, and $v_1 \in U_1$, $v_2 \in U_2$ lowest weight vectors, the dependence on the insertion points $p_1,p_2$ is $b(v_1,v_2) = \text{(const)} (p_2-p_1)^{-2h}$. From this one reads off the effect of $T_\gamma$ as
\be\label{eq:transport-example-2}
    T_\gamma b = e^{-4 \pi i h} b ~.
\ee
In Figure~\ref{fig:example-encircle-point}\,b) we show part of a bordism $N \colon  \emptyset \lra \Sigma(0)=\Sigma(2\pi)$, which is a solid 3-ball with embedded ribbon graph as shown. The state space $\Zc_{\Cc}^\mathrm{RT}(\Sigma(0))$ is one-dimensional, and $b = \Zc_{\Cc}^\mathrm{RT}(N)$ provides a basis (for some non-zero choice of $f$).
Figure~\ref{fig:example-encircle-point}\,c) shows the composition $E_\gamma \circ N \colon  \emptyset \lra \Sigma$. By deforming the ribbon graph one can convince oneself that indeed
\be
    \Zc_{\Cc}^\mathrm{RT}(E_\gamma)(b) = \theta_{U_1}^{-2} b ~,
\ee
in agreement with \eqref{eq:transport-example-2}.

\begin{figure}[tb]
    \centering
    \begin{align*}
        \mathrm{a})&&\mathrm{b})&&\mathrm{c})&\\[-1em]  
    &\raisebox{-0.5\height}{\includegraphics[scale=1.0]{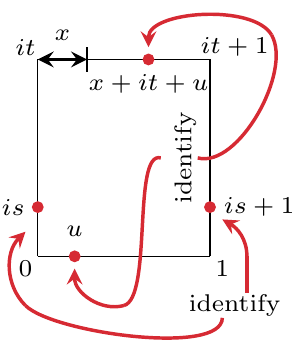}}
    &&\raisebox{-0.5\height}{\includegraphics[scale=1.0]{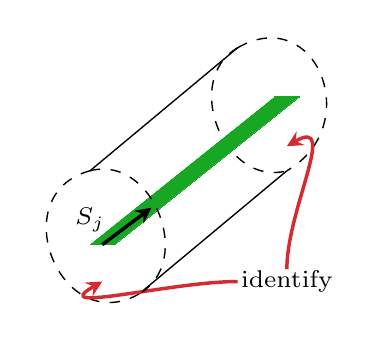}}
    &&\raisebox{-0.5\height}{\includegraphics[scale=1.0]{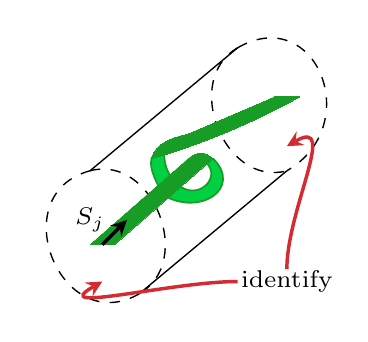}}
    \end{align*}
    \caption{a) Closed path in the moduli space given by taking the shift $x$ in the identification from $0$ to $1$. b) Solid torus $N_j \colon  \emptyset \lra \Sigma(0)$. 
    c) The composition $E_\gamma \circ N_j$.}
    \label{fig:example-torus-dehn}
\end{figure}

\subsubsection*{Dehn twist on the torus}

Let $\Sigma^c(x)$ be the rectangle in $\Cb$ with corners $0,1,it,it+1$ for some $t>0$, and identify edges parallel to the imaginary axis via $is \sim 1+is$, and the edges parallel to the real axis with a twist $s \sim it + x+ s$ for some $x \in [0,1]$, see Figure~\ref{fig:example-torus-dehn}\,a). One checks that $\Sigma^c(x)$ is biholomorphic to the torus obtained by quotienting $\Cb$ by $\Zb + \tau \Zb$ with $\tau = x+it$ (via the map sending $z \in \Sigma^c(x)$ to its class in the quotient). 
We will need at least one insertion point on $\Sigma^c$, which we take to be $0$, and we take it to be labelled by the vacuum module $\Vc$. 
We take $\gamma$ to be the path given by $\Sigma^c(x)$ with $x \in [0,1]$.

For $S_j$, $j \in \Ic$ one of the chosen irreducible $\Vc$-modules, let $\chi_j(w,\tau) = \mathrm{tr}_{S_j}\big( w_0 \exp(2 \pi i \tau (L_0 - \frac{c}{24} )\big)$ be 
    the torus conformal block with one insertion of an element $w \in \Vc$ (only its zero mode contributes) and where $\tau = it$.
Then $\beta_\Vc(\Sigma^c(0))$ has basis $\{ \chi_j(-,\tau) \}_{j \in \Ic}$ \cite{Zhu1996}, and
\be\label{eq:transport-example-3}
    (T_\gamma \chi_j)(w,it) = \chi_j(w,it+1) = e^{2 \pi i (h_j-\frac{c}{24})} \chi_j(w,it) ~.
\ee

On the TFT side, $\chi_j(w,\tau)$ is represented by a solid torus $N_j \colon  \emptyset \lra \Sigma(0)$ with an $S_j$-labelled ribbon at its centre, see Figure~\ref{fig:example-torus-dehn}\,b) (vacuum insertions are not visible in the TFT presentation). We write $v_j = \Zc_{\Cc}^\mathrm{RT}(N_j)$ for the corresponding element in the state space $\Zc_{\Cc}^\mathrm{RT}(\Sigma(0))$. The composition $E_\gamma \circ N_j$ is shown in Figure~\ref{fig:example-torus-dehn}\,c). 
It follows that
\be
    \Zc_{\Cc}^\mathrm{RT}(M_\gamma)(v_j) = \theta_{S_j} v_j ~.
\ee
Thus \eqref{eq:transport-compatible} is satisfied, but this time with a non-trivial proportionality constant.

\section{3d TFTs with values in super-vector spaces}\label{sec:3d-super-TFT}

The most basic example of a (2-)spin CFT, namely that of single free fermion, requires us to distinguish between even and odd fields, and to include parity signs when they are re-ordered. Mathematically this can be described by working with super-vector spaces and their parity-dependent braiding. 
In this section we define the corresponding generalisation of RT TFT we will need for this. We start by stating our conventions for super-vector spaces, then define the trivial TFT valued in super-vector spaces, as well as its product with a RT TFT. Finally we relate this product TFT to conformal blocks of purely even vertex operator super-algebras, or, in other words, VOAs whose representations are considered in super-vector spaces.

\subsection{Super-vector spaces}\label{sec:SVect}

In this section we give our conventions for the category $\SVect$ of finite dimensional super-vector spaces.

The objects of $\SVect$ are finite dimensional $\Zb_2$-graded vector spaces over $\Cb$, and the morphisms are degree preserving (i.e.\ even) linear maps.
We write $X=X^0\oplus X^1\in\SVect$ for the degree 0 (even) and degree 1 (odd) components of the super-vector space $X$.
There are two simple object up to isomorphism, namely
the even and the odd 1-dimensional vector spaces
\begin{equation}\label{eq:K+-def}
     K^{+}:=\Cb^{1|0}\quad\text{and}\quad K^{-}:=\Cb^{0|1}\,.
\end{equation}
The category $\SVect$ is equipped with an involution $\Pi$ called \textit{parity shift}. The parity shift functor simply exchanges the parity on a super-vector space,
\begin{equation}\label{eq:parity-def}
    \Pi(X)^0 = X^1
    \quad , \quad \Pi(X)^1 = X^0 ~.
\end{equation}
On morphisms, $\Pi$ acts as the identity.

$\SVect$ is a braided monoidal category via the graded tensor product (with monoidal unit $K^{+}$) and braiding 
\begin{equation}
\sigma_{X,Y} : X \otimes Y \longrightarrow Y \otimes X 
\quad , \quad
x \otimes y \longmapsto
(-1)^{|x|\cdot|y|} \, y\otimes x\,,
\end{equation}
for $X,Y\in\SVect$ and $x,y$ homogeneous of degree $|x|,|y|\in\Zb_2$.
This braiding is symmetric, so that $\SVect$ is a symmetric monoidal category.

For the ribbon structure we fix the left and right duals of $X\in\SVect$
to be the dual vector space
\begin{align}
    {}^*X=X^*=\Hom_{\Cb}(X,\Cb)\,,\quad (X^*)^i=\Hom_{\Cb}(X^i,\Cb)\,,\quad(i\in\Zb_2)
\end{align}
of not necessarily degree preserving linear maps,
so that linear maps $X^i \lra \Cb$ have $\Zb_2$-degree $i$.
The left duality morphisms are the same as for vector spaces,
\begin{align}
    \ev_X\colon X^*\otimes X & \longrightarrow K^+ & 
    \coev_X\colon K^+ &\longrightarrow X \otimes X^*
    \nonumber\\
    \xi\otimes x & \longmapsto \xi(x) &
    1 & \longmapsto 
    \textstyle{\sum_{j=1}^{\dim_{\Cb}(X)}e_j\otimes \varphi_j} ~,
\end{align}
where $(e_j,\varphi_j)_{j=1,\dots,\dim_{\Cb}(X)}$ is a dual basis pair for  $X$ and $X^*$. 
The right duality morphisms differ from those in vector spaces by a parity sign coming from the symmetric braiding,
\be
\tev_X=\ev_X \circ\, \sigma_{X,X^*}\colon X\otimes X^* \longrightarrow K^+
\quad , \quad
\tcoev_X=\sigma_{X,X^*}\circ\coev_X \colon K^+\longrightarrow X^*\otimes X\,.
\ee
The ribbon twist can now be computed from the duals and the braiding to be
\begin{equation}
   \theta_X=\id_X\colon X \longrightarrow  X\,.
\end{equation}
Altogether, $\SVect$ is a ribbon category with trivial ribbon twist and symmetric braiding.

The quantum dimension of a super-vector space $X\in\SVect$  (cf.~\eqref{eq:dimensions})
is what is usually called the super-dimension $\mathrm{sdim}_{\Cb}(X)$,
\begin{equation}\label{eq:superdim-def}
    \dim(X)
    = \mathrm{sdim}_{\Cb}(X) = 
    \dim_{\Cb}(X^0)-\dim_{\Cb}(X^1)\,.
\end{equation}
For the simple objects we have
\begin{equation}
        \dim(K^{+})=+1 \quad
        \text{and} \quad
        \dim(K^{-})=-1\,.
\end{equation}

\subsection{The trivial 3d TFT with values in \texorpdfstring{$\SVect$}{SVect}}
\label{sec:trivial-SVect-TFT}

The \textit{trivial 3d TFT with values in $\SVect$}  is a symmetric monoidal functor $\SV$
from the category $\Bord_3(\SVect)$ of $\SVect$-extended surfaces and 3-bordisms with $\SVect$-coloured ribbon graphs to $\SVect$:
\be
    \SV : \Bord_3(\SVect) \longrightarrow \SVect~.
\ee
Here $\SVect$ plays two different roles: 
on the one hand, it is used as a ribbon category to label marked points and ribbon graphs. On the other hand, it is used as a target category for the functor, and in this case we only use its symmetric monoidal structure.
The definition of $\SV$ is as follows:
\begin{itemize}
    \item \textit{On objects:}
The value of $\SV$ on an extended surface $\Sigma$ with ordered
marked points $(p_i,X_i,v_i,\delta_i)$, $i=1,\dots,n$ is
\begin{equation}
    \SV(\Sigma):=X_1^{\delta_1}\otimes\dots\otimes X_n^{\delta_n}\,,
\end{equation}
where $X_i^+ = X_i$ and $X_i^- = X_i^*$. 
Note that this just depends on the ordered set of marked points, and not on the underlying surface $\Sigma$.

\item \textit{On morphisms:}
Let $\Sigma$ and $\Sigma'$ be extended surfaces with marked points $(p_i,X_i,v_i,\delta_i)$, $i=1,\dots,n$ and  $(q_j,Y_j,w_j,\nu_j)$, $j=1,\dots,m$, respectively. Let $M\colon  \Sigma \lra \Sigma'$ be a bordism with embedded $\SVect$-coloured ribbon graph $\Gamma$.
If one just retains the combinatorial data of $\Gamma$ and forgets the surrounding manifold and the framing of the ribbons, one obtains an even linear map
\begin{equation}
    \widetilde\Gamma :  X_1^{\delta_1}\otimes\dots\otimes X_n^{\delta_n}
    \longrightarrow Y_1^{\nu_1}\otimes\dots\otimes Y_m^{\nu_m}~.
\end{equation}
This is well-defined since as a ribbon category, $\SVect$ is symmetric and has trivial twist.
We set
\begin{equation}
     \SV(M) := \tilde{\Gamma} ~.
\end{equation}
\end{itemize}

For example, if $M$ is a bordism $\emptyset \lra \emptyset$, i.e.\ a closed 3-manifold, and if $M$ has empty ribbon graph $\Gamma=\emptyset$, then $\SV(M)=1$. If the same $M$ contains a ribbon graph $\Gamma'$ consisting of loops labelled $X_1,\dots,X_n$, then $\SV(M)= \prod_{i=1}^n \mathrm{sdim}_\Cb(X_i)$, independent of $M$ and how the loops are linked or framed.

\begin{figure}
    \centering
    \includegraphics[scale=1.0]{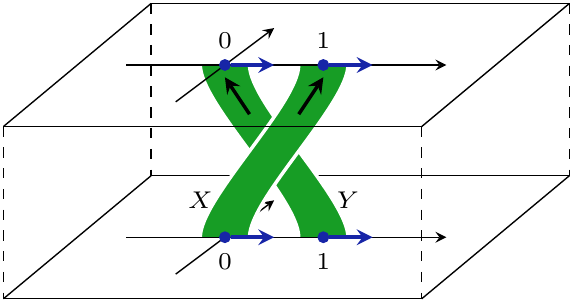}
    \caption{Underlying 3-manifold with embedded ribbon graph for the bordisms $M \colon  \Sigma_1 \lra \Sigma_2$ and $M' \colon  \Sigma_1 \lra \Sigma_2'$. 
    }
    \label{fig:svect-braid-example}
\end{figure}

As another example, consider three extended surfaces $\Sigma_1, \Sigma_2, \Sigma_2'$, which all have underlying surface $\Cb \cup \{\infty\}$ but which differ in their marked points:
\begin{itemize}
    \item $\Sigma_1$: $\boldsymbol{p_1} = (0,X,+)$ and $\boldsymbol{p_2} = (1,Y,+)$
    \item $\Sigma_2$: $\boldsymbol{q_1} = (0,Y,+)$ and $\boldsymbol{q_2} = (1,X,+)$
    \item $\Sigma_2'$: $\boldsymbol{q_2'} = (0,Y,+)$ and $\boldsymbol{q_1'} = (1,X,+)$
\end{itemize}
In each case, the index $i=1,2$ gives the order of the two points, so that $\Sigma_2$ and $\Sigma_2'$ only differ in the ordering of the two marked points. We have
\be
    \SV(\Sigma_1) = X \otimes Y
    \quad , \quad
    \SV(\Sigma_2) = Y \otimes X
    \quad , \quad
    \SV(\Sigma_2') = X \otimes Y ~.
\ee
Let $M \colon  \Sigma_1 \lra \Sigma_2$ and $M' \colon  \Sigma_1 \lra \Sigma_2'$
be two bordisms which have the same underlying 3-manifold $(\Cb \cup \{\infty\}) \times [0,1]$, and which have the same embedded ribbon graph as shown in Figure~\ref{fig:svect-braid-example}, so that they only differ in their target object. Then 
\be\label{eq:SV-example-value-on-M}
    \SV(M) = \big[ X \otimes Y \xrightarrow{\sigma_{X,Y}} Y \otimes X \big]
\quad , \quad    
    \SV(M') = \big[ X \otimes Y \xrightarrow{\id} X \otimes Y \big] ~.
\ee
Thus, even though the bordism in Figure~\ref{fig:svect-braid-example} looks like a crossing, whether or not it gets mapped to the symmetric braiding in $\SVect$ depends on the ordering of the marked points.

A related example is  to take $\Sigma$ with marked points $\boldsymbol{p_i} = (p_i,X_i,+)$, $i=1,\dots,n$ and $\Sigma'$ with the same marked points but ordered differently, $\boldsymbol{p_i'} = \boldsymbol{p_{\pi(i)}}$ for some permutation $\pi \in S_n$. Then we can consider the cylinder $M = \Sigma \times [0,1]$ with vertical ribbons as a bordism $M \colon  \Sigma \lra \Sigma'$. In this case, $\SV(M) \colon  X_1 \otimes \cdots \otimes X_n \lra  X_{\pi(1)} \otimes \cdots \otimes X_{\pi(n)}$ implements the permutation of factors with parity signs.

\begin{remark}
The above construction is not specific to three dimensions. One can in the same way define a TFT on $n$-dimensional bordisms with embedded $\SVect$ labelled ribbon graphs which takes values in $\SVect$ for any $n \ge 1$. However, we will only need the three-dimensional case in this paper.
\end{remark}

\subsection{Reshetikhin-Turaev TFT with values in $\SVect$}\label{sec:RT-TFT-SVect}

Let $\Cc$ be a modular fusion category and set
\be\label{eq:C-hat-def}
\hCc := \Cc\boxtimes\SVect ~.
\ee
The product `$\boxtimes$' denotes the Deligne product of abelian categories, see \cite[Sec.\,1.11]{Etingof:2016tensor}. Since $\Cc$ is semisimple, the objects of $\Cc \boxtimes \SVect$ are simply sums of products $U \boxtimes M$, $U \in \Cc$, $M \in \SVect$, and its morphisms are the corresponding direct sums of tensor products (over $\Cb$) of the Hom-spaces of $\Cc$ and $\SVect$. 

In particular, if $\{ S_i \}_{i \in \Ic}$ denotes a choice of representatives of the isomorphism classes of simple objects of $\Cc$ as in \eqref{eq:I-Si-def}, then the simple objects of $\hCc$ are $\{ S_i \boxtimes K^+, S_i \boxtimes K^- \}_{i \in \Ic}$. 

The category $\hCc$ has symmetric centre $\SVect$ and hence is not modular and cannot be used directly as input for the Reshetikhin-Turaev construction.\footnote{
    Braided fusion categories whose symmetric centre is $\SVect$ are called \textit{slightly-degenerate}, see e.g.\ \cite{Davydov:2011} for properties of such categories. If one uses a different spherical structure on $\SVect$ to the one used here, namely where the quantum dimensions are all positive (and hence the twist is $-1$ on $K^-$), slightly degenerate categories are called \textit{super-modular} in \cite{Bruillard:2016yio}, and $\hCc$ would be an example of a slit super-modular category. 
}
But we can still use $\hCc$ to decorate bordisms, and -- as we now describe -- one can still use it to obtain a TFT with values in $\SVect$, i.e.\ a symmetric monoidal functor 
\begin{equation}\label{eq:Zhat-def}
    \hZc_{\Cc} \colon \Bord_3(\hat{\Cc})\lra\SVect\,.
\end{equation}
We define $\hZc_{\Cc}$ to be the product of the RT TFT $\Zc_{\Cc}^\mathrm{RT}\colon \Bord_3(\Cc)\lra\Vect$ and the trivial $\SVect$ valued TFT $\SV$. 
In more detail, the value of $\hZc_{\Cc}$ on objects and morphisms is as follows:
\begin{itemize}
    \item \textit{On objects:} Let $\Sigma$ be a $\hCc$ extended surface with marked points $(p_i, U_i \boxtimes X_i, v_i, \delta_i)$, $i=1,\dots,n$, i.e.\ all objects labels have product form. In this case we get a $\Cc$-extended surface $\Sigma'$ and a $\SV$-extended surface $\Sigma''$ by forgetting the second, respectively the first factor. Then
\begin{align}
    \hZc_{\Cc}(\Sigma) :=\,&
    \Zc_{\Cc}^\mathrm{RT}(\Sigma')\otimes
    \SV(\Sigma'')
    \nonumber \\
    \overset{(*)}\cong\,&\Cc(U_1^{-\delta_1}\otimes\dots\otimes U_n^{-\delta_n},L^{\otimes g})\otimes
    X_1^{\delta_1}\otimes\dots\otimes X_n^{\delta_n}
    \,,
    \label{eq:Zhat-on-objects}
\end{align}
where for $(*)$ we in addition assume $\Sigma$ to be connected and of genus $g$, and where we consider the Hom-space $\Cc(\cdots)$ as a purely even super-vector space. 
In $(*)$ we furthermore used the isomorphism \eqref{eq:hom-space-TFT-state-space}, for which the marked points have $\delta_i=-$ and which is the reason for the relative signs.
The definition of $\hZc_{\Cc}$ for marked points labelled by general objects from $\hCc$ is by linear extension via direct sums. 

    \item \textit{On morphisms:} For morphisms we proceed analogously. Let $M \colon  \Sigma_1 \lra \Sigma_2$ be a bordism where all labels of objects and coupons are of factorised form. One then obtains bordisms $M' \colon  \Sigma_1' \lra \Sigma_2'$ and $M'' \colon  \Sigma_1'' \lra \Sigma_2''$ in  $\Bord_3(\Cc)$ and $\Bord_3(\SVect)$, respectively. We set
\begin{equation}\label{eq:Zhat-on-morph}
    \hZc_{\Cc}(M):=\Zc_{\Cc}^\mathrm{RT}(M')\otimes\SV(M'')\,.
\end{equation}
For general morphisms one extends linearly.
\end{itemize}

For a connected $\hCc$-extended surface $\Sigma_g$ of genus $g$  with marked points $(p_i,Y_i,v_i,-)$ for $Y_i\in\hat{\Cc}$ ($i=1,\dots,n$) 
we can rewrite \eqref{eq:Zhat-on-objects} as
    \begin{align}    \label{eq:hatZ-on-objects}
        \hat{\Zc}_{\Cc}(\Sigma_g)=\bigoplus_{\epsilon\in\{\pm\}}\hat{\Cc}(Y_1 \otimes \cdots \otimes Y_n,\,L^g\boxtimes K^{\epsilon})\otimes K^{\epsilon}\,.
    \end{align}

Later we will need the value of $\hZc_\Cc$ on $S^3$ without embedded ribbon graph, which is given by
\begin{equation}\label{eq:hat-Z-on-S3}
    \hZc_{\Cc}(S^3) 
    = \Zc_{\Cc}^\mathrm{RT}(S^3) \cdot \SV(S^3)
    = \mathcal{D}_{\Cc} \cdot 1 ~,
\end{equation}
where
\begin{equation}\label{eq:sqrt-global-dim}
\mathcal{D}_{\Cc} = \sqrt{\mathrm{Dim}(\Cc)}
\quad , \quad
\mathrm{Dim}(\Cc) = \sum_{i \in \Ic} \dim(S_i)^2
\end{equation}
is a fixed choice of square root of the global dimension $\mathrm{Dim}(\Cc)$ of $\Cc$.

A related construction of a TFT with values in the symmetric centre of the ribbon category one starts from can be found in \cite{Lallouche:2016phd}. That construction works for general symmetric centres, not just $\SVect$, but its formulation does not include bordisms with embedded ribbon graphs.

\subsection{Relation to conformal blocks of vertex operator super algebras}\label{sec:VOSA-blocks}

The VOAs $\Vc$ that occurred in Section~\ref{sec:TFT-CFT} were ``bosonic'' in the sense that they where objects in the category $\Vect_\infty$ of possibly infinite dimensional vector spaces. If $\Vc$ is in addition equipped with a $\Zb_2$-grading (and with even structure maps and suitable parity signs in its defining conditions), it is called a \textit{vertex operator super-algebra} (VOSA)\footnote{This is not to be confused with vertex  algebras that contain a copy of some supersymmetric extension of the Virasoro algebra. Though the latter would in particular also be $\Zb_2$-graded.}, see e.g.\ \cite{Xu:1998,Creutzig:2017} for more details.  
In this case we have $\Vc \in \SVect_\infty$, the category of possibly infinite dimensional super-vector spaces.

For a VOSA $\Vc \in \SVect_\infty$, we consider $\Zb_2$-graded representations, i.e.\ $\Vc$-modules are also objects in $\SVect_\infty$ (again with corresponding parity signs and even structure maps). 
We write $\Rep_{\Sc}(\Vc)$ instead of $\Rep(\Vc)$ to stress this point. The morphisms in $\Rep_{\Sc}(\Vc)$ are parity-even linear maps that intertwine the $\Vc$-actions.

\subsubsection*{$\hCc$ as representations of a purely even VOSA}

A VOSA $\Wc$ can be split into its parity even and parity odd subspace $\Wc = \Wc_0 \oplus \Wc_1$. Then $\Wc_0$ is a bosonic VOA and $\Wc_1$ is a $\Wc_0$-module. 

Conversely, we can take a bosonic rational VOA $\Vc$ and 
understand it as a VOSA $\Vc \in \SVect_\infty$ concentrated in even degree, $\Vc = \Vc_0$. Then its representation category in super-vector spaces satisfies
\begin{equation}
        \Rep_{\Sc}(\Vc) ~=~ \Rep(\Vc) \boxtimes \SVect\,.
\end{equation}
If we set $\Cc = \Rep(\Vc)$ as before, then $\Cc$ is a modular fusion category, and 
\begin{equation}\label{eq:hatC-RepS}
    \hCc = \Cc \boxtimes \SVect = \Rep_{\Sc}(\Vc) ~.
\end{equation}
Thus the category $\hCc$ from \eqref{eq:C-hat-def} describes the representation theory of a bosonic rational VOA $\Vc$ in super-vector spaces.

\subsubsection*{State spaces of $\hZc_{\Cc}$ as conformal blocks of a purely even VOSA}

Recall from in Section~\ref{sec:confblock-TFT} the discussion of the relation between the state spaces of $\Zc_{\Cc}^\mathrm{RT}$ and the spaces $\beta_\Vc(\Sigma^c)$ of conformal blocks for $\Vc$ seen as a bosonic VOA.

When considering $\Vc$ as a purely even VOSA, its representation category is given by $\hCc$, so that the representations acquire a $\Zb_2$-grading by parity. Accordingly the space $\Hom_\Cb(U_1 \otimes_{\Cb} \cdots \otimes_{\Cb} U_n , \Cb)$ of all linear maps in \eqref{eq:conf-block-subspace} is now $\Zb_2$-graded, too. Since $\Vc$ is purely even, the defining conditions of the subspace $\beta_\Vc(\Sigma^c)$ are not sensitive to parity. 
In more detail, if $\widehat\Sigma^c$ is a $\hCc$-extended surface with complex structure and marked points labelled by $\Vc$-modules $M_i \boxtimes X_i$, with $M_i \in \Cc$ and $X_i \in \SVect$, we obtain a $\Cc$-extended surface $\Sigma^c$, where the marked points are labelled only by $M_i$. Then
\begin{equation}\label{eq:confblock-svect}
    \beta_\Vc(\widehat\Sigma^c) = \beta_\Vc(\Sigma^c) \otimes 
    X_1 \otimes \cdots \otimes X_n ~.
\end{equation}
Comparing this to \eqref{eq:block-statespace} and \eqref{eq:Zhat-on-objects}, we see that $\beta_\Vc(\widehat{\Sigma}^c)$ is isomorphic to  $\hZc_{\Cc}(\widehat{\Sigma})$ (with $\widehat{\Sigma}$ obtained from $\widehat{\Sigma}^c$ by forgetting the complex structure).

The transport maps in \eqref{eq:transport-compatible} do not change the order of the marked points and hence only act on the first tensor factor in \eqref{eq:Zhat-on-objects} and in \eqref{eq:confblock-svect}. Thus the proportionality in \eqref{eq:transport-compatible} remains valid. Changing the order of points leads to the same parity factor on both sides of \eqref{eq:permute-compatible-v1}, as in both cases the same permutation $\pi$ is now expressed via the braiding in $\SVect$ instead of $\Vect$.
Hence, the identity \eqref{eq:permute-compatible-v1} remains valid as well.

\subsubsection*{Example: conformal two-point blocks on the Riemann sphere}

Let $\Vc$ be the VOA of the $c=\frac12$ Ising CFT, i.e.\ the unique unitary simple VOA at $c=\frac12$, and let $M_\eps$ be the irreducible $\Vc$-module with lowest conformal weight $h_\eps = \frac12$ (see Section~\ref{sec:Ising-CFT-example} for a more detailed discussion of the Ising CFT).

Let $\psi \in M_\eps$ be the lowest weight vector in $M_\eps$. The space of conformal blocks on $\Cb \cup \{ \infty \}$ with insertions of $M_\eps$ at $z$ and $w$ is one-dimensional. An element $b$ depends on the insertion points as (via parallel transport)
\begin{equation}
    b(z,w)(\psi,\psi) = (z-w)^{-1} 
\end{equation}
for an appropriate normalisation of $\psi$. 
Suppose the insertion points are ordered such that $z$ is first and $w$ is second.
Consider the operation of exchanging the points $z$ and $w$ by moving them along half-circles on $\Cb$. Call the resulting path in moduli space $\gamma$. This is not yet a closed path, as the ordering of the insertion points is different: now $w$ is first and $z$ is second. After reversing the order via the transposition $\pi \in S_2$ one arrives back at the original surface. The effect of the two operations on $b$ is
\begin{equation}
    (T_\pi T_\gamma b)(z,w)(\psi,\psi) = -(z-w)^{-1} = -b(z,w)~. 
\end{equation}
In this sense, $b$ is not single valued under exchange of the insertion points. 
Now consider $\Vc$ as a purely even VOSA and take $\psi \in \Pi(M_\eps)$, the parity shifted version of $M_\eps$, cf.\ \eqref{eq:parity-def}. Then $T_\pi$ produces an addition parity sign and one now has
\begin{equation}\label{eq:svect-block-example}
    (T_\pi T_\gamma b)(z,w)(\psi,\psi) = +(z-w)^{-1} = b(z,w)~. 
\end{equation}
Thus after shifting the parity of the representations labelling the insertion points, the combined operation of exchanging points and reordering produces trivial monodromy.

Let us see how the same effect arises in the TFT description, i.e.\ on the right hand side of \eqref{eq:transport-compatible} and \eqref{eq:permute-compatible-v1}. In the bosonic case this is similar to Figure~\ref{fig:example-encircle-point}, but with only a half-turn rather than a full turn. The transport bordism $M_\gamma$ is shown in Figure~\ref{fig:svect-braid-example} (with $z=0$, $w=1$, and one has to take $X=Y=M_\eps$). We consider the composition $E = E_\pi \circ E_\gamma$ of the change-of-ordering bordism $E_\pi$ with the transport bordism $E_\gamma$.
The bordism $E$ is an endomorphism $\Sigma \lra \Sigma$ of the $\Cc$-extended surface $\Sigma = \Cb \cup \{\infty\}$ with marked points $0$ and $1$, in this order. The state space $\Zc_{\Cc}^\mathrm{RT}(\Sigma)$ is one-dimensional and spanned by $v= \Zc_{\Cc}^\mathrm{RT}(N)$ with $N$ as in Figure~\ref{fig:example-encircle-point}\,b) (with $U_1=U_2=M_\eps$). 
One finds 
$\Zc_{\Cc}^\mathrm{RT}(E)(v) = e^{-\pi i h_\eps} v = -v$, as expected. 

Next consider the TFT $\hZc_{\Cc}$ and the Riemann sphere with insertions of $\Pi(M_\eps) = M_\eps \boxtimes K^- \in \hCc$. In this case, for the composition $E = E_\pi \circ E_\gamma$ one finds 
\begin{equation}
\hZc_{\Cc}(E) \overset{\eqref{eq:Zhat-on-morph}}{=}
\Zc_{\Cc}^\mathrm{RT}(E') \otimes \SV(E'')
= (-\id) \otimes (-\id) = \id~,
\end{equation}
where $\Zc_{\Cc}^\mathrm{RT}(E')=-\id$ was just computed above, and $\SV(E'')=-\id$ is precisely the first example in \eqref{eq:SV-example-value-on-M}. This result agrees with \eqref{eq:svect-block-example}.

\section{Oriented CFT with parity signs}
\label{sec:oriented-CFT-from-TFT}

In this section we present the construction of rational conformal field theory on oriented world sheets with boundaries and defects via three-dimensional topological field theory as developed in \cite{Felder:1999mq,Fuchs:2002tft,Fuchs:2004tft4,Fjelstad:2005ua,Fuchs:2012def}.
We do this in some detail since we will use the $\SVect$-valued TFT and the $\SVect$-valued conformal blocks described in the previous section. This entails that the requirement of singe-valuedness under monodromy now involves parity signs, which has not been treated in the above works.

We fix a modular fusion category $\Cc$ and will work with the product $\hCc = \Cc\boxtimes\SVect$ as in \eqref{eq:C-hat-def}. We think of this as in \eqref{eq:hatC-RepS}:
$\hCc = \Rep_{\Sc}(\Vc)$, where $\Vc$ is a bosonic rational VOA and one considers its representations in super-vector spaces.

\subsection{Algebras and modules}
\label{sec:alg1}

Here we recall some algebraic background on algebras and modules that we will need. This could be presented in an arbitrary pivotal tensor category, but to avoid changing the setting too often, we work in $\hCc$.

Throughout this paper, we will often implicitly use the embeddings $\Cc \hookrightarrow \hCc$ and $\SVect \hookrightarrow \hCc$. For example, given an object $U \in \Cc$ and a super-vector space $M \in \SVect$, the product $U \otimes M$ stands for $( U \boxtimes K^+) \otimes (\One \boxtimes M) = U \boxtimes M$.

A \textit{Frobenius algebra} is an object $A \in \hCc$ together with a multiplication $\mu$, unit $\eta$, comultiplication $\Delta$, and counit $\epsilon$, such that the comultiplication is a bimodule map $A \lra A \otimes A$, see \cite{Fuchs:2001qc,Fuchs:2002tft} for more details. We use the graphical notation 
\begin{equation}
       \mu=
       \raisebox{-.5\height}{\includegraphics[scale=1.0]{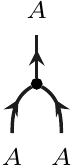}}\,,\quad
       \eta=
       \raisebox{-.5\height}{\includegraphics[scale=1.0]{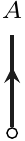}}\,,\quad
       \Delta=       
       \raisebox{-.5\height}{\includegraphics[scale=1.0]{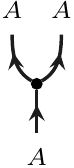}}\,,\quad
       \eps=
       \raisebox{-.5\height}{\includegraphics[scale=1.0]{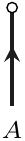}}\,.
\end{equation}
A Frobenius algebra is called \textit{simple}, if it is simple as a bimodule over itself.

A \textit{morphism of (Frobenius) algebras} is a morphism which commutes with the (co)product and (co)unit. An important role will be played in the following by the
\textit{Nakayama automorphism of $A$}, which is indeed an isomorphism $N_A\colon  A \lra A$ of Frobenius algebras, see e.g.\ \cite{Fuchs:2008fa}. Explicitly, $N_A$ and its inverse are given by\footnote{
We follow the convention of \cite{Novak:2015phd,Runkel:2018rs}.
In \cite{Fuchs:2008fa,Carqueville:2012orb} the Nakayama automorphism is the inverse of \eqref{eq:def-Nakayama}.}
\begin{equation}
        N_A=
        \raisebox{-.5\height}{\includegraphics[scale=1.0]{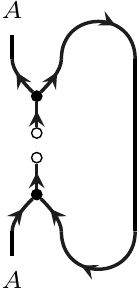}}\,,\quad
        N_A^{-1}=
        \raisebox{-.5\height}{\includegraphics[scale=1.0]{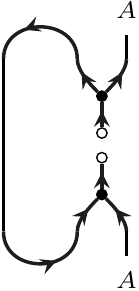}}\,,\quad
        N_A^{n}=
        \raisebox{-.5\height}{\includegraphics[scale=1.0]{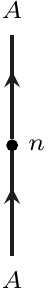}}\,,\quad
        \raisebox{-.5\height}{\includegraphics[scale=1.0]{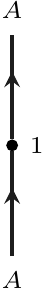}}=
        \raisebox{-.5\height}{\includegraphics[scale=1.0]{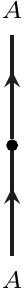}}\,,\quad
        \label{eq:def-Nakayama}
\end{equation}
where the last two equalities give our notation for powers of Nakayama automorphisms.

We say that $A$ is \textit{normalised special} (or just \textit{special} for short) if $\mu\circ\Delta=\id_A$ and if $\varepsilon\circ\eta\neq 0$.
We call $A$ \textit{symmetric} if $N_A=\id_A$. If $A$ is symmetric and special, then one can check that $\varepsilon\circ\eta=\dim(A)$, and so for symmetric
special Frobenius algebras one necessarily has $\dim(A) \neq 0$. This implies that a special Frobenius algebra $A$ with $\dim(A)=0$ is never symmetric, we will see an example of this in Section~\ref{sec:alg2}.

For left and right $A$-modules $X$ and $Y$ we denote the actions by $\rho_l \colon  A \otimes X \lra X$ and $\rho_r \colon  Y \otimes A \lra Y$, respectively, and we use the graphical notation 
\begin{equation}
        \rho_l=
        \raisebox{-.5\height}{\includegraphics[scale=1.0]{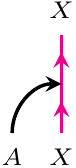}}\,,\quad
        \rho_r=
        \raisebox{-.5\height}{\includegraphics[scale=1.0]{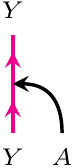}}\,.
\end{equation}
A morphism of left (right) modules is a morphism commuting with the corresponding action.
We denote the category of left (right) $A$-modules by ${}_A \hCc$ (resp.\ $\hCc_A$).
A $A_1$-$A_2$-bimodule is a left $A_1$- and right $A_2$-module with commuting actions.
We denote the category of $A_1$-$A_2$-bimodules by ${}_{A_1}\hCc_{A_2}$.

Given an $A_1$-$A_2$-bimodule $X$ and an $A_2$-$A_3$-bimodule $Y$, one can consider their tensor product over $A_2$, written as $X \otimes_{A_2} Y$. If $A_2$ is special, the tensor product can be conveniently described as the image of an idempotent $p$ on $X \otimes Y$, namely
\begin{align}
    p = 
    \raisebox{-0.5\height}{\includegraphics[scale=1.0]{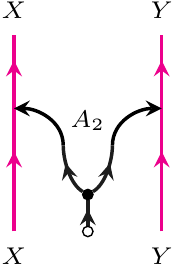}}
    =
    \raisebox{-0.5\height}{\includegraphics[scale=1.0]{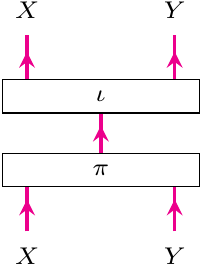}}\,,
    \label{eq:rel-tensor-prod}
\end{align}
where we have also introduced the embedding and projection maps for the image of $p$:
\begin{equation}\label{eq:split-idempotent-tensor}
    \pi \colon  X \otimes Y \lra X \otimes_{A_2} Y
    ~,~~
    \iota \colon  X \otimes_{A_2} Y \lra X \otimes Y 
    ~,~~
    \pi \circ \iota = \id_{X \otimes_{A_2} Y}
    ~,~~
    p = \iota \circ \pi 
    ~.
\end{equation}
The tensor product $X \otimes_{A_2} Y$ is an $A_1$-$A_3$-bimodule via the action
\begin{align}
    \raisebox{-0.5\height}{\includegraphics[scale=1.0]{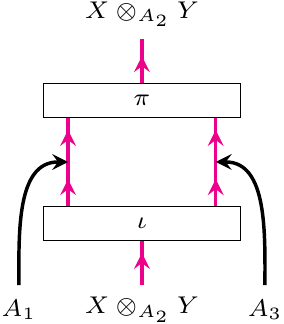}}
\end{align}
The same construction applies to iterated tensor products $X_1 \otimes_{A_2} X_2 \otimes_{A_3} \cdots \otimes_{A_n} X_n$.

Finally we turn to the definition of duals, or rather adjoints, which relate bimodules in ${}_{A_1}\hCc_{A_2}$ and ${}_{A_2}\hCc_{A_1}$.
As explained e.g.\ in \cite[Sec.\,4.3]{Carqueville:2012orb}, in case that $A_1$ and $A_2$ are not symmetric, the Nakayama automorphisms enters the definition of adjoints. Namely, let $X \in {}_{A_1}\hCc_{A_2}$. To start with, we turn $X^*$ (the dual in $\hCc$) into an $A_2$-$A_1$-bimodule via the actions 
\begin{equation}
    \raisebox{-0.5\height}{\includegraphics[scale=1.0]{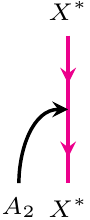}}=
    \raisebox{-0.5\height}{\includegraphics[scale=1.0]{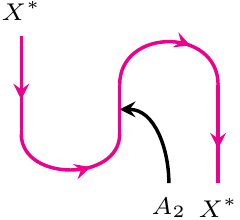}}\,,\quad
    \raisebox{-0.5\height}{\includegraphics[scale=1.0]{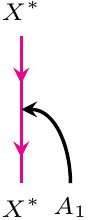}}=
    \raisebox{-0.5\height}{\includegraphics[scale=1.0]{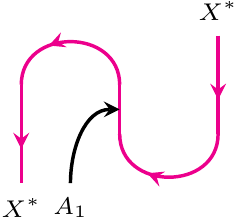}}\,.
    \label{eq:dual-action}
\end{equation}
Then we define
\begin{equation}\label{eq:Y-dagger-def}
    X^\dagger := (X^*)_{N_{A_1}^{-1}} ~.
\end{equation}
The twist of the right action is necessary for the evaluation and coevaluation maps to induce the counit 
$X \otimes_{A_2} X^\dagger \lra A_1$ and unit $A_2 \lra X^\dagger \otimes_{A_1} X$ 
of the adjunction, we refer to \cite[Prop.\,4.7]{Carqueville:2012orb}
for details.
In general, this adjunction is not two-sided, but there are two situations important here, where also the corresponding maps 
$X^\dagger \otimes_{A_1} X \lra A_2$ and $A_1 \lra X \otimes_{A_2} X^\dagger$  exist:
\begin{itemize}
    \item If $A_1$ and $A_2$ are symmetric, i.e.\ if $N_{A_1} = \id_{A_1}$ and $N_{A_2} = \id_{A_2}$, then $X^\dagger = X^*$ is a two-sided adjoint -- this is the situation relevant in  Section~\ref{sec:corr-oriented-bnddef}.
    \item For equivariant bimodules one can define these maps by suitably modifying the (co)evaluation maps of $\hCc$ -- this will be done in Section~\ref{sec:alg2}.
\end{itemize}

Finally, we introduce two idempotents, $Q_{n}^\mathrm{(in)}$ and $Q_{n}^\mathrm{(out)}$ which will be used below
(in both the oriented and spin case, with appropriate choices for the algebra $A$) to describe the labels of insertion points in the interior of a world sheet. Let $X \in {}_A\hCc_A$, $U,\bar V \in \Cc$ (not $\hCc$), $\eps \in \{\pm 1\}$ and $n \in \Zb$. The idempotents act on the following Hom-spaces,
\be
    Q_{n}^\mathrm{(in)} \text{ on } 
	\hCc(U \otimes \bar V \otimes K^\eps , X)
	\quad , \quad
	Q_{n}^\mathrm{(out)} \text{ on } 
	\hCc(X,U \otimes \bar V \otimes K^\eps) ~,
\ee
where $K^\pm$ was defined in \eqref{eq:K+-def}.
The action is given by, for $\psi \colon  U \otimes \bar V \otimes K^\eps \lra X$ and $\phi \colon  X \lra U \otimes \bar V \otimes K^\eps$ 
\begin{equation}\label{eq:idempot-Q-action}
    Q_{n}^\mathrm{(in)}(\psi)=
	\raisebox{-0.5\height}{\includegraphics[scale=1.0]{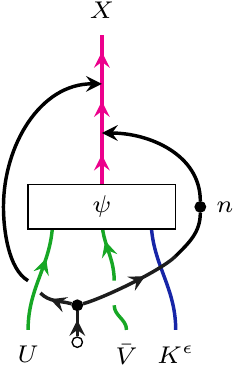}}\,,\quad
    Q_{n}^\mathrm{(out)}(\phi)=
	\raisebox{-0.5\height}{\includegraphics[scale=1.0]{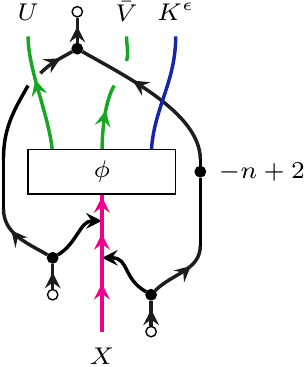}}\,.
\end{equation}
A short computation shows that these are indeed idempotents.
We denote their images by, for $t \in \{\text{in},\text{out}\}$,
\be\label{eq:mult-space-image-Q}
\Mc^{(t)}_{n,\epsilon}(U,\bar{V};X) := \mathrm{im}\, Q_{n}^\mathrm{(t)} ~,
\ee
so that
\be\label{eq:mult-space-image-Q_inclusion}
	\Mc^{\mathrm{(in)}}_{n,\epsilon}(U,\bar{V};X) \subset \hCc(U \otimes \bar V \otimes K^\eps , X)
\quad , \quad
	\Mc^{\mathrm{(out)}}_{n,\epsilon}(U,\bar{V};X) \subset \hCc(X,U \otimes \bar V \otimes K^\eps) ~.
\ee
These subspaces will later play the role of multiplicity spaces for bulk field insertions. 
They are dual to each other via the trace pairing:

\begin{lemma}
The pairing $\langle~,~\rangle_\mathrm{bulk} \colon  
\Mc^{\mathrm{(out)}}_{n,\epsilon}(U,\bar{V};X) \times 
\Mc^{\mathrm{(in)}}_{n,\epsilon}(U,\bar{V};X) \lra \Cb$, where
\begin{align}\label{eq:M-out-M-in-pairing}
\langle f,g \rangle_\mathrm{bulk} := 
\frac{1}{\hZc_{\Cc}(S^3)\,\eps \dim(U) \dim(\bar V)} \,
\mathrm{tr}_{U \otimes \bar V \otimes K^\eps}(f \circ g)
~,
\end{align}
is non-degenerate.
\label{lem:M-out-M-in-pairing}
\end{lemma}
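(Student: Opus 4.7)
The pairing in \eqref{eq:M-out-M-in-pairing} is, up to a nonzero overall scalar $\tfrac{1}{\mathcal{D}_\Cc \cdot \eps \dim(U)\dim(\bar V)}$, the restriction of the categorical trace pairing
\[
\mathrm{tr}\colon \hCc(X,\,U\otimes\bar V\otimes K^\eps)\times \hCc(U\otimes\bar V\otimes K^\eps,\,X)\longrightarrow \Cb,\quad (f,g)\longmapsto \mathrm{tr}(f\circ g),
\]
restricted along the inclusions \eqref{eq:mult-space-image-Q_inclusion}. My plan is to show this full pairing is non-degenerate, and that $Q_n^{(\mathrm{in})}$ and $Q_n^{(\mathrm{out})}$ are mutually adjoint with respect to it; then a standard linear-algebra argument gives non-degeneracy on the images.

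First I would verify non-degeneracy of the full trace pairing. The category $\hCc=\Cc\boxtimes\SVect$ is finitely semisimple with pivotal structure, and every simple object $S_i\boxtimes K^\pm$ has nonzero quantum dimension $\pm\dim(S_i)$. Decomposing $X$ and $U\otimes\bar V\otimes K^\eps$ into simples and using that $\mathrm{tr}(\id_{S})=\dim(S)\ne 0$ on each simple summand $S$, the trace pairing between $\hCc(X,Y)$ and $\hCc(Y,X)$ is seen to be non-degenerate for any $X,Y\in\hCc$; this is the standard argument.

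The core step is to show that for any $\psi\in\hCc(U\otimes\bar V\otimes K^\eps,X)$ and $\phi\in\hCc(X,U\otimes\bar V\otimes K^\eps)$,
\begin{equation}\label{eq:adjoint-idemp}
\mathrm{tr}\bigl(\phi\circ Q_n^{(\mathrm{in})}(\psi)\bigr) \;=\; \mathrm{tr}\bigl(Q_n^{(\mathrm{out})}(\phi)\circ\psi\bigr).
\end{equation}
This is a graphical computation: both sides are obtained by inserting, around a single coupon, an $A$-loop decorated by $N_A^n$, an action on $X$, and left/right $A$-actions on $U,\bar V$ via \eqref{eq:dual-action}. Closing up to form the trace turns the picture into a closed $A$-ribbon encircling the $\phi$-$\psi$ junction, and cyclicity of the trace together with the defining Frobenius relations (compatibility of $\Delta$ with $\mu$, unit/counit, and the behaviour of $N_A$ with respect to $A$-actions as encoded in \eqref{eq:def-Nakayama} and \eqref{eq:dual-action}) lets me slide the $A$-loop from the $\psi$-side to the $\phi$-side without changing the picture. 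I expect this step to be routine, but it is the main content of the proof; the slight subtlety lies in tracking the Nakayama powers and the twist on $X^*$ inherent in \eqref{eq:dual-action}, which is exactly why $N_A^n$ appears symmetrically in the two idempotents.

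Granted \eqref{eq:adjoint-idemp}, the conclusion is a general fact: given a non-degenerate bilinear pairing $\langle\cdot,\cdot\rangle$ on $V\times W$ and idempotents $P\in\End V$, $Q\in\End W$ with $\langle Pv,w\rangle=\langle v,Qw\rangle$, the restriction to $\mathrm{im}(P)\times \mathrm{im}(Q)$ is non-degenerate. Indeed, if $w'=Qw\in\mathrm{im}(Q)$ pairs to zero with every $Pv\in\mathrm{im}(P)$, then $\langle v,Qw'\rangle=\langle v,w'\rangle=\langle Pv,w\rangle\cdot(\dots)$ expands so that $Qw'=Q^2w=Qw=w'$ must vanish by non-degeneracy of $\langle\cdot,\cdot\rangle$; the other side is symmetric. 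Applying this with $V,W$ the two full Hom-spaces and $P=Q_n^{(\mathrm{in})}$, $Q=Q_n^{(\mathrm{out})}$ yields the claim.
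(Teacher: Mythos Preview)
Your approach is essentially identical to the paper's: extend the pairing to the full Hom-spaces where the trace pairing is non-degenerate (since $\hCc$ is pivotal fusion), establish the adjointness relation $\langle \phi, Q_n^{(\mathrm{in})}(\psi)\rangle = \langle Q_n^{(\mathrm{out})}(\phi), \psi\rangle$ via a string-diagram computation, and then conclude by the standard linear-algebra fact about restricting a non-degenerate pairing to the images of mutually adjoint idempotents. Your final paragraph is a bit garbled in its wording (the ``$\langle Pv,w\rangle\cdot(\dots)$ expands'' sentence does not quite parse), but the underlying argument---$\langle Pv,w'\rangle=\langle v,Qw'\rangle=\langle v,w'\rangle$ forces $w'=0$---is correct and matches the paper's ``standard argument''.
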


The extra normalisation factors are conventional but make the conditions on correlators in Section~\ref{sec:parityCFT-consistency} look simpler. 
The invariant $\hZc_{\Cc}(S^3)$ was given in \eqref{eq:hat-Z-on-S3}.

\begin{proof}
Denote by $\langle ~,~ \rangle$ the extension of the pairing $\langle ~,~ \rangle_\mathrm{bulk}$ to the entire Hom-space. 
That is, $\langle f,g \rangle$ is given by the same trace-formula as in \eqref{eq:M-out-M-in-pairing}, but now arbitrary $f \in \hCc(U \otimes \bar V \otimes K^\eps , X)$ and $g \in \hCc(X,U \otimes \bar V \otimes K^\eps)$ are allowed. Since the trace-pairing is non-degenerate in a pivotal fusion category, so is $\langle ~,~ \rangle$.

A slightly more lengthy but straightforward computation with string diagrams gives
\begin{equation}\label{eq:Q-pairing-aux1}
    \langle f,Q_{n}^\mathrm{(in)}(g) \rangle
    = \langle Q_{n}^\mathrm{(out)}(f),g \rangle ~.
\end{equation}
A standard argument now shows that the restriction $\langle ~,~ \rangle_\mathrm{bulk}$ is also non-degenerate:
Suppose  that $g \in \Mc^{\mathrm{(in)}}_{n,\epsilon}(U,\bar{V};X)$. By non-degeneracy there is an $f \in \hCc(U \otimes \bar V \otimes K^\eps , X)$ such that $\langle f,g \rangle \neq 0$. Using \eqref{eq:Q-pairing-aux1} and that $Q_{n}^\mathrm{(in)}$ is an idempotent, one concludes that also
$\langle Q_{n}^\mathrm{(out)}(f),g \rangle$ 
is non-zero.
\end{proof}

We will also need a pairing between multiplicity spaces of boundary field insertions. These will be given directly by Hom-spaces: for $M,N \in {}_A\hCc$, $W \in \Cc$ and $X \in {}_A\hCc_A$, we define
\begin{align}\label{eq:Hom-out-Hom-in-pairing}
\langle~,~\rangle_\mathrm{bnd} &: \hCc(M^* \otimes_A X \otimes_A N,W \otimes K^\eps) \times \hCc(W \otimes K^\eps,M^* \otimes_A X \otimes_A N) 
\longrightarrow \Cb 
~,
\nonumber\\
\langle f,g \rangle_\mathrm{bnd} &= 
\frac{1}{\eps \dim(W)}\,
\mathrm{tr}_{W \otimes K^\eps}(f \circ g)
~.
\end{align}
This pairing is just the trace-pairing of $\hCc$ and is hence non-degenerate.

\subsection{Oriented CFT without boundaries and defects}
\label{sec:blocks-corr-bosonic}

In this section we recall how to assign correlators to oriented world sheets (without spin structure). We follow \cite{Fuchs:2002tft,Fuchs:2004tft4} except that we use the $\SVect$ valued TFT $\hZc_{\Cc}$ from Section~\ref{sec:RT-TFT-SVect}. 

We fix a symmetric special Frobenius algebra $B$, i.e.
\begin{equation}
    B \in \hCc
    \quad , \quad
    N_B = \id_B ~.
\end{equation}
This determines which particular full CFT we will describe in terms of the conformal blocks for $\Vc$.

\subsubsection*{Decorated world sheets}

By a \textit{decorated world sheet} we mean a world sheet $\Sigma$ as in Section~\ref{sec:world-sheet}
together with additional decorations. In this section we treat the case without boundaries or defects, and so the only decorations are bulk insertion points. A bulk insertion point is a marked point $(p,v,t)$ with $v$ a tangent vector at $p$ and $t \in \{\text{in},\text{out}\}$, labelled in addition by a tuple 
\be\label{eq:oriented-CFT-bulk-label}
	(U,\bar V, \eps, \phi) ~,
\ee
where
\begin{itemize}
	\item $U,\bar V \in \Cc$ give the holomorphic and antiholomorphic representation of the bulk insertion, 
	\item $\eps \in \{\pm\}$ gives the $\Zb_2$-parity, and
	\item $\phi \in \Mc^{(t)}_{n=0,\epsilon}(U,\bar{V};B)$, 
	where $t$ is the in/out label of the marked point,
	parametrises the multiplicity space of bulk fields of type $(U,\bar V,\eps)$, see \eqref{eq:state-space-bulk-oriented} below.
\end{itemize}
Since we have $N_B=\id$, the space $\Mc^{(t)}_{n,\epsilon}(U,\bar{V};B)$ does not depend on $n$, and in this sense the choice $n=0$ made above is arbitrary.

\subsubsection*{Space of bulk fields}

If the labels $U$ and $\bar V$ of a bulk field are chosen to be irreducible, they describe a summand $U \otimes_{\Cb} \bar V \otimes_{\Cb} K^\eps$ in the space of bulk fields, which carries an action of the holomorphic and of the antiholomorphic copy of $\Vc$ on $U$ and $\bar V$, respectively, and which has parity $\eps$. The multiplicity of  $U \otimes_{\Cb} \bar V \otimes_{\Cb} K^\eps$ in the space of in/outgoing bulk fields is given by $\dim_\Cb \Mc^{\mathrm{(in/out)}}_{n=0,\epsilon}(U,\bar{V};B)$. 

To describe the space of bulk fields in purely categorical terms, we need to introduce the category $\Cc \boxtimes \Ccrev \boxtimes \SVect$. Here, $\Ccrev$ is the same tensor category as $\Cc$, but with inverse braiding and twist, and describes the representations of the antiholomorphic copy of $\Vc$. The product `$\boxtimes$' denotes the Deligne product as in \eqref{eq:C-hat-def}.\footnote{In terms of the relative Deligne product
one can also write $\Cc \boxtimes \Ccrev \boxtimes \SVect = \hCc \boxtimes_{\SVect} \hCc^\mathrm{rev}$, but we will not use that formulation.}

The \textit{space of in- and outgoing bulk fields of the oriented CFT} 
is given by
\begin{equation}
	\Hc_\text{bulk}^{(t)}(B) = \bigoplus_{i,j \in \Ic, \eps \in \{\pm 1\} } S_i \boxtimes \bar S_j \boxtimes K^\eps \otimes \Mc_{0,\eps}^{(t)}(S_i,\bar S_j; B) 
	~\in~\Cc \boxtimes \Ccrev \boxtimes \SVect
	~,
	\label{eq:state-space-bulk-oriented}
\end{equation}
where $t \in \{\mathrm{in},\mathrm{out}\}$, and $\Ic$ is the index set for simple objects in $\Cc$ as in \eqref{eq:I-Si-def}.
When writing the tensor product with the $\Cb$-vector space 
$\Mc_{0,\eps}^{(t)}(U,\bar V; B)$, we are 
using the natural embedding of $\Vect$ in $\Cc \boxtimes \Ccrev \boxtimes \SVect$ as multiples of the tensor unit. In other words, 
    $\Mc_{0,\eps}^{(t)}(S_i,\bar S_j; B)$ is the multiplicity space of the object $S_i \boxtimes \bar S_j \boxtimes K^\eps$ in $\Hc_\text{bulk}^{(t)}(B)$.
    
\subsubsection*{Connecting manifold}

In the TFT approach, CFT correlators are described as elements in the appropriate TFT state space. To obtain an actual function of field insertions which depends on insertion points and complex structure moduli, one needs to use the relation between TFT state spaces and conformal blocks as outlined in Sections~\ref{sec:confblock-TFT} and~\ref{sec:VOSA-blocks}. 

To a decorated world sheet $\Sigma$ we assign a $\hCc$-extended surface $\widetilde\Sigma$, the \textit{double of $\Sigma$}, as follows.
First, let $\Sigma^\mathrm{rev}$ be the same surface as $\Sigma$ but with opposite orientation. The surface underlying $\widetilde\Sigma$ is simply the disjoint union of $\Sigma$ and $\Sigma^\mathrm{rev}$:\,\footnote{
Recall that here we assume that $\Sigma$ has empty boundary. In the presence of a boundary this prescription has to be modified, see Section~\ref{sec:corr-oriented-bnddef} below.
\label{fn:modify-with-bnd}
}
\be
	\widetilde{\Sigma}=\Sigma\sqcup \Sigma^\mathrm{rev} ~.
\ee

Each marked point $p$ of $\Sigma$ produces two points $p_+ \in \Sigma$, $p_- \in \Sigma^\mathrm{rev}$ on $\widetilde\Sigma$. If $p$ has tangent vector $v$ and is decorated with the objects $(U,\bar V,\eps,\phi)$, then the two marked points on $\widetilde\Sigma$ are 
\be\label{eq:bulk-marked-point-decoration}
	(p_+,U,v,\delta) \quad , \quad (p_-,\bar V \otimes K^\eps,v,\delta) ~.
\ee
Here, $\delta=-$ if $p$ is an ingoing insertion point, and $\delta=+$ if $p$ is outgoing. The ordering of the marked points on $\widetilde\Sigma$ is such that if 
$p_1 < p_2 < \dots$ on $\Sigma$, 
then $p_{1,+}<p_{1,-}<p_{2,+}<p_{2,-}<\dots$ on $\widetilde\Sigma$.
Attaching the parity factor $K^\eps$ to $p_-$ instead of $p_+$ is a convention, see Remark~\ref{rem:Keps-with-p-} below.

To complete the definition of the double $\widetilde\Sigma$ as a $\Cc$-extended surface, we need to give a Lagrangian subspace of $\lambda \subset H_1(\widetilde\Sigma,\Rb)$. We will do this after introducing the connecting manifold.

The \textit{connecting manifold of $\Sigma$} is the three-manifold${}^\text{\ref{fn:modify-with-bnd}}$
\begin{align}\label{eq:conn-mf-nobnd}
	M_{\Sigma}:=\Sigma\times [-1,1] ~.
\end{align}
Its boundary is $\partial M_{\Sigma}=\widetilde\Sigma$, and we consider $M_\Sigma$ as a bordism 
\be
M_\Sigma \colon  \emptyset \lra \widetilde\Sigma ~. 
\ee
The Lagrangian subspace $\lambda$ of $\widetilde\Sigma$ is given by the kernel of the homomorphism $H_1(\widetilde\Sigma,\Rb) \lra H_1(M_\Sigma,\Rb)$ induced by the inclusion $\widetilde\Sigma = \partial M_\Sigma \subset M_\Sigma$ (see \cite[App.\,B]{Fjelstad:2005ua} for more details).

To complete the definition of the connecting manifold, we need to describe the ribbon graph in $M_\Sigma$ in terms of the decorated world sheet $\Sigma$. 
Let $\Sigma_0 := \Sigma \times \{0\}$ be the embedding of the world sheet in the ``middle'' of the connecting manifold. The ribbon graph in $M_\Sigma$ consists of
\begin{enumerate}
\item
 vertical ribbons (in the $[-1,1]$-direction) connecting the marked points of $\widetilde\Sigma$ to $\Sigma_0$
\item
a ribbon graph embedded in $\Sigma_0$, such that
the 2-orientation of all ribbons embedded in $\Sigma_0$ agrees with the 2-orientation of $\Sigma_0$.\footnote{We use different orientation conventions for the ribbon graph in the connecting manifold as compared to \cite{Fuchs:2002tft,Fuchs:2004tft4}. For example, the conventions described in \cite[Sec.\,3.1]{Fuchs:2004tft4} are such that the 2-orientation of the embedded ribbon graph is opposite to that of $\Sigma_0$.}
\end{enumerate}

\begin{figure}[tb]
	\centering
	\includegraphics[scale=1.0]{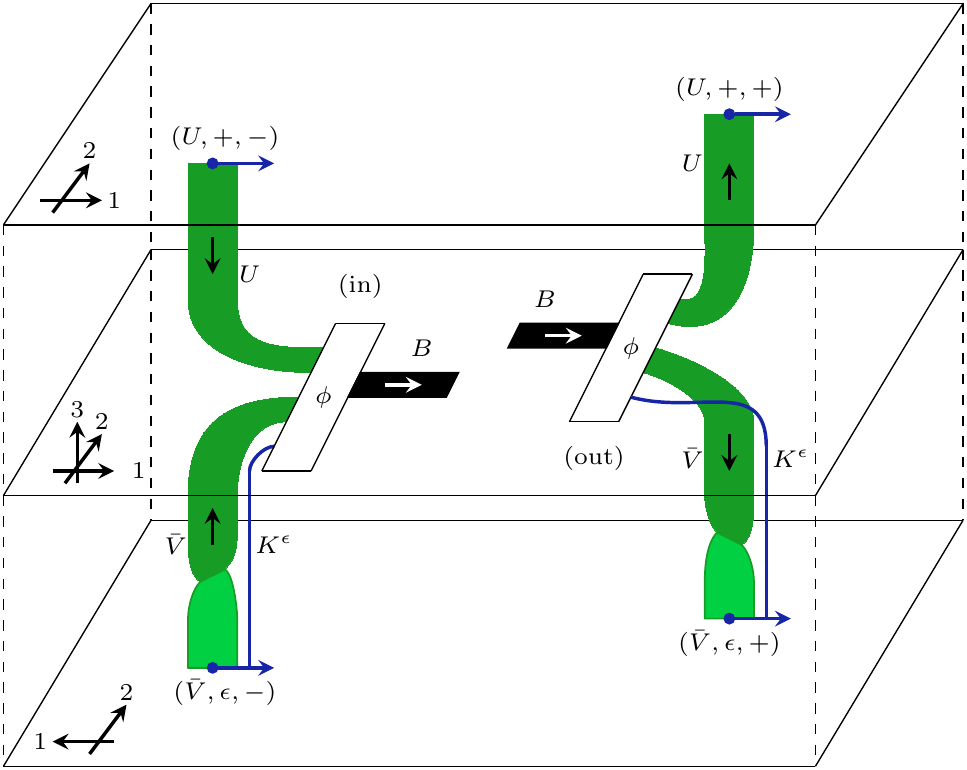}
	\caption{Ribbon graph in the connecting manifold near a bulk insertion point labelled $(U,\bar V,\eps,\phi)$, both for an ingoing and an outgoing marked point.
    Observe the opposite half twists on the ribbons for in- and outgoing  insertions.
 }
	\label{fig:bulk-insertion-oriented}
\end{figure}

In part 1 of the ribbon graph, for an insertion point $p \in \Sigma$ labelled by $(U,\bar V, \eps, \phi)$, the ribbon graph in a neighbourhood of the vertical line $\{p\} \times [-1,1]$ in $M_\Sigma$ looks as in Figure~\ref{fig:bulk-insertion-oriented}, depending on whether the insertion point is ingoing or outgoing.

For part 2 of the ribbon graph, place a network of $B$-ribbons in $\Sigma_0$ with  three-valent vertices labelled by product and coproduct of $B$ as appropriate for the orientations. The $B$-ribbons of the field insertions are connected to this network, and the network is such that the components of $\Sigma_0$ minus the $B$-ribbons are all homeomorphic to discs. This latter point can be achieved for example by choosing the graph dual to a triangulation of $\Sigma$. Proposition~\ref{prop:orientedCFT-corr-welldef} below states that the value of the TFT does not depend on the particular choice of $B$-ribbon graph.

\subsubsection*{Correlators}

The correlator for a decorated world sheet $\Sigma$ will be an element in the space of conformal blocks for the double $\widetilde\Sigma$, i.e.\ in the TFT state space
\be\label{eq:space-of-blocks-via-TFT}
\Bl(\Sigma) := \hZc_{\Cc}(\widetilde\Sigma) ~.
\ee
We already assumed that $\Sigma$ has empty boundary, so that $\widetilde\Sigma = \Sigma\sqcup \Sigma^\mathrm{rev}$, with marked points as described above. If in addition $\Sigma$ is connected and of genus $g$, by monoidality of $\hZc_\Cc$ and by \eqref{eq:Zhat-on-objects}, we have explicitly
\begin{align}\label{eq:oriented-blocks-space-def}
\Bl(\Sigma) 
&= \hZc_{\Cc}(\Sigma \sqcup \Sigma^\mathrm{rev})
\nonumber \\
&= \Zc_{\Cc}^\mathrm{RT}(\Sigma') \otimes_{\Cb}
\Zc_{\Cc}^\mathrm{RT}({\Sigma^\mathrm{rev}}')
\otimes_{\Cb} \SV((\Sigma \cup \Sigma^\mathrm{rev})'')
\nonumber \\
&\cong
\Cc(U_1^{-\delta_1}\otimes\dots\otimes U_n^{-\delta_n},L^{\otimes g})
\otimes_{\Cb}
\Cc^\mathrm{rev}(\bar V_1^{-\delta_1}\otimes\dots\otimes \bar V_n^{-\delta_n},L^{\otimes g})
\nonumber \\
& \qquad \otimes_\Cb
K^{\eps_1} \otimes_\Cb \dots \otimes_\Cb K^{\eps_n} ~,
\end{align}
where we identified $K^\pm = (K^\pm)^*$. 

In terms of the VOA $\Vc$, according to \eqref{eq:hom-space-TFT-state-space} and \eqref{eq:block-statespace}, the factor $\Cc(\dots)$ is interpreted as the space of holomorphic conformal blocks on $\Sigma$, the factor $\Cc^\mathrm{rev}(\dots)$ as the space of anti-holomorphic conformal blocks, and the product of $K^{\eps_i}$ collects the various parities of the insertion points. 

The correlator $\Corr^\mathrm{or}_B(\Sigma)$ for the world sheet $\Sigma$ is a bilinear combination of holomorphic and anti-holomorphic conformal blocks, and so an element of $\Bl(\Sigma)$.  The relevant element is defined in terms of the TFT $\hZc_\Cc$ and the connecting manifold $M_\Sigma$ via
\be\label{eq:orientedCFT-corr-def}
\Corr^\mathrm{or}_B(\Sigma) := \hZc_\Cc(M_\Sigma) \in \Bl(\Sigma)  ~.
\ee
We stress that even though the space of blocks $\Bl(\Sigma)$ is a super-vector space which may include odd components, the element $\Corr^\mathrm{or}_B(\Sigma)$ is always purely even, since $\hZc_\Cc(M_\Sigma) \colon  K^+ \lra \Bl(\Sigma)$ is a morphism in $\SVect$, i.e.\ an even linear map. 

\begin{remark}\label{rem:Keps-with-p-}
Placing the parity factor $K^\eps$ at the point $p_-$ on $\Sigma^\mathrm{rev}$ instead of with $p_+$ on $\Sigma$ is a convention.
Note that either choice produces the same space $\Bl(\Sigma)$ in \eqref{eq:oriented-blocks-space-def} because the $\SV$-factor in $\hZc_{\Cc}$ only depends on the ordering of the marked points and not on their positions, and placing $K^\eps$ with $p_+$ or $p_-$ does not change the order. Similarly, both choices will produce the same vector $\Corr^\mathrm{or}_B(\Sigma)$ in $\Bl(\Sigma)$.
\end{remark}	

As the notation suggests, the element $\Corr^\mathrm{or}_B(\Sigma)$ only depends on the decorated world sheet $\Sigma$:

\begin{proposition}\label{prop:orientedCFT-corr-welldef}
$\Corr^\mathrm{or}_B(\Sigma)$ does not depend on the choice of $B$-ribbon graph in part 2 of the construction of $M_\Sigma$.
\end{proposition}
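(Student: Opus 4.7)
The plan is to show that any two admissible $B$-ribbon graphs $\Gamma, \Gamma'$ embedded in $\Sigma_0$ (with the same connections to the vertical ribbons coming from bulk insertions, and with complement a disjoint union of discs) give the same vector in $\Bl(\Sigma)$. I will do this by exhibiting a finite sequence of \emph{local} modifications that transforms $\Gamma$ into $\Gamma'$, and by checking that each such modification leaves $\hZc_\Cc(M_\Sigma)$ unchanged.

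First, I would recast both graphs as duals of cell decompositions. Since by hypothesis $\Sigma_0\setminus\Gamma$ is a disjoint union of open discs, $\Gamma$ is the dual 1-skeleton of a PLCW decomposition $T$ of $\Sigma$ (with the insertion points lying in marked faces that connect to the vertical ribbons), and similarly for $\Gamma'$. Any two PLCW decompositions of a surface agreeing on a common marked set can be connected by a finite sequence of elementary moves analogous to \eqref{move:remove-edge} and \eqref{move:remove-vertex}, plus moves that shift the local combinatorial orientation data (the choice of which two ends of a $B$-ribbon carry the ``in'' versus ``out'' of a trivalent vertex). Dualising, these translate into local moves on the $B$-ribbon graph, namely: (a) a $2$-$2$ move that slides two adjacent trivalent vertices past each other, (b) a bubble move that erases or inserts a pair of vertices connected by two parallel $B$-edges, and (c) moves that reverse the orientation of a $B$-ribbon or rotate a trivalent vertex.

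Next I would verify that each such local move is an equality of morphisms in $\hCc$, hence invisible to $\hZc_\Cc$ by its locality (i.e.\ the cylinder-over-a-neighbourhood lemma that lets one replace one ribbon graph by another inside a small ball whenever their traces are equal morphisms). Move (a) is exactly the Frobenius relation combined with (co)associativity of the product and coproduct of $B$; move (b) is the specialness condition $\mu\circ\Delta=\id_B$; move (c), which amounts to passing a ribbon once through a twist and reattaching it with the opposite orientation, is exactly the statement $N_B=\id_B$, i.e.\ symmetry of $B$. All these identities take place in $\hCc = \Cc\boxtimes\SVect$, and because $B$ is an algebra object there, every structure morphism is parity-even, so the super-signs carried by the $\SV$-factor of $\hZc_\Cc$ are not disturbed by the moves.

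The main obstacle is essentially bookkeeping rather than conceptual: one has to check that the generating moves indeed suffice (the standard reference result for PLCW decompositions handles this, as recalled in Section~\ref{sec:oc-r-spin-surf}) and that the connections between $\Gamma$ and the fixed vertical ribbons at the insertion points are preserved throughout the sequence of moves, so that the comparison really takes place between two bordisms with identical boundary data and identical ribbon endpoints on $\Sigma_0$. Once this is in place, functoriality of $\hZc_\Cc$ and the three algebraic identities above yield $\hZc_\Cc(M_\Sigma^\Gamma)=\hZc_\Cc(M_\Sigma^{\Gamma'})$, which is the claim. This is the same argument as in \cite{Fuchs:2002tft} for $B\in\Cc$, with the only new feature that $B$ now lives in $\hCc$; as noted, this has no effect because the $\SVect$-factor of the TFT sees only the order of the insertions on $\widetilde\Sigma$, which the moves on $\Sigma_0$ do not change.
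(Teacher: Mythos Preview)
Your proposal is correct and is precisely the standard argument the paper is invoking: the paper itself gives no proof but refers to \cite{Fuchs:2002tft,Fuchs:2004tft4,Fjelstad:2005ua} (specifically \cite[App.\,B.3]{Fjelstad:2005ua}), where exactly this scheme of local moves on the dual graph, verified via (co)associativity, the Frobenius relation, specialness, and symmetry of $B$, is carried out. Your observation that the $\SVect$-factor is inert because all structure morphisms of $B$ are parity-even and the moves do not reorder the marked points on $\widetilde\Sigma$ is the only new ingredient over the cited references, and it is correct.
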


The proof is the same as in the non-$\SVect$-valued case in \cite{Fuchs:2002tft,Fuchs:2004tft4,Fjelstad:2005ua}, see in particular \cite[App.\,B.3]{Fjelstad:2005ua} for details. 

The correlators $\Corr^\mathrm{or}_B(\Sigma)$ satisfy the required compatibility conditions with respect to transport and gluing. We discuss this in detail in Section~\ref{sec:parityCFT-consistency}.

\subsection{Including boundaries and defects}
\label{sec:corr-oriented-bnddef}

We now extend the construction of correlators as elements of TFT state spaces to world sheets with boundaries and defects. The idea is the same as above, but now involves more notation.

\subsubsection*{Decorated world sheets}

\begin{figure}[tb]
    \begin{align*}
        \mathrm{a})&&\mathrm{b})&\\[-1em]
    &\raisebox{-0.5\height}{\includegraphics[scale=1.0]{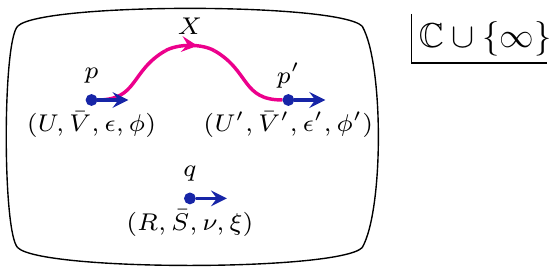}}
    &&\raisebox{-0.5\height}{\includegraphics[scale=1.0]{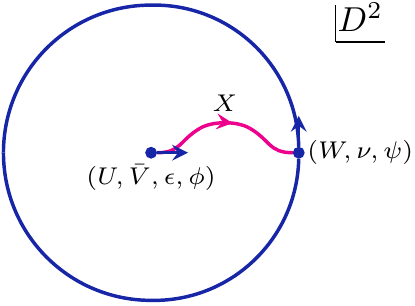}}
    \end{align*}
    \caption{a) Riemann sphere with three insertions (only the part near the insertions is shown), where $p$, $q$ are ingoing and $p'$ is outgoing. Here, 
$\xi\in \Mc_{0,\nu}^{(\mathrm{in})}(R,\bar S;B)$,
$\phi\in \Mc_{0,\eps}^{(\mathrm{in})}(U,\bar V;X)$, and 
$\phi'\in \Mc_{0,\eps'}^{(\mathrm{out})}(U',\bar V';X)$. 
    b) Disc with one ingoing defect field insertion in the bulk and one ingoing defect field inserted on the boundary.
    Here, $\phi\in \Mc_{0,\eps}^{(\mathrm{in})}(U,\bar V;X)$, and
$\psi\in \Cc(W \boxtimes K^\nu,M^* \otimes_B X^* \otimes_B N)$.}
    \label{fig:decorated-world-sheet-examples}
\end{figure}

Let $\Sigma$ be a world sheet as in Section~\ref{sec:world-sheet}, possibly with boundaries and defects. A decoration of $\Sigma$ consists of the following data
(see Figure~\ref{fig:decorated-world-sheet-examples} for two examples):
\begin{itemize}
    \item A connected component of the boundary minus the boundary insertion points gets labelled by a left $B$-module in $\hCc$. The $B$-module describes the \textit{boundary condition} for the given stretch of boundary.
    
    \item A connected component of the defect network minus field insertions gets labelled by a $B$-$B$-bimodule in $\hCc$ describing the \textit{defect condition}. 
    
    \item A boundary insertion which separates boundary conditions $M$ and $N$, in this order along the orientation of the boundary, and which is not the endpoint of a defect line, is labelled by 
\be\label{eq:bnd-label-oriented}
(W,\eps,\psi) ~.
\ee    
Here $W \in \Cc$, $\eps \in \{\pm 1\}$ is the parity, and
$\psi\in \hCc(W \otimes K^\eps,M^* \otimes_B N)$ (ingoing insertion) or $\psi\in \hCc(M^* \otimes_B N,W \otimes K^\eps)$ (outgoing insertion).
    
\item Consider a boundary insertion between boundary conditions $M,N$ on which defect lines with defect conditions $X_1,\dots,X_m$ start or end. Let $\delta_i=+$ if $X_i$ is pointing away from the insertion, and $\delta_i=-$ otherwise. 
Set 
\be\label{eq:defect-tensor-prod}
X = X_1^{\delta_i} \otimes_B \cdots \otimes_B X_m^{\delta_m} \in {}_B\hCc_B
\quad \text{where} \quad
X_i^+ = X_i
~~\text{and}~~
X_i^- = X_i^*
\ee
The boundary insertion is again labelled by \eqref{eq:bnd-label-oriented}, but now $\psi\in \hCc(W \otimes K^\eps,M^* \otimes_B X \otimes_B N)$ (ingoing), respectively $\psi\in \hCc(M^* \otimes_B X \otimes_B N,W \otimes K^\eps)$ (outgoing).

\item The label of a bulk insertion which is not connected to any line defects has already been described in \eqref{eq:oriented-CFT-bulk-label}.

\item Consider a bulk insertion on which defect lines with defect conditions $X_1,\dots,X_m$ start or end, and let $X$ be as in \eqref{eq:defect-tensor-prod}.
The insertion is again labelled as in \eqref{eq:oriented-CFT-bulk-label}, but now 
$\phi \in \Mc^{(t)}_{n=0,\epsilon}(U,\bar{V};X)$, with $t = \mathrm{in}$ (ingoing insertion) and $t = \mathrm{out}$ otherwise, and where the multiplicity spaces are given in \eqref{eq:mult-space-image-Q}. 
\end{itemize}

\subsubsection*{Spaces of boundary and defect fields}

From the above description of field labels, we can read off the spaces of boundary and bulk fields with attached defect lines. Namely, for boundary insertions we have, with $X$ as in \eqref{eq:defect-tensor-prod}
\begin{align}
\Hc_{M,N}^\mathrm{(in)}(X) &= \bigoplus_{i \in \Ic, \eps \in \{\pm 1\} } S_i \boxtimes K^\eps \otimes \hCc(S_i \otimes K^\eps,M^* \otimes_B X \otimes_B N)
	~\in~\hCc 
	~,
\nonumber\\	
\Hc_{M,N}^\mathrm{(out)}(X) &= \bigoplus_{i \in \Ic, \eps \in \{\pm 1\} } S_i \boxtimes K^\eps \otimes \hCc(M^* \otimes_B X \otimes_B N,S_i \otimes K^\eps)
	~\in~\hCc 
	~,
	\label{eq:orientedCFT-bnd-field-spaces}
\end{align}
For bulk insertions  with attached defect lines we get, with $t \in \{\mathrm{in},\mathrm{out}\}$,
\begin{equation}
	\Hc_\text{bulk}^{(t)}(X) = \bigoplus_{i,j \in \Ic, \eps \in \{\pm 1\} } S_i \boxtimes \bar S_j \boxtimes K^\eps \otimes \Mc_{0,\eps}^{(t)}(U,\bar V; X) 
	~\in~\Cc \boxtimes \Ccrev \boxtimes \SVect
	~.
	\label{eq:state-space-bulkdef-oriented}
\end{equation}
The algebra $B$, seen as a bimodule over itself, labels the trivial defect. And indeed, setting $X=B$ in the above state space gives the space of bulk insertions without attached line defects from \eqref{eq:state-space-bulk-oriented}.

\subsubsection*{Connecting manifold}

In the presence of boundaries we need to extend the definition of the double of a decorated world sheet and of the connecting manifold. Namely, 
given a decorated world sheet $\Sigma$, 
the double $\widetilde\Sigma$ is a $\hCc$-extended surface whose underlying surface is
obtained by gluing $\Sigma$ and $\Sigma^\mathrm{rev}$ together along their boundary:
\begin{align}\label{eq:double-with-bnd}
    \widetilde{\Sigma}=\Sigma\sqcup \Sigma^\mathrm{rev}/(x\sim x^\mathrm{rev},\ x\in\partial\Sigma)\,.
\end{align}
Note that $\widetilde\Sigma$ has empty boundary.

As before, a bulk insertion in $\Sigma$ (with or without attached defects) labelled as in \eqref{eq:oriented-CFT-bulk-label} gives rise to two marked points on $\widetilde\Sigma$ labelled as in \eqref{eq:bulk-marked-point-decoration}.

Each marked point $p$ on the boundary of $\Sigma$ gives a single marked point $p$ in $\widetilde\Sigma$. Since the tangent vector $v$ at $p$ is parallel to the boundary, it makes sense to take the same tangent vector in $\widetilde{\Sigma}$. 
If the label of $p$ on the decorated world sheet is as in \eqref{eq:bnd-label-oriented},
the marked point on $\widetilde\Sigma$ is labelled by
\begin{equation}
    (p,W\otimes K^\eps,v,\delta) ~,
\end{equation}
where as in \eqref{eq:bulk-marked-point-decoration}, $\delta=-$ if $p$ is an ingoing insertion point, and $\delta=+$ if $p$ is outgoing. 

\begin{figure}[tb]
    \centering
    \includegraphics[scale=1.0]{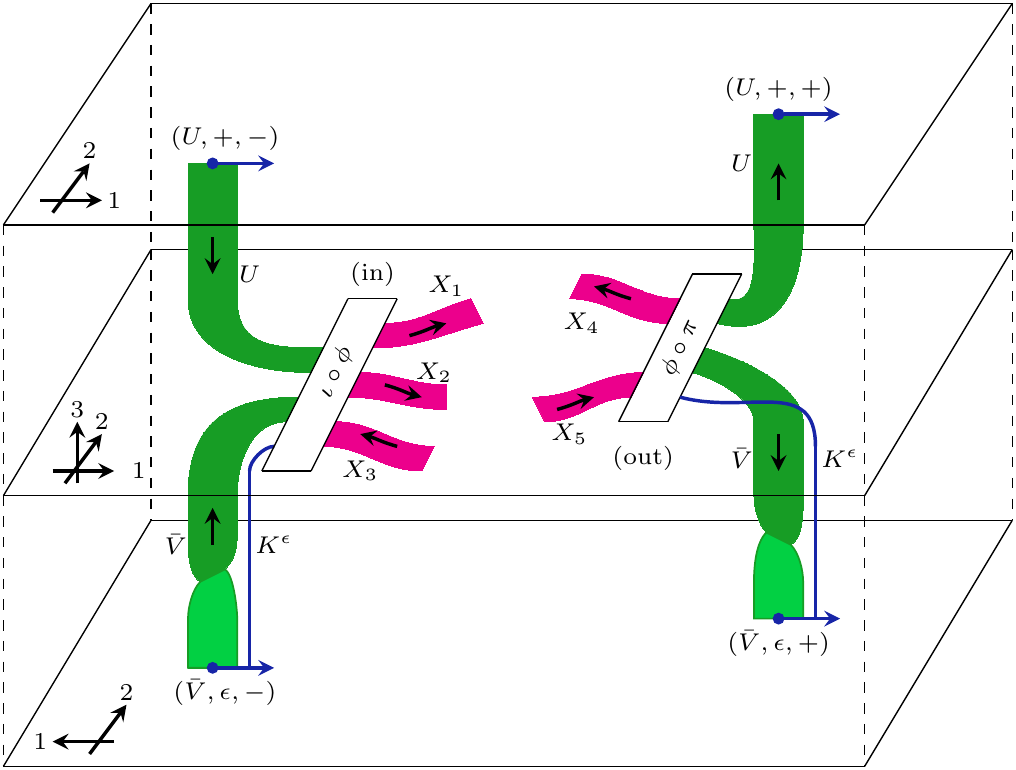}
    \caption{Connecting manifold near bulk insertions with attached defect lines.}
\label{fig:defect-insertion-oriented-cm}
\end{figure}

\begin{figure}[tb]
    \centering
    \includegraphics[scale=1.0]{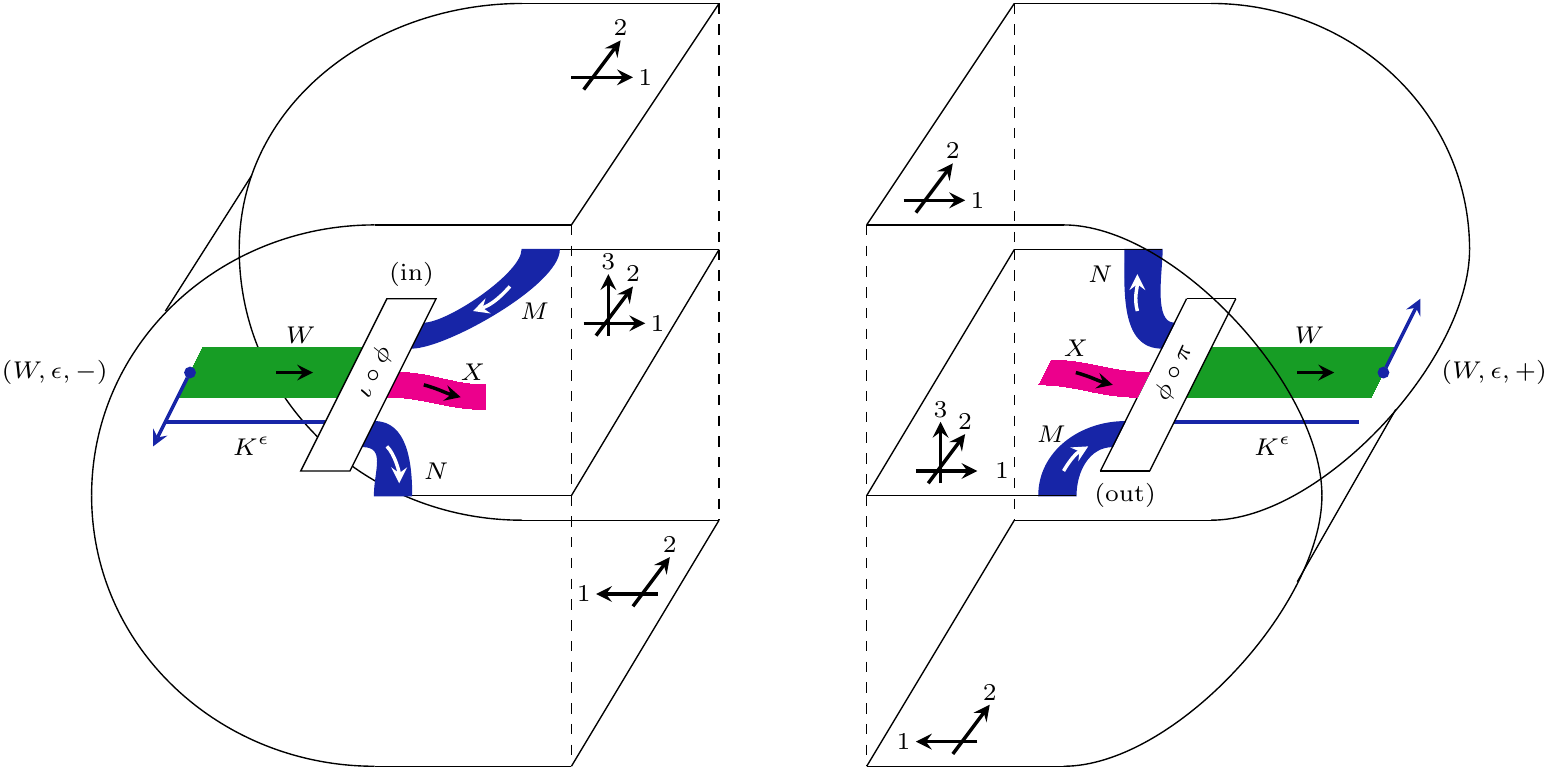}
    \caption{Connecting manifold near boundary insertion 
    }
    \label{fig:boundary-insertion-oriented-cm}
\end{figure}

The definition of the connecting manifold is extended from \eqref{eq:conn-mf-nobnd} to
\begin{align}
    M_{\Sigma}:=\Sigma\times [-1,1]/\sim, \quad (x,t)\sim (x,-t),\ x\in\partial\Sigma\,.
\end{align}
Its boundary is $\partial M_{\Sigma}=\widetilde\Sigma$.
The construction of the ribbon graph in $M_\Sigma$ has two parts as in Section~\ref{sec:blocks-corr-bosonic}. 

For part 1, the relevant vertical parts of the ribbon graph for in/outgoing bulk and boundary insertions, possibly with defects, are shown in Figures~\ref{fig:defect-insertion-oriented-cm} and \ref{fig:boundary-insertion-oriented-cm}.
    
For part 2, the only modification is that the network of $B$-lines also attaches to the (bi)modules labelling boundaries and defects via the corresponding $B$-action.
The network of $B$-lines has to be such that each connected component of the world sheet minus all $B$-lines and all defect lines is contractible.

\begin{figure}
    \centering
    \includegraphics[scale=1.0]{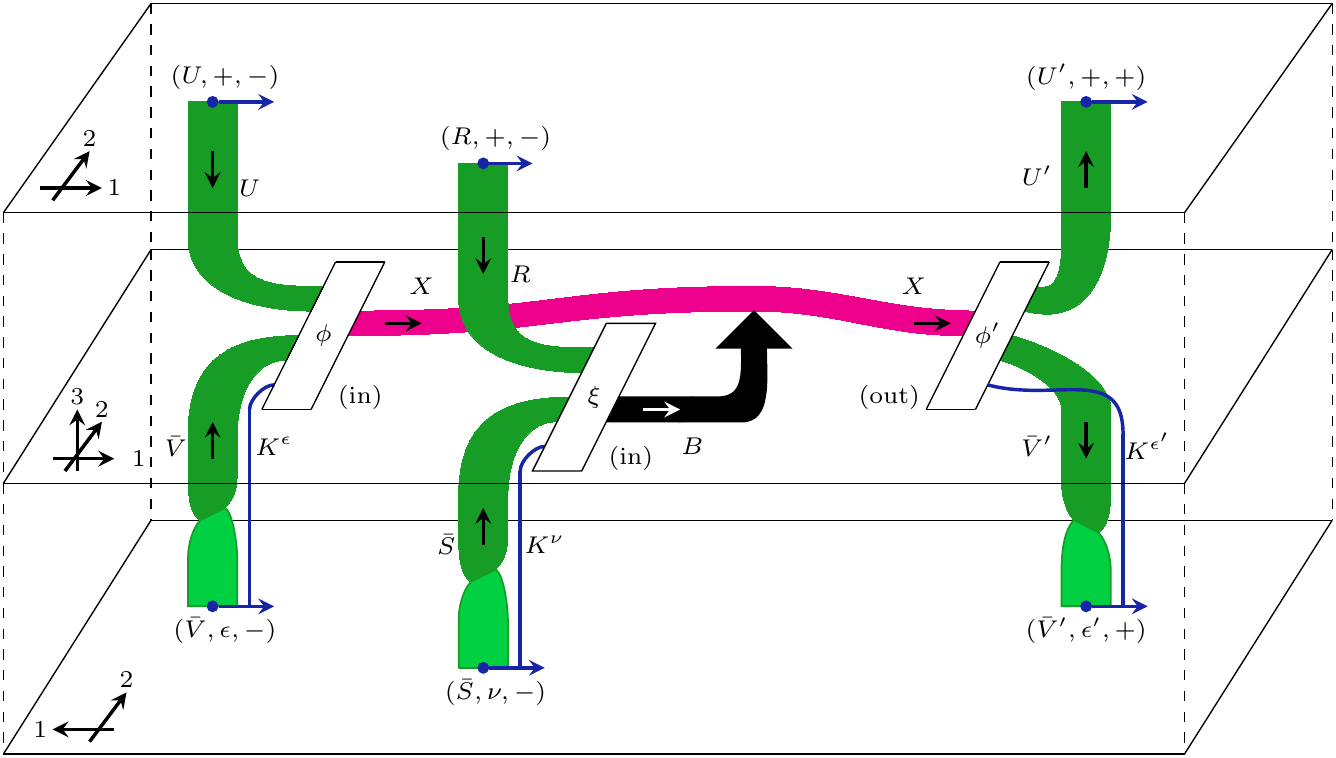}
    \caption{Connecting manifold for the world sheet shown in Figure~\ref{fig:decorated-world-sheet-examples}\,a). 
    }
    \label{fig:3-bulk-cm}
\end{figure}

\begin{figure}
    \centering
    \includegraphics[scale=1.0]{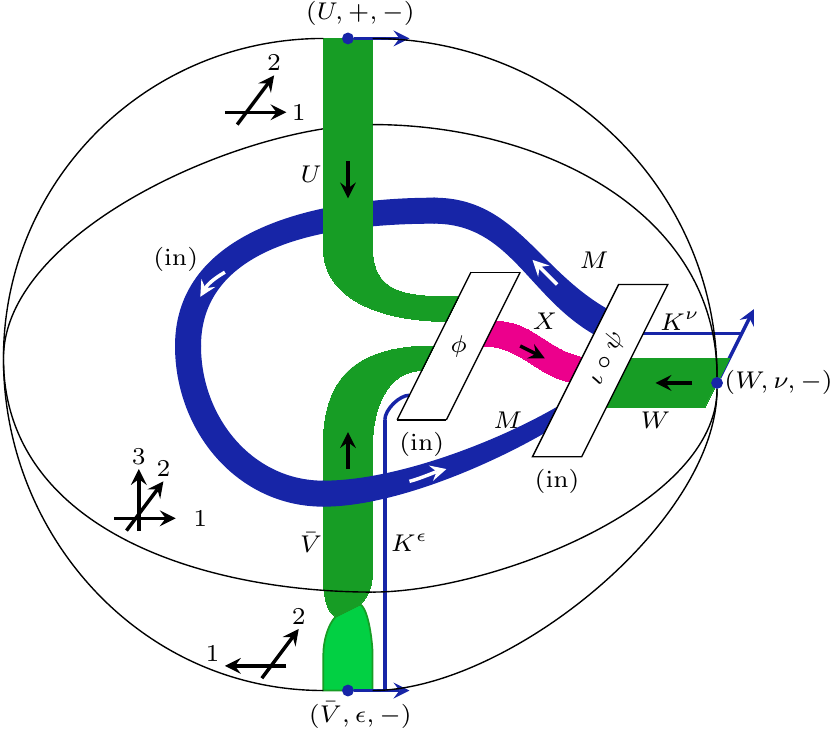}
    \caption{Connecting manifold for the world sheet shown in Figure~\ref{fig:decorated-world-sheet-examples}\,b). 
    }
    \label{fig:bulk-bdry-cm}
\end{figure}

As an example, for the two decorated world sheets in Figure~\ref{fig:decorated-world-sheet-examples} the connecting manifolds are shown in Figures~\ref{fig:3-bulk-cm}  and~\ref{fig:bulk-bdry-cm}.

\subsubsection*{Correlators}

For a decorated world sheet $\Sigma$ with boundaries and defects, the space of conformal blocks is still defined as in \eqref{eq:space-of-blocks-via-TFT}, namely
$\Bl(\Sigma) = \hZc_{\Cc}(\widetilde\Sigma)$, and correlators are given as in \eqref{eq:orientedCFT-corr-def}:
\be\label{eq:orientedCFT-corr-def_withbnd+def}
\Corr^\mathrm{or}_B(\Sigma) = \hZc_\Cc(M_\Sigma) \in \Bl(\Sigma)  ~,
\ee
and as before, these are always purely even elements of $\Bl(\Sigma)$. Proposition~\ref{prop:orientedCFT-corr-welldef} applies equally to the present situation with boundaries and defects. The proof is as in \cite{Fjelstad:2005ua,Fuchs:2012def} and we omit the details.

We now turn to the description of the compatibility conditions with respect to transport and gluing and give the proof that the collection of correlators $\{ \Corr^\mathrm{or}_B(\Sigma) \}_{\Sigma}$ satisfies these conditions.

\subsection{Consistency of oriented correlators in the presence of parity}\label{sec:parityCFT-consistency}

From the input data of a symmetric special Frobenius algebra $B \in \hCc$, the above construction produces a family of vectors $\{\Corr^\mathrm{or}_B(\Sigma)\}_{\Sigma}$, where $\Sigma$ runs over decorated world sheets and $\Corr^\mathrm{or}_B(\Sigma) \in \Bl(\Sigma)$. 

For such a collection of correlators to be \textit{consistent}, we require two conditions, which we elaborate in the following:
\begin{itemize}
	\item[(T)] \textit{(Transport)} If two decorated world sheets $\Sigma$ and $\Sigma'$ are joined by a path $\gamma$ in decorated world sheets, then $\Corr^\mathrm{or}_B(\Sigma')$ agrees with the transport of $\Corr^\mathrm{or}_B(\Sigma)$ along $\gamma$.
	\item[(G)] \textit{(Gluing)} If $\Sigma'$ is obtained from $\Sigma$ by gluing after removing from $\Sigma$ a disc around an ingoing and an out-going field insertion, then $\Corr^\mathrm{or}_B(\Sigma')$ is obtained from $\Corr^\mathrm{or}_B(\Sigma)$ by summing over intermediate labels.
\end{itemize}

We now describe in detail how these conditions are expressed in terms of the TFT $\hZc_\Cc$.
This closely follows \cite{Fjelstad:2005ua,Fuchs:2012def}, up to some modifications due to the inclusion of parity in our setting.

\subsubsection*{Transport}
Recall from Section~\ref{sec:compat-transport} that we will use paths and families interchangeably. To formulate the transport condition we will use the language of families.
    
Let $E_{\gamma}$ be a family of world sheets over $[0,1]$ with fibre $\gamma(t)$  for $t\in [0,1]$ and $\Sigma = \gamma(0)$, $\Sigma'=\gamma(1)$.
For each $t \in [0,1]$ consider the double $\widetilde \gamma(t)$ of $\gamma(t)$ as in \eqref{eq:double-with-bnd}. This defines a family $E_{\widetilde\gamma}$ of $\hCc$-extended surfaces, as well as the corresponding transport bordism as in \eqref{eq:transport-compatible},
\begin{equation}\label{eq:transport-bordism-for-double}
    \widetilde\Sigma = \widetilde\gamma(0) \quad , \quad \widetilde\Sigma'=\widetilde\gamma(1) \quad , \quad
    E_{\widetilde\gamma} \colon  \widetilde\Sigma \lra \widetilde\Sigma'
    ~.
\end{equation}

The compatibility with transport can now be stated as follows:

\begin{theorem}\label{thm:parityCFT-transport}
Let $B \in \hCc$ be a symmetric special Frobenius algebra, and let $E_{\gamma}$ be a family of decorated world sheets from $\Sigma$ to $\Sigma'$. The correlators for $B$ satisfy
\be
    \hZc_{\Cc}(E_{\widetilde\gamma})(\Corr^\mathrm{or}_B(\Sigma)) = \Corr^\mathrm{or}_B(\Sigma')~.
\ee
\end{theorem}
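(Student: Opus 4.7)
The plan is to show that the composite bordism $E_{\widetilde\gamma} \circ M_\Sigma \colon \emptyset \to \widetilde\Sigma'$ coincides, as an object of $\Bord_3(\hCc)$, with $M_{\Sigma'}$ up to the choice of internal $B$-ribbon network. Once this is established, functoriality of $\hZc_\Cc$ together with the definition \eqref{eq:orientedCFT-corr-def_withbnd+def} and Proposition~\ref{prop:orientedCFT-corr-welldef} will give
\[
\hZc_\Cc(E_{\widetilde\gamma})\bigl(\Corr^\mathrm{or}_B(\Sigma)\bigr) \;=\; \hZc_\Cc(E_{\widetilde\gamma} \circ M_\Sigma) \;=\; \hZc_\Cc(M_{\Sigma'}) \;=\; \Corr^\mathrm{or}_B(\Sigma'),
\]
which is the desired identity.

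My first step is to reduce to the case of a short family. By compactness of $[0,1]$ one can cover $\gamma$ by finitely many pieces $\gamma_i = \gamma|_{[t_i,t_{i+1}]}$, and functoriality of $\hZc_\Cc$ turns $\hZc_\Cc(E_{\widetilde\gamma})$ into the composition $\hZc_\Cc(E_{\widetilde\gamma_{n-1}}) \circ \cdots \circ \hZc_\Cc(E_{\widetilde\gamma_0})$, so it suffices to prove the identity for each short $\gamma_i$. For a sufficiently short family one may pick a smooth trivialisation $\psi \colon \Sigma \times [0,1] \to E_\gamma$ with $\psi_0 = \id_\Sigma$ and $\psi_1 \colon \Sigma \to \Sigma'$ an orientation-preserving diffeomorphism compatible with every decoration: it matches the ordered insertion points and their tangent vectors, the boundary conditions along each boundary segment, and the $B$-$B$-bimodule labels and orientations of each defect arc.

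Under this trivialisation, $E_{\widetilde\gamma}$ is identified with $\widetilde\Sigma \times [0,1]$ together with the embedded $\hCc$-coloured ribbon graph consisting of vertical arcs tracking each marked point of $\widetilde\Sigma$ as it is moved to $\widetilde\Sigma'$. Gluing this collar onto $M_\Sigma$ along their common boundary $\widetilde\Sigma$, the vertical ribbons of $M_\Sigma$ extend seamlessly through the collar, and the $B$-network on the middle surface $\Sigma \times \{0\} \subset M_\Sigma$ can be transported via $\psi$ to a $B$-network on $\Sigma' \times \{0\} \subset M_{\Sigma'}$. The resulting bordism-with-ribbon-graph is diffeomorphic to $M_{\Sigma'}$ equipped with this particular $B$-network, and by Proposition~\ref{prop:orientedCFT-corr-welldef} this choice does not affect the TFT value.

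The hard part will be ensuring that the framings and half-twists of the ribbons in the composite bordism coincide with those prescribed in $M_{\Sigma'}$. As illustrated in Section~\ref{sec:compat-transport}, rotations of tangent vectors along $\gamma$ translate into ribbon twists on the TFT side, and the composed 3-manifold must reproduce these twists via the framings of the vertical ribbons inside $E_{\widetilde\gamma}$. Analogous checks are needed near boundary insertions, endpoints of defect arcs, and bulk insertions at which several defect lines meet, so that the local pictures of Figures~\ref{fig:bulk-insertion-oriented}, \ref{fig:defect-insertion-oriented-cm} and~\ref{fig:boundary-insertion-oriented-cm} match across the glued surface $\widetilde\Sigma$. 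Once these local matchings are carried out, the topological freedom of the interior $B$-network provided by Proposition~\ref{prop:orientedCFT-corr-welldef} is strong enough to close the argument globally.
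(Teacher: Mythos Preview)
Your proposal is correct and follows essentially the same approach as the paper: show that $E_{\widetilde\gamma}\circ M_\Sigma$ is diffeomorphic (as a bordism with $\hCc$-ribbon graph) to $M_{\Sigma'}$ up to the choice of interior $B$-network, then invoke Proposition~\ref{prop:orientedCFT-corr-welldef}. The paper's proof is more compressed: it observes directly that since $[0,1]$ is contractible the family $E_\gamma$ trivialises globally as $\Sigma\times[0,1]$, so your subdivision into short pieces is unnecessary; and it does not spell out the local framing checks you flag as the ``hard part'', simply asserting the diffeomorphism of the composed bordism with $M_{\Sigma'}$.
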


\begin{proof}
Since $E_{\gamma}$ is a family over a contractible space it is diffeomorphic to the trivial family $\Sigma\times [0,1]$ (together with a corresponding diffeomorphism $\Sigma' \lra \Sigma$ for the fibre $\gamma(1)$).
The bordism $E_{\widetilde{\gamma}}\circ M_{\Sigma}$ is therefore diffeomorphic to $M_{\Sigma'}$, 
possibly up to the choice of network of $B$-ribbon graph, and so by Proposition~\ref{prop:orientedCFT-corr-welldef},
evaluating $\hZc_{\Cc}$ yields the same linear map in either case.
\end{proof}

Let $f \colon  \Sigma \lra \Sigma'$ be an orientation preserving diffeomorphism of decorated world sheets. This induces a diffeomorphism $\tilde f \colon  \widetilde\Sigma \lra \widetilde\Sigma'$ between the corresponding doubles. Let $E_{\tilde f}$ be the mapping cylinder for $\tilde f$. Then $E_{\tilde f}$ is a bordism from $\widetilde\Sigma$ to $\widetilde\Sigma'$.

\begin{corollary}\label{cor:parityCFT-MCG}
We have
\be
    \hZc_{\Cc}(E_{\tilde f})(\Corr^\mathrm{or}_B(\Sigma)) = \Corr^\mathrm{or}_B(\Sigma')~.
\ee
\end{corollary}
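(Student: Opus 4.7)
The plan is to exhibit the mapping cylinder $E_{\tilde f}$ as the transport bordism associated to an essentially trivial family of decorated world sheets, and then reduce the corollary to Theorem~\ref{thm:parityCFT-transport}. The advantage of this route is that Theorem~\ref{thm:parityCFT-transport} already does the substantive work (via Proposition~\ref{prop:orientedCFT-corr-welldef}), and the diffeomorphism $f$ provides the required family for free.

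First, I would construct a family $E_\gamma$ of decorated world sheets over $[0,1]$ with $\gamma(0)=\Sigma$ and $\gamma(1)=\Sigma'$. The natural choice is the mapping cylinder of $f$ itself: take the trivial family $\Sigma \times [0,1]$ with constant fibre the decorated world sheet $\Sigma$, and identify its fibre over $1$ with $\Sigma'$ via $f$. Because $f$ is an orientation preserving diffeomorphism of decorated world sheets, it intertwines all the decoration data (boundary labels, defect conditions, field insertions, tangent vectors, orderings), so $E_\gamma$ is a well-defined family of decorated world sheets in the sense of Section~\ref{sec:compat-transport}.

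Next, I would verify that the associated family of doubles $E_{\widetilde\gamma}$ from \eqref{eq:transport-bordism-for-double} is diffeomorphic, as a bordism in $\Bord_3(\hCc)$, to the mapping cylinder $E_{\tilde f}$. This amounts to the observation that the doubling construction $\Sigma \mapsto \widetilde\Sigma$ of Section~\ref{sec:corr-oriented-bnddef} is functorial under diffeomorphisms of decorated world sheets: if $f$ identifies $\Sigma$ with $\Sigma'$ then $\tilde f$ identifies $\widetilde\Sigma$ with $\widetilde\Sigma'$ including their $\hCc$-decorations, and the mapping cylinder of $f$ fibre-wise doubles to the mapping cylinder of $\tilde f$. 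The vertical ribbons at marked points match on the two sides because $f$ preserves tangent vectors and in/out labels.

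Finally, applying Theorem~\ref{thm:parityCFT-transport} to the family $E_\gamma$ yields
\begin{equation*}
\hZc_{\Cc}(E_{\widetilde\gamma})\bigl(\Corr^\mathrm{or}_B(\Sigma)\bigr)=\Corr^\mathrm{or}_B(\Sigma'),
\end{equation*}
and substituting the diffeomorphism $E_{\widetilde\gamma} \cong E_{\tilde f}$ gives the claim. The only delicate point, and the one I expect to require the most care, is the verification that doubling commutes with the mapping cylinder construction at the level of $\hCc$-decorated bordisms, including orderings of marked points and the assignment of the parity factor $K^\eps$ to $p_-$ (cf.\ Remark~\ref{rem:Keps-with-p-}); but since $f$ preserves all this data by assumption, the verification is routine rather than conceptually obstructed.
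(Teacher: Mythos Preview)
Your proposal is correct and follows essentially the same approach as the paper: the paper's proof simply notes that the mapping cylinder $E_{\tilde f}$ is a special case of a family of decorated world sheets from $\Sigma$ to $\Sigma'$, reducing the claim to Theorem~\ref{thm:parityCFT-transport}. You have spelled out in more detail the verification that doubling commutes with the mapping cylinder construction, which the paper leaves implicit.
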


\begin{proof}
The mapping cylinder $E_{\tilde{f}}$ is a special case of a family of decorated world sheets from $\Sigma$ to $\Sigma'$.
This reduces the above statement to Theorem~\ref{thm:parityCFT-transport}.
\end{proof}

\subsubsection*{Gluing}

The proof of consistency with gluing will require more notation and will be more involved than that of consistency with transport. 

To start with, we need to choose basis / dual basis pairs in the multiplicity spaces of bulk and boundary fields. Namely, for a given $X \in {}_B\hCc_B$ and $M,N\in{}_{B} \hCc$:
\begin{itemize}
    \item \textit{(bulk)} Let $U,\bar V \in \Cc$ and $\eps \in \{\pm1\}$. Choose bases 
    \begin{equation}\label{eq:bulk-field-basis}
    \{ \alpha \} ~\subset~\Mc^{(\mathrm{in})}_{0,\epsilon}(U,\bar{V};X)    
    \quad \text{and} \quad
    \{ \bar\beta \} ~\subset~\Mc^{(\mathrm{out})}_{0,\epsilon}(U,\bar{V};X)~,
    \end{equation}
    cf.\ \eqref{eq:mult-space-image-Q}, which are dual to each other with respect to the pairing \eqref{eq:M-out-M-in-pairing}: $\langle \bar\beta , \alpha \rangle_\mathrm{bulk} = \delta_{\alpha,\beta}$.

    \item \textit{(boundary)} Let $W \in \Cc$ and $\eps \in \{\pm1\}$. Choose bases 
\begin{equation}\label{eq:bnd-field-basis}
    \{ \alpha \} ~\subset~\hCc(W \otimes K^\eps,M^* \otimes_B X \otimes_B N)   
    \quad \text{and} \quad
    \{ \bar\beta \} ~\subset~\hCc(M^* \otimes_B X \otimes_B N,W \otimes K^\eps)~,
\end{equation}
which are dual to each other with respect to the pairing in 
\eqref{eq:Hom-out-Hom-in-pairing}: $\langle \bar\beta , \alpha \rangle_\mathrm{bnd} = \delta_{\alpha,\beta}$.
\end{itemize}

Let $\Sigma$ be a decorated world sheet for $B$ and let $\Sigma_b$ be the underlying bordism. Let $b_\mathrm{in} \in \partial^t_\mathrm{in} \Sigma_b$ and $b_\mathrm{out} \in \partial^t_\mathrm{out} \Sigma_b$, $t \in \{ \mathrm{c}, \mathrm{o} \}$ be two open or two closed gluing boundaries, cf.\ \eqref{eq:decomp-of-gluing-bnd}. We require that they are parametrised by the same (half-)annulus with defects, so that they can be consistently glued. 

If $b_{\mathrm{in}/\mathrm{out}}$ are closed gluing boundaries, we will write
\begin{align}
    \Sigma(U,\bar V, \eps, \alpha, \bar\beta)
    \label{eq:cut-world-sheet-closed}
\end{align}
for the world sheet $\Sigma$ where the puncture of the disc glued to $b_\mathrm{in}$ is labelled by $(U,\bar V,\eps,\alpha)$ and that for $b_\mathrm{out}$ by $(U,\bar V,\eps,\bar\beta)$, with $\alpha,\bar\beta$ the basis vectors introduced in \eqref{eq:bulk-field-basis}.
If $b_{\mathrm{in}/\mathrm{out}}$ are open gluing boundaries, we will correspondingly write
\begin{align}
    \Sigma(W, \eps, \alpha, \bar\beta)
    \label{eq:cut-world-sheet-open}
\end{align}
for the world sheet $\Sigma$ where the punctures of the glued half-disc are labelled $(W,\eps,\alpha)$ and $(W,\eps,\bar\beta)$ in terms of the basis vectors in \eqref{eq:bnd-field-basis}.

Let $\Sigma_b'$ be the bordism obtained by gluing $b_\mathrm{in}$ to $b_\mathrm{out}$, and let $\Sigma'$ be the corresponding decorated world sheet, which has the same decoration as $\Sigma$ away from the two (half-)discs which got omitted in the gluing.

\begin{figure}[tb]
    \centering
    \begin{align*}
        \mathrm{a})&&\mathrm{b})&\\[-1em]
    &g_{U,\epsilon}=\raisebox{-0.5\height}{\includegraphics[scale=1.0]{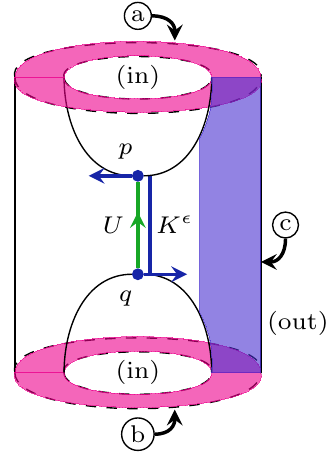}}
    &&g_{U,\epsilon}=\raisebox{-0.5\height}{\includegraphics[scale=1.0]{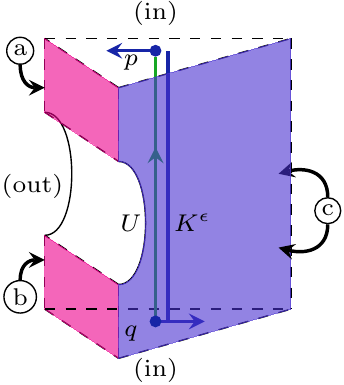}}
    \end{align*}
    \caption{
a) The three-manifold with corners $g_{U,\eps}$ used in the description of the gluing bordisms.
    b) Wedge presentation of the same manifold: we identify the faces marked by \mycircled{c}.
\\
In the presentations a) and b) of $g_{U,\eps}$ the parts of the boundary marked \mycircled{a} agree in both presentations, as do those marked \mycircled{b}. The two sides \mycircled{c} that are identified in presentation b) are also shown in presentation a).
    }
    \label{fig:gluing-mfd-detail}
\end{figure}

The final ingredient we need for the gluing condition is the bordism of the TFT which implements the gluing of marked points on the $\hCc$-extended surfaces given by the doubles of the corresponding world sheets:
\begin{itemize}
    \item \textit{(open gluing boundary)} The open gluing bordism 
    \begin{equation}
        G_{W,\eps} \colon  \widetilde\Sigma(W, \eps, \alpha, \bar\beta) \lra \widetilde\Sigma'
        \label{eq:gluing-mfd-open}
    \end{equation}
    is defined as follows. 
Let $S$ be the circle in the doubled surface $\widetilde\Sigma'$ which is the preimage of the glued half-circles $b_\mathrm{in} \sim b_\mathrm{out}$ under the projection $\widetilde\Sigma' \lra \Sigma'$ from the double to the world sheet. 
Let $A_S$ be a small closed annular neighbourhood of $S$.
    
Abbreviate $\widetilde\Sigma = \widetilde\Sigma(W, \eps, \alpha, \bar\beta)$.
Denote by $p$ and $q$ the punctures on the half-discs glued to  
$b_\mathrm{in}$ and $b_\mathrm{out}$, respectively. 
Let $D_p$ and $D_q$ be the two closed discs in $\widetilde\Sigma$ which are the preimages of the half-discs containing $p$ and $q$ under the projection $\widetilde\Sigma \lra \Sigma$. 
We obtain the identity\footnote{
Strictly speaking, we have to enlarge $D_p$ and $D_q$ slightly by the corresponding halves of $A_S$ in $\tilde\Sigma$ for the identity to hold. We use the same notation for these enlarged closed discs.
}
\begin{equation}
 \widetilde\Sigma'\setminus \mathrm{Inn}(A_S)
=
\widetilde\Sigma\setminus\mathrm{Inn} (D_p \sqcup D_q) ~.
\end{equation}
Finally, $G_{W,\epsilon}$ is defined to be 
$\big( \widetilde\Sigma'\setminus \mathrm{Inn}(A_S) \big) \times [0,1]$ glued to the 3-manifold (with corners) $g_{W,\epsilon}$ shown in Figure~\ref{fig:gluing-mfd-detail} along $\partial(A_S)\times [0,1]$.

    \item \textit{(closed gluing boundary)} 
    The closed gluing bordism is denoted by
    \begin{equation}
        G_{U,\bar V,\eps} \colon  \widetilde\Sigma(U,\bar V, \eps, \alpha, \bar\beta) \lra \widetilde\Sigma' ~.
        \label{eq:gluing-mfd-closed}
    \end{equation}
To define it, let $S_+$ and $S_-$ be the two circles in the doubled surface $\widetilde\Sigma'$ which are the preimage of the glued circles $b_\mathrm{in} \sim b_\mathrm{out}$ under the projection $\widetilde\Sigma' \lra \Sigma'$. 
Let $A_{S_+}$ and $A_{S_-}$ be small closed annular neighbourhoods.
Analogously to the open case, there are four closed discs $D_{p_\pm}$ and $D_{q_\pm}$ in the double $\widetilde\Sigma := \widetilde\Sigma(U,\bar V, \eps, \alpha, \bar\beta)$, so that  
\begin{equation}
 \widetilde\Sigma'\setminus \mathrm{Inn}(A_{S_+} \sqcup A_{S_-})
=
\widetilde\Sigma\setminus\mathrm{Inn} (D_{p_+} \sqcup D_{p_-} \sqcup D_{q_+} \sqcup D_{q_-}) ~.
\end{equation}
Then $G_{U,\bar V,\eps}$ is defined to be 
$\big( \widetilde\Sigma'\setminus \mathrm{Inn}(A_{S_+} \sqcup A_{S_-}) \big) \times [0,1]$ with $g_{U,+}$ glued along $\partial(A_{S_+})\times [0,1]$ and $g_{\bar V,\eps}$ glued along $\partial(A_{S_-})\times [0,1]$.
\end{itemize}

The compatibility with gluing can now be stated as follows:\footnote{
In \cite{Fjelstad:2005ua,Fuchs:2012def} there is an extra term in the gluing condition arising from a two-point correlator. Since we work with in- and outgoing field insertions, we obtain the simpler gluing relation stated in Theorem~\ref{thm:paritCFT-glue}.}

\begin{theorem}\label{thm:paritCFT-glue}
Let $B \in \hCc$ be a symmetric special Frobenius algebra. Then:
\begin{itemize}
    \item \textit{(closed gluing)} Let $\Sigma'$ be obtained by gluing from $\Sigma(U,\bar V,\eps,\alpha,\bar\beta)$ as above. The correlators for $B$ satisfy
    \begin{equation}
    \Corr^\mathrm{or}_B(\Sigma') = \sum_{U,\bar V,\eps,\alpha} 
    \hZc_{\Cc}(G_{U,\bar V, \eps}) (\Corr^\mathrm{or}_B(\Sigma(U,\bar V,\eps,\alpha,\bar\alpha)))~,
    \end{equation}
where $U,\bar V$ run over simples of $\Cc$, $\eps \in \{\pm\}$ and $\alpha$ runs over the basis \eqref{eq:bulk-field-basis}.
    \label{thm:paritCFT-glue:1}
    \item \textit{(open gluing)} Let $\Sigma'$ be obtained by gluing from $\Sigma(W,\eps,\alpha,\bar\beta)$ as above. The correlators for $B$ satisfy
    \begin{equation}
    \Corr^\mathrm{or}_B(\Sigma') = \sum_{W,\eps,\alpha} 
    \hZc_{\Cc}(G_{W, \eps}) (\Corr^\mathrm{or}_B(\Sigma(W,\eps,\alpha,\bar\alpha)))~,
    \end{equation}
where $W$ runs over simples of $\Cc$, $\eps \in \{\pm\}$ and $\alpha$ runs over the basis \eqref{eq:bnd-field-basis}.
    \label{thm:paritCFT-glue:2}
\end{itemize}
\end{theorem}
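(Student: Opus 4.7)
Both sides of each gluing identity will be realised as values of the TFT $\hZc_\Cc$ on appropriate bordisms with ribbon graphs, so the strategy is to show that, after summing over the dual bases, the composed bordism on the right-hand side agrees, as input to $\hZc_\Cc$, with the connecting manifold $M_{\Sigma'}$ on the left-hand side. By monoidality of $\hZc_\Cc$ and the definition $\Corr^\mathrm{or}_B(\Sigma) = \hZc_\Cc(M_\Sigma)$, this will reduce the claim to a local computation inside a ball (open case) or a pair of balls (closed case) in which the ribbon graph is supported, together with an appeal to Proposition~\ref{prop:orientedCFT-corr-welldef} (the value of $\hZc_\Cc$ does not depend on the $B$-network in the middle layer of the connecting manifold, as long as all complementary regions are discs).

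I would treat the open-gluing part first, as it is conceptually cleaner. The bordism $G_{W,\eps}$, when composed with $M_{\Sigma(W,\eps,\alpha,\bar\beta)}$, differs from $M_{\Sigma'}$ only in a collar neighbourhood of the circle $S$ in $\widetilde\Sigma'$ obtained by doubling the glued half-circle: in that collar the ribbon graph carries two coupons $\alpha\in \hCc(W\otimes K^\eps, M^*\otimes_B X\otimes_B N)$ and $\bar\beta\in \hCc(M^*\otimes_B X\otimes_B N,W\otimes K^\eps)$ joined by a $W\otimes K^\eps$-ribbon running around $S$. Summing over $W$ simple, $\eps\in\{\pm\}$ and over the dual-basis pair from \eqref{eq:bnd-field-basis} inserts the trace-pairing completeness relation on $\hCc(-,-)$, which by non-degeneracy of \eqref{eq:Hom-out-Hom-in-pairing} and dominance of $\hCc$ by its simple objects equals the identity morphism on $M^*\otimes_B X\otimes_B N$. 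Using the splitting idempotent \eqref{eq:split-idempotent-tensor} for $\otimes_B$ together with the specialness of $B$, this identity can be absorbed into the surrounding $B$-network and replaced by a stretch of the $(M,X,N)$-ribbons running straight across the glued disc. The resulting graph is a valid $B$-network for $M_{\Sigma'}$ in the sense of Section~\ref{sec:corr-oriented-bnddef}, so Proposition~\ref{prop:orientedCFT-corr-welldef} concludes this case.

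The closed-gluing part follows the same pattern but must contend with the idempotents $Q_0^{(\mathrm{in})}$ and $Q_0^{(\mathrm{out})}$ from \eqref{eq:idempot-Q-action}, which define the bulk multiplicity spaces. Near each of the two circles $S_\pm\subset\widetilde\Sigma'$, the local ribbon graph produced by $G_{U,\bar V,\eps}\circ M_{\Sigma(U,\bar V,\eps,\alpha,\bar\alpha)}$ contains a $B$-loop that encircles the gluing cylinder; this is precisely the $B$-loop appearing in $Q_0^{(\mathrm{in}/\mathrm{out})}$. Hence summing $\alpha\otimes\bar\alpha$ over the dual-basis pair of \eqref{eq:bulk-field-basis} inserts the composite $Q_0^{(\mathrm{out})}\circ Q_0^{(\mathrm{in})}$ acting on the full Hom-space $\hCc(U\otimes\bar V\otimes K^\eps,X)$ at each end, by Lemma~\ref{lem:M-out-M-in-pairing} and non-degeneracy of the trace pairing in $\hCc$. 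Because $B$ is symmetric ($N_B=\id$), $Q_0^{(\mathrm{in})}$ and $Q_0^{(\mathrm{out})}$ reduce to the standard Frobenius projectors, which by the (co)associativity and specialness of $B$ can be removed at the cost of straightening a single $B$-ribbon. Summing over simple $U,\bar V$ and $\eps\in\{\pm\}$ then collapses the gluing collars into short cylinders with vertical ribbons, producing a legal $B$-network for $M_{\Sigma'}$ and again letting us invoke Proposition~\ref{prop:orientedCFT-corr-welldef}.

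The main obstacle is the last reduction in the closed case: one has to track the interplay between (i) the double structure, which puts the projector $Q_0^{(\mathrm{in})}$ on the $\Sigma$-copy and $Q_0^{(\mathrm{out})}$ on the $\Sigma^\mathrm{rev}$-copy with possibly opposite half-twists (compare Figure~\ref{fig:bulk-insertion-oriented}), (ii) the parity factor $K^\eps$ attached to $p_-$ as in \eqref{eq:bulk-marked-point-decoration}, and (iii) the $B$-network in the middle layer of $M_\Sigma$. The use of a purely even $B$ and the symmetry of $B$ ensures the Nakayama and parity contributions are trivial; the remaining identification of ribbon graphs is an explicit, if tedious, diagrammatic manipulation using only specialness, Frobenius, and the defining relations of $\otimes_B$, for which I would essentially follow the strategy of \cite[App.\,B]{Fjelstad:2005ua} and \cite{Fuchs:2012def}, verifying that no step is affected by the presence of the $\SVect$-factor (so that the parity signs generated by the $\SV$-component of $\hZc_\Cc$ under reordering of marked points cancel against the signs generated by relocating the $K^\eps$-ribbons).
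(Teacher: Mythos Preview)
Your open-gluing argument is essentially what the paper does: after fusing the defects into a single module so that the cut crosses no defect line, the key step is the completeness relation $\sum_{W,\eps,\alpha}\iota\circ\alpha\circ\bar\alpha\circ\pi$ equalling the idempotent for $M^*\otimes_B N$, which the paper records as Lemma~\ref{lem:projector-basis-bdry} and substitutes directly into the connecting manifold.

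For the closed case, however, your sketch diverges from the paper and has a gap. The piece of $M_{\Sigma'}$ near the gluing circle is not ``a pair of balls'' but a single solid torus $\Tc_X$ whose core carries the $X$-ribbon together with the surrounding $B$-network; after composing with the gluing bordism, the coupons $\alpha$ and $\bar\alpha$ sit at separated points of this ribbon, joined by $U$- and $\bar V\otimes K^\eps$-strands that traverse the torus, so the sum over the dual basis of $\Mc^{(\mathrm{in})}$ is \emph{not} a local insertion of $\sum\alpha\circ\bar\alpha$ on $X$ and your ``$Q_0^{(\mathrm{out})}\circ Q_0^{(\mathrm{in})}$ then straighten'' step does not go through as a ribbon identity. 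The paper instead excises the solid torus to obtain a factorisation $M_{\Sigma'}=M_{\Sigma',T^2}\circ\Tc_X$ through a torus boundary $T^2_X$, and works in the TFT state space $\hZc_\Cc(T^2_X)$: it shows the $v_{U,\bar V,\eps}^{\alpha,\bar\beta}=\hZc_\Cc(\Tc_{X;U,\bar V,\eps}^{\alpha,\bar\beta})$ are a basis (Lemma~\ref{lem:torus-state-space-basis}), introduces an idempotent bordism $\Pc_X$ built from the $B$-network whose image is spanned by those with $\alpha,\bar\beta$ in $\Jc_1,\bar\Jc_1$ (Lemma~\ref{lem:P-idempotent}), and computes the coefficients of $\hZc_\Cc(\Tc_X)$ in this basis by an explicit $S^3$ evaluation yielding $\delta_{\alpha,\beta}$ (Lemma~\ref{lem:torus}). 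This is exactly the route of \cite{Fjelstad:2005ua,Fuchs:2012def}, so your plan to follow those references is right; just be aware that the mechanism is a state-space decomposition, not a local string-diagram collapse. Your final remark---that the $\SVect$-factor contributes only via reordering signs which one must check cancel---correctly identifies the genuinely new bookkeeping relative to those references.
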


The proof of this theorem is given in Appendix~\ref{app:proof-of-factorisation}.

\subsection{Example: torus partition function}\label{sec:ex-torus-pf}

The aim of this section is to show that the partition function of the oriented parity CFT defined by a symmetric special Frobenius algebra $B \in \hCc$ is the super-trace over the (in-going, say) space of bulk fields as given in \eqref{eq:state-space-bulk-oriented}.

Let $T^2$ be a 2-torus without insertion points, thought of as a $\hCc$-extended surface. By \eqref{eq:Zhat-on-objects}, the corresponding TFT state space is
\begin{equation}
    \hZc_{\Cc}(T^2) = 
    \Zc_{\Cc}^\mathrm{RT}(T^2)\otimes
    \SV(T^2)
    \cong 
    \Cc(\One,L) ~.
\end{equation}
A basis of this space can be given by solid tori with embedded ribbons. Namely, for $X \in \hCc$ let $N_X$ be a solid torus with an embedded $X$-labelled ribbon, 
see Figure~\ref{fig:example-torus-dehn}\,b). From the definition of $\hZc_{\Cc}$, one checks that
\begin{equation}
    \hZc_{\Cc}(N_{X \otimes K^\eps}) = \eps \, \hZc_{\Cc}(N_{X})
    ~\in~ \hZc_{\Cc}(T^2)~.
\end{equation}
Thus, labelling the central ribbon by the simple objects $S_i \otimes K^+$ and $S_i \otimes K^-$ of $\hCc$, with $i \in \Ic$, produces linearly dependent elements in $\hZc_{\Cc}(T^2)$ which differ by a sign. In particular, in giving a basis we can restrict ourselves to simples of the form $S_i = S_i \otimes K^+$, namely we choose
\begin{equation}\label{eq:chi_i-def-via-Z-hat}
    \chi_i := \hZc_{\Cc}(N_{S_i})
    \quad , \quad i \in \Ic ~.
\end{equation}

Let us now think of $T^2$ as a world sheet without field insertions. By \eqref{eq:space-of-blocks-via-TFT}, the relevant space of blocks is
\be
\Bl(T^2) = \hZc_{\Cc}(T^2 \sqcup (T^2)^\mathrm{rev}) =
\hZc_{\Cc}(T^2) \otimes_{\Cb} \hZc_{\Cc}((T^2)^\mathrm{rev}) ~,
\ee
which has basis $\big\{ \chi_i \otimes \overline\chi_j\big\}_{i,j\in\Ic}$\,. The overline is to indicate that the boundary $T^2$ is taken with reverse
orientation. We would like to compute the coefficients $Z(B)_{ij}$ in 
\be\label{eq:oriented-parity-Zij-def}
\Corr^\mathrm{or}_B(T^2) ~=~ \sum_{i,j\in\Ic} Z(B)_{ij} \,  \chi_i \otimes_{\Cb} \overline\chi_j
~\in~ \Bl(T^2)~.
\ee
The result is:

\begin{proposition}\label{prop:oriented-parity-torus-pf}
For $B \in\hCc$ a symmetric special Frobenius algebra, the coefficients $Z(B)_{ij}$ in \eqref{eq:oriented-parity-Zij-def} are given in terms of the multiplicity spaces in \eqref{eq:state-space-bulk-oriented} as
\begin{equation}\label{eq:oriented-parity-Zij-result}
    Z(B)_{ij} = 
    \dim \Mc_{0,+}^{(\mathrm{in})}(S_i,\bar S_j; B) -
    \dim \Mc_{0,-}^{(\mathrm{in})}(S_i,\bar S_j; B) ~.
\end{equation}
\end{proposition}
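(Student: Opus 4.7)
The strategy is to cut the torus along a non-contractible cycle and apply the closed gluing theorem to reduce the computation to a sum of sphere two-point correlators. Concretely, take $\Sigma$ to be a cylinder $S^1\times[0,1]$ with one ingoing and one outgoing closed gluing boundary; gluing standard discs with marked insertion points to its two ends produces a sphere with one ingoing and one outgoing bulk puncture, and gluing those two boundaries produces~$T^2$. The closed gluing theorem \ref{thm:paritCFT-glue} then gives
\begin{equation}
\Corr^\mathrm{or}_B(T^2)\;=\;\sum_{i,j\in\Ic}\sum_{\eps\in\{\pm\}}\sum_{\alpha}
\hZc_{\Cc}(G_{S_i,\bar S_j,\eps})\bigl(\Corr^\mathrm{or}_B(\Sigma(S_i,\bar S_j,\eps,\alpha,\bar\alpha))\bigr),
\end{equation}
with $\alpha$ running over the chosen basis of $\Mc_{0,\eps}^{(\mathrm{in})}(S_i,\bar S_j;B)$ and $\bar\alpha$ over the basis of $\Mc_{0,\eps}^{(\mathrm{out})}(S_i,\bar S_j;B)$ dual to it under the pairing \eqref{eq:M-out-M-in-pairing}.

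Next I would evaluate the composition $G_{S_i,\bar S_j,\eps}\circ M_{\Sigma(\cdots)}$ as a single bordism, rather than compute the two factors separately. Its underlying three-manifold is diffeomorphic to the two solid tori that make up the connecting manifold of $T^2$, and its embedded ribbon graph consists of the two dual-basis ribbons joined along the glued circles together with the surrounding $B$-network. Using symmetric specialness of $B$ (Proposition~\ref{prop:orientedCFT-corr-welldef}) the $B$-network may be removed, and the remaining graph simplifies, via the defining relations of the idempotent $Q^{(\mathrm{in})}_{0}$ in \eqref{eq:idempot-Q-action} and the composition $\bar\alpha\circ\alpha$, to a single $S_i$-ribbon threading the core of the first solid torus, a single $\bar S_j$-ribbon threading the core of the second, and a closed $K^\eps$-loop carried by the $\SV$-factor.

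By the product definition \eqref{eq:Zhat-on-morph} of $\hZc_\Cc$ the $\Zc^\mathrm{RT}_\Cc$-factor of this bordism evaluates to $\chi_i\otimes\bar\chi_j$ (by the very definition \eqref{eq:chi_i-def-via-Z-hat}), while the $\SV$-factor evaluates to the super-trace of $\id_{K^\eps}$, which is $\dim(K^\eps)=\eps$. The normalisation of the pairing $\langle-,-\rangle_\mathrm{bulk}$ is arranged so that the remaining invariants on the sphere (an $\hZc_\Cc(S^3)$, a $\dim(S_i)\dim(\bar S_j)$ and an $\eps$) are exactly cancelled, so that the contribution of a single dual-basis pair $(\alpha,\bar\alpha)$ to the coefficient of $\chi_i\otimes\bar\chi_j$ is $\eps$. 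Summing over $\alpha$ yields $\eps\cdot\dim\Mc_{0,\eps}^{(\mathrm{in})}(S_i,\bar S_j;B)$, and the $\eps$-sum produces \eqref{eq:oriented-parity-Zij-result}.

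The main technical step — and the part I expect to require the most care — is the ribbon-graph simplification inside $G_{S_i,\bar S_j,\eps}\circ M_{\Sigma(\cdots)}$: one must keep careful track of the pivotal-structure manipulations moving the $B$-actions around, verify that the resulting closed loop in the $\SV$-factor contributes precisely $\eps$ and not some further parity coming from the ordering of marked points in \eqref{eq:Zhat-on-objects}, and confirm that after all cancellations the multiplicity $\dim\Mc_{0,\eps}^{(\mathrm{in})}(S_i,\bar S_j;B)$ emerges correctly from the idempotent $Q^{(\mathrm{in})}_{0}$. Once this bookkeeping is done, \eqref{eq:oriented-parity-Zij-result} follows immediately.
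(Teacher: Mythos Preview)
Your approach via the closed gluing theorem is genuinely different from the paper's and is sound in principle, but it contains a concrete topological error that undercuts the claimed simplification.

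The paper does \emph{not} invoke Theorem~\ref{thm:paritCFT-glue}. Instead it extracts $Z(B)_{ij}$ by pairing $\Corr^\mathrm{or}_B(T^2)$ with the dual vector $\chi_i^*\otimes_\Cb\overline\chi_j^*$, obtaining the invariant $\hZc_\Cc(C_{i,j})$ of a closed three-manifold $C_{i,j}\cong S^2\times S^1$ carrying $S_i^*$-, $B$-, and $S_j^*$-ribbons running along the $S^1$. A decomposition of $\id_{S_i^*\otimes B\otimes S_j^*}$ over simple summands $S_k\otimes K^\eps$ reduces this trace to the $k=\One$ contribution, which is $\eps$ times the rank of an idempotent that an explicit linear isomorphism identifies with $Q_0^{(\mathrm{in})}$; this gives $\eps\cdot\dim\Mc_{0,\eps}^{(\mathrm{in})}(S_i,\bar S_j;B)$ and hence the claim. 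So the paper works entirely with a closed three-manifold invariant, whereas you factorise first and then try to read off each summand as a basis vector of $\Bl(T^2)$.

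The specific error in your picture is the assertion that $G_{S_i,\bar S_j,\eps}\circ M_{\Sigma(\cdots)}$ is ``diffeomorphic to the two solid tori that make up the connecting manifold of $T^2$''. Neither part of this is correct: the connecting manifold $M_{T^2}=T^2\times[-1,1]$ is a single connected thickened torus, not two solid tori; and the bordism $G_{S_i,\bar S_j,\eps}\circ M_{\Sigma(\cdots)}$ is likewise connected, since $M_\Sigma\cong S^2\times[-1,1]$ is connected and both components of $G$ get glued to it. (In fact, the identity used in the proof of the gluing theorem shows it is diffeomorphic to $M_{T^2}$ itself, just with a different embedded ribbon graph.) There is therefore no ``first solid torus'' and ``second solid torus'' for the $S_i$- and $\bar S_j$-ribbons to thread separately, and the reduction of the ribbon graph to three disjoint loops evaluating to $\chi_i$, $\overline\chi_j$ and $\eps$ is not available as stated. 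Your plan can be rescued --- e.g.\ by pairing each summand with $\chi_k^*\otimes_\Cb\overline\chi_l^*$, which produces exactly an $S^2\times S^1$ invariant and rejoins the paper's computation --- but the step you flag as mere bookkeeping is where the actual TFT computation lives.
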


\begin{proof}
The computation works analogously to \cite[Sec.\,5.3]{Fuchs:2002tft}, but we will go through it in detail to show where the additional parity signs appear. 

The basis $\{ \chi_i^* \}$ dual to the $\chi_i$ consists of solid tori $T^2\lra\emptyset$ with embedded $S_i$-ribbon running in the opposite direction. This gives
\begin{equation}
    Z(B)_{ij} = (\chi_i^* \otimes_{\Cb} \overline\chi_j^*) \circ \Corr^\mathrm{or}_B(T^2) = \hZc_{\Cc}(C_{i,j})~,
\end{equation}
where $C_{i,j}$ is the three-manifold $S^2 \times S^1$ with embedded ribbon graph as follows:
\begin{equation}\label{eq:oriented-torus-pf-Cij}
    C_{i,j} = 
    \raisebox{-0.5\height}{\includegraphics{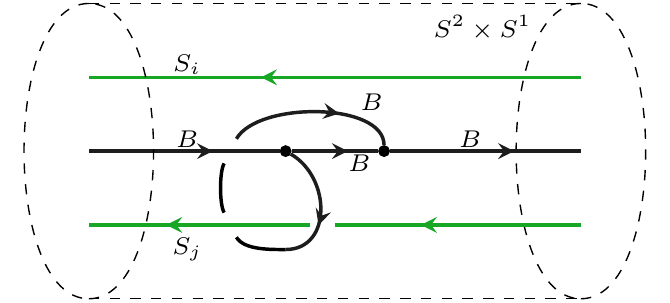}}
\end{equation}
Next we expand the identity on $S_i^* \otimes B \otimes S_j^*$ into a sum over simple objects $S_k \otimes K^\eps$ of $\hCc$ by choosing a basis $\{\alpha_{k,\eps} \}$ in $\hCc(S_k \otimes K^\eps,S_i^* \otimes B \otimes S_j^*)$ and a corresponding dual basis $\{\bar\alpha_{k,\eps}\}$ in $\hCc(S_i^* \otimes B \otimes S_j^*,S_k \otimes K^\eps)$. 
This results in the equality
\begin{equation}\label{eq:oriented-torus-pf-id_decomp}
    \id_{S_i^* \otimes B \otimes S_j^*} = \sum_{k,\eps,\alpha} \alpha_{k,\eps} \circ \bar \alpha_{k,\eps} ~.
\end{equation}
Note that here we have to include the sum over $\eps$ as $S_k \otimes K^+$ and $S_k \otimes K^-$ are distinct simple objects of $\hCc$.
Inserting the above identity into $C_{i,j}$ gives
\begin{equation}\label{eq:oriented-torus-pf-aux1}
    \hZc_{\Cc}(C_{i,j}) = \sum_{k,\eps,\alpha} \hZc_{\Cc}(C_{i,j}^{k,\eps,\alpha})~,
\end{equation}
where
\begin{equation}\label{eq:oriented-torus-pf-Cijae}
    C_{i,j}^{k,\eps,\alpha} = 
    \raisebox{-0.5\height}{\includegraphics{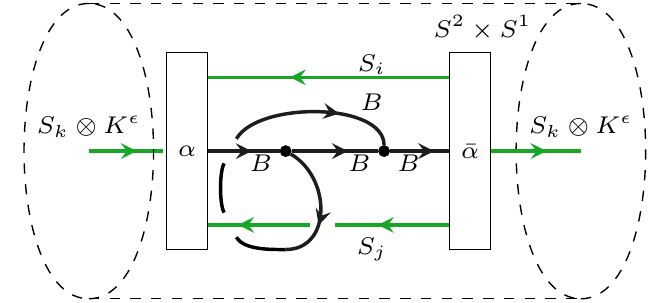}}
    \quad.
\end{equation}
Since $S_k \otimes K^\eps$ is simple, there are numbers $\lambda_{k,\eps}$ such that
\begin{equation}
    \sum_{\alpha} \bar\alpha_{k,\eps} \circ p \circ \alpha_{k,\eps} 
    =
    \lambda_{k,\eps} \cdot
    \id_{S_k \otimes K^\eps} 
    ~,
\end{equation}
where $p$ is the endomorphism on $S_i^* \otimes B \otimes S_j^*$ given by
\begin{equation}\label{eq:oriented-torus-pf-p_def}
    p = \raisebox{-0.5\height}{\includegraphics{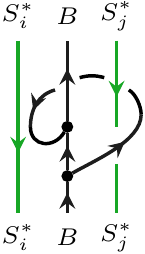}} ~.
\end{equation}
It is easy to check that $p$ is an idempotent: $p \circ p = p$.

Let $C(X)$ be the manifold $S^2 \times S^1$ with a single ribbon labelled $X$ running along the $S^1$-direction. Then \eqref{eq:oriented-torus-pf-aux1} simplifies to
\begin{equation}\label{eq:oriented-torus-pf-aux2}
    \hZc_{\Cc}(C_{i,j}) = \sum_{k,\eps} \lambda_{k,\eps} \hZc_{\Cc}(C(S_k \otimes K^\eps))~.
\end{equation}
The TFT state space on $S^2$ with a single marked point labelled by $S_k \otimes K^\eps$ is $K^\eps$ for $S_k=\One$ and zero otherwise (cf.\ \eqref{eq:Zhat-on-objects}).
From this, the invariant for $C(S_k \otimes K^\eps)$ is obtained as a trace, and one gets (cf.\ \eqref{eq:Zhat-on-morph}), 
\begin{equation}\label{eq:oriented-torus-pf-aux3}
    \hZc_{\Cc}(C(S_k \otimes K^\eps)) = 
    \Zc_{\Cc}^\mathrm{RT}(C(S_k)) \cdot \SV(C(K^\eps))
    =
    \delta_{k,\One} \cdot \eps ~.
\end{equation}

Putting all this together, we arrive at
\begin{equation}
    Z(B)_{ij} = \sum_{\eps} \lambda_{\One,\eps} \cdot \eps~.
\end{equation}
It remains to show that $\lambda_{\One,\eps} =  \dim \Mc_{0,\eps}^{(\mathrm{in})}(S_i,\bar S_j; B)$. 
To see this, note that $P_{\eps} := p \circ (-)$ acts as an idempotent on $\hCc(K^\eps,S_i^* \otimes B \otimes S_j^*)$. 
Choosing the basis $\{\alpha_{\One,\eps}\}$ to consist of eigenvectors shows that $\lambda_{\One,\eps} = \dim \mathrm{im}(P_\eps)$. 

In Proposition~\ref{prop:torus-coefficient-spin-case} below we will show in a more general setting that there is a linear isomorphism $f \colon  \hCc(S_i \otimes S_j \otimes K^\eps, B) \lra \hCc(K^\eps,S_i^* \otimes B \otimes S_j^*)$ which satisfies $f \circ Q_{0}^\mathrm{(in)} = P_\eps \circ f$ with respect to the bulk field projectors given in \eqref{eq:idempot-Q-action}. 
This completes the proof.
\end{proof}

\begin{remark}
\begin{enumerate}
    \item 
Suppose $\hCc = \Rep_{\Sc}(\Vc)$ for a bosonic rational VOA $\Vc$ as in Section~\ref{sec:VOSA-blocks}. 
As argued in Section~\ref{sec:compat-transport}, $\chi_i$ from \eqref{eq:chi_i-def-via-Z-hat} does represent the character of the $\Vc$-module $S_i$ 
(or, more generally, the corresponding torus one-point block with an arbitrary insertion of $\Vc$). In this situation, the torus amplitude $\Corr^\mathrm{or}_B(T^2)$ in \eqref{eq:oriented-parity-Zij-def} does indeed represent the super-trace of $q^{L_0-c/24} \bar q^{\bar L_0-c/24}$ over
$\Hc_\text{bulk}^{(\mathrm{in})}$ (possibly with an additional insertion from $\Vc \otimes_\Cb \overline\Vc$).

\item
It was shown in Corollary~\ref{cor:parityCFT-MCG} that the correlators $\Corr^\mathrm{or}_B(\Sigma)$ are mapping class group invariants. Thus, in particular, the bilinear combination of characters in \eqref{eq:oriented-parity-Zij-def} is $SL(2,\Zb)$-invariant. This would in general not be true if one took the usual trace over $\Hc_\text{bulk}^{(\mathrm{in})}$ rather than the super-trace. (The usual trace amounts to adding dimensions in \eqref{eq:oriented-parity-Zij-result} rather than subtracting them.) We will see an example of this in Section~\ref{sec:BP-example} below.
\end{enumerate}
\end{remark}

\section{Spin CFT with parity signs}
\label{sec:spin-parity-CFT}

In this section we combine the ingredients set up so far to describe spin CFTs, that is, CFTs defined on spin surfaces, and which may or may not include parity. 
While in Section~\ref{sec:spin-comb} we discussed $r$-spin structures in general, here we restrict ourselves to the case $r=2$, and we will just say ``spin'' instead of ``2-spin''.
Furthermore, we will work in the setting with parity, as it is easy to recover the parity-less situation by restricting to the parity-even sector.

We start in Section~\ref{sec:alg2} by introducing the $\Zb_2$-equivariant modules and bimodules needed to describe boundary conditions and defects of the spin CFT. In Sections~\ref{sec:spin-def-net-closed}  and~\ref{sec:spin-bnd-def} we give the correlators of the spin CFT in terms of those of an underlying oriented CFT with an appropriate network of additional line defects. The proof that this prescription is monodromy-free and consistent with gluing is given in Section~\ref{sec:spinCFT-consistency}. As an example, in Section~\ref{sec:spin-torus-example} we give the relation between (super-)traces over state spaces and the correlators on the torus for the four possible spin structures.

As in Section~\ref{sec:oriented-CFT-from-TFT} we fix a modular fusion category 
$\Cc$ and set $\hCc = \Cc\boxtimes\SVect$.

\subsection{\texorpdfstring{$\Zb_2$}{Z_2}-equivariant modules and bimodules}
\label{sec:alg2}

Let $F_1,F_2\in\hat{\Cc}$ be Frobenius algebras with $N_{F_i}^{\,2}=\id_{F_i}$.
Consider the endofunctor on ${}_{F_1}\hCc_{F_2}$ which sends an $F_1$-$F_2$-bimodule $X$ to the twisted bimodule ${}_{N_{F_1}}X_{N_{F_2}}$. This defines a strict $\Zb_2$-action on ${}_{F_1}\hCc_{F_2}$. We denote the category of $\Zb_2$-equivariant objects by
\begin{equation}\label{eq:DF1F2-def}
    \mathcal{D}(F_1,F_2) := \big( {}_{F_1}\hCc_{F_2} \big)^{\mathbb{Z}_2} ~.
\end{equation}
In the following we will give an explicit description of $\mathcal{D}(F_1,F_2)$. For more details on equivariantisation see e.g.\ \cite[Sec.\,2.7]{Etingof:2016tensor}.

A \textit{$\Zb_2$-equivariant $F_1$-$F_2$-bimodule} is 
an $F_1$-$F_2$-bimodule $X\in\hat{\Cc}$ together with an involution
\begin{equation}
    N_X = \raisebox{-.5\height}{\includegraphics[scale=1.0]{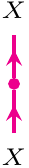}} \colon  X \lra X
    \,, \quad
    N_X^{\,2} = \id_X ~,
\end{equation}
which satisfies the following compatibility condition with the $F_i$-actions:
\begin{equation}
    \raisebox{-.5\height}{\includegraphics[scale=1.0]{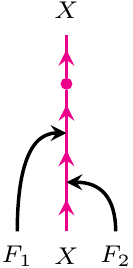}}
    =
    \raisebox{-.5\height}{\includegraphics[scale=1.0]{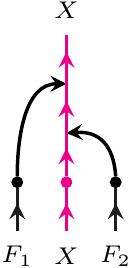}}
    \quad.
\end{equation}
As before we use the following notation to denote powers of $N_X$:
\begin{equation}
    N_X^m=\raisebox{-.5\height}{\includegraphics[scale=1.0]{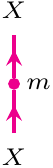}}
    \quad.
\end{equation}
A \textit{morphism of $\Zb_2$-equivariant bimodules} $X,Y \in \Dc(F_1,F_2)$ is a morphism $f \colon  X \lra Y$ which is a morphism of $F_1$-$F_2$-bimodules and which commutes with the $\Zb_2$-action:
\begin{equation}
    f \circ N_X = N_Y \circ f ~.
\end{equation}

Let $F\in\hat{\Cc}$ be a Frobenius algebra with $N_{F}^2=\id_{F}$.
A \textit{$\Zb_2$-equivariant left $F$-module} is a $\Zb_2$-equivariant $F$-$\One$-bimodule,
we write
\begin{equation}\label{eq:BF-def}
    \mathcal{B}(F) := 
    \mathcal{D}(F,\One)~.
\end{equation}

Let $F_i$ ($i=1,2,3$) be special Frobenius algebras in $\hat{\Cc}$ 
with $N_{F_i}^{\,2}=1$.
The tensor product of $X \in \mathcal{D}(F_3,F_2)$ and $Y \in \mathcal{D}(F_2,F_1)$ over $F_2$ is defined as in Section~\ref{sec:alg1}, with $\Zb_2$-action given by
\begin{equation}
    N_{X \otimes_{F_2} Y} := N_{X}\otimes_{F_2}N_{Y} ~.
\end{equation}
In terms of the projection and embedding maps for the relative tensor product in \eqref{eq:split-idempotent-tensor}, this reads
\begin{equation}
    N_{X \otimes_{F_2} Y} = 
    \raisebox{-.5\height}{\includegraphics[scale=1.0]{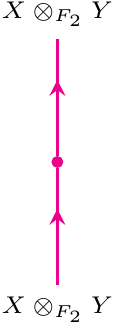}}
    = 
    \raisebox{-.5\height}{\includegraphics[scale=1.0]{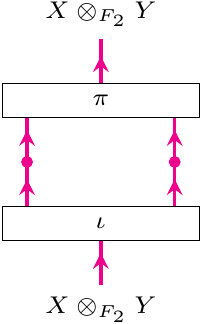}} ~.
\end{equation}

For $F_1 = F_2 =: F$, an example of a $\Zb_2$-equivariant bimodule is $F$, seen as a bimodule over itself, with $\Zb_2$-action given by $N_F$. This furnishes the tensor unit in $\Dc(F,F)$.

\subsubsection*{Duals}

Let $Y \in \mathcal{D}(F_1,F_2)$ and recall the definition of $Y^\dagger$ from \eqref{eq:Y-dagger-def}. Together with the $\Zb_2$-action 
\begin{equation}
    N_{Y^\dagger}=\raisebox{-.5\height}{\includegraphics[scale=1.0]{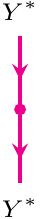}}
    =\raisebox{-.5\height}{\includegraphics[scale=1.0]{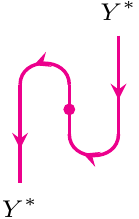}}=
    (N_Y)^* ~,
\end{equation}
$Y^\dagger$ becomes an object in $\Dc(F_2,F_1)$. In fact it turns out that in the $\Zb_2$-equivariant setting, this is a two-sided adjoint. To describe the various adjunction maps, first consider the following maps in $\hCc$:
\begin{align}
\ev'_Y&=\ev_Y \circ(  \id_{Y^{\dagger}} \otimes N_Y^{-1}) \colon  Y^\dagger \otimes Y \lra \One ~,
&
\coev'_Y&= (N_Y\otimes \id_{Y^{\dagger}})  \circ \coev_Y \colon  \One \lra Y \otimes Y^\dagger ~,
\nonumber\\
\tev'_Y&=\tev_Y \colon  Y \otimes Y^\dagger \lra \One ~,
&
\tcoev'_Y&= \tcoev_Y \colon  \One \lra Y^\dagger \otimes Y\,.
\label{eq:ev-coev-z2-equiv}
\end{align}
One can easily check that the idempotent $p$ in \eqref{eq:rel-tensor-prod} for the tensor product  
$Y \otimes_{F_2} Y^\dagger$ 
or
$Y^\dagger \otimes_{F_1} Y$ can be omitted against the four maps in \eqref{eq:ev-coev-z2-equiv}, for example $\tev'_Y \circ p = \tev'_Y$. In particular, $\ev'_Y$ and $\tev'_Y$ are balanced maps and descend to the tensor product over $F_2$. 
Using the projection and embedding maps from \eqref{eq:split-idempotent-tensor}, one further checks that the maps
\begin{equation}\label{eq:dagger-dual-ev-coev}
    \ev_Y^{\Dc}=
    \raisebox{-0.5\height}{\includegraphics[scale=1.0]{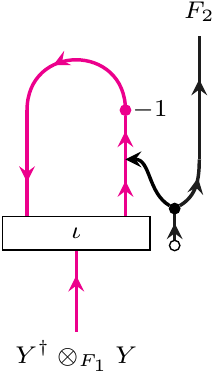}}\,,\quad
    \coev_Y^{\Dc}=
    \raisebox{-0.5\height}{\includegraphics[scale=1.0]{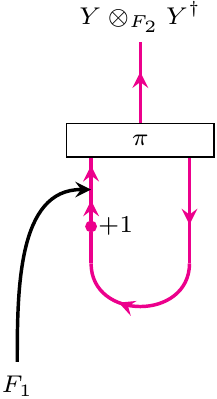}}\,,\quad
    \tev_Y^{\Dc}=
    \raisebox{-0.5\height}{\includegraphics[scale=1.0]{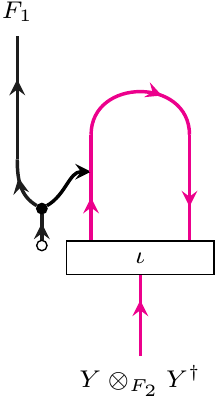}}\,,\quad
    \tcoev_Y^{\Dc}=
    \raisebox{-0.5\height}{\includegraphics[scale=1.0]{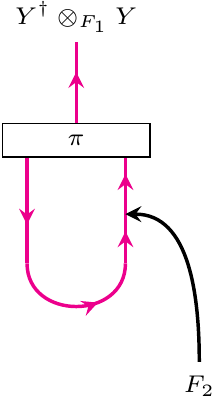}}
\end{equation}
are bimodule maps and are compatible with the $\Zb_2$-action, i.e.\ are morphisms in $\Dc(F_1,F_1)$, respectively $\Dc(F_2,F_2)$.

We define the dimension function $D$ of a $\Zb_2$-equivariant bimodule $Y$ to be 
\begin{align}
    \tdim(Y):=\ev'_Y\circ\tcoev'_Y=
    \raisebox{-.5\height}{\includegraphics[scale=1.0]{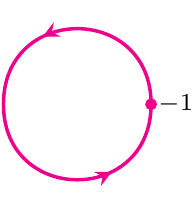}}
    \stackrel{(*)}{=}
    \raisebox{-.5\height}{\includegraphics[scale=1.0]{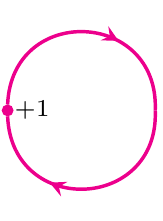}}
    =\tev'_Y\circ\coev'_Y\,,
    \label{eq:tdim-Y}
\end{align}
where in $(*)$ we used that $N_F = N_F^{-1}$ and that  $\hCc$ is spherical.
If we apply this definition to $F \in \Dc(F,F)$ with $F \in \hCc$ special 
Frobenius and $N_F^{\,2}=\id_F$, we get
\begin{equation}
    \tdim(F) = \eps_F \circ \eta_F \neq 0 ~.
\end{equation}
This is immediate from taking the trace in $\hCc$ of the definition of $N_F$ in \eqref{eq:def-Nakayama} (and the convention that ``special'' means ``normalised special'', cf.\ Section~\ref{sec:alg1}).

In case $F$ is simple, the function $\tdim$ in \eqref{eq:tdim-Y} is related to the quantum dimension in $\Dc(F,F)$ via
\begin{equation}\label{eq:q-dim-D(FF)}
    \dim_{\Dc(F,F)}(Y) = \frac{D(Y)}{D(F)} ~.
\end{equation}
To see this, note that since $F$ is simple, the composition of evaluation and coevaluation in $\Dc(F,F)$ is a multiple of $\id_F$, which in turn can be computed by inserting both sides in $\eps_F \circ (-)\circ \eta_F$.

It follows from \eqref{eq:q-dim-D(FF)} and semisimplicity of $\Dc(F,F)$ that $\Dc(F,F)$ is spherical.

\begin{example}
Consider the case that $\Cc = \Vect$, so that $\hCc = \SVect$.
The Clifford algebra $\Cl_1=\Cb^{1|1}\in\SVect$ is generated by
one odd element $\theta$ satisfying $\theta^2=1$.
$\Cl_1$ is a Frobenius algebra with counit and coproduct given by
\begin{align}
    \varepsilon(1)&=2\,, & \varepsilon(\theta)&=0\,,&
    \Delta(1)&=\tfrac{1}{2}(1\otimes 1+\theta\otimes\theta)\,,&
    \Delta(\theta)&=\tfrac{1}{2}(1\otimes \theta+\theta\otimes 1)\,,
\end{align}
and with these choices $\Cl_1$ is also special. The Nakayama automorphism is given by
\begin{equation}
    N(1)=1 ~~,\quad 
    N(\theta)= - \theta ~.
\end{equation}
With this, it is immediate that
\begin{equation}
    D(\Cl_1) = 2 ~,
\end{equation}
while the super-dimension \eqref{eq:superdim-def} is $\dim(\Cl_1)=0$.
\end{example}

\subsection{Spin CFT without defects and boundaries}\label{sec:spin-def-net-closed}

Following \cite{Novak:2015ela}, in this section, we convert a world sheet with spin structure into an oriented world sheet containing a defect network using the combinatorial model for spin structures reviewed in Section~\ref{sec:spin-comb}. The spin CFT is then defined by evaluating this defect correlator in the oriented CFT. 
This procedure applies in the same way to theories with and without parity. 

We fix a special Frobenius algebra $F\in \hat{\Cc}$ with $N_F^{\,2}=\id_F$.

We will describe the construction for spin CFTs and oriented CFTs with parity. The setting without parity is recovered by choosing $F \in \Cc$ and only using parity label ``$+$'' below.

\subsubsection*{Decorated spin world sheets}

Let $\bbSigma_b = (\Sigma_b,\sigma)$ be a spin bordism with no defects or free boundaries, and hence with only closed gluing boundaries.
Let $\bbSigma$ be the spin world sheet with ordered in/outgoing marked points obtained by gluing in discs as in Figure~\ref{fig:surf-bp-ws}\,a). Then $\bbSigma$ becomes a \textit{decorated spin world sheet} if each marked point $p$ is labelled by a tuple
\be\label{eq:spin-CFT-bulk-label}
(U,\bar V, \eps, y, \phi) ~,
\ee
where
\begin{itemize}
	\item $U,\bar V \in \Cc$ give the holomorphic and antiholomorphic representation of the bulk insertion, 
	\item $\eps \in \{\pm\}$ gives the $\Zb_2$-parity, 
	\item $y \in \Zb_2$ is the type of the gluing boundary and is related to the  holonomy of the spin structure around the insertion via \eqref{eq:holo-via-label},
	\item $\phi \in \Mc^{(t)}_{y,\epsilon}(U,\bar{V};F)$, with $t = \mathrm{in}$ if the insertion point is ingoing, and $t = \mathrm{out}$ otherwise (recall \eqref{eq:mult-space-image-Q}). The morphism $\phi$ parametrises the multiplicity space of bulk fields of type $(U,\bar V,\eps,y)$, see \eqref{eq:state-space-bulk-spin} below.
\end{itemize}
The relation between the type of an insertion and being Neveu-Schwarz (NS) or Ramond (R) is as follows:
\be\label{eq:table-type-vs-NSR}
\begin{tabular}{c|c}
	type & name
	\\
	\hline
	0 & NS \\
	1 & R
\end{tabular}
\ee

\subsubsection*{Space of bulk fields in the NS and R sector}

As in the oriented case, the multiplicity spaces allow one to read off the space of bulk fields. Namely, the
\textit{space of in- and outgoing bulk fields of type $y$ of the spin CFT} 
is given by
\begin{equation}
	\Hc_\text{bulk}^{y,(t)}(F) = \bigoplus_{i,j \in \Ic, \eps \in \{\pm 1\} } S_i \boxtimes \bar S_j \boxtimes K^\eps \otimes \Mc_{y,\eps}^{(t)}(S_i,\bar S_j; F) 
	~\in~\Cc \boxtimes \Ccrev \boxtimes \SVect
	~,
	\label{eq:state-space-bulk-spin}
\end{equation}
where $t \in \{\mathrm{in},\mathrm{out}\}$, and $\Ic$ is the index set for simple objects in $\Cc$ as in \eqref{eq:I-Si-def}. If we also split the state space according to even/odd parity, we obtain (restricting to the ingoing case)
\begin{align}
	\Hc_\text{bulk}^{\text{NS,even}}(F)
	&=  \bigoplus_{i,j \in \Ic} S_i \boxtimes \bar S_j \boxtimes K^+ \otimes \Mc_{0,+}^{\mathrm{(in)}}(S_i,\bar S_j; F) ~,
	\nonumber
	\\
	\Hc_\text{bulk}^{\text{NS,odd}}(F)
&=  \bigoplus_{i,j \in \Ic} S_i \boxtimes \bar S_j \boxtimes K^- \otimes \Mc_{0,-}^{\mathrm{(in)}}(S_i,\bar S_j; F) ~,
	\nonumber
\\
	\Hc_\text{bulk}^{\text{R,even}}(F)
&=  \bigoplus_{i,j \in \Ic} S_i \boxtimes \bar S_j \boxtimes K^+ \otimes \Mc_{1,+}^{\mathrm{(in)}}(S_i,\bar S_j; F) ~,
	\nonumber
\\
	\Hc_\text{bulk}^{\text{R,odd}}(F)
&=  \bigoplus_{i,j \in \Ic} S_i \boxtimes \bar S_j \boxtimes K^- \otimes \Mc_{1,-}^{\mathrm{(in)}}(S_i,\bar S_j; F) ~.
\label{eq:spin-state-spaces-4-sectors}
\end{align}

\subsubsection*{Translation into an oriented world sheet with $F$-defects}

\begin{figure}[tb]
	\centering
 \begin{align*}
    \mathrm{a})&&\mathrm{b})&&\mathrm{c})&&\mathrm{d})&\\[-1em]
	&\raisebox{-0.5\height}{\includegraphics[scale=1.0]{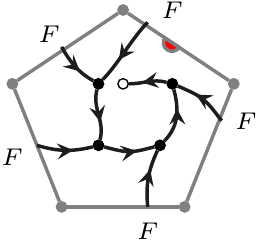}}
	&&\raisebox{-0.5\height}{\includegraphics[scale=1.0]{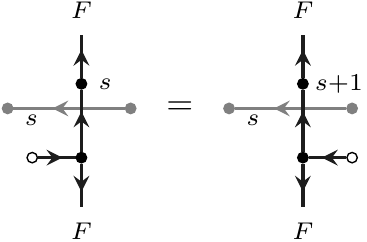}}
	&&\raisebox{-0.5\height}{\includegraphics[scale=1.0]{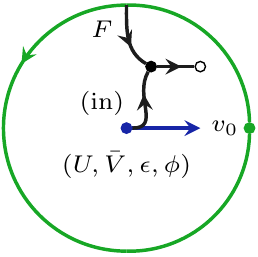}}
	&&\raisebox{-0.5\height}{\includegraphics[scale=1.0]{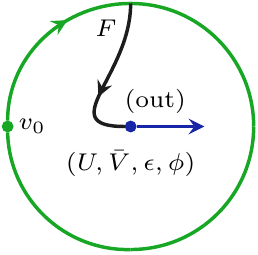}}
 \end{align*}
	\caption{Building blocks of the spin defect network: a) defect graph inside a plaquette, b) defect graph near an edge. c) and d) give the defect graphs to be placed on the discs glued to in- and out-going closed gluing boundaries. Note the position of $v_0$ in d): the disc is rotated by $180^\circ$ relative to the disc in c).} 
	\label{fig:spin-defect-network}
\end{figure}

Let $\bbSigma$ be a decorated spin world sheet (without boundaries and defects) as above. 
From $\bbSigma$ we will construct an oriented world sheet with defects 
\be
W_\text{or}(\bbSigma;F)
\ee 
for the symmetric special Frobenius algebra $B = \One \in \hCc$. Note that defects in the oriented CFT defined by $B=\One$ as in Section~\ref{sec:oriented-CFT-from-TFT} are described simply by objects in $\hCc$. In particular, $F$ itself defines a defect for the oriented $B=\One$ theory.

Let $\bbSigma_b = (\Sigma_b,\sigma)$ be the spin bordisms from which $\bbSigma$ was defined.
Let $T_{\Sigma_b}(\sigma)$ be the combinatorial presentation of the spin structure $\sigma$ by a marked polygonal decomposition. The defect graph on $W_\text{or}(\bbSigma;F)$ is obtained from $T_{\Sigma_b}(\sigma)$ as follows:
\begin{itemize}
\item
For each plaquette of $T_{\Sigma_b}(\sigma)$ place the $F$-defect graph shown in Figure~\ref{fig:spin-defect-network}\,a) on $W_\text{or}(\bbSigma;F)$, and for each edge the graph shown in  Figure~\ref{fig:spin-defect-network}\,b).
\item
The marked points in $W_\text{or}(\bbSigma;F)$ are as for $\bbSigma$, except that one forgets the type,
\be
(U,\bar V, \eps, y, \phi) \quad \leadsto \quad
(U,\bar V, \eps, \phi) ~,
\ee
which is of the form \eqref{eq:oriented-CFT-bulk-label}. To see that $\phi$ is indeed an label for the oriented theory, note that for $B = \One$, the inclusions in \eqref{eq:mult-space-image-Q_inclusion} (with $X=F$) are actually equalities.
\item
For each disc glued to the bordism $\bbSigma_b$ to obtain the world sheet $\bbSigma$ place the $F$-defect graph shown in Figure~\ref{fig:spin-defect-network}~c) and~d), depending on whether the gluing boundary was in- or outgoing.
\end{itemize}

\begin{figure}[tb]
	\centering
    \begin{align*}
        \mathrm{a})&&\mathrm{b})&\\[-1em]
	&\includegraphics[scale=1.0]{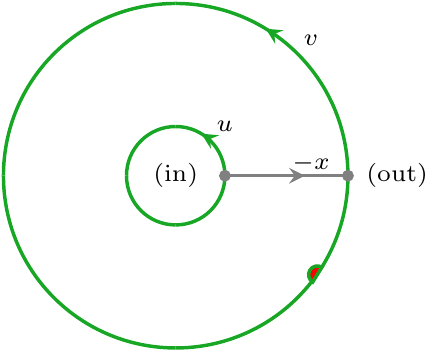}
	&&\includegraphics[scale=1.0]{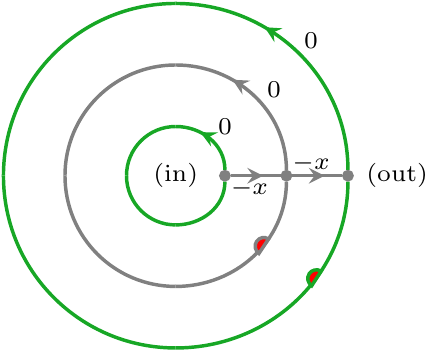}
 \\
        \mathrm{c})&&\mathrm{d})&\\[-1em]
	&\includegraphics[scale=1.0]{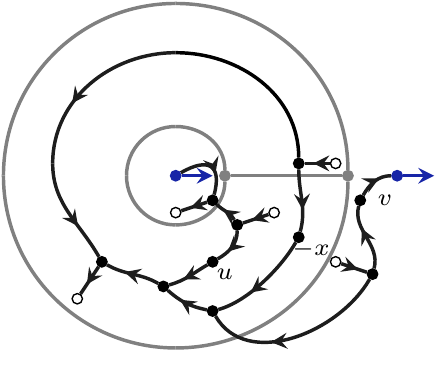}
	&&\includegraphics[scale=1.0]{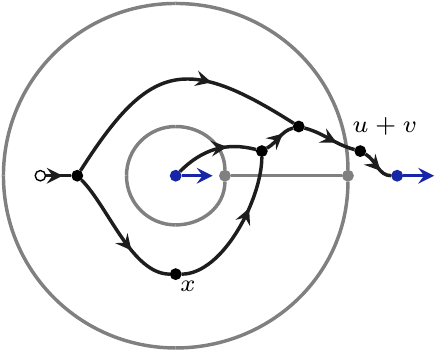}
    \end{align*}
	\caption{a) A spin cylinder over $S^1_x$ with $x\in\Zb_2$ and $u,v\in\Zb_2$. 
 b) 
Two spin cylinders with $u=v=0$ glued together. This can be simplified to the cylinder in a) with $u=v=0$.
 c) The defect network on the world sheet corresponding to a) obtained by gluing discs with field insertions. 
 d) Simplification of the defect network which still gives the same correlator as c).}
	\label{fig:spin-cylinder-defect-network}
\end{figure}

\begin{figure}[tb]
	\centering
    \begin{align*}
        \mathrm{a})&&\mathrm{b})&&\mathrm{c})&\\[-1em]
    	&\includegraphics[scale=1.0]{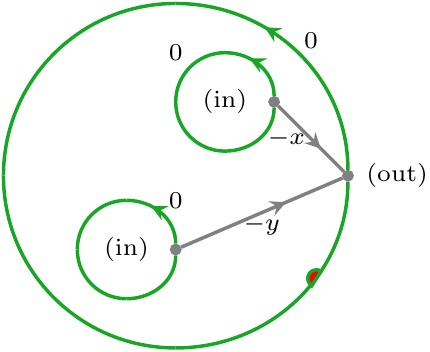}
    	&&\includegraphics[scale=1.0]{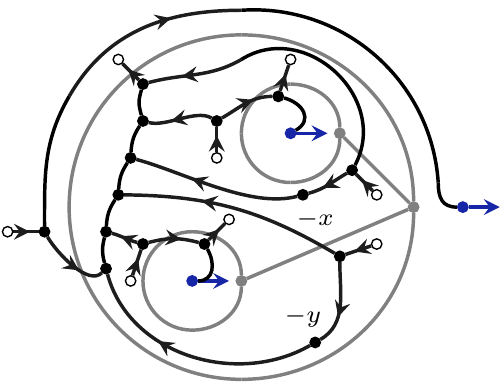}
        &&\includegraphics[scale=1.0]{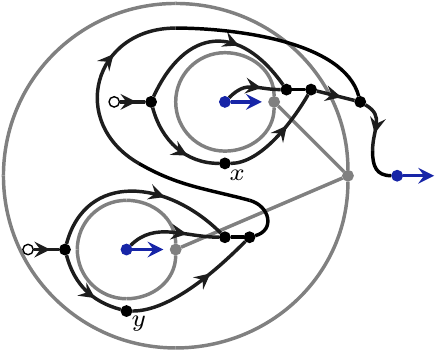}
    \end{align*}
	\caption{a) A spin pair of pants with one outgoing and two incoming closed gluing boundaries 
(of types $x$ and $y$, respectively)
	and combinatorial description of the spin structure. b) The corresponding defect network. c) Simplified network which gives the same correlator.} 
	\label{fig:3-bulk-insertions-spin-dn}
\end{figure}

Two examples of spin bordisms and associated decorated world sheets in the oriented defect CFT for $B = \One$ are shown in 
Figures~\ref{fig:spin-cylinder-defect-network} and \ref{fig:3-bulk-insertions-spin-dn}.
\begin{itemize}
    \item \textit{Figure~\ref{fig:spin-cylinder-defect-network}:} Part a) shows a cylinder with one in-going and one out-going closed gluing boundary of type $x$. The corresponding defect graph is shown in part c). Below it will be shown that the rules in Figure~\ref{fig:F-defect-manipulation} can be used to simplify defect graphs, and in the present case the result is shown in part d). For $u=v=0$ the cylinder is the identity in the bordism category. This can be seen in an example in Part b) where two such cylinders have been glued together. One can remove the middle loop using \eqref{move:remove-edge} and then the middle vertex by \eqref{move:remove-vertex}, resulting in Part a) with $u=v=0$. Indeed, Part d) for $u=v=0$ is just the projector $Q_{x}^\mathrm{(in)}$ which acts as the identity on a bulk insertion of type $x$ labelled with $\phi \in \Mc^{(in)}_{x,\epsilon}(U,\bar{V};F)$ as in \eqref{eq:spin-CFT-bulk-label}. 
    
    \item \textit{Figure~\ref{fig:3-bulk-insertions-spin-dn}:} Part a) shows a pair of pants with spin structure. This particular bordism and the resulting world sheet determine the OPE of bulk fields in the spin CFT in terms of those of defect fields in the oriented CFT. Note that the resulting $F$-defect graph can be deformed to agree with the one used in \cite[Eq.\,2.5]{Runkel:2020zgg}.
\end{itemize}

\begin{remark}\label{rem:actually-in-Z}
Unless we say otherwise,
when rearranging the defect networks obtained from the combinatorial spin structure, we will not make use of the fact that the edge indices take values in $\Zb_2$, but instead treat them as valued in $\Zb$.
For example, in Figures~\ref{fig:spin-cylinder-defect-network} and~\ref{fig:defect-loop-example} we distinguish between $x$ and $-x$.
The reason for this is two-fold. One the one hand it makes the computations easier to follow, and on the other hand this already sets the stage for the discussion of $r$-spin CFTs to which we hope to return in the future.
\end{remark}

\subsubsection*{Correlators of the spin CFT}

Let $\bbSigma$ be a decorated spin world sheet and let $W_\text{or}(\bbSigma;F)$ be the corresponding world sheet for the oriented CFT with $B=\One$. 
The space of conformal blocks for $\bbSigma$ and
the correlators $\Corr^\mathrm{spin}_F(\bbSigma)$ of the spin CFT for $F$ are defined in terms of those of the oriented theory for $B = \One$ given in \eqref{eq:oriented-blocks-space-def} and \eqref{eq:orientedCFT-corr-def} as
\begin{align}\label{eq:spinCFT-corr-def}
\Bl(\bbSigma) &:= \Bl(W_\text{or}(\bbSigma;F))~,
\nonumber \\
\Corr^\mathrm{spin}_F(\bbSigma) &:= \Corr^\mathrm{or}_{B=\One}(W_\text{or}(\bbSigma;F))  \in \Bl(\bbSigma)  ~.
\end{align}
We will show in Proposition~\ref{prop:spinCFT-indep-T-choice} that $\Corr^\mathrm{spin}_F(\bbSigma)$ is independent of the choice of $T_{\Sigma_b}(\sigma)$, and in Section~\ref{sec:spinCFT-consistency} that it is compatible with transport and gluing.

\subsubsection*{Manipulating $F$-defects in the oriented CFT}

\begin{figure}[tb] 
	\centering
 \begin{align*}
     \mathrm{a})&&&&\mathrm{b})&\\[-1em]
	&\raisebox{-0.5\height}{\includegraphics[scale=1.0]{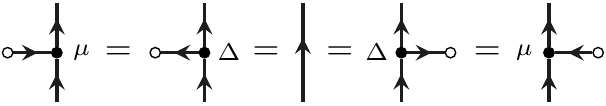}}&&
	&&\raisebox{-0.5\height}{\includegraphics[scale=1.0]{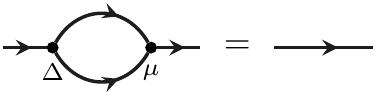}} 
 \end{align*}
 \begin{align*}
     \mathrm{c})&&&&&\\[-0.5em]
	&\raisebox{-0.5\height}{\includegraphics[scale=1.0]{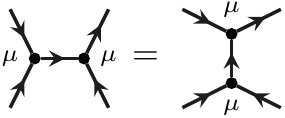}}&
	&\raisebox{-0.5\height}{\includegraphics[scale=1.0]{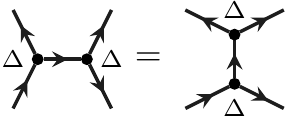}}&
	&\raisebox{-0.5\height}{\includegraphics[scale=1.0]{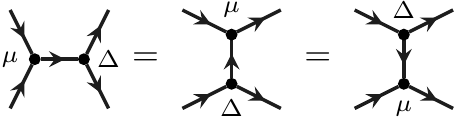}}
 \end{align*}
 \begin{align*}
     \mathrm{d})&&&&&\\[-1em]
	&\raisebox{-0.5\height}{\includegraphics[scale=1.0]{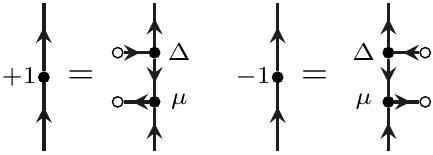}}&
	&\raisebox{-0.5\height}{\includegraphics[scale=1.0]{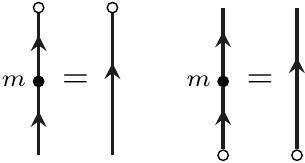}}&
	&\raisebox{-0.5\height}{\includegraphics[scale=1.0]{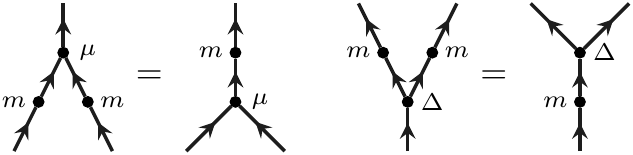}}
 \end{align*}
 \begin{align*}
     \mathrm{e})&&&&&\\[-1em]
	&\raisebox{-0.5\height}{\includegraphics[scale=1.0]{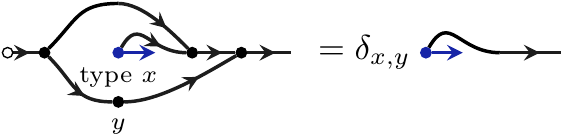}}&
	&\raisebox{-0.5\height}{\includegraphics[scale=1.0]{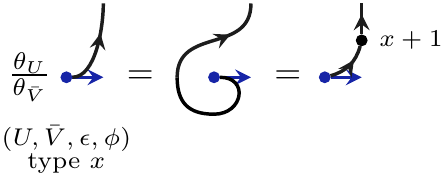}}&
	\\[.5em]
	&\raisebox{-0.5\height}{\includegraphics[scale=1.0]{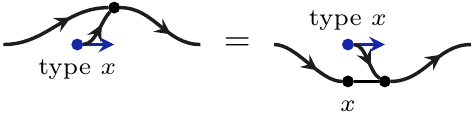}}
 \end{align*}
	\caption{Local modification of a network of $F$-defects that leave the correlator $\Corr^\mathrm{or}_{B=\One}$ unchanged: a) properties of the unit and counit, b) bubble move, c) associativity conditions, 
 d) properties of the Nakayama automorphism, e) moves involving ingoing bulk field insertions.}
	\label{fig:F-defect-manipulation}
\end{figure}

The network of $F$-defects in $W_\text{or}(\bbSigma;F)$ can usually be simplified further. A collection of useful identities when working with networks of $F$-defects is given in Figure~\ref{fig:F-defect-manipulation}. 
The identities given in \cite[Fig.\,1]{Runkel:2020zgg} agree with those in Figure~\ref{fig:F-defect-manipulation}.
Figures~\ref{fig:spin-cylinder-defect-network}\,d) and \ref{fig:3-bulk-insertions-spin-dn}\,c) provide examples of how the rules in Figure~\ref{fig:F-defect-manipulation} can be used to simplify $F$-networks. That these rules indeed hold is the content of the next proposition.

\begin{proposition}
Let $\Sigma$ and $\Sigma'$ be oriented world sheets with $F$-labelled defect networks
which only differ locally in one place in the way shown in Figure~\ref{fig:F-defect-manipulation}. Then 
$\Corr^\mathrm{or}_{B=\One}(\Sigma) = \Corr^\mathrm{or}_{B=\One}(\Sigma')$.
\end{proposition}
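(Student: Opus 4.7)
The plan is to reduce each local modification in Figure~\ref{fig:F-defect-manipulation} to an identity between two $\hCc$-coloured ribbon graphs in the connecting three-manifold $M_\Sigma$, and then verify that identity using the algebraic structure carried by $F$. Since $B=\One$, the construction of Section~\ref{sec:corr-oriented-bnddef} places no $B$-ribbon network in $M_\Sigma$; the only ribbon graph present is (i) the vertical ribbons attached to the marked points on $\widetilde\Sigma$ and (ii) the $F$-labelled defect graph embedded in $\Sigma_0 = \Sigma \times \{0\}$. Since $\Sigma$ and $\Sigma'$ differ only locally, their connecting manifolds $M_\Sigma$ and $M_{\Sigma'}$ are diffeomorphic, and under this diffeomorphism the embedded ribbon graphs agree outside a 3-ball containing the region where the two $F$-networks differ. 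By monoidality of $\hZc_\Cc$ it therefore suffices to show that the two graphs inside this 3-ball evaluate to the same morphism in $\hCc$.

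I would treat the five families of moves in turn. Moves (a) are direct consequences of $\eta$ being a unit and $\varepsilon$ being a counit for $F$, applied inside the disc. Move (b) is the defining ``normalised special'' identity $\mu \circ \Delta = \id_F$ together with $\varepsilon\circ\eta \neq 0$ as used to remove the resulting bubble; here one also uses that after splitting the idempotent \eqref{eq:rel-tensor-prod} the bubble simply rescales by $1$. Moves (c) combine associativity of $\mu$, coassociativity of $\Delta$, and the Frobenius relation (coproduct is a bimodule map), all of which are assumed on $F$. Moves (d) encode the definition \eqref{eq:def-Nakayama} of the Nakayama automorphism and its compatibility with $\mu, \eta, \Delta, \varepsilon$; the closed-up loop carrying an $N_F$-insertion is simplified by using $N_F$ is a Frobenius algebra morphism, while the move that squares the label uses precisely the hypothesis $N_F^{\,2} = \id_F$. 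These are verified directly by the string-diagram manipulations already depicted in Section~\ref{sec:alg1}.

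The main genuine obstacle is family (e), which involves bulk field insertions. Here one must argue that the morphism labelling the insertion,
$\phi \in \hCc(U \otimes \bar V \otimes K^\eps, F)$ (for an ingoing point), lies in the image of the projector $Q_{n}^{(\mathrm{in})}$ defined in \eqref{eq:idempot-Q-action} for the appropriate $n$; equivalently, $\phi = Q_{n}^{(\mathrm{in})}(\phi)$ as an element of $\Mc^{(\mathrm{in})}_{n,\eps}(U,\bar V;F)$. The moves in (e) then amount to redistributing the $F$-actions and $N_F$-insertions around the coupon labelled $\phi$, which by the explicit form of $Q_n^{(\mathrm{in})}$ and by naturality of the graphical calculus is precisely the statement $\phi \circ(\text{LHS}) = \phi\circ(\text{RHS})$ after absorbing the bulk insertion. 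I would check this by writing out $Q_n^{(\mathrm{in})}(\phi)$ explicitly and then using the Frobenius and Nakayama identities verified under (a)--(d) to bring one side of (e) to the other; the outgoing case is analogous using $Q_n^{(\mathrm{out})}$.

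Finally, since every local identity in Figure~\ref{fig:F-defect-manipulation} holds between the two ribbon graphs inside the relevant 3-ball as morphisms in $\hCc$, invariance of the Reshetikhin--Turaev part under such local replacements (and the fact that $\SV$ does not see the interior of the ribbon graph) gives $\hZc_\Cc(M_\Sigma) = \hZc_\Cc(M_{\Sigma'})$, i.e.\ $\Corr^\mathrm{or}_{B=\One}(\Sigma) = \Corr^\mathrm{or}_{B=\One}(\Sigma')$, as required.
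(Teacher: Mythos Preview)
Your proof is correct and follows essentially the same route as the paper: reduce to an identity of $\hCc$-coloured ribbon graphs in a 3-ball inside $M_\Sigma$ and verify it from the special Frobenius and Nakayama axioms, with family (e) handled via $\phi\in\mathrm{im}\,Q_n^{(\mathrm{in})}$. The paper adds one detail you gloss over: for the third identity in (e) it notes that the two sides correspond to two equivalent ribbon presentations of a $2\pi$-rotation of the tangent vector at the insertion point in the connecting manifold, which is what introduces the twist/Nakayama power there; you may want to make this explicit rather than subsume it under ``naturality of the graphical calculus''. Also, your closing remark that ``$\SV$ does not see the interior of the ribbon graph'' is not quite right---$\SV$ does depend on the combinatorics of the graph---but since the local replacements are equalities of morphisms in $\hCc$, they hold for $\hZc_\Cc$ by definition, which is all you need.
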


\begin{proof}
By the definition of  $\Corr^\mathrm{or}_{B=\One}(\Sigma)$ in \eqref{eq:orientedCFT-corr-def_withbnd+def} we have to show
$\hZc_\Cc(M_\Sigma)=\hZc_\Cc(M_{\Sigma'})$. The resulting identities for ribbon graphs in the connecting manifold follow immediately from the defining properties of the special Frobenius algebra $F$. Let us nonetheless briefly comment on two identities in part e) of Figure~\ref{fig:F-defect-manipulation}: 
\begin{itemize}
    \item
The first identity in e) follows since the ribbon graph in $M_{\Sigma}$ near the field insertion will be precisely the projector $Q_y^{(\mathrm{in})}$ from \eqref{eq:idempot-Q-action}, and by construction (see \eqref{eq:spin-CFT-bulk-label}) the field $\phi$ lies in $\Mc^{(\mathrm{in})}_{x,\epsilon}(U,\bar{V};F) = \mathrm{im}\, Q_{x}^\mathrm{(\mathrm{in})}$, cf.\ \eqref{eq:mult-space-image-Q}.

\item
The second equality follows by inserting the projector $Q_x^{(\mathrm{in})}$ via first identity in e) and using the other rules listed to unwind the $F$-line.

\item
In the third identity in e), the first equality uses two ways to express the $2\pi$-rotation of the tangent vector in the connecting manifold:
\begin{equation}
	\raisebox{-0.5\height}{\includegraphics[scale=1.0]{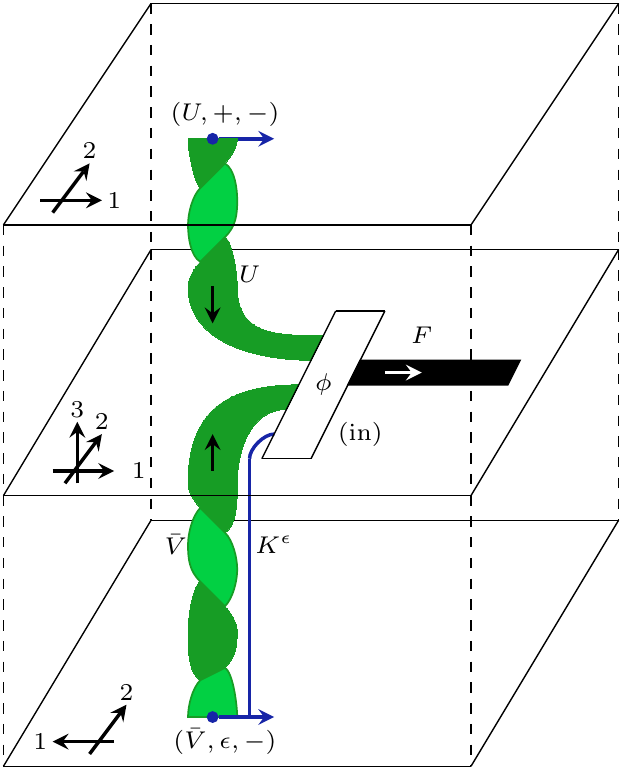}}
 =
	\raisebox{-0.5\height}{\includegraphics[scale=1.0]{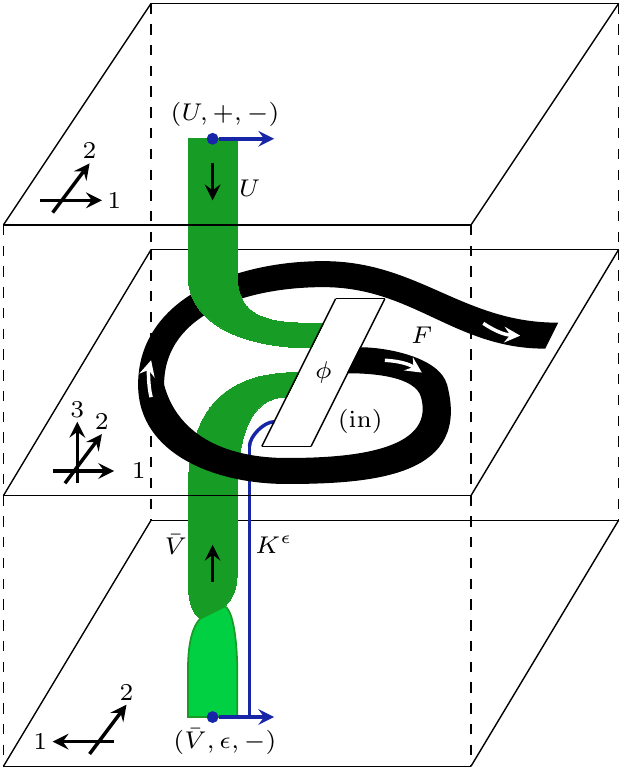}}
\end{equation}
\end{itemize}
\end{proof}

\begin{remark}
When using the associativity moves in Figure~\ref{fig:F-defect-manipulation}\,e) one has to be careful that one does not accidentally replace any of the following configurations by any of the others as they are inequivalent:
\begin{equation}
	\raisebox{-0.5\height}{\includegraphics[scale=1.0]{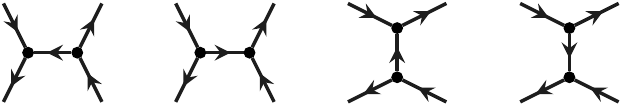}}
\end{equation}
Indeed, if $N_F \neq \id$, i.e.\ if $F$ is not symmetric, the four local configurations shown above are pairwise distinct. This can be seen by inserting unit and counit in appropriate places. For example, if in the first two diagrams, one attaches a unit to the ingoing top $F$-line and a
counit to the outgoing top $F$-line, the first diagram becomes $\id_F$ and the second $N_F^{-1}$.
\end{remark}

\subsection{Including boundaries and defects}\label{sec:spin-bnd-def}

Here we extend the definition of spin CFT in terms of the oriented defect CFT to include boundaries and defects. The procedure is the same, but involves extra notation.

\subsubsection*{Decorated spin world sheets}

As before, let $\bbSigma_b = (\Sigma_b,\sigma)$ be a spin bordism, but now possibly with defects and boundaries.
Let $\bbSigma$ be the corresponding spin world sheet obtained by gluing in discs or half-discs as in Figure~\ref{fig:surf-bp-ws}. Then $\bbSigma$ becomes a \textit{decorated spin world sheet} as follows:
\begin{itemize}
	\item \textit{(Boundary condition)} A connected component of the boundary minus the boundary insertion points gets labelled by a $\Zb_2$-equivariant left $F$-module, i.e.\ by an object in $\Bc(F)$, cf.\ \eqref{eq:BF-def}.
	
	\item \textit{(Defect condition)} A connected component of the defect network minus field insertions gets labelled by a $\Zb_2$-equivariant $F$-$F$-bimodule, i.e.\ by an object in $\Dc(F,F)$, cf.\ \eqref{eq:DF1F2-def}.
		
	\item Consider a boundary insertion between boundary conditions $M,N$, and on which defect lines with defect conditions $X_1,\dots,X_m$ start or end (with $m=0$ corresponding to no attached defects). Let $\delta_i=+$ if $X_i$ is pointing away from the insertion, and $\delta_i=-$ otherwise. 
	Set 
\be\label{eq:defect-tensor-prod_spin}
X = X_1^{\delta_i} \otimes_F \cdots \otimes_F X_m^{\delta_m} \in \Dc(F,F)
\quad \text{where} \quad
X_i^+ = X_i
~~\text{and}~~
X_i^- = X_i^\dagger
\ee
The boundary insertion is labelled by
	\be\label{eq:bnd-label-spin}
(W,\eps,\psi) ~.
\ee    
Here, $W \in \Cc$, $\eps \in \{\pm 1\}$ is the parity, and $\psi\in \hCc(W \otimes K^\eps,M^\dagger \otimes_F X \otimes_F N)$ (ingoing), respectively $\psi\in \hCc(M^\dagger \otimes_F X \otimes_F N,W \otimes K^\eps)$ (outgoing).
	
	\item The label of a bulk insertion which is not connected to any line defects has already been described in \eqref{eq:spin-CFT-bulk-label}.
	
	\item Consider a bulk insertion on which defect lines with defect conditions $X_1,\dots,X_m$ start or end, and let $X$ be as in \eqref{eq:defect-tensor-prod_spin}.
	The insertion is again labelled as in \eqref{eq:spin-CFT-bulk-label}, but now 
	$\phi \in \Mc^{(t)}_{y,\epsilon}(U,\bar{V};X)$, where $t \in \{\mathrm{in},\mathrm{out}\}$ and the multiplicity spaces are as given in \eqref{eq:mult-space-image-Q}. 
\end{itemize}

After introducing the defect graph obtained from a combinatorial spin structure below, we can comment on why the dagger-dual $X^\dagger$ or $M^\dagger$ from \eqref{eq:dagger-dual-ev-coev} is the relevant dual for defect labels $X$ and boundary labels $M$ attached to components pointing towards the field insertion, rather than, say, the dual $X^*$ or $M^*$ in $\hCc$, cf.\ Example~\ref{ex:bnd-defect-example}. 

\subsubsection*{Spaces of boundary and defect fields}

For the space of boundary fields we get the same as in \eqref{eq:orientedCFT-bnd-field-spaces}, but with `$\otimes_F$' in place of `$\otimes_B$'. For the bulk defect insertions we have, with $X$ as in \eqref{eq:defect-tensor-prod_spin}, 
\begin{equation}
\Hc_\text{bulk}^{y,(t)}(X) = \bigoplus_{i,j \in \Ic, \eps \in \{\pm 1\} } S_i \boxtimes \bar S_j \boxtimes K^\eps \otimes \Mc_{y,\eps}^{(t)}(S_i,\bar S_j; X) 
~\in~\Cc \boxtimes \Ccrev \boxtimes \SVect
	~.
	\label{eq:state-space-bulkdef-spin}
\end{equation}
As in the oriented case, setting $X=F$ in the above state space recover the space of bulk insertions without attached line defects as in \eqref{eq:state-space-bulk-spin}.

\subsubsection*{Translation into an oriented world sheet with $F$-defects}

Let $\bbSigma$ be a decorated spin world sheet, possibly with boundaries and defects. 
From $\bbSigma$ we will construct an oriented world sheet with defects $W_\text{or}(\bbSigma;F)$
for the symmetric special Frobenius algebra $B = \One \in \hCc$. 

Let $\bbSigma_b = (\Sigma_b,\sigma)$ be the spin bordisms from which $\bbSigma$ was defined.
Let $T_{\Sigma_b}(\sigma)$ be the combinatorial presentation of the spin structure $\sigma$ by a marked polygonal decomposition, subject to the constraints arising from the presence of defects and boundaries described in Section~\ref{sec:surf-w-def}. 

\begin{figure}[tb]
    \begin{align*}
    &\mathrm{a})&&\mathrm{b})&&\\[-1em]
	&\raisebox{-0.5\height}{\includegraphics[scale=1.0]{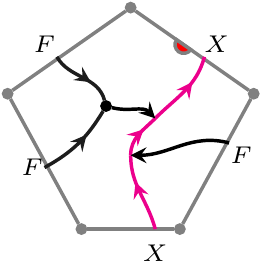}}
	&&\raisebox{-0.5\height}{\includegraphics[scale=1.0]{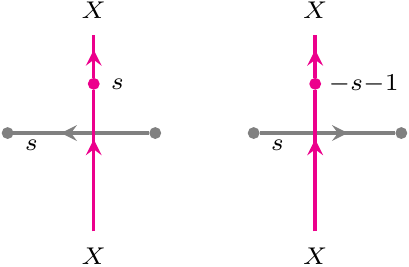}}
    &&
    \\
    &\mathrm{c})&&\mathrm{d})&&\hspace{-3em} \mathrm{e})\\
	&\raisebox{-0.5\height}{\includegraphics[scale=1.0]{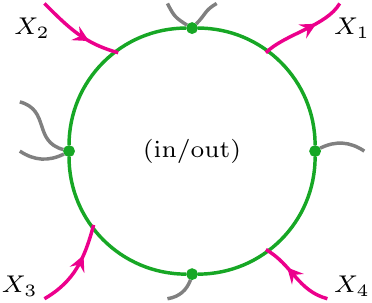}}
	&&\raisebox{-0.5\height}{\includegraphics[scale=1.0]{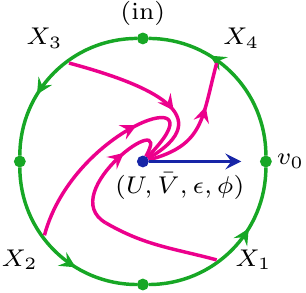}}
	&&\hspace{-3em} \raisebox{-0.5\height}{\includegraphics[scale=1.0]{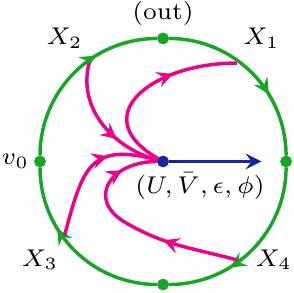}}
    \end{align*}
	\caption{Details of the defect network in $W_\text{or}(\bbSigma;F)$ presenting the combinatorial spin structure for a spin world sheet with defect lines and closed gluing boundaries.}
	\label{fig:defects-spin-dn}
\end{figure}

\begin{figure}
    \begin{align*}
    &\mathrm{a})&&\mathrm{b})&&\\[-1em]
	&\raisebox{-0.5\height}{\includegraphics[scale=1.0]{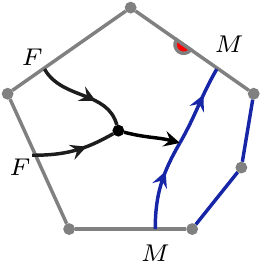}}
	&&\raisebox{-0.5\height}{\includegraphics[scale=1.0]{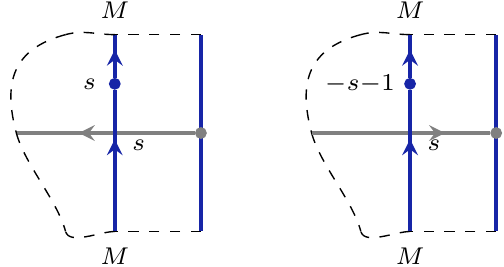}}
    &&
    \\
    &\mathrm{c})&&\mathrm{d})&&\hspace{-3em} \mathrm{e})\\[-1em]
	&\raisebox{-0.5\height}{\includegraphics[scale=1.0]{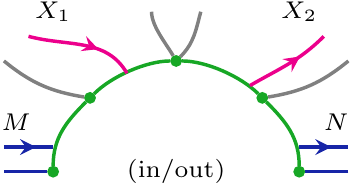}}
	&&\raisebox{-0.5\height}{\includegraphics[scale=1.0]{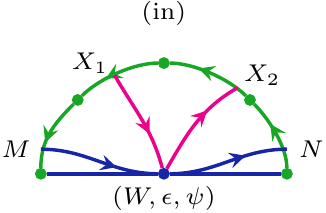}}
	&&\hspace{-3em} \raisebox{-0.5\height}{\includegraphics[scale=1.0]{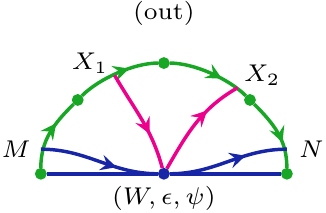}}
    \end{align*}
	\caption{Details of the defect network in $W_\text{or}(\bbSigma;F)$  for a spin world sheet with free boundaries and open gluing boundaries.}
	\label{fig:face-w-bdry-dn}
\end{figure}

The defect graph on $W_\text{or}(\bbSigma;F)$ is obtained from $T_{\Sigma_b}(\sigma)$ as follows:
\begin{itemize}
	\item For plaquettes  of $T_{\Sigma_b}(\sigma)$ not containing defects or boundaries, and for insertions in the interior of $\bbSigma$ without attached line defects, the defect graph in $W_\text{or}(\bbSigma;F)$ is as in Section~\ref{sec:spin-def-net-closed}.
	\item
	For each plaquette or edge intersecting the defect network of $\bbSigma$, place the defect graph shown in Figure~\ref{fig:defects-spin-dn}\,a) or b) on $W_\text{or}(\bbSigma;F)$.
	\item
Consider a closed gluing boundary in $\bbSigma_b$ with attached defect lines as shown in Figure~\ref{fig:defects-spin-dn}\,c). On the disc glued to this boundary in order to obtain $\bbSigma$, place the defect graph shown in Figure~\ref{fig:defects-spin-dn}\,d) and~e).
	\item
	For each plaquette intersecting the free boundary of $\bbSigma$, place the defect graph shown in Figure~\ref{fig:face-w-bdry-dn}\,a), and for each edge starting or ending on the free boundary of $\bbSigma$ the graph in Figure~\ref{fig:face-w-bdry-dn}\,b).
Recall from Section~\ref{sec:surf-w-def} that there may be several edges lying on the boundary, as long as they are in one sequence and the plaquette has at least one non-boundary edge. E.g.\ in Figure~\ref{fig:face-w-bdry-dn}\,a) there are two boundary edges.
\item
Consider an open gluing boundary in $\bbSigma_b$, possibly with attached defect lines as shown in Figure~\ref{fig:face-w-bdry-dn}\,c) (the case $n=0$ amounts to no attached defect lines). On the half-disc glued to this gluing boundary, place the defect graph shown in Figure~\ref{fig:face-w-bdry-dn}\,d) and~e).
\end{itemize}

\subsubsection*{Correlators of the spin CFT with defects and boundaries}

Let $\bbSigma$ be a decorated spin world sheet and let $W_\text{or}(\bbSigma;F)$ be the corresponding world sheet for the oriented CFT with $B=\One$. 
The space of conformal blocks for $\bbSigma$ and
the correlators $\Corr^\mathrm{spin}_F(\bbSigma)$ of the spin CFT for $F$ are defined in terms of those of the oriented theory for $B = \One$ by the same expression as in \eqref{eq:spinCFT-corr-def}, in particular
\be\label{eq:spinCFT-corr-with-def}
\Corr^\mathrm{spin}_F(\bbSigma) := \Corr^\mathrm{or}_{B=\One}(W_\text{or}(\bbSigma;F))~.
\ee
The construction of $W_\text{or}(\bbSigma;F)$ involved choosing $T_{\Sigma_b}$ to encode the spin structure. The following proposition states that $\Corr^\mathrm{spin}_F(\bbSigma)$ does not depend on that choice. The proof will be given in Appendix~\ref{app:invariance}.

\begin{proposition}\label{prop:spinCFT-indep-T-choice}
  Let $\Sigma$ and $\Sigma'$ be choices for $W_\text{or}(\bbSigma;F)$ which differ in the marked polygonal decomposition used to encode the spin structure. Then   
\be
\Corr^\mathrm{or}_{B=\One}(\Sigma)
=
\Corr^\mathrm{or}_{B=\One}(\Sigma')~.
\ee
\end{proposition}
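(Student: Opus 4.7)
The strategy is to reduce the statement to a local check: by Theorem~\ref{thm:markings-r-spin-surfaces} (extended to decompositions with boundaries and defects as in Section~\ref{sec:surf-w-def}), any two admissible marked polygonal decompositions $T_{\Sigma_b}(o,m,s)$ and $T'_{\Sigma_b}(o',m',s')$ encoding the same spin structure $\sigma$ are related by a finite sequence of the moves \eqref{move:sheet-translation}--\eqref{move:remove-vertex}. Since each move affects $T_{\Sigma_b}$ only in a bounded region, it translates into a local modification of the $F$-defect network in $W_\text{or}(\bbSigma;F)$ prescribed by Figures~\ref{fig:spin-defect-network}, \ref{fig:defects-spin-dn}, \ref{fig:face-w-bdry-dn}. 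Thus it suffices to show that each such local modification leaves $\Corr^\mathrm{or}_{B=\One}$ invariant, using the identities for $F$-defects collected in Figure~\ref{fig:F-defect-manipulation} (which we have already shown to preserve the oriented correlator).

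The plan is to treat the five moves in turn. For \eqref{move:sheet-translation} (shift of edge indices by a polygon translation $x\in\Zb_2$), the associativity moves in Figure~\ref{fig:F-defect-manipulation}\,c) together with the identity $(N_F)^2=\id_F$ let one absorb the change of $s_e$ along each edge of the polygon into a common power of $N_F$ that slides around the inside of the polygon and disappears. For \eqref{move:edge-or-bulk} (edge reversal with $s_e \mapsto -1-s_e$), the corresponding defect configuration on a reversed edge is obtained by applying the Nakayama identities in Figure~\ref{fig:F-defect-manipulation}\,d), which together with associativity precisely realise the change $s_e \mapsto -1-s_e$. For \eqref{move:marked-edge-bulk} (cyclic shift of the marked edge of a face) one uses the cyclic symmetry of the plaquette defect graph in Figure~\ref{fig:spin-defect-network}\,a), which after absorbing a compensating $N_F^{\pm 1}$ into the neighbouring edge becomes invariant under rotation of the marked edge.

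The more delicate moves are \eqref{move:remove-edge} and \eqref{move:remove-vertex}, which change the number of faces/edges/vertices. For \eqref{move:remove-edge} the local defect graph on two adjacent plaquettes becomes, after using associativity, exactly the bubble configuration of Figure~\ref{fig:F-defect-manipulation}\,b), whose removal is justified by specialness of $F$. For \eqref{move:remove-vertex} one reduces the configuration around the 2-valent vertex to the unit/counit identities of Figure~\ref{fig:F-defect-manipulation}\,a), again using the prescribed orientations and the Nakayama powers dictated by the orientation conventions. In each case one must also verify the corresponding variants at free boundary edges, defect-crossing edges, and near closed/open gluing boundaries, using the defect graphs in Figures~\ref{fig:defects-spin-dn} and \ref{fig:face-w-bdry-dn}; the essential check is the same, but additional $N_F$-dressings at defect crossings and gluing boundaries must be tracked.

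The main technical obstacle is the bookkeeping of powers of $N_F$ that arise from reorienting edges and shifting marked edges: the admissibility condition \eqref{eq:admissibility-inner} ensures that these powers consistently cancel modulo $2$ at every vertex, and the identity $N_F^{\,2}=\id_F$ is exactly what is needed to convert the $\Zb_2$-valued edge indices of the combinatorial model into equalities of morphisms in $\hCc$. Once these identities are established for each of the five moves (in each of the variants interior/boundary/defect/gluing), the invariance of $\Corr^\mathrm{or}_{B=\One}(W_\text{or}(\bbSigma;F))$ under the change of marked polygonal decomposition follows, proving the proposition.
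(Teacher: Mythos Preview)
Your proposal is correct and follows essentially the same approach as the paper's proof in Appendix~\ref{app:invariance}: reduce to the local moves \eqref{move:sheet-translation}--\eqref{move:remove-vertex} and verify each one via the $F$-defect identities in Figure~\ref{fig:F-defect-manipulation}, including the separate variants at free boundaries and defect crossings. One minor inaccuracy: in the paper's actual computations it is \eqref{move:remove-vertex} (not \eqref{move:remove-edge}) that uses specialness of $F$, while \eqref{move:remove-edge} goes through via the Frobenius relation and (co)unitality alone; this is only a matter of which identities get invoked where and does not affect the validity of your outline.
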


We will show in the next section that $\Corr^\mathrm{spin}_F(\bbSigma)$ is compatible with transport and gluing.

\begin{figure}[tb]
	\centering
    \begin{align*}
    \mathrm{a})&&
    \mathrm{b})&\\[-1em]
    &\raisebox{-0.5\height}{\includegraphics[scale=1.0]{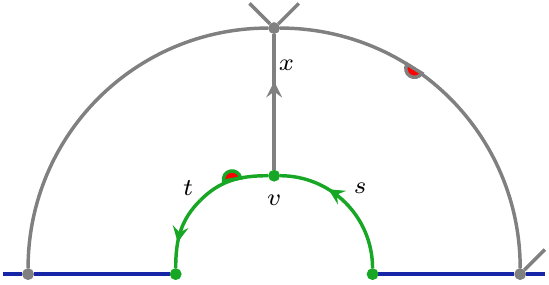}}
	&&\raisebox{-0.5\height}{\includegraphics[scale=1.0]{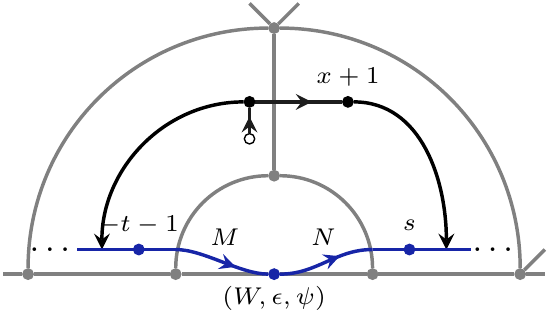}}
    \end{align*}	

	\caption{a) Example of a polygonal decomposition near a boundary field insertion on $\bbSigma$. The admissibility condition at $v$ is 
	$-s-1+x+t = 0-3+1$, 
	i.e.\ $x = s-t-1$. b) The corresponding defect network on $W_\text{or}(\bbSigma;F)$. 
	Taking the action past the isomorphisms $N_N^s$ and $N_M^{-t-1}$ and substituting the value for $x$ produces precisely the projector onto the tensor product $M^\dagger \otimes_F N$, cf.\ \eqref{eq:rel-tensor-prod} and \eqref{eq:Y-dagger-def}.
	}
	\label{fig:bnd-cond-dagger-dual}
\end{figure}

\begin{figure}[tb]
 \begin{align*}
    \mathrm{a})& & \mathrm{b})&  & &\\[-1em]
	&\raisebox{-0.5\height}{\includegraphics[scale=1.0]{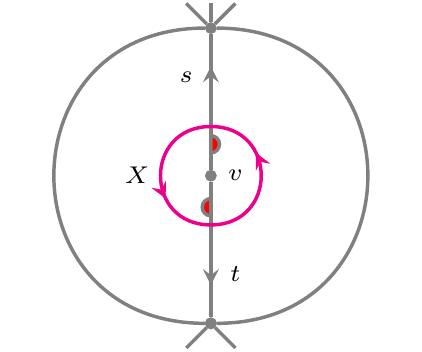}}
	&&\raisebox{-0.5\height}{\includegraphics[scale=1.0]{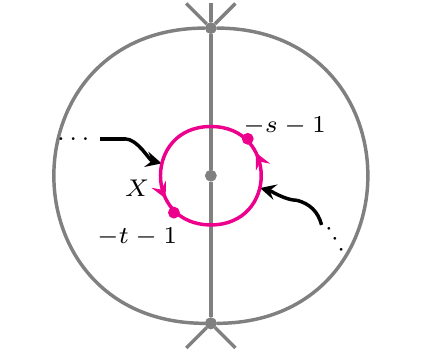}}
    \hspace{-3em} &=&
	\raisebox{-0.5\height}{\includegraphics[scale=1.0]{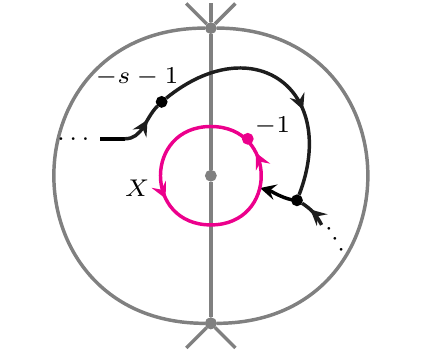}}
    \\[.5em]
    &&&&&\hspace{5em}\stackrel{(*)}{=}\\
    \mathrm{c})&&\mathrm{d}) \\[-1em]
    \frac{D(X)}{D(F)}
	&\raisebox{-0.5\height}{\includegraphics[scale=1.0]{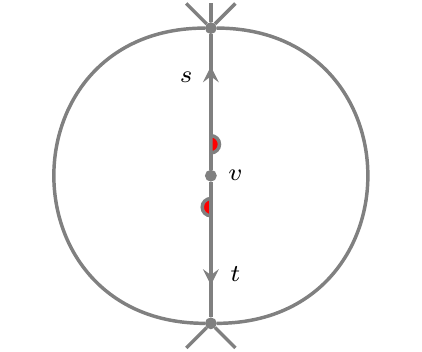}}
    &\frac{D(X)}{D(F)}
	&\raisebox{-0.5\height}{\includegraphics[scale=1.0]{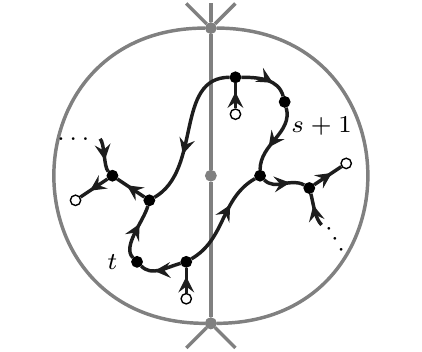}}
    \hspace{-1em} &=\frac{D(X)}{D(F)}
	&\raisebox{-0.5\height}{\includegraphics[scale=1.0]{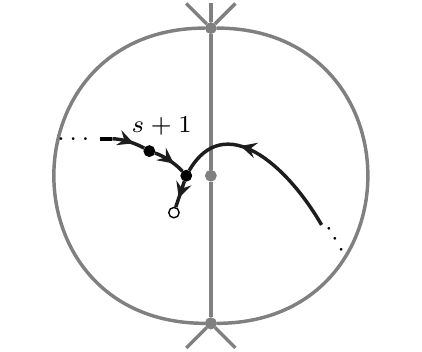}}
 \end{align*}
	\caption{a) Part of a spin world sheet $\bbSigma$ with combinatorial spin structure near a vertex $v$ surrounded by a defect loop. The admissibility condition \eqref{eq:admissibility-inner} at the vertex $v$ is $s+t = 0-2+1$. b) The corresponding defect network on the oriented world sheet $W_\text{or}(\bbSigma;F)$ and a reformulation. c) The same local part of $\bbSigma$, but without the defect loop and (the resulting correlator is) multiplied by the factor $D(X)/D(F)$. d) Corresponding defect network and simplification. The equality $(*)$ holds under the assumption that $F$ is simple, see \eqref{eq:q-dim-D(FF)}.}
	\label{fig:defect-loop-example}
\end{figure}

\begin{example}
\begin{enumerate}
\item 
As a first example, let us explain why the dagger-dual is used in the multiplicity space for defects or boundary components pointing into a field insertion. Namely, one can use the combinatorial presentation of the spin structure in terms of line defects to produce projectors on the multiplicity space, and one finds in this way the projector for the dagger-dual. We illustrate this in the simplest example, namely a boundary insertion without line defects, in Figure~\ref{fig:bnd-cond-dagger-dual}.

\item
As an example involving defects, in Figure~\ref{fig:defect-loop-example}, a contractible defect loop labelled by $X \in \mathcal{D}(F,F)$ is considered.
We see that for simple $F$ an $X$-labelled defect loop is equal to the quantum dimension of $X$ in $\mathcal{D}(F,F)$, cf.\ \eqref{eq:q-dim-D(FF)} (and not to the quantum dimension of $X$ in $\hCc$).
\end{enumerate}
\label{ex:bnd-defect-example}
\end{example}

\begin{remark}
\begin{enumerate}
    \item 
If $N_F = \id_F$, the special Frobenius algebra $F$ still is a valid choice to define correlators $\Corr^\mathrm{spin}_F(\bbSigma)$. However, in this case, the value of $\Corr^\mathrm{spin}_F(\bbSigma)$ is independent of the spin structure, e.g.\ the NS- and R-state spaces in \eqref{eq:spin-state-spaces-4-sectors} are the same.

However, conversely, $N_F \neq \id_F$ does not guarantee that the theory detects spin structures. Namely, $N_F$ could be an inner automorphism of $F$, see \cite[Sec.\,4.10]{Novak:2014sp} for an example in 2d\,TFT. 

A more invariant formulation would be in terms of the Serre-automorphism which is implemented by the twisted bimodule $F_{N_F}$. This is isomorphic to $F$ as a bimodule iff $N_F$ is inner. See \cite[Sec.\,4.1]{Carqueville:2021cfa} for more on Serre automorphism and on fully extended $r$-spin TFTs.

\item
The formalism presented here also allows one to treat defects between spin CFTs and oriented CFTs. This can be done by treating the interface-defect between the two theories as a boundary as far as the spin structure is concerned (i.e.\ no edge labels for edges lying entirely in parts of the world sheet which support the oriented CFT). If the algebras $B$ and $F$ describe the oriented and spin CFTs, respectively, then the defects are decorated by bimodules in $\Dc(B,F)$ or $\Dc(F,B)$, depending on the defect-orientation.
Note that these are still $\Zb_2$-equivariant bimodules, but since $N_B = \id_B$, for $X \in \Dc(B,F)$ (or $X \in \Dc(F,B)$) the involution $N_X$ is now an intertwiner for the $B$-action.
\end{enumerate}
\label{rem:spin-no-spin}
\end{remark}

Let $X \in \Dc(F,F)$ and write $\Mc^{\mathrm{(in)}}_{y,\epsilon}(U,\bar{V};X)^\mathrm{inv}$ for the subspace of $\Mc^{\mathrm{(in)}}_{y,\epsilon}(U,\bar{V};X)$ on which composing with $N_X$ acts as the identity, i.e.\ for the invariant subspace with respect to the $\Zb_2$-action generated by $N_X \circ (-)$. It is not hard to show that
\begin{equation}\label{eq:vacuum-mult-defect}
    \Mc^{\mathrm{(in)}}_{y,\epsilon}(\One,\bar{\One};X^\dagger \otimes_F X)^\mathrm{inv} ~\cong~ \Hom_{\Dc(F,F)}(X \otimes K^\eps , X_{N^y}) ~. 
\end{equation}
Here, the right action of $F$ on $X \otimes K^\eps$ is given by first taking $F$ past $K^\eps$ with the symmetric braiding, and the $\Zb_2$-equivariant structure is simply $N_X \otimes \id_{K^\eps}$.

The space $\Mc^{\mathrm{(in)}}_{y,\epsilon}(\One,\bar{\One};X^\dagger \otimes_F X)$ describes the multiplicities of weight $(0,0)$ fields of parity $\eps$ in sector $y$ on the defect $X$. Equation \eqref{eq:vacuum-mult-defect} shows that we can read off the multiplicity of such fields which are in addition $\Zb_2$-invariant from an appropriate $\Hom$-space in the pivotal fusion category $\Dc(F,F)$ labelling the line defects.

Suppose now that $X$ is simple in $\Dc(F,F)$. Then also $X \otimes K^\eps$ and $X_{N^y}$ are simple. Thus in each of the four cases $\eps \in \{ \pm 1\}$ and $y \in \{0,1\}$, the spaces in \eqref{eq:vacuum-mult-defect} are either zero-- or one-dimensional.
For $\eps=+$ (parity even) and $y=0$ (NS-sector), the right hand side of \eqref{eq:vacuum-mult-defect} is $\Hom_{\Dc(F,F)}(X , X)$, which is always one-dimensional for simple $X$. 

Some of the simple defects may allow for parity-odd weight $(0,0)$ fields. Of particular interest is the case $\eps=-$ and $y=0$, i.e.\ the space $\Hom_{\Dc(F,F)}(X \otimes K^\eps , X)$. Denote its dimension by $d$. Then the algebra of $\Zb_2$-invariant NS-sector weight $(0,0)$ fields on $X$ is isomorphic to
\begin{itemize}
    \item $\Cb$ if $d=0$, and
    \item $\Cl_1$, the Clifford algebra in one odd generator, if $d=1$.
\end{itemize}
We stress that this analysis applies to theories of type \mycircled 2 and of type \mycircled 4 alike, but for type \mycircled 2 one can restrict to $y=0$.

\begin{remark}\begin{enumerate}
    \item 
In the setting of fermionic topological phases of matter, it was observed in \cite{Aasen:2017ubm} that there are two kinds of simple objects in the relevant categories (called super-pivotal categories there), namely those with endomorphism algebra $\Cb$ and those with $\Cl_1$. The former were called m-type, and the latter q-type.

We expect that in the way indicated above, the pivotal fusion category $\Dc(F,F)$ already contains the information necessary to enrich it to a super-pivotal fusion category. A more detailed study of $\Dc(F,F)$ will be presented in a future publication.

\item
In the context of boundary conditions for the fermionic Ising model, i.e.\ the massless free fermion, the fact that some elementary boundary conditions carry odd weight zero fields was observed in \cite{Ghoshal:1993tm,Chatterjee:1993ca}. 
In \cite{Runkel:2020zgg} topological defects in spin CFTs were described by representations of a classifying algebra, and it was noted that they come in two classes, depending on whether they carry a fermionic weight-zero field or not. In \cite{Chang:2022hud}, these topological defects are called m-type and q-type, respectively, and their categorical properties have been studied.
\end{enumerate}
\end{remark}

\subsection{Consistency of spin correlators}\label{sec:spinCFT-consistency}

Given a special Frobenius algebra $F \in \hCc$ with $N_F^2=\id$, the above construction produces a family of vectors $\{\Corr^\mathrm{spin}_F(\bbSigma)\}_{\bbSigma}$. In this section we will verify the consistency conditions (T) and (G) from Section~\ref{sec:parityCFT-consistency} for this family of spin correlators. However, rather than directly using the definition in terms of 3d TFT as we did in Section~\ref{sec:parityCFT-consistency}, here we reduce the statement to the oriented case with defects, for which (F) and (G) have already been shown.

\subsubsection*{Transport}

Let $\gamma$ be a path in decorated spin world sheets with defects and boundaries. For the spin structure we use the combinatorial model from Section~\ref{sec:surf-w-def}. That is, the underlying bordism is equipped with an admissibly marked polygonal decomposition which varies continuously along the path in the sense that the path descends to a path in the space of pairs $(\Sigma_b,T_{\Sigma_b}(o,m,s))$, where $\Sigma_b$ is an open-closed bordism with defects, and $T_{\Sigma_b}(o,m,s)$ is an admissibly marked polygonal decomposition. 

In other words, we consider a family $E_\gamma$ whose fibre $\gamma(t)$ over a point $t \in [0,1]$ is a decorated spin world sheet with combinatorial description of its spin structure. 
By taking the double of $W_\text{or}(\gamma(t);F)$ at each fibre, we obtain a family $E_{\widetilde \gamma}$ of $\hCc$-extended surfaces which we think of as a bordism  
from the double of $W_\text{or}(\gamma(0);F)$ to the double of $W_\text{or}(\gamma(1);F)$.

In terms of these ingredients, consistency with transport amounts to the following statement:

\begin{theorem} \label{thm:spin-CFT-transport}
	Let $F \in \hCc$ be a special Frobenius algebra with $N_F^2=\id$, and let $E_\gamma$ be a family of decorated spin world sheets from $\bbSigma$ to $\bbSigma'$. The correlators for $F$ satisfy
	\be
	\hZc_{\Cc}(E_{\widetilde\gamma})(\Corr^\mathrm{spin}_F(\bbSigma)) = \Corr^\mathrm{spin}_F(\bbSigma')~.
	\ee
\end{theorem}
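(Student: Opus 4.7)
The strategy is to reduce the statement to the already established transport result for oriented parity CFT (Theorem~\ref{thm:parityCFT-transport}) applied to the trivial algebra $B=\One$, by converting the family $E_\gamma$ of spin world sheets into a family of oriented world sheets with $F$-defect networks.

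First, I would fix admissible marked polygonal decompositions $T_{\bbSigma}$ and $T_{\bbSigma'}$ presenting the spin structures at the endpoints $\gamma(0)=\bbSigma$ and $\gamma(1)=\bbSigma'$, so that $W_\mathrm{or}(\bbSigma;F)$ and $W_\mathrm{or}(\bbSigma';F)$ are defined as in Section~\ref{sec:spin-bnd-def}. The key observation is that the chosen decomposition at $t=0$, being a topological/combinatorial piece of data, can be transported continuously along $\gamma$ (with edge indices held fixed): since the underlying surface deforms smoothly and no topology changes, the decomposition stays admissible throughout and describes the pulled-back spin structure at each $t\in[0,1]$. This produces a family $E_\gamma^\mathrm{or}$ of oriented decorated world sheets for the algebra $B=\One$, obtained by applying $W_\mathrm{or}(-;F)$ fibrewise. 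Taking doubles gives precisely the family $E_{\widetilde\gamma}$ of $\hCc$-extended surfaces.

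Next, I would apply Theorem~\ref{thm:parityCFT-transport} to $B=\One$ and to this family, which yields
\begin{equation*}
\hZc_\Cc(E_{\widetilde\gamma})\big(\Corr^\mathrm{or}_{B=\One}(W_\mathrm{or}(\bbSigma;F))\big)
= \Corr^\mathrm{or}_{B=\One}\big(\text{endpoint of }E_\gamma^\mathrm{or}\big)~.
\end{equation*}
Unpacking definitions via \eqref{eq:spinCFT-corr-with-def}, the left-hand side is exactly $\hZc_\Cc(E_{\widetilde\gamma})(\Corr^\mathrm{spin}_F(\bbSigma))$, which is what we want.

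The remaining and only subtle point is that the endpoint of the transported combinatorial decomposition need not coincide with the initially chosen decomposition $T_{\bbSigma'}$ used to define $W_\mathrm{or}(\bbSigma';F)$; the two decompositions present the same spin structure on the same surface, but with possibly different polygonal structure, edge orientations, marked edges, and edge indices. This is where Proposition~\ref{prop:spinCFT-indep-T-choice} does the work: the two corresponding oriented world sheets differ only in the combinatorial encoding of the spin structure by an $F$-defect network, and hence they assign the same correlator in the oriented $B=\One$ theory. Chaining this equality with the previous display yields $\Corr^\mathrm{spin}_F(\bbSigma')$ on the right.

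The main conceptual hurdle is the mismatch between the continuous nature of transport and the discrete nature of the combinatorial presentation of spin structures; once Proposition~\ref{prop:spinCFT-indep-T-choice} is invoked to absorb this mismatch, the theorem reduces to a direct application of the oriented transport result. No new TFT computation is required beyond those already used in Section~\ref{sec:parityCFT-consistency}.
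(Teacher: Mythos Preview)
Your proposal is correct and follows essentially the same route as the paper: reduce to Theorem~\ref{thm:parityCFT-transport} for $B=\One$ by applying $W_\mathrm{or}(-;F)$ fibrewise to obtain a family of oriented world sheets with $F$-defect networks.

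The only difference is bookkeeping. In the paper, the notion of ``family of decorated spin world sheets'' is set up so that the admissibly marked polygonal decomposition is part of the data and varies continuously along~$\gamma$; hence the decomposition at $t=1$ is by definition the one used for $\bbSigma'$, and no endpoint mismatch arises. You instead fix decompositions at both endpoints independently, transport the $t=0$ one, and then invoke Proposition~\ref{prop:spinCFT-indep-T-choice} to reconcile. Your version is slightly more explicit about what could go wrong, while the paper's version absorbs this into the definition; either way the substance of the argument is the same one-line reduction to the oriented case.
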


\begin{proof}
From the definition in \eqref{eq:spinCFT-corr-with-def} we see that we need to show 
$\Corr^\mathrm{or}_{B=\One}(\Sigma)=
\Corr^\mathrm{or}_{B=\One}(\Sigma')$, where
$\Sigma = W_\text{or}(\gamma(0);F)$ and
$\Sigma' = W_\text{or}(\gamma(1);F)$. But this follows from applying
Theorem~\ref{thm:parityCFT-transport}
to the path $W_\text{or}(\gamma(t);F)$.
\end{proof}

As in Corollary~\ref{cor:parityCFT-MCG} in the oriented case, compatibility with transport implies mapping class group(oid) invariance:
Let $f \colon  \bbSigma \lra \bbSigma'$ be an isomorphism of decorated spin world sheets. This induces a diffeomorphism $\tilde f \colon  \widetilde\Sigma \lra \widetilde\Sigma'$ between the corresponding doubles of $W_\text{or}(\bbSigma;F)$ and $W_\text{or}(\bbSigma';F)$, respectively. Note that $\widetilde\Sigma$ and $\widetilde\Sigma'$ do not depend on the choice of polygonal decomposition.
Let $E_{\tilde f}$ be the mapping cylinder for $\tilde f$.

\begin{corollary}\label{cor:spinCFT-MCG}
	We have
	\be
	\hZc_{\Cc}(E_{\tilde f})(\Corr^\mathrm{spin}_F(\bbSigma)) = \Corr^\mathrm{spin}_F(\bbSigma')~.
	\ee
\end{corollary}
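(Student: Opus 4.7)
The plan is to reduce this statement to Theorem~\ref{thm:spin-CFT-transport} in exactly the same way that Corollary~\ref{cor:parityCFT-MCG} was deduced from Theorem~\ref{thm:parityCFT-transport} in the oriented setting.

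First I would observe that an isomorphism $f \colon \bbSigma \lra \bbSigma'$ of decorated spin world sheets can be realised as a family of decorated spin world sheets over $[0,1]$. Concretely, take the constant family with fibre $\bbSigma$ over $[0,1)$ and identify the fibre over $t=1$ with $\bbSigma'$ via $f$; this is the mapping cylinder of $f$ in the appropriate category of spin world sheets and produces a path $\gamma$ with $\gamma(0)=\bbSigma$ and $\gamma(1)=\bbSigma'$. Since $f$ preserves all decoration (orientation, spin structure, insertion data, boundary and defect labels), this is indeed a family of decorated spin world sheets in the sense used in Section~\ref{sec:spinCFT-consistency}; in particular one can choose a combinatorial presentation of the spin structure that varies continuously along the family, e.g.\ by fixing $T_{\Sigma_b}(o,m,s)$ on $\bbSigma$ and transporting it along the cylinder, ending at its pushforward under $f$ on $\bbSigma'$. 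By Proposition~\ref{prop:spinCFT-indep-T-choice} the correlator $\Corr^\mathrm{spin}_F(\bbSigma')$ is independent of this latter choice.

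Next I would identify the bordism $E_{\widetilde\gamma}$ obtained by taking the double of $W_\text{or}(\gamma(t);F)$ in each fibre with the mapping cylinder $E_{\tilde f}$. Doubling is functorial under isomorphisms of world sheets (it only depends on the underlying oriented manifold together with its boundary and marked-point data, all of which $f$ preserves), so $\tilde f$ is precisely the diffeomorphism of doubles induced by the fibrewise identification at $t=1$. Hence $E_{\widetilde\gamma}$ and $E_{\tilde f}$ are diffeomorphic as bordisms of $\hCc$-extended surfaces, and therefore $\hZc_{\Cc}$ assigns to them the same morphism. Theorem~\ref{thm:spin-CFT-transport} then gives
\begin{equation}
\hZc_{\Cc}(E_{\tilde f})(\Corr^\mathrm{spin}_F(\bbSigma))
=\hZc_{\Cc}(E_{\widetilde\gamma})(\Corr^\mathrm{spin}_F(\bbSigma))
=\Corr^\mathrm{spin}_F(\bbSigma')\,.
\end{equation}

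The only slightly non-trivial point, and the one I expect to be the main obstacle, is making rigorous that the mapping cylinder of a spin-world-sheet isomorphism really does define a valid family in our setup, including a continuously varying admissibly marked polygonal decomposition. This is purely bookkeeping rather than a new idea: it suffices to pull back the combinatorial data of $\bbSigma$ along the projection of the mapping cylinder onto $\bbSigma$, use that $f$ is a spin-map so that the pushforward gives an admissible marking encoding the spin structure on $\bbSigma'$, and invoke Proposition~\ref{prop:spinCFT-indep-T-choice} to absorb any residual freedom in the combinatorial presentation. Once this is in place, the argument reduces to transport and the corollary follows as above.
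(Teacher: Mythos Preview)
Your proposal is correct and follows essentially the same approach as the paper: the paper states the corollary immediately after Theorem~\ref{thm:spin-CFT-transport} with only the remark ``As in Corollary~\ref{cor:parityCFT-MCG} in the oriented case, compatibility with transport implies mapping class group(oid) invariance,'' and the proof of Corollary~\ref{cor:parityCFT-MCG} itself is just the observation that the mapping cylinder is a special case of a family. Your additional care about the continuously varying combinatorial presentation of the spin structure and the appeal to Proposition~\ref{prop:spinCFT-indep-T-choice} is appropriate detail that the paper leaves implicit.
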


\subsubsection*{Gluing}

As in the oriented case, we need to choose basis / dual basis pairs in the multiplicity spaces. For boundary fields we can take the choice made in \eqref{eq:bnd-field-basis} (with $F$ in place of $B$). For bulk fields we need to take into account their type $x \in \{ 0,1\}$, namely we choose bases 
    \begin{equation}\label{eq:bulk-field-basis_type-x}
    \{ \alpha \} ~\subset~\Mc^{(\mathrm{in})}_{x,\epsilon}(U,\bar{V};X)    
    \quad \text{and} \quad
    \{ \bar\beta \} ~\subset~\Mc^{(\mathrm{out})}_{x,\epsilon}(U,\bar{V};X)~,
    \end{equation}
    which are dual to each other with respect to the pairing \eqref{eq:M-out-M-in-pairing}.

Let $\bbSigma$ be a decorated spin world sheet for $F$ and $\bbSigma_b$ the underlying open-closed spin bordism with defects.
Let $b_\mathrm{in} \in \partial^t_\mathrm{in} \bbSigma_b$ and $b_\mathrm{out}  \in \partial^t_\mathrm{out} \bbSigma_b$, $t \in \{ \mathrm{c}, \mathrm{o} \}$ be two open or two closed gluing boundaries. We require that they are parametrised by the same (half-)annulus with defects, and that in the closed case they are of the same type $x$, cf.\ \eqref{eq:table-type-vs-NSR}. These conditions ensure that they can be consistently glued. 

As in the oriented case, we write
\begin{align}
    \text{closed:}~~\bbSigma(U,\bar V, \eps, \alpha, \bar\beta)
    \quad , \quad
    \text{open:}~~\bbSigma(W, \eps, \alpha, \bar\beta)
\end{align}
for the world sheet $\bbSigma$ where the puncture of the (half-)disc glued to $b_{\mathrm{in}/\mathrm{out}}$ is labelled as in the oriented case in Section~\ref{sec:parityCFT-consistency}, with $\alpha,\bar\beta$ the basis vectors introduced above.
Let $\bbSigma_b'$ be the bordism obtained by gluing $b_\mathrm{in}$ to $b_\mathrm{out}$, and let $\bbSigma'$ be the corresponding decorated world sheet, which has the same decoration as $\bbSigma$ away from the two (half-)discs which got omitted in the gluing.

Recall the gluing bordisms $G_{W, \eps}$ and $G_{U,\bar V, \eps}$ from \eqref{eq:gluing-mfd-open} and \eqref{eq:gluing-mfd-closed}.
The compatibility with gluing is the following statement:

\begin{theorem}\label{thm:paritCFT-glue-spin}
Let $F \in \hCc$ be a special Frobenius algebra with $N_F^2=\id$. Then:
\begin{itemize}
    \item \textit{(closed gluing)} Let $\bbSigma'$ be obtained by gluing from $\bbSigma(U,\bar V,\eps,\alpha,\bar\beta)$ as above. The correlators for $F$ satisfy
    \begin{equation}
    \Corr^\mathrm{spin}_F(\bbSigma') = \sum_{U,\bar V,\eps,\alpha} 
    \hZc_{\Cc}(G_{U,\bar V, \eps}) \big(\,\Corr^\mathrm{spin}_F(\bbSigma(U,\bar V,\eps,\alpha,\bar\alpha))\,\big)~,
    \end{equation}
    where $\alpha$ runs over the basis in \eqref{eq:bulk-field-basis_type-x} for the type $x$ of the boundary components $b_{\mathrm{in}/\mathrm{out}}$.
    \item \textit{(open gluing)} Let $\bbSigma'$ be obtained by gluing from $\bbSigma(W,\eps,\alpha,\bar\beta)$ as above. The correlators for $B$ satisfy
    \begin{equation}
    \Corr^\mathrm{spin}_F(\bbSigma') = \sum_{W,\eps,\alpha} 
    \hZc_{\Cc}(G_{W, \eps}) \big(\,\Corr^\mathrm{spin}_F(\bbSigma(W,\eps,\alpha,\bar\alpha))\,\big)~.
    \end{equation}
\end{itemize}
\end{theorem}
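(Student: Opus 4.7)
The strategy is to reduce both statements to Theorem~\ref{thm:paritCFT-glue} applied to the underlying oriented world sheet $W_\text{or}(\bbSigma';F)$ for $B=\One$, and then to identify the $F$-defect fragments that encode the spin structure near the glued-in (half-)discs with the idempotents that cut out the spin multiplicity spaces from the oriented multiplicity spaces.

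First, I would use Proposition~\ref{prop:spinCFT-indep-T-choice} to pick a marked polygonal decomposition $T_{\Sigma'_b}$ on $\bbSigma'$ whose restriction to a collar neighbourhood of the image of the gluing boundary has the standard local form of Figures~\ref{fig:spin-defect-network}--\ref{fig:face-w-bdry-dn}. Cutting this decomposition along the gluing circle (or arc) produces admissibly marked decompositions on $\bbSigma(U,\bar V,\eps,\alpha,\bar\beta)$ and $\bbSigma(W,\eps,\alpha,\bar\beta)$, with edge labels satisfying $s^\mathrm{in}_i+s^\mathrm{out}_i=s_i$ as in \eqref{eq:glued-edge-label}; in the closed case the type at the two new boundaries is forced by admissibility \eqref{eq:admissibility-boundary} to be the same $x\in\Zb_2$. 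Because $W_\text{or}(-;F)$ is defined plaquette-by-plaquette from the decomposition, the $F$-defect network of $W_\text{or}(\bbSigma';F)$ near the gluing locus is literally the concatenation of the two networks of $W_\text{or}(\bbSigma(\dots);F)$, minus the defect fragments prescribed by Figures~\ref{fig:spin-defect-network}\,c)/d) or~\ref{fig:face-w-bdry-dn}\,d)/e) that sit on the inserted (half-)discs.

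Applying Theorem~\ref{thm:paritCFT-glue} to $W_\text{or}(\bbSigma';F)$ cut along the chosen curve yields the sum over oriented bulk or boundary labels for the $B=\One$ theory with defect $F$ attached, i.e.\ over the full spaces $\hCc(U\otimes\bar V\otimes K^\eps,F)$ or $\hCc(W\otimes K^\eps,M^*\otimes X\otimes N)$. The crucial step is then to read the extra $F$-defect fragments as idempotents. In the closed case the graph of Figure~\ref{fig:spin-defect-network}\,c)/d) with $s_e=x-1$ (the edge index forced by \eqref{eq:admissibility-boundary}) acts on $\hCc(U\otimes\bar V\otimes K^\eps,F)$ precisely as the projector $Q^{(\mathrm{in})}_x$ of \eqref{eq:idempot-Q-action}, as illustrated in Figure~\ref{fig:spin-cylinder-defect-network}\,d) and in the first identity of Figure~\ref{fig:F-defect-manipulation}\,e); its image is by definition $\Mc^{(\mathrm{in})}_{x,\eps}(U,\bar V;F)$. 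A dual basis argument based on the non-degenerate pairing $\langle-,-\rangle_\mathrm{bulk}$ of Lemma~\ref{lem:M-out-M-in-pairing} (combined with the identity \eqref{eq:Q-pairing-aux1}) then replaces the sum over the full Hom-space by a sum over the basis \eqref{eq:bulk-field-basis_type-x}, reproducing the right-hand side of the theorem. In the open case one likewise identifies the defect fragment of Figure~\ref{fig:face-w-bdry-dn}\,d)/e) with the idempotent projecting $M^*\otimes X\otimes N$ onto $M^\dagger\otimes_F X\otimes_F N$ (compare Figure~\ref{fig:bnd-cond-dagger-dual}), after which the basis \eqref{eq:bnd-field-basis} takes over via the trace pairing \eqref{eq:Hom-out-Hom-in-pairing}.

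The main technical obstacle is the bookkeeping of Nakayama twists. Each time a plaquette edge is crossed by an $F$-line, a power $N_F^{s_e}$ is inserted; when one cuts and regroups these along the gluing curve, one must check that the accumulated twists combine with the $N_F^{x-1}$-decoration on the $F$-loop of Figure~\ref{fig:spin-defect-network}\,c)/d), and with the twists implicit in the relative tensor products $\otimes_F$ and in the $\dagger$-duals of \eqref{eq:Y-dagger-def} and \eqref{eq:dagger-dual-ev-coev}, to give exactly the projector $Q^{(t)}_x$ (respectively the $\dagger$-tensor product idempotent), on the nose. Here the hypothesis $N_F^{\,2}=\id_F$ is used in an essential way, and the diagrammatic moves of Figure~\ref{fig:F-defect-manipulation} together with the arguments already used in Appendix~\ref{app:proof-of-factorisation} and in the proof of Proposition~\ref{prop:spinCFT-indep-T-choice} supply the necessary local identities to complete the verification.
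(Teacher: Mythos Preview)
Your proposal is correct and follows essentially the same route as the paper: apply Theorem~\ref{thm:paritCFT-glue} to $W_\text{or}(\bbSigma';F)$ for $B=\One$, obtain a sum over the full oriented multiplicity spaces, and then observe that the $F$-defect fragments on the glued-in (half-)discs act as the projectors $Q^{(\mathrm{in/out})}_x$ so that only the spin multiplicity basis survives. The paper's proof is more terse---it simply points to Figure~\ref{fig:cut-surface-cm-detail-spin} to exhibit the projector and states that summands with $\alpha\notin\Mc^{(\mathrm{in})}_{x,\epsilon}$ vanish---whereas you elaborate more on the Nakayama bookkeeping, but the substance is the same.
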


\begin{figure}
    \centering
    \includegraphics{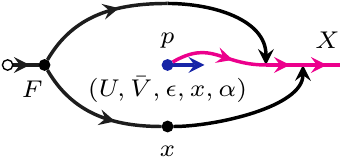}
    \hspace{3em}
    \includegraphics{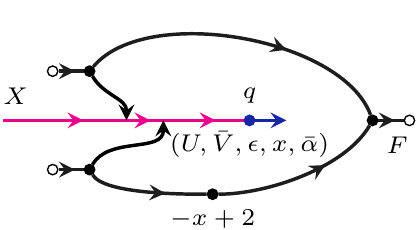}
    \caption{Surface $\bbSigma(U,\bar V,\eps,x,\alpha,\bar\alpha)$ with $F$-defect network at the in- and outgoing insertion points $p$ and $q$.}
    \label{fig:cut-surface-cm-detail-spin}
\end{figure}

\begin{proof}
    The proof in the open case is almost identical to the proof in the oriented case. The only difference is the presence of Nakayama automorphisms of the $F$-modules due to the edge indices on the gluing boundary. By the gluing construction \eqref{eq:glued-edge-label} the Nakayama automorphisms simply compose after gluing.

    Let us consider the closed case. By Theorem~\ref{thm:paritCFT-glue} we have
    \begin{align}
    \begin{aligned}
        \Corr^\mathrm{spin}_F(\bbSigma')& \stackrel{\text{def.}}{=} 
        \Corr^\mathrm{or}_{B=\One}(W_\text{or}(\bbSigma;F)) \\
        &\stackrel{\text{Thm.\,\ref{thm:paritCFT-glue}}}{=}
        \sum_{U,\bar V,\eps,\alpha} 
        \hZc_{\Cc}(G_{U,\bar V, \eps}) \big(\,\Corr^\mathrm{or}_B(W_\text{or}(\bbSigma(U,\bar V,\eps,x,\alpha,\bar\alpha)))\,\big)\,,
    \end{aligned}
    \end{align}
    where the sum runs over a basis of $\hCc(U\otimes\bar{V}\boxtimes K^{\eps},X)$ i.e.\ the multiplicity space for the oriented $B=\One$ theory.
    Observe that whenever $\alpha\not\in\Mc^{(\mathrm{in})}_{x,\epsilon}(U,\bar{V};X)$ the corresponding summand is 0, as the $F$-defect network on $\Sigma$ near the insertion points contains the projector $Q^\mathrm{(in/out)}_{x,\eps}$, as shown in Figure~\ref{fig:cut-surface-cm-detail-spin}.
    We skip the details on how the projectors arise, the computation is analogous to that in Figure~\ref{fig:3-bulk-insertions-spin-dn}.    
    Therefore the sum restricts to a basis of the multiplicity space for the $F$ theory and we have    
    \begin{align}
    \begin{aligned}
        \Corr^\mathrm{spin}_F(\bbSigma') & =
        \sum_{\substack{U,\bar V,\eps,\alpha\\ \alpha\in\Mc}} 
        \hZc_{\Cc}(G_{U,\bar V, \eps}) \big(\,\Corr^\mathrm{or}_B(W_\text{or}(\bbSigma(U,\bar V,\eps,x,\alpha,\bar\alpha)))\,\big) \\
        & \stackrel{\text{def.}}{=}
        \sum_{U,\bar V,\eps,\alpha} 
        \hZc_{\Cc}(G_{U,\bar V, \eps}) \big(\,\Corr^\mathrm{spin}_F(\bbSigma(U,\bar V,\eps,x,\alpha,\bar\alpha))\,\big)~.
    \end{aligned}
    \end{align}
\end{proof}

\subsection{Example: torus partition functions for the four spin structures}\label{sec:spin-torus-example}

\begin{figure}[tb]
    \centering
    \begin{align*}
        \mathrm{a})&&\mathrm{b})&&\mathrm{c})&\\
    &\raisebox{-0.5\height}{\includegraphics[scale=1.0]{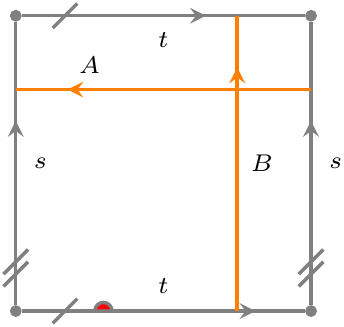}}
    &&\raisebox{-0.5\height}{\includegraphics[scale=1.0]{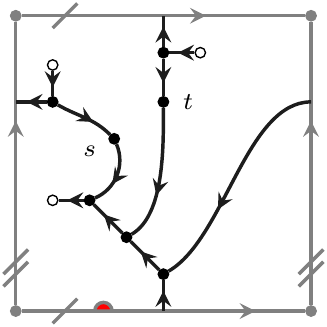}}
    &&\raisebox{-0.5\height}{\includegraphics[scale=1.0]{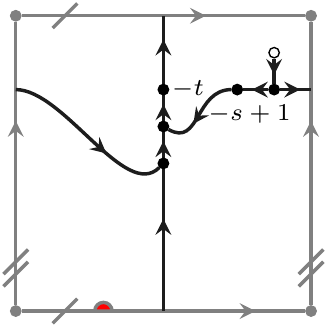}}
    \end{align*}
\caption{a) Marked cell decomposition of a torus $\Tb^2_{s,t}$ with spin structure described by the holonomies $s,t\in\Zb_2$ along the loops $A$ and $B$ respectively.
    b) Defect network for the spin torus $\Tb^2_{s,t}$.
    c) Simplified defect network for the spin torus $\Tb^2_{s,t}$.}
    \label{fig:spin-torus}
\end{figure}

Consider the spin tori $\Tb^2_{s,t}$ in Figure~\ref{fig:spin-torus}\,a), 
where $s,t\in\Zb_2$ denote the holonomies $s,t\in\Zb_2$ along the loops $A$ and $B$: $\zeta(A)=s$, $\zeta(B)=t$.
The action of the mapping class group on the set $\{\Tb^2_{s,t}\}_{s,t\in\Zb_2}$ is as follows \cite[Lem.\,3.3.1]{Szegedy:2018phd}:
\begin{align}\label{eq:ST-action-spin-torus}
    T=T_A\colon \Tb^2_{s,t}&\lra \Tb^2_{s,t+s}
    \nonumber\\
    S=T_B T_A T_B\colon \Tb^2_{s,t}&\lra \Tb^2_{t,s}\,,
\end{align}
where $T_A$ and $T_B$ are Dehn twists along the loops $A$ and $B$ (here we explicitly use $s=-s$, $t=-t$, cf.\ Remark~\ref{rem:actually-in-Z}). Note that $T_A$ appears instead of $T_A^{-1}$ due to the non-standard orientation of $A$.

The correlator for $\mathbb{T}^2_{s,t}$ is an element of $\Bl(T^2)$. We use the basis $\big\{ \chi_i \otimes \overline\chi_j\big\}_{i,j\in\Ic}$ for $\Bl(T^2)$ as in Section~\ref{sec:ex-torus-pf}, and we would like to compute 
the coefficients in the expansion
\be\label{eq:spin-parity-Zij-def}
\Corr^\mathrm{spin}_F(\mathbb{T}^2_{s,t}) ~=~ \sum_{i,j\in\Ic} Z(F;s,t)_{ij} \,  \chi_i \otimes \overline\chi_j
~\in~ \Bl(T^2)~.
\ee

The result is expressed in terms of traces over multiplicity spaces of the linear map given by precomposing with an appropriate power of the Nakayama automorphism:

\begin{proposition}\label{prop:torus-coefficient-spin-case}
For $F \in\hCc$ a special Frobenius algebra with $N_F^2=\id$, the coefficients $Z(F;s,t)_{ij}$ in \eqref{eq:spin-parity-Zij-def} are given in terms of the multiplicity spaces in \eqref{eq:state-space-bulk-spin} as
\begin{equation}\label{eq:spin-parity-Zij-result}
    Z(F;s,t)_{ij} ~=~ 
    \mathrm{tr}_{\Mc_{1-s,+}^{(\mathrm{in})}(S_i,\bar S_j; F)}\big( N^{-t}_F \circ (-) \big) ~-~
    \mathrm{tr}_{\Mc_{1-s,-}^{(\mathrm{in})}(S_i,\bar S_j; F)} \big(N^{-t}_F \circ (-) \big)
    \end{equation}
\end{proposition}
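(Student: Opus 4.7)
The plan is to generalise the computation in Proposition~\ref{prop:oriented-parity-torus-pf}, replacing the trivial $B$-network on the torus by the $F$-defect network in Figure~\ref{fig:spin-torus}\,c) which encodes the spin structure $(s,t)$ via insertions of powers of $N_F$.

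First I would use the definition~\eqref{eq:spinCFT-corr-with-def} to write $\Corr^\mathrm{spin}_F(\mathbb{T}^2_{s,t}) = \Corr^\mathrm{or}_{B=\One}(W_\text{or}(\mathbb{T}^2_{s,t};F))$, and then pair with the dual basis $\chi_i^* \otimes \overline\chi_j^*$ of $\Bl(T^2)^*$, exactly as in the oriented case leading to~\eqref{eq:oriented-torus-pf-Cij}. This yields
$$ Z(F;s,t)_{ij} ~=~ \hZc_{\Cc}(C_{i,j}^{s,t}) ~, $$
where $C_{i,j}^{s,t}$ is the 3-manifold $S^2 \times S^1$ carrying (i) an $S_i^*$- and a $\bar S_j^*$-ribbon along the $S^1$-factor, and (ii) the image of the $F$-defect graph of Figure~\ref{fig:spin-torus}\,c) in the connecting manifold, with its $N_F^s$ and $N_F^t$ insertions on the two $F$-loops.

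Next I would insert the resolution of the identity~\eqref{eq:oriented-torus-pf-id_decomp} on $S_i^* \otimes F \otimes S_j^*$, using bases $\{\alpha_{k,\eps}\}$ and dual bases $\{\bar\alpha_{k,\eps}\}$ in $\hCc(S_k \otimes K^\eps, S_i^* \otimes F \otimes S_j^*)$. As in \eqref{eq:oriented-torus-pf-aux2}--\eqref{eq:oriented-torus-pf-aux3}, only $S_k = \One$ survives, and
$$ Z(F;s,t)_{ij} ~=~ \sum_{\eps \in \{\pm 1\}} \eps \cdot \lambda^{s,t}_{\One,\eps} ~, $$
where $\lambda^{s,t}_{\One,\eps} \cdot \id_{K^\eps} = \sum_{\alpha} \bar\alpha_{\One,\eps} \circ p_{s,t} \circ \alpha_{\One,\eps}$ with $p_{s,t}$ the spin-generalisation of the idempotent~\eqref{eq:oriented-torus-pf-p_def}, namely an endomorphism of $S_i^* \otimes F \otimes S_j^*$ obtained by connecting the two Hopf-link--like $F$-strands of the defect graph, now decorated with $N_F^s$ (on the $A$-cycle loop) and $N_F^t$ (on the $B$-cycle loop).

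The main identification to make is that $p_{s,t}$, transported to $\hCc(S_i \otimes S_j \otimes K^\eps, F)$ via the linear isomorphism used in the last step of the proof of Proposition~\ref{prop:oriented-parity-torus-pf}, decomposes as a product of two commuting operators: one part, coming from the $A$-cycle loop carrying $N_F^s$, is precisely the projector $Q^{(\mathrm{in})}_{1-s}$ of~\eqref{eq:idempot-Q-action} onto $\Mc^{(\mathrm{in})}_{1-s,\eps}(S_i,\bar S_j;F)$; the shift $s \mapsto 1-s$ arises from comparing the Nakayama pattern in the defect graph of Figure~\ref{fig:spin-torus}\,c) with the definition~\eqref{eq:idempot-Q-action} of $Q^{(\mathrm{in})}_n$ (equivalently, from the type versus holonomy relation~\eqref{eq:holo-via-label}). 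The remaining part, coming from the $B$-cycle loop carrying $N_F^t$, acts on $\Mc^{(\mathrm{in})}_{1-s,\eps}(S_i,\bar S_j;F)$ as post-composition with $N_F^{-t}$ (the sign coming from the orientation of the $B$-loop passing through the insertion, analogous to the bracketed computation in Figure~\ref{fig:bnd-cond-dagger-dual}). Once this identification is made, $\lambda^{s,t}_{\One,\eps}$ becomes the trace of this operator on the multiplicity space, and summing over $\eps$ with the sign $\eps$ gives~\eqref{eq:spin-parity-Zij-result}.

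The main obstacle will be the combinatorial bookkeeping in the previous paragraph: carefully tracking that the power $s$ on the $A$-loop in Figure~\ref{fig:spin-torus}\,c) becomes the projector onto the $(1-s)$-type sector (rather than the $s$-type or $(-s)$-type sector), and that the $B$-loop contributes $N_F^{-t}$ with exactly the right sign. This requires drawing out the Hopf-link configuration in $S^2 \times S^1$, sliding the $N_F^s$- and $N_F^t$-insertions along the $F$-ribbons using the moves in Figure~\ref{fig:F-defect-manipulation}, and comparing the result with the graphical definitions of $Q^{(\mathrm{in})}_n$ in~\eqref{eq:idempot-Q-action} and of $N_F$ in~\eqref{eq:def-Nakayama}. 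The specialisation $s=t=0$ (so $N_F^0 = \id$) must reproduce Proposition~\ref{prop:oriented-parity-torus-pf} with $1-s = 1$; this matches only because for symmetric $B$ the multiplicity space $\Mc^{(\mathrm{in})}_{n,\eps}$ is independent of $n$, as noted after~\eqref{eq:oriented-CFT-bulk-label}, so the $n=1$ appearing here and the $n=0$ of Proposition~\ref{prop:oriented-parity-torus-pf} give the same space. This compatibility check is also a good sanity check for the sign conventions.
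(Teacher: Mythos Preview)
Your proposal is correct and follows essentially the same approach as the paper. The paper streamlines the presentation slightly by directly defining two commuting endomorphisms $P_x$ and $M_y$ on $H = \hCc(K^\eps, S_i^* \otimes F \otimes S_j^*)$ (corresponding to your $A$-cycle and $B$-cycle contributions) and writing the invariant as $\mathrm{tr}_H(M_{-t}\,P_{1-s})$; it then makes explicit the isomorphism $f \colon \hCc(S_i \otimes S_j \otimes K^\eps, F) \to H$ and verifies by a short string-diagram computation that $f \circ Q^{(\mathrm{in})}_x = P_x \circ f$ and $f \circ (N_F^y \circ -) = M_y \circ f$, which is precisely the bookkeeping you identify as the main obstacle.
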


The proof is very similar to that of Proposition~\ref{prop:oriented-parity-torus-pf}, so we will be more brief here, except for when we give the missing details on the isomorphism $f$ promised at the end of that proof.

\begin{proof}
The coefficients in \eqref{eq:spin-parity-Zij-def} can be expressed as
\begin{equation}
    Z(F;s,t)_{ij} = (\chi_i^* \otimes_{\Cb} \overline\chi_j^*) \circ \Corr^\mathrm{or}_B(T^2) = \hZc_{\Cc}(C(s,t)_{i,j})~.
\end{equation}
Rather than giving $C(s,t)_{i,j}$, which is analogous to \eqref{eq:oriented-torus-pf-Cij}, we directly expand the identity as in \eqref{eq:oriented-torus-pf-id_decomp} and give the manifold corresponding to \eqref{eq:oriented-torus-pf-Cijae}. Namely, 
\begin{equation}\label{eq:spin-torus-pf-aux1}
    \hZc_{\Cc}(C(s,t)_{i,j}) = \sum_{\eps,\alpha} \hZc_{\Cc}(C(s,t)_{i,j}^{\eps,\alpha})~,
\end{equation}
where
\begin{equation}
    C(s,t)_{i,j}^{\eps,\alpha} = 
    \raisebox{-0.5\height}{\includegraphics{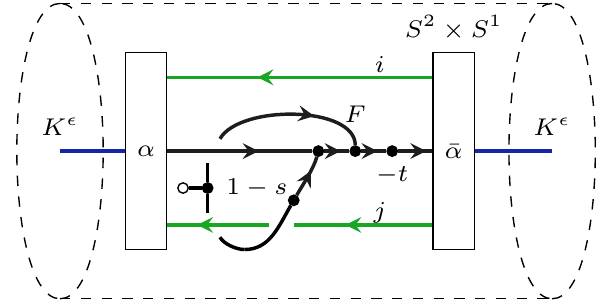}}
    \quad,
\end{equation}
and where we already used that in the sum over $S_k \otimes K^\eps$, only $\One \otimes K^\eps$ contributes. 
The $F$-ribbons and powers of the Nakayama automorphism are prescribed by the defect network in Figure~\ref{fig:spin-torus}\,c).

Next, set $H := \hCc(K^\eps,S_i^* \otimes F \otimes S_j^*)$ and
define the linear maps $P_x , M_y \colon  H \lra H$ as
\begin{equation}\label{eq:spin-torus-pf-aux3}
P_x(\phi) = \raisebox{-0.5\height}{\includegraphics[scale=1.0]{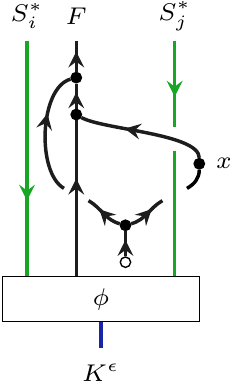}}
\quad , \quad
M_y(\phi) = \raisebox{-0.5\height}{\includegraphics[scale=1.0]{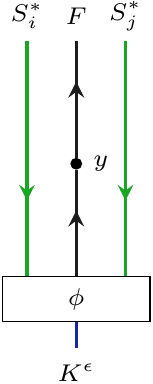}}
~.
\end{equation}
Then from \eqref{eq:spin-torus-pf-aux1} we get
\begin{equation}\label{eq:spin-torus-pf-aux2}
    \hZc_{\Cc}(C(s,t)_{i,j}) = \mathrm{tr}_H( M_{-t} P_{1-s}) 
    ~,
\end{equation}

In order to compute the trace, we will relate $P_x$ to $Q^{(\mathrm{in})}_x$ from \eqref{eq:idempot-Q-action}. Define the isomorphism
$f \colon  \hCc(S_i \otimes S_j \otimes K^\eps, F) \lra \hCc(K^\eps,S_i^* \otimes F \otimes S_j^*)$ via
\begin{equation}
    f(\psi) = \raisebox{-0.5\height}{\includegraphics{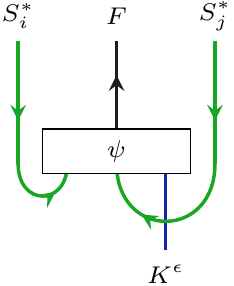}}~.
\end{equation}
We claim that
\begin{equation}
    f(Q^{(\mathrm{in})}_x(\psi)) = P_x(f(\psi))
    \quad,
    \quad
    f(N^y \circ \psi)) = M_y(f(\psi)) ~.
\end{equation}
The second equality is clear. For the first equality, one just writes out the left hand side and deforms the resulting string diagram:
\begin{equation}
f(Q^{(\mathrm{in})}_x(\psi)) = 
\raisebox{-0.5\height}{\includegraphics{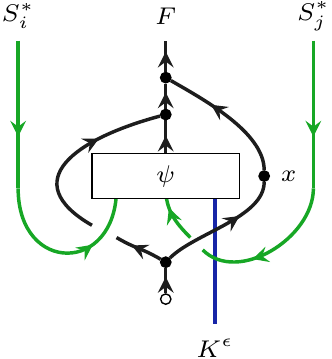}}
=
\raisebox{-0.5\height}{\includegraphics{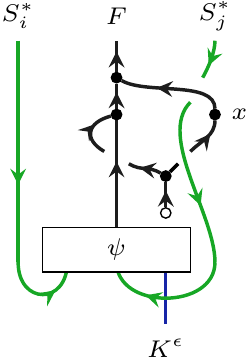}}
=  P_x(f(\psi))~.
\end{equation}

Via the isomorphism $f$, the trace over $H$ in \eqref{eq:spin-torus-pf-aux2} can instead be written as the trace of 
$\psi \lmt N_F^{-t} \circ Q^{(\mathrm{in})}_{s-1}(\psi)$ 
over $\hCc(S_i \otimes S_j \otimes K^\eps, F)$. Since $Q^{(\mathrm{in})}_x$ is an idempotent which commutes with $N_F \circ (-)$,
we can instead trace $N^{-t}_F \circ (-)$ over its image. This proves \eqref{eq:spin-parity-Zij-result}. 

Note that for $N_F=\id$, $P_x$ in \eqref{eq:spin-torus-pf-aux3} agrees with $p \circ (-)$ where $p$ is taken from \eqref{eq:oriented-torus-pf-p_def}, and so the above argument also provides the missing step at the end of the proof of Proposition~\ref{prop:oriented-parity-torus-pf}.
\end{proof}

\section{Examples}\label{sec:examples}

In this section we apply the formalism for oriented and spin CFT with or without parity to the simplest non-trivial case, namely when the algebra is 
a direct sum of $\One$ and an order-2 simple current in $\hCc$. 
For $G \in \Cc$ an invertible simple object with $G \otimes G \cong \One$, we consider the  special Frobenius algebras with underlying objects
\begin{align}
A_+ = \One\oplus G
\quad , \quad A_- = \One\oplus \Pi G  ~.
\label{eq:A-def}
\end{align}
In the case $G \cong \One$, only $\Pi\One$ has order two and we only consider $A_-$.
The results of this section have already been summarised in Table~\ref{tab:eight-cases}.

\subsection{The trivial CFT and the Arf invariant}
\label{sec:extra-case-example}

This example was already considered in \cite{Novak:2014sp} where it is Example 1. We now present this example in terms of the constructions in this paper.

\subsubsection{Chiral data}

Take $\Vc$ to the be the irreducible Virasoro VOA at $c=0$.
The stress tensor of this VOA is zero, and the VOA is spanned by a single state $\ket 0$ of conformal weight $h=0$: $\Vc = \Cb \ket 0$.
There is a single irreducible representation of $\Vc$ of conformal weight $h=0$, namely $\Vc$ itself. 
The category $\Cc = \Rep(\Vc)$ is just the category $\Vect$ and has a single simple object $\One$.
The CFTs constructed in this example will actually be 2d\,TFTs.

\subsubsection{Field content for $B$ -- the trivial 2d\,TFT}

The category $\hCc$ is given by $\Vect \boxtimes \SVect \cong \SVect$ and thus has two simple objects,
\[
\One = \One\boxtimes K^+\;,\;\;
\Pi\One = \One\boxtimes K^-\;.
\]

To construct the oriented CFT (with parity), we take the symmetric special Frobenius algebra $B = \One$ and the space of bulk fields can be computed from
\eqref{eq:idempot-Q-action} (remembering that $N_B=\id$ and so the idempotents are independent of $n$) and
is simply
\[
 \Hc^{(in)}_\text{bulk}(\One) =\One \boxtimes \bar \One \boxtimes K^+ \ .
\]
The only parity-odd field in the bulk lives at the end of the $\Pi\One$ defect, calculated from
\eqref{eq:idempot-Q-action} with $X=\Pi\One$ and notation as in \eqref{eq:state-space-bulkdef-oriented}
\[
 \Hc^{(in)}_\text{bulk}(\Pi\One) =\One \boxtimes \bar \One \boxtimes K^- \ .
\]
Altogether, this gives a CFT of type ${\mycircled{2}}_\text{bnd\&{}def}$, which becomes ${\mycircled{1}}_\text{bulk}$ if one only considers world sheets without boundaries and defects.

\subsubsection{Field content and partition functions for $F$ -- the Arf invariant}

For the algebra $F \in \hCc$, we take $F = A_- = \One \oplus \Pi\One = \Cl_1$, the Clifford algebra in one odd generator.
One checks that $N_F = \id_\One - \id_{\Pi\One}$ is the parity involution, and so
the algebra $F = A_-$ defines a CFT of the following type:
\begin{equation}
{\renewcommand{\arraystretch}{1.2}
\begin{tabular}{c|c|c}
    type of CFT & no parity & parity \\
     \hline
oriented  &  &
\\
\hline
spin &  & $F = \One \oplus \Pi\One$
\end{tabular}
}
\label{eq:extra-table}
\end{equation}

Since $N_F$ is no longer just the identity, the fields split into Neveu–Schwarz and Ramond sectors, with the following bulk field content:
\begin{align}
	\Hc_\text{bulk}^{\text{NS,even}}(F)
	&~=~ \One \boxtimes \bar\One \boxtimes K^+ 	~,
&
	\Hc_\text{bulk}^{\text{NS,odd}}(F)
&~=~   0\;,
	\nonumber
\\
	\Hc_\text{bulk}^{\text{R,even}}(F)
&~=~  0\;,
&
	\Hc_\text{bulk}^{\text{R,odd}}(F)
&~=~  \One \boxtimes \bar\One \boxtimes K^- ~.
\label{eq:spin-state-spaces-4-sectors_extra}
\end{align}
i.e. there are two bulk fields, both of conformal weight 0: a parity even field in the NS sector and a parity odd field in the Ramond sector.
The resulting theory is of type ${\mycircled{4}}_\text{bulk}$, or of type
${\mycircled{4}}_\text{bnd\&{}def}$ if one includes boundaries and defects. 

The $\tau$-independent partition functions in the four sectors are
\begin{align}
Z_F^{\text{NS,NS}} &~=~ 1
~,  &
Z_F^{\text{NS,R}} &~=~ 1
~,&
Z_F^{\text{R,NS}} &~=~ 1
~,&
Z_F^{\text{R,R}} &~=~ -1 ~.
\label{eq:FF-tau-spin-extra}
\end{align}
This agrees with Example 1 in \cite{Novak:2014sp}, where it is also verified that this 2d\,TFT computes the Arf-invariant of a spin structure.

\subsection{Computing multiplicity spaces for $G \ncong \One$}\label{sec:mult-summary}

Let $G \in \Cc$ be an invertible simple object with $G \ncong \One$, $G \otimes G \cong \One$. Write $\dim G \in \{\pm 1\}$ for its quantum dimension and $\theta_G \in \{ \pm1, \pm i\}$ for its twist eigenvalue (see Appendix~\ref{app:alg-mult-selfdual} for why these are the only possible values). 

For a rational VOA $\Vc$ and $\Cc = \Rep\Vc$, 
$\theta_G$ and $\dim G$ can be expressed 
via the lowest conformal weight $h_G$ of the representation $G$ and the $\mathsf{S}$-matrix describing the modular transformation of characters as
\begin{equation}
    \theta_G = e^{2 \pi i h_G}
    \quad , \quad
    \dim G = \frac{\mathsf{S}_{G,\One}}{\mathsf{S}_{\One,\One}}~.
\end{equation}

For $\nu \in \{\pm 1\}$ the object $A_\nu$ in \eqref{eq:A-def} carries a special Frobenius algebra structure iff $\theta_G \in \{\pm1\}$, in which case this structure is unique up to isomorphism and its Nakayama automorphism satisfies (Lemma~\ref{lem:sFa_1+G}):
\begin{itemize}
    \item $\dim G = \nu$: $A_\nu$ is symmetric, and so $N_{A_\nu} = \id_{A_\nu}$,
    \item $\dim G \neq \nu$: $A_\nu$ is not symmetric, i.e.\ $N_{A_\nu} \neq \id_{A_\nu}$, and we have $N_{A_\nu}^{\,2} = \id_{A_\nu}$.
\end{itemize}
The classification of the CFT defined by $\Vc$ and $A_{\nu}$ as in \eqref{eq:intro_type-table} is:
\begin{equation}\label{eq:CFT-type-via-dim-theta}
{\renewcommand{\arraystretch}{1.4}
	\begin{tabular}{c|c|c}
		type of CFT & no parity & parity \\
		\hline
		oriented  & $\nu=+1$ , $\dim G =+1$ & $\nu=-1$ , $\dim G =-1$ 
		\\
		\hline
		spin & $\nu=+1$ , $\dim G =-1$ & $\nu=-1$ , $\dim G =+1$ 
	\end{tabular}}
\end{equation}
Note that this is independent of the twist $\theta_G$. In unitary theories one necessarily has $\dim(G)>0$, so that 
for these examples, the off-diagonal entries (where the bulk is strictly of type \mycircled2 or \mycircled3)
will be non-unitary.
The diagonal entries may be either unitary or non-unitary, depending on $\Vc$.

In the examples below we would like to compute the bulk state space(s) and torus partition function(s) for each of the eight possibilities for $(\dim G, \nu, \theta_G)$ as in Table~\ref{tab:eight-cases}. For this, we need the so-called monodromy charge $q_U$ of a simple object $U \in \Cc$,
\begin{align}\label{eq:examples-mono-charge}
q_U ~=~ \frac{\theta_{G \otimes U}}{\theta_G \, \theta_U}
~=~ 
    \frac{\mathsf{s}_{U,G}}{\dim(U)\dim(G)}
    ~=~ 
    \frac{\mathsf{S}_{U,G} \, \mathsf{S}_{\One\One}}{\mathsf{S}_{\One,U} \, \mathsf{S}_{\One,G}} ~,
\end{align}
see \eqref{eq:qU-charge-def} and Lemma~\ref{lem:monodromy-expression}.
In the second equality we used that the Hopf link invariant $\mathsf{s}_{U,V}$ from \eqref{eq:modcat-s-matrix-def} can be expressed in terms of the modular $\mathsf{S}$-matrix as $\mathsf{s}_{UV} = \mathsf{S}_{UV}/\mathsf{S}_{\One\One}$.

The multiplicity spaces in \eqref{eq:mult-space-image-Q} are given by (see Lemma~\ref{lem:1+G_mult_compute})
\begin{align}\label{eq:M_xeps-explicit}
    \Mc_{x,\epsilon}(S_i,\bar{S}_j;A_{\nu})= M_\One \oplus M_G ~,
\end{align}
where
\begin{align}\label{eq:M1-MG-explicit}
    M_\One &= 
    \begin{cases}
    \Cc(S_i\otimes\bar{S}_j,\One)&\text{; if $\epsilon=+$ and $q_i=(\nu\dim(G))^x$}\\
    \{0\} &\text{; otherwise}
    \end{cases}
\nonumber\\
    M_G &= 
    \begin{cases}
    \Cc(S_i\otimes\bar{S}_j,G)&\text{; if $\epsilon=\nu$ and $q_i=(\nu\dim(G))^{x+1} \, \theta_G$}\\
    \{0\} &\text{; otherwise}
    \end{cases}
\end{align}
For $A_\nu$ symmetric, the choice of $x$ does not matter and one can take $x=0$. For $A_\nu$ not symmetric, the choice of $x$ gives the sector of the bulk field (NS/R) as in \eqref{eq:table-type-vs-NSR}.

Let us make some general observations on CFTs obtained from algebras of the form $A_\nu$
\begin{remark}
    \begin{enumerate}
        \item One can ask when the state space of the CFT defined by $A_\nu$ contains holomorphic or antiholomorphic fields in the representation $G$. From inspecting \eqref{eq:M1-MG-explicit} and using $q_G=1$ (Lemma~\ref{lem:monodromy-expression}), one concludes that the following three statements are equivalent:
        \begin{enumerate}
            \item $(G \boxtimes \overline{\One}) \otimes K^\eps$ is a direct summand in the state space $\Hc_\text{bulk}^{\mathrm{(in)}}(A_\nu)$ (for $N_{A_\nu}=\id$) or  $\bigoplus_{x \in \Zb_2}\Hc_\text{bulk}^{x,\mathrm{(in)}}(A_\nu)$  (for $N_{A_\nu}\neq \id$), cf.\ \eqref{eq:state-space-bulk-oriented} and \eqref{eq:state-space-bulk-spin}.
        
            \item $(\One \boxtimes \overline{G}) \otimes K^\eps$ is a direct summand in $\Hc_\text{bulk}^{\mathrm{(in)}}(A_\nu)$ or  $\bigoplus_{x \in \Zb_2}\Hc_\text{bulk}^{x,\mathrm{(in)}}(A_\nu)$.
        
            \item One has $\eps=\nu$ and 
            $(\nu\dim G)^{x+1} = \theta_G$ for some $x \in \Zb_2$.
        \end{enumerate}
In the oriented case we have $N_{A_\nu}=\id$, and so $\nu \dim G = 1$. Hence an (anti)holomorphic $G$ is contained in the state space iff $\theta_G=1$, i.e.\ iff $G$ has integer conformal weight. 

In the spin case, $\nu \dim G = -1$, and so there is \textit{always} an (anti)holomorphic copy $G$ in the state space. It is in the NS-sector ($x=0$) if $\theta_G = -1$ and in the R-sector ($x=1$) if $\theta_G=1$. 

This explains the column ``$G$ hol.'' in Table~\ref{tab:eight-cases}.

\item $A_\nu$ is commutative iff $\theta_G = \nu \dim G$ (Lemma~\ref{lem:sFa_1+G}). This explains the column ``$A_\nu$ com.''  in Table~\ref{tab:eight-cases}. For a rational VOA $\Vc$ and $\hCc = \Rep_{\Sc}(\Vc)$ (cf.\ \eqref{eq:hatC-RepS}), it follows that $A_\nu$ with $\nu  = \theta_G \dim G$ is again a VOA ($\nu=1$) or a VOSA ($\nu=-1$), see \cite[Thm.\,3.6]{Huang:2014ixa}, as well as \cite[Thm.\,3.9]{Creutzig:2015buk} and \cite[Thm.\,4.2]{Creutzig:2017}.
We will use this in the examples below to also present the results from the point of view of the extended VO(S)A $A_\nu$.
    \end{enumerate}
    \label{rem:Anu-CFT-properties}
\end{remark}

\subsection{Ising CFT and free fermions}
\label{sec:Ising-CFT-example}

\subsubsection{Chiral data}

Take $\Vc$ to be the irreducible Virasoro VOA at $c=\frac12$. This describes the 
chiral symmetry of the minimal model $M(3,4)$.  
The irreducible representations of $\Vc$ and their lowest conformal weight $h$ and monodromy charge $q$ are given by:
\begin{align}{
\renewcommand{\arraystretch}{1.5}
\begin{tabular}{c|c|c|c}
    name & $\One$  & $\sigma$& $\eps$ \\
     \hline
$h$  & $0$ & $\frac1{16}$ & $\frac12$ \\
     \hline
$q$ & $1$ & $-1$ & $1$
\end{tabular}}
\end{align}

We take $G=\eps$, so that
\begin{equation}
    \dim(G) = 1
    ~~,~~
    \theta_G = -1 ~.
\label{eq:7.8}
\end{equation}
By \eqref{eq:CFT-type-via-dim-theta}, the algebras $B = A_+$ and $F = A_-$ define CFTs of the following types:
\begin{equation}
{\renewcommand{\arraystretch}{1.2}
\begin{tabular}{c|c|c}
    type of CFT & no parity & parity \\
     \hline
oriented  & $B = \One \oplus \eps$ &
\\
\hline
spin &  & $F = \One \oplus \Pi\eps$
\end{tabular}
}
\label{eq:Ising-table}
\end{equation}

The algebra $B$ is Morita-equivalent to $\One$, but using $B = \One \oplus \eps$ exhibits more clearly the difference that the parity-shift of $\eps$ makes. 
By Remark~\ref{rem:Anu-CFT-properties}\,(2.), the algebra $F$ is commutative and in fact a VOSA, the free fermion VOSA. It has three irreducible (twisted) modules: $I = F$, the tensor unit, $\Pi I$, the parity shifted version where the ground state has odd parity, and the twisted module $R_t = \sigma \oplus \Pi \sigma$. 

To express the $\mathsf{S}$-matrix in terms of these representations of the VOSA $F$, we introduce the following combinations of characters of the Virasoro VOA $\Vc$:
\begin{equation}\label{eq:Ising-example-char-basis}
    \chi_I(\tau) = \chi_{\One}(\tau) + \chi_{\eps}(\tau)
    ~~,\quad
    \widetilde\chi_I(\tau) = \chi_{\One}(\tau) - \chi_{\eps}(\tau)
    ~~,\quad
    \chi_{R_t}(\tau) = 2\chi_{\sigma}(\tau)~.
\end{equation}
Here, $\chi_I$ is the usual trace of $q^{L_0-c/24}$ over $F$, $\widetilde\chi_I$ is the trace with an additional insertion of the automorphism $N_F$, which amounts to taking the super-trace. The corresponding expressions for $\Pi I$ just produce an overall minus sign. The twisted trace $\widetilde\chi_{R_t}(\tau)$ is identically zero.
In this basis, and in the order $\{ \chi_I, \widetilde\chi_I, \chi_{R_t}\}$ as above, the $\mathsf{S}$- and $\mathsf{T}$-matrices 
read
\begin{equation}\label{eq:Ising-ST}
    \mathsf{S} = \begin{pmatrix}
    1 & 0 & 0 \\
    0 & 0 & 1/\sqrt{2} \\
    0 & \sqrt{2} & 0 
    \end{pmatrix}
    \quad , \qquad
    \textsf{T} = e^{-\pi i/24} \begin{pmatrix}
    0 & 1 & 0 \\
    1 & 0 & 0 \\
    0 & 0 & e^{\pi i /8} 
    \end{pmatrix}
    \quad .
\end{equation}
We will see a similar block structure for the $\mathsf{S}$-matrix in the other examples. Note, however, that the $\mathsf{T}$-matrix is no longer diagonal.

\subsubsection{Field content and partition function for $B$ -- critical Ising CFT}

By \eqref{eq:state-space-bulk-oriented} we only need to consider $\Mc_{x,\eps}$ for $x=0$. 
Since $\nu=+$, from \eqref{eq:M_xeps-explicit} we see that all spaces 
$\Mc_{0,-}$ are zero. As expected, combining \eqref{eq:state-space-bulk-oriented} and \eqref{eq:M_xeps-explicit} we find
\begin{equation}
	\Hc_\text{bulk}(B) = 
	\One \boxtimes \bar\One
	\oplus
	\eps \boxtimes \bar\eps
	\oplus
	\sigma \boxtimes \bar\sigma ~.
	\end{equation}
Accordingly, by \eqref{eq:oriented-parity-Zij-result} the vectors $\Corr^\mathrm{or}_B(T^2) \in \Bl(T^2)$ are given by
\begin{align}
\Corr^\mathrm{or}_B(T^2) &~=~ 
\chi_\One \otimes_{\Cb} \overline\chi_\One
+
\chi_\eps \otimes_{\Cb} \overline\chi_\eps
+
\chi_\sigma \otimes_{\Cb} \overline\chi_\sigma ~.
\end{align}
To turn these into partition functions which depend on the modular parameter $\tau$, one has to replace the vectors $\chi_i \in \hZc_{\Cc}(T^2)$ from \eqref{eq:chi_i-def-via-Z-hat} by the $\tau$-dependent characters,
\begin{align}
    Z_B(\tau) 
    &= 
    \big|\chi_\One(\tau)\big|^2
+
\big|\chi_\eps(\tau)\big|^2
+
\big|\chi_\sigma(\tau)\big|^2
\nonumber\\
&= 
\tfrac12 \big|\chi_I(\tau)\big|^2
+ \tfrac12 \big|\widetilde\chi_I(\tau)\big|^2
+ \tfrac14 \big|\chi_{R_t}(\tau)\big|^2
 ~.
\end{align}
This is indeed the modular invariant partition function of the critical Ising CFT, as expected.

\subsubsection{Field content and partition functions for $F$ -- free fermion CFT}

The various sectors of the state spaces as in \eqref{eq:spin-state-spaces-4-sectors} can computed from the general formula in \eqref{eq:M_xeps-explicit}, 
\begin{align}
	\Hc_\text{bulk}^{\text{NS,even}}(F)
	&~=~ \One \boxtimes \bar\One \boxtimes K^+ ~\oplus~ 
	\eps \boxtimes \bar\eps \boxtimes K^+ 
	~,
	\nonumber
	\\
	\Hc_\text{bulk}^{\text{NS,odd}}(F)
&~=~   \eps \boxtimes \bar\One \boxtimes K^- ~\oplus~ 
	\One \boxtimes \bar\eps \boxtimes K^-  
	~,
	\nonumber
\\
	\Hc_\text{bulk}^{\text{R,even}}(F)
&~=~  \sigma \boxtimes \bar\sigma \boxtimes K^+ ~,
	\nonumber
\\
	\Hc_\text{bulk}^{\text{R,odd}}(F)
&~=~  \sigma \boxtimes \bar\sigma \boxtimes K^- ~,
\label{eq:spin-state-spaces-4-sectors_Ising}
\end{align}
which is of course the expected result for the free fermion state spaces, see e.g.\ \cite[\S 10.3]{YBk}.

The Nakayama automorphism $N_F$ acts as $\id$ on the summand $\One$ of $F$ and as $-\id$ on the summand $G \boxtimes K^-$. The precomposition with $N_F$ is thus equal to the identity on the multiplicity spaces $M_\One$ in \eqref{eq:M_xeps-explicit} and to minus the identity on $M_G$. One can now use \eqref{eq:spin-parity-Zij-result} to read off $\Corr^\mathrm{spin}_F$ for the four spin structures on the torus to be (recall that $\mathbb{T}^2_{s,t}$ is labelled by monodromy)
\begin{align}
\text{NS,}&\text{NS} &
    \Corr^\mathrm{spin}_F(\mathbb{T}^2_{1,1}) ~&=~ 
\chi_\One \otimes_{\Cb} \overline\chi_\One
+
\chi_\eps \otimes_{\Cb} \overline\chi_\eps
+
\chi_\eps \otimes_{\Cb} \overline\chi_\One
+
\chi_\One \otimes_{\Cb} \overline\chi_\eps
    \nonumber \\
\text{NS,}&\text{R} &
    \Corr^\mathrm{spin}_F(\mathbb{T}^2_{1,0}) ~&=~ 
\chi_\One \otimes_{\Cb} \overline\chi_\One
+
\chi_\eps \otimes_{\Cb} \overline\chi_\eps
-
\chi_\eps \otimes_{\Cb} \overline\chi_\One
-
\chi_\One \otimes_{\Cb} \overline\chi_\eps
    \nonumber \\
\text{R,}&\text{NS} &
    \Corr^\mathrm{spin}_F(\mathbb{T}^2_{0,1}) ~&=~
    2\,\chi_\sigma \otimes_{\Cb} \overline\chi_\sigma
    \nonumber \\
\text{R,}&\text{R} &
    \Corr^\mathrm{spin}_F(\mathbb{T}^2_{0,0}) ~&=~ 0
\end{align}
The corresponding $\tau$-dependent partition functions in the four sectors are
\begin{align}
Z_F^{\text{NS,NS}}(\tau) &~=~
\big|\,\chi_\One(\tau) + \chi_\eps(\tau)\,\big|^2
~=~ | \chi_I(\tau) |^2
~,
    \nonumber \\
Z_F^{\text{NS,R}}(\tau) &~=~
\big|\,\chi_\One(\tau) - \chi_\eps(\tau)\,\big|^2
~=~ | \widetilde\chi_I(\tau) |^2
~,
    \nonumber \\
Z_F^{\text{R,NS}}(\tau) &~=~
    2\,\big|\chi_\sigma(\tau)\big|^2 
~=~ \tfrac12 \, | \chi_{R_t}(\tau) |^2
    ~,
    \nonumber \\
Z_F^{\text{R,R}}(\tau) &~=~ 0 ~.
\label{eq:FF-tau-spin-PF}
\end{align}

Using \eqref{eq:Ising-ST} one can now see explicitly that these bilinear combinations of characters are compatible with the modular properties of spin tori listed in \eqref{eq:ST-action-spin-torus}. 

\subsection{Three-state Potts and fermionic tetra-critical Ising CFTs}\label{sec:Potts}

\begin{table}
\begin{center}
{\renewcommand{\arraystretch}{1.2}
\begin{tabular}{c|c|c|c|c|c}
$r=4$ & & & & & 
\\
\hline
$r=3$ & $\One_\varphi$ & $u_\varphi$ & $f_\varphi$ & 
    $v_\varphi$ & $w_\varphi$ 
\\
\hline
$r=2$ & & & & & 
\\
\hline
$r=1$ & $\One$ & $u$ & $f$ & $v$ & $w$ 
\\
\hline
 & $s=1$ & $s=2$ & $s=3$ & $s=4$ & $s=5$ 
\end{tabular}
}
\end{center}

\caption{Kac-table of the minimal model $\mathcal{M}_{5,6}$ of central charge $c=\frac45$, and the names we use for the distinct representations in the Kac-table.}
\label{tab:M56-Kac}
\end{table}

\subsubsection{Chiral data}\label{sec:Potts-chiral}

Take $\Vc$ to be the irreducible Virasoro VOA at $c=\frac45$ describing the 
chiral symmetry of the diagonal minimal model $M(5,6)$. The irreducible representations are (see Table~\ref{tab:M56-Kac} on how these correspond to entries in the Kac-table):
\begin{align}{
\renewcommand{\arraystretch}{1.5}
\begin{tabular}{c|c|c|c|c|c|c|c|c|c|c}
    name & $\One$ & $u$ & $f$ & $v$ & $w$  & 
    $\One_\varphi$ & $u_\varphi$ & $f_\varphi$ & $v_\varphi$ & $w_\varphi$ \\
     \hline
$h$  
& $0$ & $\frac18$ & $\frac23$ & $\frac{13}8$ & $3$ 
& $\frac75$ & $\frac{21}{40}$ & $\frac1{15}$ & $\frac1{40}$ & $\frac25$ 
\\
     \hline
$q$ & $1$ & $-1$ & $1$ & $-1$ & $1$ & $1$ & $-1$ & $1$ & $-1$ & $1$ 
\end{tabular}}
\end{align}

We take $G=w$, so that $\dim(G) = 1$, $\theta_G = 1$, and by \eqref{eq:CFT-type-via-dim-theta} we obtain CFTs of the following types:
\begin{equation}
{\renewcommand{\arraystretch}{1.2}
\begin{tabular}{c|c|c}
    type of CFT & no parity & parity \\
     \hline
oriented  & $B = \One \oplus w$ &
\\
\hline
spin &  & $F = \One \oplus \Pi w$
\end{tabular}
}
\label{tab:W3CFTs}    
\end{equation}
The oriented CFT obtained from $B$ is the critical three-state Potts model, the $D$-type modular invariant at that central charge.
The spectrum of the spin CFT defined by $F$ was first given in \cite{Hsieh:2020uwb}. In Table~\ref{tab:eight-cases} we refer to it as the fermionic tetra-critical Ising model, and we explain the name after discussing its field content in Section~\ref{sec:ferm-tetra-Is-field-cont}.

\medskip

Of the two algebras $B$ and $F$, only $B$ is commutative (Remark~\ref{rem:Anu-CFT-properties}), and the resulting VOA is the chiral symmetry of the first non-trivial member in the series of 
$W_3$-minimal models due to \cite{Zamolodchikov:1985wn}. 
As a VOA, $B$ is generated by the Virasoro VOA $\One \subset B$ and by the primary field $W(z)$ of weight 3 in $w \subset B$.
The VOA $B$ has six untwisted modules and two twisted modules:
\begin{align}{
\renewcommand{\arraystretch}{1.5}
\begin{tabular}{c||c|c|c|c|c|c||c|c}
& \multicolumn{6}{c||}{untwisted} & \multicolumn{2}{c}{twisted}
\\
\hline
    name & $I$ & $S$ & $S^*$ & $I_\varphi$ & $S_\varphi$ & $S^*_\varphi$ & $U_t$ & $U_{\varphi,t}$
    \\
     \hline
$h$  
& $0$ & $\frac23$ & $\frac23$ & $\frac25$ & $\frac1{15}$ & $\frac1{15}$ 
& $\frac18$ & $\frac{1}{40}$
\\
\hline
decomp.  
& $\One \oplus w$ & $\;f\;$ & $\;f\;$ & $\One_\varphi \oplus w_\varphi$ & $\,f_\varphi\,$ & $\,f_\varphi\,$  
& $u \oplus v$ & $u_\varphi \oplus v_\varphi$
\end{tabular}}
\end{align}
The representations $S$ and $S^*$ have the same underlying Virasoro representation $f$, but the action of $W(z)$ differs by a sign. Ditto for $S_\varphi$ and $S_\varphi^*$.

This implies that $S$ and $S^*$ have the same restricted character (given by the trace over $q^{L_0-c/24}$). To determine the $\mathsf{S}$-matrix one needs to work with characters including an insertion from the VOA $B$. These are also called unspecialised characters or torus one-point blocks (with an insertion from the vacuum sector). In this case an insertion of the zero mode $W_0$ of the field $W(z)$ is enough to lift the degeneracy. A corresponding computation for Bershadsky-Polyakov models is done in more detail in Section~\ref{sec:BP-example}, here we just note that traces over $W_0$ can be found in \cite[Sec.\,3]{Iles:2014gra}, and the $\mathsf{S}$-matrix of untwisted representations is given explicitly in \cite[Sec.\,4.2]{Cardy:1989ir}.

We group the functions needed to give the $\mathsf{S}$-matrix into three groups:
\begin{align}
    &\chi_x : \text{untwisted char.\ of untwisted reps.}
    &x &\in \{ I, I_\varphi, S, S_\varphi, S^*, S^*_\varphi \}
    \nonumber\\
    &\widetilde\chi_x : \text{twisted char.\ of untwisted reps.}
    &x &\in \{ I, I_\varphi \}
    \nonumber\\
    &\chi_{x_t} : \text{untwisted char.\ of twisted reps.}
    &x_t &\in \{ U_t, U_{\varphi,t} \}
    \nonumber\\
    &\widetilde\chi_{x_t} : \text{twisted char.\ of twisted reps.}
    &x_t &\in \{ U_t, U_{\varphi,t} \}
    \label{eq:W3-basis-chars}
\end{align}
Here, $\chi_x$ etc.\ denote the unspecialised characters. The twisted traces are obtained by inserting a $\Zb_2$-automorphism (in more detail: by working with $\Zb_2$-equivariant modules with respect to the $\Zb_2$-automorphism of $B$), for example $\widetilde\chi_I = \chi_\One - \chi_w$ (in terms of specialised characters). For $S$, $S^*$, $S_\varphi$, $S^*_\varphi$ there is no such $\Zb_2$-automorphism, hence they are not included.

The $\mathsf{S}$-matrix then has the following block decomposition
\begin{equation}\label{eq:W3-explicit-S}
\mathsf{S} = \left(
\begin{tabular}{c|cccc}
 & 
 $\chi$ & $\widetilde\chi$ &  $\chi_t$ & $\widetilde\chi_t$ 
 \\
 \hline
 $\chi$ & $C$ & $0$ & $0$ & $0$ 
 \\
 $\widetilde\chi$ & $0$ &  $0$ &  $D$  &  $0$  
 \\
 $\chi_t$ & $0$  &  $D$ &  $0$  &  $0$  
 \\
 $\widetilde\chi_t$ & $0$  &  $0$ &  $0$  &  $-D$   
\end{tabular}
\right)
\end{equation}
with the individual blocks given by, in the order of vectors as stated in \eqref{eq:W3-basis-chars},
\begin{align}
    C &= \frac1{\sqrt3}\begin{pmatrix}
    D & D & D
    \\
    D & \omega D & \omega^2 D
    \\
    D & \omega^2 D & \omega D
    \end{pmatrix}
    \quad , \quad
    D = \frac2{\sqrt5}\begin{pmatrix}
    \sin\tfrac{\pi}5 & \sin\tfrac{2\pi}5
    \\[.2em]
    \sin\tfrac{2\pi}5 & -\sin\tfrac{\pi}5
    \end{pmatrix}
    \quad , \quad
    \omega = e^{2 \pi i /3}   ~.
\end{align}

\subsubsection{Field content and partition function for $B$ -- critical Potts model}

Combining \eqref{eq:state-space-bulk-oriented} and \eqref{eq:M_xeps-explicit} we find
\begin{equation}
	\Hc_\text{bulk}(B) ~=~ 
	(\One + w) \boxtimes (\bar\One + \bar w)
	~\oplus~
	2 f \boxtimes \bar f 
	~\oplus~
	(\One_\varphi + w_\varphi) \boxtimes (\bar\One_\varphi + \bar w_\varphi)
	~\oplus~
	2 f_\varphi \boxtimes \bar f_\varphi 
	~.
	\end{equation}
By \eqref{eq:oriented-parity-Zij-result} we find
\begin{align}
\Corr^\mathrm{or}_B(T^2) 
~=~& 
\big( \chi_\One + \chi_w \big) \otimes_{\Cb} \big( \overline\chi_\One + \overline\chi_w \big)
+
2 \, \chi_f \otimes_{\Cb} \overline\chi_f
\nonumber\\
& 
+ \big( \chi_{\One_\varphi} + \chi_{w_\varphi} \big) \otimes_{\Cb} \big( \overline\chi_{\One_\varphi} + \overline\chi_{w_\varphi} \big)
+
2 \, \chi_{f_\varphi} \otimes_{\Cb} \overline\chi_{f_\varphi}
\end{align}
As is well-known,
rewriting this as a $\tau$-dependent partition function and in terms of characters for the $W_3$-model gives the diagonal theory:
\begin{equation}
    Z_B(\tau) = 
    |\chi_I(\tau)|^2 + 
    |\chi_S(\tau)|^2 + 
    |\chi_{S^*}(\tau)|^2 + 
    |\chi_{I_\varphi}(\tau)|^2 + 
    |\chi_{S_\varphi}(\tau)|^2 + 
    |\chi_{S^*_\varphi}(\tau)|^2~.
\end{equation}

\subsubsection{Field content and partition functions for $F$ -- fermionic tetra-Ising model}
\label{sec:ferm-tetra-Is-field-cont}

The various sectors of the state spaces as in \eqref{eq:spin-state-spaces-4-sectors} can be computed from the general formula in \eqref{eq:M_xeps-explicit}, 
\begin{align}
	\Hc_\text{bulk}^{\text{NS,even}}(F)
	&~=~ 
	\One \boxtimes \bar\One \boxtimes K^+ 
	~\oplus~ 
	w \boxtimes \bar w \boxtimes K^+ 
	~\oplus~ 
	f \boxtimes \bar f \boxtimes K^+ 
	~\oplus~ (\dots)_\varphi
	~,
	\nonumber
	\\
	\Hc_\text{bulk}^{\text{NS,odd}}(F)
&~=~   u \boxtimes \bar v \boxtimes K^- ~\oplus~ 
	v \boxtimes \bar u \boxtimes K^-  
	~\oplus~ 
(\dots)_\varphi
~,
	\nonumber
\\
	\Hc_\text{bulk}^{\text{R,even}}(F)
&~=~    u \boxtimes \bar u \boxtimes K^+ ~\oplus~ 
	v \boxtimes \bar v \boxtimes K^+  
	~\oplus~ 
(\dots)_\varphi
~,
	\nonumber
\\
	\Hc_\text{bulk}^{\text{R,odd}}(F)
&~=~ 
	w \boxtimes \bar\One \boxtimes K^-
	~\oplus~ 
	\One \boxtimes \bar w \boxtimes K^- 
	~\oplus~ 
	f \boxtimes \bar f \boxtimes K^-
	~\oplus~ 
(\dots)_\varphi
~,
\label{eq:spin-state-spaces-4-sectors_Ising-2}
\end{align}
where $(\dots)_\varphi$ stands for the same summands again, but now all representation labels have the additional index $\varphi$, i.e.\ are taken from the third row of the Kac-table, rather than the first. 

Note that restricting to the even subspace produces the state space of the $A$-type modular invariant at $c=\frac45$, i.e.\ of the tetra-critical Ising model. This is the reason to call the theory obtained from $F$ the fermionic tetra-critical Ising model.

As in the Ising case, the precomposition with $N_F$ is equal to the identity on the multiplicity spaces $M_\One$ in \eqref{eq:M_xeps-explicit} and to minus the identity on $M_G$. From \eqref{eq:spin-parity-Zij-result} one obtains:
\begin{align}
\text{NS,}&\text{NS} &
    \Corr^\mathrm{spin}_F(\mathbb{T}^2_{1,1}) ~&=~ 
\chi_\One \otimes_{\Cb} \overline\chi_\One
+
\chi_w \otimes_{\Cb} \overline\chi_w
+
\chi_f \otimes_{\Cb} \overline\chi_f
+
\chi_u \otimes_{\Cb} \overline\chi_v
+
\chi_v \otimes_{\Cb} \overline\chi_u
+
(\dots)_\varphi
    \nonumber \\
\text{NS,}&\text{R} &
    \Corr^\mathrm{spin}_F(\mathbb{T}^2_{1,0}) ~&=~ 
\chi_\One \otimes_{\Cb} \overline\chi_\One
+
\chi_w \otimes_{\Cb} \overline\chi_w
+
\chi_f \otimes_{\Cb} \overline\chi_f
-
\chi_u \otimes_{\Cb} \overline\chi_v
-
\chi_v \otimes_{\Cb} \overline\chi_u
+
(\dots)_\varphi
    \nonumber \\
\text{R,}&\text{NS} &
    \Corr^\mathrm{spin}_F(\mathbb{T}^2_{0,1}) ~&=~
\chi_w \otimes_{\Cb} \overline\chi_\One
+
\chi_\One \otimes_{\Cb} \overline\chi_w
+
\chi_f \otimes_{\Cb} \overline\chi_f
+
\chi_u \otimes_{\Cb} \overline\chi_u
+
\chi_v \otimes_{\Cb} \overline\chi_v
+
(\dots)_\varphi
    \nonumber \\
\text{R,}&\text{R} &
    \Corr^\mathrm{spin}_F(\mathbb{T}^2_{0,0}) ~&=~ 
-\chi_w \otimes_{\Cb} \overline\chi_\One
-
\chi_\One \otimes_{\Cb} \overline\chi_w
-
\chi_f \otimes_{\Cb} \overline\chi_f
+
\chi_u \otimes_{\Cb} \overline\chi_u
+
\chi_v \otimes_{\Cb} \overline\chi_v
+
(\dots)_\varphi
\end{align}
Note that the R-R partition function is indeed modular invariant as it is the difference of the A-type and D-type modular invariant.

Finally, let us rewrite these expressions in terms of the characters \eqref{eq:W3-basis-chars} for the extended algebra $B$, as $\tau$-dependent partition functions:
\begin{align}
Z_F^{\text{NS,NS}}(\tau) &~=~ 
\tfrac12\Big( 
|\chi_I|^2 + |\widetilde\chi_I|^2 + 
|\chi_{U_t}|^2 - |\widetilde\chi_{U_t}|^2
\Big) + |\chi_S|^2 +
(\dots)_\varphi
~,
    \nonumber \\
Z_F^{\text{NS,R}}(\tau) &~=~ 
\tfrac12\Big( 
|\chi_I|^2 + |\widetilde\chi_I|^2 
- |\chi_{U_t}|^2 + |\widetilde\chi_{U_t}|^2
\Big) + |\chi_S|^2 +
(\dots)_\varphi
~,
    \nonumber \\
Z_F^{\text{R,NS}}(\tau) &~=~
\tfrac12\Big( 
|\chi_I|^2 - |\widetilde\chi_I|^2 + 
|\chi_{U_t}|^2 + |\widetilde\chi_{U_t}|^2
\Big) + |\chi_S|^2 +
(\dots)_\varphi
    ~,
    \nonumber \\
Z_F^{\text{R,R}}(\tau) &~=~ 
\tfrac12\Big( 
-|\chi_I|^2 + |\widetilde\chi_I|^2 + 
|\chi_{U_t}|^2 + |\widetilde\chi_{U_t}|^2
\Big) - |\chi_S|^2 +
(\dots)_\varphi
~.
\end{align}
In writing these expressions, we did not distinguish $\chi_S$ and $\chi_{S^*}$, since they are the same as function of $\tau$, and since we did not investigate the action of $W(z)$ (which is now an odd field in the Ramond sector) on these fields in the spin model. 

It is straightforward to check the modular properties \eqref{eq:ST-action-spin-torus} of spin tori using the explicit $\mathsf{S}$-matrix in \eqref{eq:W3-explicit-S}.

\subsubsection{Comparison with other constructions}
\label{sec:HNT-comp}

The approach in this paper and in \cite{Runkel:2020zgg} to the construction of theories on spin world sheets is to start from the category $\hat\Cc$ labelling parity enriched topological defects, then to identify a suitable non-symmetric Frobenius algebra $F$ in $\hat\Cc$. The spin CFT is constructed using this algebra. 

An alternative construction, which is possibly more physically intuitive, is to introduce the coupling of an oriented CFT without parity to the spin structure by tensoring with a theory with parity which already sees the spin structure - the Arf theory from Section~\ref{sec:extra-case-example} - and then gauging a $\mathbb Z_2$ which couples the original theory to the spin structure, followed by secondary stacking with Arf to generate a (possibly different) spin CFT with parity.
This approach is used for example in \cite{Hsieh:2020uwb}.

The construction in this paper can produce the same theories as the above approach, through a suitable choice of Frobenius algebra $F$. Let us illustrate this in the case of the $c=4/5$ Virasoro minimal model by comparing our approach to that in \cite{Hsieh:2020uwb}.

There are four distinct CFTs with $c=4/5$ discussed in \cite{Hsieh:2020uwb}, which they call \textbf{A}, \textbf{D}, \textbf{F} and $\tilde{\textbf F}$.
\textbf{A} is the diagonal invariant for the Virasoro VOA and is a purely parity even CFT on oriented surfaces. \textbf{D} is the diagonal invariant for the $W_3$-algebra VOA which is also a purely parity even CFT on oriented surfaces. 
\textbf{F} and $\tilde{\textbf F}$ are CFTs with parity on surfaces with spin structure which are related by stacking with the invertible Arf 2d\,spin TFT, which is the same as swapping the parity in the Ramond sector. The field contents are given in table II of \cite{Hsieh:2020uwb}.

In our construction, the theories \textbf{A} and \textbf{D} can be obtained from the Virasoro category $\Cc$ using $B_{\textbf{A}}=\One$ and $B_{\textbf{D}}=\One \oplus w$ in $\Cc$, respectively, while the spin theories $\tilde{\textbf F}$ and \textbf{F} are obtained from $\hat\Cc$ using $F_{\tilde{\textbf F}} = \One \oplus \Pi w$ and $F_{\textbf{F}} = (\One\oplus\Pi w)\otimes \Cl_1$, respectively. In this section we only treated the examples $B_{\textbf{D}}$ and $F_{\tilde{\textbf F}}$ explicitly.

\subsection{Bershadsky-Polyakov models: spin without parity and parity without spin}\label{sec:BP-example}

\subsubsection{Chiral data}
\label{ssevc:bpcd}

The Bershadsky-Polyakov models were introduced in \cite{Bershadsky:1990bg,Polyakov:1989dm}
and have been investigated from the point of view of VOAs in \cite{Arakawa_2013,Arakawa_van_Ekeren,Fehily_2021}. We will consider the simplest model $W_3^{(2)}$ which contains a chiral algebra with 3 primary fields of weights $1,\frac32,\frac32$ denoted $J,G^\pm$, together with the Virasoro algebra.
These field have modes $J_m, G^\pm_r,L_m$, where $m\in\mathbb Z$ but $r\in\mathbb Z + \tfrac12$ (NS sector) or $r\in \mathbb Z$ (R sector). 
We take the commutation relations
\begin{align}
[L_m,L_n] &= \tfrac{c}{12}m(m^2-1)\delta_{m+n,0} + (m-n)L_{m+n}
\;,\nonumber\\
[L_m,G^\pm_r] &= (m/2-r) G^\pm_{m+r}
\;,&
[L_m,J_n] &= -n J_{m+n}
\;,\nonumber\\
[J_m,G^\pm_r] &= \pm G^\pm_{m+r}
\;, &
[J_m,J_n] &= \tfrac{2k+3}{3}\,m\,\delta_{m+n}
\;,\nonumber\\
[G^+_r,G^-_s] &= \rlap{$\displaystyle
\tfrac{(2k+3)(k+1)}{2}\,(r^2 - \tfrac 14)\,\delta_{r+s,0} 
+\tfrac{3}{2}(k+1) (r-s) J_{r+s} 
+ 3\,(JJ)_{r+s} 
- (k+3) L_{r+s}\;,$}
\nonumber\\
[G^+_r,G^+_s] &= [G^-_r,G^-_s] = 0
\;.
\label{eq:PB-alg-rels-1}
\end{align}
The central charge is parametrised by $k \in \Cb$ as
\begin{equation}
 c = - \frac{(2k+3)(3k+1)}{k+3}
\;.
\end{equation}
The peculiar feature of these models is that the fields are all bosonic, i.e.\ of even parity, and despite having half-integer weights the modes of the fields satisfy commutation relations, rather than anti-commutation relations.

The irreducible VOA obtained from the $W_3^{(2)}$ algebra at central charge $c(k)$ is denoted by $BP_k$ in \cite{Fehily_2021}. We shall take as an example the model with $k=-1/2$. The irreducible VOA $BP_{-1/2}$ is $C_2$-cofinite and rational \cite{Arakawa_2013} and has central charge is $c(k)=2/5$. It is also called the ``Bershadsky–Polyakov minimal model'' $BP(5,2)$. 
Since the term ``minimal model'' usually refers to the full CFT and not just the VOA, we will use $BP_{-1/2}$.

$BP_{-1/2}$ has 12 (twisted) irreducible representations, 6 in the NS sector and in 6 in the R sector,
see \cite[Fig.\,2]{Fehily_2021} and
Appendix~\ref{app:BP-details} for more details.
The NS representations of $BP_{-1/2}$ are labelled by the highest-weight eigenvalues of $J_0$ and $L_0$ denoted by $j$ and $h$. 
The NS and R sectors are related by ``spectral flow''. We give details of this in Appendix \ref{app:BP-details}, but a summary is that if we define a highest weight state in the R sector to be annihilated by $G^-_0$ (as well as all positive modes) then there is a spectral flow $\sigma_{1/2}$ under which an NS representation with eigenvalues $(j,h)$ ``flows'' to an R representation with eigenvalues $(j-1/3,h-j/3+1/12)$. We list these in Tables \ref{tab:W32NSreps} and \ref{tab:W32Rreps}. We denote the R representation obtained from the NS representation $M$ by $M^s$. 

\begin{table}[htb]
\begin{centering}
\renewcommand{\arraystretch}{1.5}
\begin{tabular}{>{\centering}m{3.6em}||>{\centering}m{3.6em}|>{\centering}m{3.6em}|>{\centering}m{3.6em}|>{\centering}m{3.6em}|>{\centering}m{3.6em}|m{3.6em}<{\centering}|}
    $\text{name}$&
    $\id$& $\phi$ & $\bar \phi$ & $u$ & $\psi$ & $\bar\psi$ \\ \hline 
    $(j,h)$ & 
    $(0,0)$ & 
    $(\frac 13, \frac 1{30})$ & 
    $(-\frac 13,\frac1{30})$ & 
    $(0,\frac 15)$ & 
    $(\frac 23,\frac 13)$ & 
    $(-\frac 23,\frac 13)$ \\
    $\text{decomp.}$ 
    & $\One\oplus G$ 
    & $\phi_e\oplus \phi_o$ 
    & $\bar\phi_e \oplus \bar\phi_o$
    & $u_e \oplus u_o$ 
    & $\psi_e \oplus \psi_o$
    & $\bar\psi_e\oplus \bar\psi_o$
    \\
\end{tabular}
    \caption{NS representations of $BP_{-1/2}$. 
We will also write $\One = \id_e$ and $G = \id_o$.    
    }
    \label{tab:W32NSreps}
\end{centering}
\end{table}

\begin{table}[htb]
\begin{centering}
\renewcommand{\arraystretch}{1.5}
\begin{tabular}{>{\centering}m{3.6em}||>{\centering}m{3.6em}|>{\centering}m{3.6em}|>{\centering}m{3.6em}|>{\centering}m{3.6em}|>{\centering}m{3.6em}|m{3.6em}<{\centering}|}
    $\text{name}$&
    $\id^s$& $\phi^s$ & $\bar \phi^s$ & $u^s$ & $\psi^s$ & $\bar\psi^s$ \\ \hline 
    $(j,h)$ & 
    $(-\frac 13,\frac 1{12})$ & 
    $(0,-\frac {1}{20})$ & 
    $(-\frac 23,\frac{17}{60})$ & 
    $(-\frac 13,\frac{17}{60})$ & 
    $(\frac 13,\frac 1{12})$ & 
    $(-1,\frac 34)$ \\
    $\text{decomp.}$ 
    & $\id^s_e \oplus \id^s_o$ 
    & $\phi^s_e\oplus \phi^s_o$ 
    & $\bar\phi^s_e \oplus \bar\phi^s_o$
    & $u^s_e \oplus u^s_o$ 
    & $\psi^s_e \oplus \psi^s_o$
    & $\bar\psi^s_e\oplus \bar\psi^s_o$
    \\
\end{tabular}
    \caption{R representations of $BP_{-1/2}$}
    \label{tab:W32Rreps}
\end{centering}
\end{table}

The $W_3^{(2)}$ algebra has a $\mathbb{Z}_2$ automorphism, which we denote $(-1)^G$. It acts as $L \lmt L$, $J \lmt J$, $G^\pm \lmt -G^\pm$ and the VOA splits into $\pm 1$ eigenspaces under its action. 
This is analogous to the automorphism $W\lmt-W$ of the $W_3$ algebra in Section~\ref{sec:Potts-chiral} under which the oriented algebra $B$ decomposes into  $B=\One\oplus w$. 
We shall denote the representations appearing in the decomposition of $\id$ as $\id_e \equiv \One$ and $\id_o \equiv G$.
Since $G$ has weight $\tfrac32$, its twist eigenvalue is
\begin{align}\label{eq:PB-theta-G}
    \theta_G = e^{2\pi i 3/2}= -1 ~.
\end{align}

Each NS representation $M$ of the $W_3^{(2)}$ algebra can be decomposed into two subspaces, $M_e$ given by the action of $\One=\id_e$ on the highest weight state and $M_o$ given by the action of $G=\id_o$.
In the NS sector, the highest weight state space is one-dimensional and if the NS highest weight state is defined to have $(-1)^G$ eigenvalue $+1$ then $M_e$ is the $(-1)^G$ eigenspace with eigenvalue $+1$ and $M_o$ the space with eigenvalue $-1$.

In the R sector, there is a non-trivial algebra with generators $G^\pm_0, J_0, L_0$ and so one has to make a choice for the definition of the highest weight space. We follow \cite{Arakawa_2013} and define the highest weight space to be annihilated by $G^-_0$ (this is the opposite convention to \cite{Fehily_2021}) and define it to be an eigenvector of $(-1)^G$ of eigenvalue $+1$. As with NS representations, we define the subspace $M_e$ of a R representation $M$ to be given by the action of $\id_e$ on the highest weight space (and hence having $(-1)^G$ eigenvalue $+1$) and the subspace $M_o$ 
to be given by the action of $\id_o$ on the highest weight space (and hence having $(-1)^G$ eigenvalue $-1$).

The modular properties of the characters under $S$ are best expressed in terms of the traces of the full representations with or without including $(-1)^G$, just as are those of the Ising model. As in the previous examples, we will write $\chi_M$ for untwisted trace and $\widetilde\chi_M$ for the twisted trace:
\begin{equation}
    \chi_M(\tau,z) = \Tr_M\big(\,z^{J_0} q^{L_0 - c/24}\,\big)
    \quad , \quad 
    \widetilde\chi_M(\tau,z) = \Tr_M\big(\,(-1)^G z^{J_0} q^{L_0 - c/24}\,\big) ~,
\end{equation}
so that
\begin{equation}
    \Tr_{M_e}\big(z^{J_0} q^{L_0 - c/24}\,\big)
    = \frac 12( \chi_M + \tilde\chi_M)
    \quad , \quad 
    \Tr_{M_o}\big(z^{J_0} q^{L_0 - c/24}\,\big)
    = \frac 12( \chi_M - \tilde\chi_M)
     ~.
\end{equation}

We define the index set
\begin{equation}\label{eq:BP-repnames}
    \mathcal{J} := 
    \{\id, \phi , \bar\phi , u , \psi , \bar\psi \} ~.
\end{equation}
As in \eqref{eq:W3-basis-chars} we obtain four sets of characters, where in each case $x$ runs over all elements in $\mathcal{J}$:
\begin{align}
    &\chi_x : \text{untwisted char.\ of NS-sector reps.}
    \nonumber\\
    &\widetilde\chi_x : \text{twisted char.\ of NS-sector reps.}
    \nonumber\\
    &\chi_{x^s} : \text{untwisted char.\ of R-sector reps.}
    \nonumber\\
    &\widetilde\chi_{x^s} : \text{twisted char.\ of R-sector reps.}
    \label{eq:BP-basis-chars}
\end{align}
The modular $\mathsf{S}$-matrix $\mathsf{S}$ appears in the following relations,
\begin{align}
    \chi_i(\tau,1) = \sum_j \mathsf{S}_{ij} \chi_j(-1/\tau,1)
    ~~,\quad
    \frac{d}{dz}\chi_i(\tau,z)\Big|_{z=1}
    = -\frac{1}{\tau} \sum_j \mathsf{S}_{ij} \frac{d}{dz}\chi_i(-1/\tau,z)\Big|_{z=1}
    \;.
\end{align}
The $\mathsf{S}$-matrix then has the following block decomposition (cf.\ Appendix~\ref{app:BP-details})
\begin{equation}\label{eq:PB-explicit-S}
\mathsf{S} = \left(
\begin{tabular}{c|cccc}
 & 
 $\chi$ & $\widetilde\chi$ &  $\chi_s$ & $\widetilde\chi_s$ 
 \\
 \hline
 $\chi$ & $0$ & $0$ & $0$ & $C$ 
 \\
 $\widetilde\chi$ & $0$ &  $D$ &  $0$  &  $0$  
 \\
 $\chi_s$ & $0$  &  $0$ &  $E$  &  $0$  
 \\
 $\widetilde\chi_s$ & $C^T$  &  $0$ &  $0$  &  $0$   
\end{tabular}
\right)
\end{equation}
with the individual blocks given by, in the order of vectors as stated in \eqref{eq:BP-repnames},
\begin{align*}
    C &= \gamma\begin{pmatrix}
         1   & a &         a &        -a &  1 &   1 \\
 -a \omega^2 & 1 &    \omega &-\omega^2  & -a\omega & -a \\ 
 -a \omega   & 1 &  \omega^2 &-\omega& -a\omega^2 & -a \\
         -a  & 1 &         1 &       -1 &  -a &  -a \\
      \omega & a & a\omega^2 & -a\omega & \omega^2 & 1 \\
    \omega^2 & a & a\omega   & -a\omega^2   & \omega & 1 
    \end{pmatrix}
    ~~ , \quad
    D = \gamma \begin{pmatrix}
    1 & a & a & -a & 1 & 1 \\
    a & -\omega & -\omega^2 & 1 & a\omega^2 & a\omega \\
    a & -\omega^2 & -\omega & 1 & a\omega & a\omega^2 \\
    -a & 1 & 1 & -1 & -a & -a \\
    1 & a\omega^2 & a\omega & -a & \omega & \omega^2 \\
    1 & a\omega & a\omega^2 & -a & \omega^2 & \omega 
    \end{pmatrix}
    ~~ , 
\end{align*} 
\begin{align}
    \label{eq:C-matrix}
    E &= \gamma \begin{pmatrix}
    -\omega & a & a\omega^2 & a \omega &  -\omega^2  & -1 \\
    a & 1 & 1 & 1 & a & a \\
    a\omega^2 & 1 & \omega & \omega^2 & a\omega & a \\
    a\omega & 1 & \omega^2  & \omega & a\omega^2& a \\
     -\omega^2 & a & a\omega & a\omega^2& -\omega  & -1 \\
    -1 & a & a  & a & -1& -1 
    \end{pmatrix}\;,\;
    &
    \begin{array}{cl}
    \gamma 
    &\!\!=~ \displaystyle \frac{2\sin(2\pi/5)}{\sqrt{15}} 
    \;,\; \\[1.2em]
    \omega &\!\!=~  \displaystyle e^{2\pi i/3} 
    \;,\; \\[1.2em]
    a 
    &\!\!=~ \displaystyle 2\cos(2\pi/5) ~.
\end{array}
\end{align}

The value of  $\dim G$ can be deduced simply from the block structure \eqref{eq:PB-explicit-S}, but explicitly we have, from the block $D$
\begin{align}
    \chi_{\One}(\tfrac{-1}{\tau}) &\,=\,
    \tfrac12 \Big(\chi_{\id}(\tfrac{-1}{\tau}) + \widetilde\chi_{\id}(\tfrac{-1}{\tau}) \Big) \,=\,
    \tfrac{\gamma}{2}\tilde\chi_{\id}(\tau) + \cdots
    \,=\, \tfrac{\gamma}{2}\chi_{\One}(\tau) - \tfrac{\gamma}{2}\chi_G(\tau)) + \cdots
    \nonumber\\
    &\Rightarrow
    \mathsf{S}_{\One,\One} =- \mathsf{S}_{\One,G}= \tfrac{\gamma}{2}
    \;\;\;\;
    \Rightarrow \dim(G) = \frac{\mathsf{S}_{\One,G}}{\mathsf{S}_{\One,\One}} = -1
    \;.
\end{align}
We also have the value of $\theta_G$ from 
\eqref{eq:PB-theta-G}, so that altogether
\begin{equation}
    \dim(G) = -1
    ~~,~~
    \theta_G = -1 ~.
\end{equation}
The monodromy charges \eqref{eq:examples-mono-charge} of the irreducible representations with respect to the subalgebra $\One = \id_e$ of the $BP_{-1/2}$ algebra can be read off from the decompositions given in Tables~\ref{tab:W32NSreps} and \ref{tab:W32Rreps}. Namely, for an NS-representation $M$, the $G$-modes change the $L_0$-weight by $\Zb+\frac12$, so that $\theta_{G \otimes M}/\theta_M = -1$. For an R-representation $N$, the shift is by $\Zb$ and so $\theta_{G \otimes N}/\theta_N = 1$. Since $\theta_G=-1$, all $\id_e$-representations appearing as summands of NS representations have monodromy charge $q=1$, while those appearing in R representations have $q=-1$:
\begin{equation}\label{eq:BP-q-charges}
    q_{x_{e/o}} = 1 
    ~~ , \quad
    q_{x^s_{e/o}} = -1 ~~,
    \quad \text{where} \quad
    x \in \mathcal{J} ~.
\end{equation}

From Tables~\ref{tab:W32NSreps} and~\ref{tab:W32Rreps} one can also read off what the conjugate representation of each irreducible $\id_e$-representation is. For example, the conjugate of $\phi_e$ has $J_0$ charge $j=-\frac13$ and $L_0$-weight $h=\frac1{30}$, and so has to be given by $\bar\phi_e$. The conjugate of $\psi^s_e$ has $j=-\frac13$ and $h=\frac1{12}$ and so is given by $\id^s_e$. 
For $u^s_e$ the conjugate has $j=\frac13$ and $h=\frac{17}{60}$. The highest weight state of $(u^s_e)^*$ is thus a $G_0^+$-descendent of that of $\bar\phi_e^s$, i.e.\ $(u^s_e)^* = \bar\phi_o^s$. (And indeed, the corresponding characters of $BP_{-1/2}$ in Table~\ref{tab:W32Rrepschars} show a two-fold degeneracy for the lowest $L_0$-weight in these representations.) Analogously one sees $(u^s_o)^* = \bar\phi_e^s$. These are the only two instances where the $e/o$-label is exchanged under conjugation.
We list the full result in Table~\ref{tab:BP-conjugates}.

\begin{table}[htb]
    \centering
\renewcommand{\arraystretch}{1.5}
    \begin{tabular}{>{\centering}m{1.15em}|>{\columncolor[gray]{0.8}\centering}m{1.15em}>{\columncolor[gray]{0.8}\centering}>{\centering}m{1.15em}>{\centering}m{1.15em}>{\centering}m{1.15em}>{\centering}m{1.15em}>{\centering}m{1.15em}>{\columncolor[gray]{0.8}\centering}m{1.15em}>{\columncolor[gray]{0.8}\centering}>{\centering}m{1.15em}>{\centering}m{1.15em}>{\centering}m{1.15em}>{\centering}m{1.15em}m{1.15em}<{\centering}}
        $x$
            & $\One$ & $G$  
            & $\phi_e$ & $\phi_o$
            & $\bar\phi_e$ & $\bar\phi_o$
            & $u_e$ & $u_o$
            & $\psi_e$ & $\psi_o$
            & $\bar\psi_e$ & $\bar\psi_o$
\\
\hline
        $x^*$
            & $\One$ & $G$  
            & $\bar\phi_e$ & $\bar\phi_o$
            & $\phi_e$ & $\phi_o$
            & $u_e$ & $u_o$
            & $\bar\psi_e$ & $\bar\psi_o$
            & $\psi_e$ & $\psi_o$
    \end{tabular}
\\[1em]
    \begin{tabular}{>{\centering}m{1.15em}|>{\centering}m{1.15em}>{\centering}m{1.15em}>{\columncolor[gray]{0.8}\centering}m{1.15em}>{\columncolor[gray]{0.8}\centering}m{1.15em}>{\centering}m{1.15em}>{\centering}m{1.15em}>{\centering}m{1.15em}>{\centering}m{1.15em}>{\centering}m{1.15em}>{\centering}m{1.15em}>{\columncolor[gray]{0.8}\centering}m{1.15em}>{\columncolor[gray]{0.8}}m{1.15em}<{\centering}}
        $x$
            & $\id_e^s$ & $\id_o^s$  
            & $\phi_e^s$ & $\phi_o^s$
            & $\bar\phi_e^s$ & $\bar\phi_o^s$
            & $u_e^s$ & $u_o^s$
            & $\psi_e^s$ & $\psi_o^s$
            & $\bar\psi_e^s$ & $\bar\psi_o^s$
\\
\hline
        $x^*$
            & $\psi_e^s$ & $\psi_o^s$
            & $\phi_e^s$ & $\phi_o^s$
            & $u_o^s$ & $u_e^s$
            & $\bar\phi_o^s$ & $\bar\phi_e^s$
            & $\id_e^s$ & $\id_o^s$  
            & $\bar\psi_e^s$ & $\bar\psi_o^s$
    \end{tabular}
    \caption{The conjugates $x^*$ of irreducible $\id_e$-representations $x$. The grey columns contain the self-dual representations. Note that spectral flow does not preserve self-duality.}
    \label{tab:BP-conjugates}
\end{table}

Next we give the state spaces and torus partition functions for the following two cases:
\begin{equation}
\begin{tabular}{c|c|c}
    type of CFT & no parity & parity \\
     \hline
oriented  &  & $B = \One \oplus \Pi G$
\\
\hline
spin & $F = \One \oplus G$ & 
\end{tabular}
\label{tab:BPCFTs}
\end{equation}

\subsubsection{Results for $B$}

Let $\mathcal{J}$ be as in \eqref{eq:BP-repnames}.
From \eqref{eq:state-space-bulk-oriented} and \eqref{eq:M_xeps-explicit} (with $\nu=-$) one finds\footnote{Here and below we omit the bars over the representation label used to mark the antiholomorphic factor to avoid confusion with the naming of representations in Tables~\ref{tab:W32NSreps} and \ref{tab:W32Rreps}.}
\begin{align}\label{eq:BP-oriented-statespace}
	\Hc_\text{bulk}^{\text{even}}(B)
	&~=~ \bigoplus_{x \in \mathcal{J}} 
	\Big( 
	x_e \boxtimes (x_e)^* \boxtimes K^+ ~\oplus~ 
	x_o \boxtimes (x_o)^* \boxtimes K^+ 
	\Big)
	~,
	\nonumber
	\\
	\Hc_\text{bulk}^{\text{odd}}(B)
	&~=~ \bigoplus_{x \in \mathcal{J}} 
	\Big( 
	x_e^s \boxtimes  (x_o^s)^* \boxtimes K^- ~\oplus~ 
	x_o^s \boxtimes  (x_e^s)^* \boxtimes K^- 
	\Big)
	~.
\end{align}
Thus
\begin{align}
\Corr^\mathrm{or}_B(T^2) 
~=~& 
\sum_{x \in \mathcal{J}} \Big( 
\chi_{x_e} \otimes_{\Cb} \overline\chi_{(x_e)^*}
+\chi_{x_o} \otimes_{\Cb} \overline\chi_{(x_o)^*}
-\chi_{x_e^s} \otimes_{\Cb} \overline\chi_{(x_o^s)^*}
-\chi_{x_o^s} \otimes_{\Cb} \overline\chi_{(x_e^s)^*}
\Big)~.
\end{align}
If we rewrite this as a $\tau$-dependent expression in terms of the 
untwisted and twisted characters \eqref{eq:BP-basis-chars} of the $BP_{-1/2}$ algebra, we obtain
\begin{equation}\label{eq:ZB}
    Z_B(\tau) = \sum_{x \in \mathcal{J}} \tfrac12 \Big( 
    |\chi_x(\tau)|^2 + |\widetilde\chi_x(\tau)|^2 - 
    |\chi_{x^s}(\tau)|^2 + |\widetilde\chi_{x^s}(\tau)|^2 \Big) ~.
\end{equation}
This is indeed modular invariant since the matrices $C$, $D$ and $E$ in the block decomposition \eqref{eq:PB-explicit-S} are all unitary
and the following combinations are separately modular invariant,
\begin{equation}\label{eq:Z1Z2}
    Z_1(\tau) = \sum_{x \in \mathcal{J}} \Big(
    |\chi_x(\tau)|^2 + |\widetilde\chi_x(\tau)|^2+ |\widetilde\chi_{x^s}(\tau)|^2 \Big)
    \;,\;\;
    Z_2(\tau) = \sum_{x \in \mathcal{J}} 
    |\chi_{x^s}(\tau)|^2   ~.
\end{equation}
The partition function \eqref{eq:ZB} is the super-trace over the full space of states 
$\Hc_\text{bulk}^{\text{even}}(B) \oplus \Hc_\text{bulk}^{\text{odd}}(B)$
in \eqref{eq:BP-oriented-statespace}
and so the minus sign in front of $\tfrac12 Z_2$ is fixed by the parity of the fields.

\subsubsection{Results for $F$}

In this example we have $F \in \Cc$, i.e.\ $F$ is purely even, and so the odd part of all bulk state spaces is zero.
The even state spaces again follow from \eqref{eq:M_xeps-explicit} (this time with $\nu=+$) and altogether one finds, unsurprisingly,
\begin{align}
	\Hc_\text{bulk}^{\text{NS}}
	&= 
\bigoplus_{x \in \mathcal{J}} 
	(x_e + x_o) \boxtimes (x_e + x_o)^*
	~,
	\nonumber
	\\
	\Hc_\text{bulk}^{\text{R}}
&=  
\bigoplus_{x \in \mathcal{J}} 
	(x_e^s + x_o^s) \boxtimes (x_e^s + x_o^s)^*
~.
\end{align}

As in the previous examples, precomposition with $N_F$ is equal to the identity on the multiplicity spaces $M_\One$  and to minus the identity on $M_G$. From \eqref{eq:spin-parity-Zij-result} one reads off the torus partition functions as
\begin{align}
\text{NS,}&\text{NS} &
    \Corr^\mathrm{spin}_F(\mathbb{T}^2_{1,1}) ~&=~ 
\sum_{x \in \mathcal{J}} 
\big(\chi_{x_e} -\chi_{x_o} \big) \otimes_{\Cb} \big(\overline\chi_{(x_e)^*}-\overline\chi_{(x_o)^*}\big)
    \nonumber \\
\text{NS,}&\text{R} &
    \Corr^\mathrm{spin}_F(\mathbb{T}^2_{1,0}) ~&=~ 
\sum_{x \in \mathcal{J}} 
\big(\chi_{x_e} +\chi_{x_o} \big) \otimes_{\Cb} \big(\overline\chi_{(x_e)^*}+\overline\chi_{(x_o)^*}\big)
    \nonumber \\
\text{R,}&\text{NS} &
    \Corr^\mathrm{spin}_F(\mathbb{T}^2_{0,1}) ~&=~
\sum_{x \in \mathcal{J}} 
\big(\chi_{x_e^s} -\chi_{x_o^s} \big) \otimes_{\Cb} \big(\overline\chi_{(x_e^s)^*}-\overline\chi_{(x_o^s)^*}\big)
    \nonumber \\
\text{R,}&\text{R} &
    \Corr^\mathrm{spin}_F(\mathbb{T}^2_{0,0}) ~&=~ 
\sum_{x \in \mathcal{J}} 
\big(\chi_{x_e^s} +\chi_{x_o^s} \big) \otimes_{\Cb} \big(\overline\chi_{(x_e^s)^*}+\overline\chi_{(x_o^s)^*}\big)
\end{align}
In terms of untwisted and twisted characters \eqref{eq:BP-basis-chars} of the $BP_{-1/2}$ algebra, this can be rewritten as a $\tau$-dependent function as follows:
\begin{align}
Z_F^{\text{NS,NS}}(\tau) &~=~ 
\sum_{x \in \mathcal{J}}
|\widetilde\chi_x(\tau)|^2
~,
& 
Z_F^{\text{NS,R}}(\tau) &~=~ 
\sum_{x \in \mathcal{J}}
|\chi_x(\tau)|^2
~,
    \nonumber \\
Z_F^{\text{R,NS}}(\tau) &~=~ 
\sum_{x \in \mathcal{J}}
|\widetilde\chi_{x^s}(\tau)|^2
~,
& 
Z_F^{\text{R,R}}(\tau) &~=~ 
\sum_{x \in \mathcal{J}}
|\chi_{x^s}(\tau)|^2
~.
\end{align}
It is easy to check from the block structure \eqref{eq:PB-explicit-S} and the fact that $C$, $D$ and $E$ are unitary, that $Z_F^{NS,NS}$ and $Z_F^{R,R}$ are invariant under $\mathsf{S}$, and that $Z_F^{NS,R}(\tau)=Z_F^{R,NS}(-1/\tau)$ as required.

Notice that in contrast to the free fermion case in \eqref{eq:FF-tau-spin-PF}, the twisted characters now appear in the NS,NS and R,NS sectors, rather than in NS,R (and R,R, though that sector is zero for the free fermion). The reason is the minus sign in front of the second term in \eqref{eq:spin-parity-Zij-result} which contributes in the free fermion example, while here there is no parity and only the first term contributes.

\begin{remark}
\begin{enumerate}
\item In Table~\ref{tab:eight-cases} also the product of a BP model with an Ising model is considered. Write $\mathcal{C}_{BP}$ for the modular fusion category spanned by the 24 irreducible representations of the integer weight subalgebra $\id_e$ of $BP_{-1/2}$, and write $\mathcal{C}_{Is}$ for the Ising modular category treated in Section~\ref{sec:Ising-CFT-example}. We take $\hCc = \mathcal{C}_{BP} \boxtimes \mathcal{C}_{Is} \boxtimes \SVect$, and one finds that $B := \One \oplus \Pi(G \boxtimes \eps)$ is commutative and symmetric, and $F := \One \oplus (G \boxtimes \eps)$ is neither commutative nor symmetric but satisfies $N_F^{\,2} = \id_F$.

    \item The formalism developed in this paper applies only to rational VOAs, i.e.\ to VOAs whose category of representations is a modular fusion category. 
    However, if one looks beyond this class of models to include logarithmic examples, one finds that the symplectic fermions \cite{Kausch:1995py} are candidates to provide CFTs of types \mycircled2 and \mycircled3, just as the BP model treated in this section.
    \\
    Namely, the symplectic fermion VOA is of the form $\One \oplus \Pi G$ with $\dim(G)=-1$ and $\theta_G=1$. Thus $B = \One \oplus \Pi G$ provides an oriented theory with parity, and $F = \One \oplus G$ a spin theory without parity. 
    See \cite{Abe:2005} for the definition of the symplectic fermion VOA and \cite{Runkel:2012cf,Farsad:2017eef} for an explicit description of the corresponding non-semisimple modular tensor category. 
\end{enumerate}
\end{remark}

\appendix

\section{Appendix}

\subsection{Details for the combinatorial model of \texorpdfstring{$r$}{r}-spin structures}\label{app:comb-model}

\subsubsection{Proof of Theorem~\ref{thm:markings-r-spin-surfaces}}\label{app:comb-model_proof}

By \cite[Prop.\,3.1.9]{Stern:2020oc} isomorphism classes of $r$-spin structures on an open-closed surface with parametrised boundary are bijection with equivalence classes of admissible markings on a fixed polygonal decomposition of the surface.
The equivalence relation is generated by the Moves~\eqref{move:edge-or-bulk}~and~\eqref{move:marked-edge-bulk}.
Recall that Move~\eqref{move:sheet-translation} is an iterated application of Move~\eqref{move:marked-edge-bulk}.
Similarly as observed in \cite[Rem.\,2.14]{Runkel:2018rs}, the above moves commute, and when one fixes the edge orientations and the marked edges, the equivalence relation is generated by Move~\eqref{move:sheet-translation}.

In Theorem~\ref{thm:markings-r-spin-surfaces} we stated that the above holds without specifying the type of the boundary components, i.e.\ the holonomy along closed boundary components.
The construction of \cite{Novak:2015phd} of an $r$-spin structure from the combinatorial data remains the same, therefore  \cite[Prop.\,3.1.9]{Stern:2020oc} holds by dropping the admissibility condition for the chosen closed boundary vertices ($v_0$), and by the above discussion, by fixing the edge orientations and markings.

\subsubsection{Combinatorial model and holonomies}\label{app:comb-model_comb}

\begin{figure}[tb]
    \centering
    \begin{align*}
    &\raisebox{-0.5\height}{\includegraphics{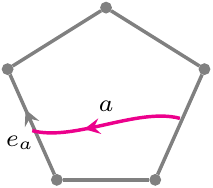}}
    \quad\hat{s}_{e_a}^{a}=s_a
    &&\raisebox{-0.5\height}{\includegraphics{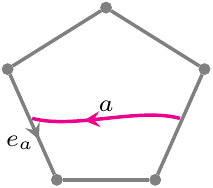}}
    \quad\hat{s}_{e_a}^{a}=-s_a-1\\
    &\raisebox{-0.5\height}{\includegraphics{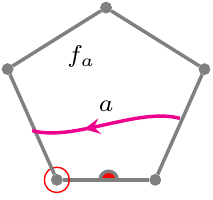}}
    \quad\hat{\delta}_{f_a}^{a}=0
    &&\raisebox{-0.5\height}{\includegraphics{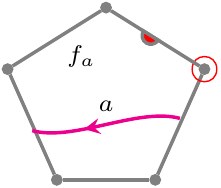}}
    \quad\hat{\delta}_{f_a}^{a}=+1
    \end{align*}
    \caption{Arc $a\in A(\gamma)$, leaving the face $f_a$ at edge $e_a$. 
    We define $\hat{\delta}_{f_a}^a=0$ if the clockwise vertex of $f_a$ is on the left side of $a$ and $\hat{\delta}_{f_a}^a=1$ otherwise.
    Furthermore $\hat{s}_{e_a}^a=s_{e_a}$ or $-s_{e_a}-1$ depending on the orientation of $e_a$.}
    \label{fig:holonomy-s-delta}
\end{figure}

Let $\Sigma$ be a surface with parametrised boundary and $T_{\Sigma}(o,m,s)$ an admissible marked polygonal decomposition for fixed boundary types.
Let $\gamma$ be an embedded loop in $\Sigma$ which intersects the edges transversally and does not intersect any vertices.
The faces of the polygonal decomposition decompose $\gamma$ into a set $A(\gamma)$ of arcs.  
We may assume that each arc intersects exactly two edges: we can split edges via \eqref{move:remove-vertex}.
For $a\in A(\gamma)$ let $f_a$ denote the face containing $a$ and $e_a$ the edge where the arc leaves the face $f_a$. 
We define $\hat{\delta}_{f_a}^a$ and $\hat{s}_{e_a}^a$ in Figure~\ref{fig:holonomy-s-delta}, see \cite[Sec.\,2.4]{Runkel:2018rs} for more details.

\begin{lemma}[see {\cite[Prop.\,2.15.3]{Runkel:2018rs}}]
    The holonomy along $\gamma$ in the $r$-spin structure defined by $T_{\Sigma}(o,m,s)$ is
    \begin{align}
        \mathrm{hol}(\gamma)=\sum_{a\in A(\gamma)}\left( \hat{\delta}_{f_a}^a + \hat{s}_{e_a}^a \right)\,.
    \end{align}
    \label{lem:holonomy-computation}
\end{lemma}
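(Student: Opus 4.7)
The plan is to reduce the global statement to a purely local contribution on each arc, using the explicit construction of the $r$-spin structure from the marked polygonal decomposition reviewed earlier in this section. First I would unpack the definition of $\mathrm{hol}(\gamma)$: pick a smooth tangent-framing lift $\hat\gamma \colon S^1 \to F_\Sigma$ of $\gamma$ whose first vector is $\dot\gamma$, then ask by what element of $\Zb_r$ the endpoints of the lift of $\hat\gamma$ to the $r$-spin bundle $P$ differ. Since $\gamma$ meets the decomposition transversally and avoids vertices, $A(\gamma)$ cuts this lift into finitely many pieces, one for each arc $a$, together with jumps across the edges $e_a$.

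Next I would recall the construction of $P$ over each face $f$: on the contractible polygon $f$ the spin structure is (canonically up to isomorphism) the standard one coming from a distinguished frame attached to the marked edge of $f$, and the edge indices $s_e$ prescribe the $\widetilde{GL_2}^r$-valued transition function across $e$. With this local model, each piece of the lift over an arc $a$ can be tracked in the standard frame of $f_a$. The key observation is that the standard frame rotates by $+1 \in \Zb_r$ exactly when the arc passes the clockwise (i.e.\ marked) vertex of $f_a$, which is precisely the case $\hat\delta_{f_a}^a = 1$; in all other cases the standard frame can be chosen parallel to $\dot\gamma$ along $a$ and contributes nothing. This identifies the ``in-face'' contribution as $\hat\delta_{f_a}^a$.

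Finally, crossing the edge $e_a$ at the end of the arc $a$ transfers the lift from the standard frame of $f_a$ to that of the next face. By the construction of the combinatorial model this jump is $s_{e_a}$ when $e_a$ is oriented consistently with the crossing (corresponding to $\hat s_{e_a}^a = s_{e_a}$), and $-1 - s_{e_a}$ when the orientation is reversed (corresponding to $\hat s_{e_a}^a = -1 - s_{e_a}$); the inverse-orientation shift by $-1$ is exactly the one already encoded in the definition of $\hat{\,\cdot\,}$ and matches move \eqref{move:edge-or-bulk}. Summing the ``in-face'' and ``edge-crossing'' contributions over all $a \in A(\gamma)$ yields the formula.

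The main obstacle will be the bookkeeping in the second step: pinning down the convention for the standard frame on a polygon so that the $+1$ rotation happens precisely at the marked vertex, and checking that this is compatible both with the definition of admissibility \eqref{eq:admissibility-inner} (which is exactly the statement that a small loop around an interior vertex has holonomy $1$) and with the definition of $\hat\delta_{f_a}^a$ shown in Figure~\ref{fig:holonomy-s-delta}. A convenient sanity check along the way is to specialise $\gamma$ to a small loop encircling a single interior vertex $v$: the formula should reproduce $|H_v| - |D_v| + \sum_{h \in H_v}\hat s_h$, matching \eqref{eq:holonomy-inner}. This reduces the remaining verification to the finite list of local configurations in Figure~\ref{fig:holonomy-s-delta}, as in \cite[Sec.\,2.4]{Runkel:2018rs}, from which the general case follows by additivity of the lift over the arc decomposition.
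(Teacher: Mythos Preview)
The paper does not give its own proof of this lemma; it simply states the result and refers to \cite[Prop.\,2.15.3]{Runkel:2018rs} for the argument. Your outline is a faithful sketch of the approach taken there: trivialise the spin bundle over each polygon using the marked-edge frame, track the tangent-framed lift of $\gamma$ arc by arc, and read off the in-face rotation $\hat\delta_{f_a}^a$ and the edge-crossing jump $\hat s_{e_a}^a$ from the transition functions. The sanity check against \eqref{eq:holonomy-inner} is exactly how the paper uses this lemma in the subsequent Lemma~\ref{lem:holonomy-around-vertex}, so your proposal is consistent with the surrounding material.
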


\begin{figure}
    \centering
    \begin{align*}
    \mathrm{a})&&\mathrm{b})&\\[-1em]
    &\raisebox{-0.5\height}{\includegraphics{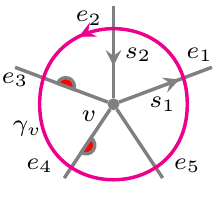}}
    &&\raisebox{-0.5\height}{\includegraphics{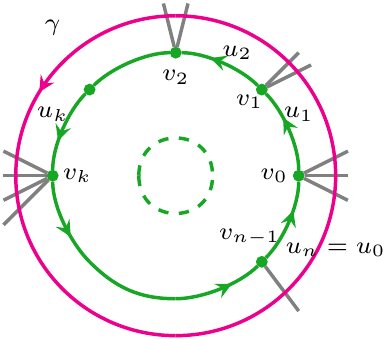}}
    \end{align*}
    \caption{a) anticlockwise oriented loop $\gamma_v$ encircling an inner vertex $v$. 
    b)  Loop $\gamma$ parallel to the gluing boundary. 
    }
    \label{fig:holonomy-inner-and-boundary-vertex}
\end{figure}

For an inner vertex $v$ write $\gamma_v$ for a small loop encircling $v$ anticlockwise as in Figure~\ref{fig:holonomy-inner-and-boundary-vertex}\,a).
Next consider a closed boundary component of type $y$ parametrised as in Section~\ref{sec:oc-r-spin-surf} with $n$ boundary edges $e_0,e_1,\dots,e_n=e_0$ and $n$ boundary vertices $v_0,v_1,\dots,v_n=v_0$.
Let $\gamma$ be a loop running parallel to the closed gluing boundary, oriented in the same way as the edges on that boundary, see Figure~\ref{fig:holonomy-inner-and-boundary-vertex}\,b).
Recall that by definition the holonomy along $\gamma$ is $\mathrm{hol}(\gamma)=1-y$.

\begin{lemma}
    \begin{enumerate}
        \item The holonomy along $\gamma_v$ for an inner vertex $v$ is 
        \begin{align}
        \mathrm{hol}(\gamma_v) = 
            |H_v|-|D_v| +  \sum_{h\in H_v}\hat{s}_h=
            +1\,.
    \label{eq:lem:holonomy-around-vertex:1}
        \end{align}
    \label{lem:holonomy-around-vertex:1}
        \item 
        The holonomy along the loop $\gamma$ parallel to a closed gluing boundary of type $y$ is
        \begin{align}
        \mathrm{hol}(\gamma) = 
            \epsilon \Big(|H_{v_0}|-|D_{v_0}| -1 +  \sum_{h\in H_{v_0}}\hat{s}_h \Big) = 1-y\,.
            \label{eq:lem:holonomy-around-vertex:2}
        \end{align}
        Here, $\eps=1$ if the boundary is ingoing and $\eps=-1$ if it is outgoing.
    \label{lem:holonomy-around-vertex:2}
    \end{enumerate}
    \label{lem:holonomy-around-vertex}
\end{lemma}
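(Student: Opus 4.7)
The plan is to reduce both parts of the lemma to Lemma~\ref{lem:holonomy-computation} and then to identify the arc-based sums $\sum_a \hat\delta^a_{f_a}$ and $\sum_a \hat s^a_{e_a}$ appearing there with the half-edge sums $|H_v|-|D_v|$ and $\sum_{h \in H_v} \hat s_h$ appearing in \eqref{eq:admissibility-inner}.

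For Part \ref{lem:holonomy-around-vertex:1}, the loop $\gamma_v$ traverses the anticlockwise fan of faces at $v$ and crosses each edge at $v$ exactly once. This gives a canonical bijection $A(\gamma_v) \leftrightarrow H_v$, sending an arc $a$ to the half-edge $h_a$ lying on the exit edge $e_a$ of $a$. Under this bijection I would verify from Figure~\ref{fig:holonomy-s-delta} that $\hat s^a_{e_a} = \hat s_{h_a}$ (the "outward/inward" convention of $\hat s_h$ at $v$ matches the orientation rule for $\hat s^a_{e_a}$ relative to the anticlockwise arc), and that $\hat\delta^a_{f_a} = 1 - \mathbf{1}_{D_v}(h_a)$ (the clockwise/marked vertex of $f_a$ coincides with $v$ iff the exit edge $e_a$ is the marked edge of $f_a$, which is the defining condition for $h_a \in D_v$). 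Summing yields $\mathrm{hol}(\gamma_v) = |H_v| - |D_v| + \sum_{h \in H_v} \hat s_h$, and the admissibility condition \eqref{eq:admissibility-inner} at $v$ then identifies this with $+1$.

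For Part \ref{lem:holonomy-around-vertex:2}, I would split the arcs of $\gamma$ according to which boundary vertex $v_i$ they are nearest to. The loop $\gamma$ runs parallel to the boundary just on the interior side, and near each $v_i$ it passes through the interior fan of faces at $v_i$, crossing each non-boundary edge at $v_i$ once. An almost identical computation to that of Part \ref{lem:holonomy-around-vertex:1} gives the local contribution
\begin{equation*}
|H_{v_i}| - |D_{v_i}| - 1 + \sum_{h \in H_{v_i}} \hat s_h~.
\end{equation*}
The "$-1$" relative to the inner-vertex count reflects that $\gamma$ does not close up around $v_i$: the two boundary half-edges at $v_i$ are not crossed by $\gamma$, and a careful bookkeeping of the two "corner" arcs on either side yields a single combined $\hat\delta$-contribution instead of two. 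For $i \neq 0$ the vertex $v_i$ is constrained, so by \eqref{eq:admissibility-inner} this local contribution vanishes, and only the contribution from $v_0$ survives, producing the expression inside the brackets of \eqref{eq:lem:holonomy-around-vertex:2}. The global sign $\eps$ arises from comparing the orientation of $\gamma$ with the standard anticlockwise loop: for an ingoing boundary, the parametrising map preserves orientation and $\eps=+1$, while for an outgoing boundary the pre-composition with $z \mapsto 1/z$ from Section~\ref{sec:world-sheet} reverses the direction of $\gamma$, giving $\eps=-1$. That $\mathrm{hol}(\gamma) = 1 - y$ is then the definition \eqref{eq:holo-via-label} of the type~$y$.

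The main obstacle is the bookkeeping near $v_0$: one must treat the two boundary half-edges at $v_0$ separately from the interior ones and correctly trace through the orientation convention of the gluing parametrisation to extract the "$-1$" shift and the sign $\eps$. A clean way to handle this is to verify the formula first for the standard spin surface $\Cb^y$ of \cite[Sec.\,3.4]{Novak:2015phd}, where a minimal decomposition makes both sides of \eqref{eq:lem:holonomy-around-vertex:2} directly computable, and then to use the move~\eqref{move:sheet-translation} (which leaves both sides invariant) and independence from the polygonal decomposition to extend the identity to all admissibly marked decompositions.
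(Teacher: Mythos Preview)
Your approach is essentially the paper's: both parts reduce to Lemma~\ref{lem:holonomy-computation} and match the arc sums $\sum_a(\hat\delta^a_{f_a}+\hat s^a_{e_a})$ against the half-edge data at the relevant vertices, then invoke admissibility.

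For Part~\ref{lem:holonomy-around-vertex:2} the paper's bookkeeping differs slightly from yours and is a little cleaner. Since $\gamma$ never crosses a boundary edge, the genuine arc contribution near $v_k$ involves only the \emph{inner} half-edges: the paper gets $|H_{v_k}|-2-|D_{v_k}|+\sum_{h\text{ inner}}\hat s_h$, shows this equals $u_k-u_{k+1}$ by admissibility at the constrained $v_k$, and telescopes the sum over $k=1,\dots,n-1$ to $u_1-u_n$; combined with the $k=0$ term and the identities $\hat u_1=u_1$, $\hat u_n=-u_n-1$, this yields exactly your expression at $v_0$. Your version with the full sum $\sum_{h\in H_{v_i}}\hat s_h$ is not literally the arc contribution near $v_i$ --- your ``corner-arc'' sentence does not explain where the two boundary $\hat s$-terms come from --- but it is an equivalent reorganisation of the same total (the spurious boundary terms cancel cyclically), so the argument still goes through. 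The fallback via $\Cb^y$ is unnecessary, and would in any case require separately checking that the combinatorial expression at $v_0$ is itself move-invariant.
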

\begin{proof}
    \textit{Part~\ref{lem:holonomy-around-vertex:1}:}
    We already explained that the holonomy is +1 as the $r$-spin structure extends over the vertex. We now show how to obtain the combinatorial expression in \eqref{eq:lem:holonomy-around-vertex:1}. 
    Consider Figure~\ref{fig:holonomy-inner-and-boundary-vertex}\,a)
    According to Lemma~\ref{lem:holonomy-computation} the contribution of an edge $e_i$ is exactly $\hat{s}_i$, 
    and the contribution of all faces is $|H_v|-|D_v|$: we count the faces for which the clockwise vertex from the marked edge is not $v$.
    
    \textit{Part~\ref{lem:holonomy-around-vertex:2}:}
    We consider the case where the boundary component is ingoing as shown in  
        Figure~\ref{fig:holonomy-inner-and-boundary-vertex}\,b).
    For an outgoing boundary component the computation is similar and the resulting holonomy is $-(1-y)$.
    According to \eqref{eq:admissibility-inner}, the admissibility condition at $v_k$, $k=1,...,n-1$, is
    \begin{align}
         &\sum_{\substack{h\in H_{v_k}\\h\text{ inner}}}\hat{s}_h -u_k-1+u_{k+1}=|D_{v_k}|-|H_{v_k}|+1
         \nonumber\\
        &\quad\Leftrightarrow\quad
        |H_{v_k}|-2-|D_{v_k}| +  \sum_{\substack{h\in H_{v_k}\\h\text{ inner}}}\hat{s}_h =u_k-u_{k+1}\,.
        \label{eq:hol-v-k-rewrite}
    \end{align}
    By Lemma~\ref{lem:holonomy-computation} the holonomy along $\gamma$ is
    \begin{align}
        \sum_{k=0}^{n-1}\Big(
        |H_{v_k}|-2-|D_{v_k}| +  \sum_{\substack{h\in H_{v_k}\\h\text{ inner}}}\hat{s}_h 
        \Big)
        &\overset{\eqref{eq:hol-v-k-rewrite}}=
        |H_{v_0}|-2-|D_{v_0}| +  \sum_{\substack{h\in H_{v_0}\\h\text{ inner}}}\hat{s}_h +\sum_{k=1}^{n-1}(u_k-u_{k+1})
        \nonumber\\
        &\,=
        |H_{v_0}|-2-|D_{v_0}| +  \sum_{\substack{h\in H_{v_0}\\h\text{ inner}}}\hat{s}_h +u_1-u_{n}
        \nonumber\\
        &\,=
        |H_{v_0}|-1-|D_{v_0}| +  \sum_{h\in H_{v_0}}\hat{s}_h \,,
    \end{align}
    where we used that the loop intersects $|H_{v_k}|-2$ edges for each vertex $v_k$, and that $\hat{u}_1=u_1$
    and $\hat{u}_n=-u_n-1$.
\end{proof}

\subsubsection{The gluing construction}\label{app:comb-model_glue}

Recall the gluing construction from Section~\ref{sec:spin-comb} and the 
marked polygonal decompositions of the bordisms $\Sigma$ and $\Sigma'$,
where the latter is obtained by gluing an in- and an outgoing boundary component along their boundary parametrisation.
The edge indices of the glued boundary edges of $\Sigma'$ are
\begin{align}
    s_i=s_i^\mathrm{in}+s_i^\mathrm{out}, \quad (i=1,\dots,n).
    \label{eq:app:glued-edge-index}
\end{align}
Let us first check that the assignment \eqref{eq:app:glued-edge-index} results in an admissible marking. We distinguish two situations:
\begin{itemize}
	\item \textit{Constrained vertices on the gluing boundary:} Suppose $v \in T_{\Sigma'}$ arose from gluing two constrained vertices $v^\mathrm{in},v^\mathrm{out} \in V^c$ on the gluing boundaries of $\Sigma$. To check the admissibility condition \eqref{eq:admissibility-inner} at $v$, first note that
\begin{equation}
|H_v| = |H_{v^\mathrm{in}}| + |H_{v^\mathrm{out}}| - 2
\quad , \quad 
|D_v| = |D_{v^\mathrm{in}}| + |D_{v^\mathrm{out}}|
~.
\end{equation}	
Let $e_i \in H_v$ be the edge on $\Sigma'$ which resulted from gluing $e_i^\mathrm{in}$ and $e_i^\mathrm{out}$ and which points into $v$, and analogously for $e_{i+1}$ pointing out of $v$. Then the contribution to the admissibility condition at $v$ is
\begin{equation}
	\widehat s_i + \widehat s_{i+1} 
	=
	\widehat s_i^\mathrm{in} + \widehat s_{i+1}^\mathrm{in}
	+
	\widehat s_{i}^\mathrm{out} + \widehat s_{i+1}^\mathrm{out} +1~,
\end{equation}
where we used \eqref{eq:app:glued-edge-index}. Substituting these relation shows that the admissibility condition at $v$ is obtained from adding those at $v^\mathrm{in}$ and $v^\mathrm{out}$.

\item \textit{The vertex $v_0$ on a closed gluing boundary:}
Let $v \in T_{\Sigma'}$ arise from gluing 
$v^\mathrm{in}$ and $v^\mathrm{out}$ in $T_\Sigma$ as above, but now assume that $v^\mathrm{in},v^\mathrm{out}$ are images of $v_0$ under the parametrisation of the closed gluing boundaries. At $v^\mathrm{in},v^\mathrm{out}$ the condition on the edge indices is given in \eqref{eq:admissibility-boundary}. 
Subtracting the two conditions from each other, the $1-y$ term on the right hand side cancels and the remaining calculation is exactly as above.
\end{itemize}

Next we turn to comparing the geometric and combinatorial gluing procedure for spin structures.
Let $\bbSigma=(\Sigma,\sigma)$ and $\bbSigma'=(\Sigma',\sigma')$ denote the $r$-spin bordisms defined by the corresponding marked polygonal decompositions $T_{\Sigma}$ and $T'_{\Sigma'}$ and $\bbSigma^\mathrm{glued}=(\Sigma',\sigma^\mathrm{glued})$ the $r$-spin bordism, where the $r$-spin structure $\sigma^\mathrm{glued}$ is defined by gluing the $r$-spin structure $\sigma$ along the boundary parametrisation maps of $\bbSigma$.

\begin{figure}
    \centering
    \includegraphics[scale=1.0]{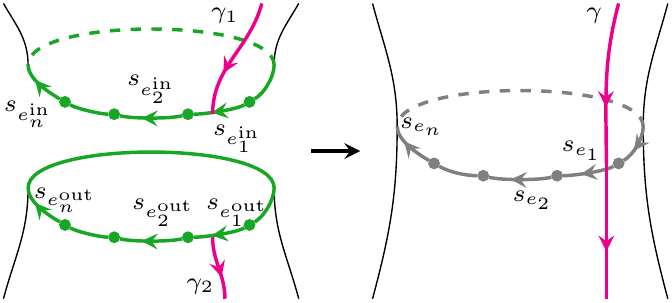}
    \caption{Arc segments before and after gluing along the boundary parametrisation.}
    \label{fig:holonomy-gluing}
\end{figure}
\begin{proposition}
    The $r$-spin structures $\sigma'$ and $\sigma^\mathrm{glued}$ are isomorphic.
\end{proposition}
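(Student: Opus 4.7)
The plan is to invoke Theorem~\ref{thm:markings-r-spin-surfaces}, which says that an isomorphism class of $r$-spin structure on a fixed surface with fixed polygonal decomposition is determined by the equivalence class of its admissible marking under the moves \eqref{move:sheet-translation}. Since both $\sigma'$ and $\sigma^{\mathrm{glued}}$ are $r$-spin structures on the same underlying surface $\Sigma'$, it suffices to show that they have the same holonomies on a generating set of loops, or equivalently that a marked polygonal decomposition representing $\sigma^{\mathrm{glued}}$ can be obtained from $T_{\Sigma'}(o',m',s')$ by a sequence of the moves \eqref{move:sheet-translation}--\eqref{move:remove-vertex}. The most direct route is via holonomies.

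First I would reduce to checking a convenient generating set of embedded loops in $\Sigma'$: loops contained entirely in $\Sigma\setminus \partial^{\mathrm{g}}\Sigma$ (the ``bulk'' side), and loops that cross the glued circle/interval transversally. For loops of the first type the statement is immediate, because away from the gluing boundary the polygonal decomposition, its marking and the geometric spin structure are unchanged, so both $\sigma'$ and $\sigma^{\mathrm{glued}}$ restrict to $\sigma$ and Lemma~\ref{lem:holonomy-computation} yields the same value.

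The key step is the second type of loop. Let $\gamma$ be such a loop in $\Sigma'$ crossing the glued boundary transversally; cut it open along the glued boundary into arcs $\gamma^{\mathrm{in}}$ and $\gamma^{\mathrm{out}}$ in $\bbSigma$. By definition of the geometric gluing along the boundary parametrisation maps, the holonomy of $\sigma^{\mathrm{glued}}$ along $\gamma$ equals the sum of the holonomies of $\sigma$ along $\gamma^{\mathrm{in}}$ and $\gamma^{\mathrm{out}}$ (the boundary parametrisation identifies the frame data on the two sides so no extra contribution arises). On the other hand, applying Lemma~\ref{lem:holonomy-computation} to compute the holonomy of $\sigma'$ along $\gamma$, the contributions from edges and faces away from the glued edges coincide with those from $\gamma^{\mathrm{in}}$ and $\gamma^{\mathrm{out}}$, while for each intersection of $\gamma$ with a glued edge $e_i$ the edge contribution is $\hat{s}_{e_i}=\hat{s}^{\mathrm{in}}_{e_i}+\hat{s}^{\mathrm{out}}_{e_i}$ by \eqref{eq:app:glued-edge-index}. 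Thus the two holonomies match arc-by-arc.

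The main obstacle I expect is bookkeeping the arcs that cross the gluing boundary exactly at a boundary vertex and making sure that the admissibility at the images of $v_0$ (on closed gluing boundaries) and of $v_0,v_n$ (on open ones) is handled correctly; this is exactly the point already addressed above, where the two $\delta_v(1-y)$ terms cancel when subtracting the admissibility conditions on the two sides. After this, the reader can also check a loop $\gamma$ running parallel to the glued curve: the holonomy from $\sigma'$ is $+1$ at the newly-inner vertex via \eqref{eq:lem:holonomy-around-vertex:1}, which matches the geometric requirement that the spin structure extends across the identified circle. Having matched holonomies on a generating set, Theorem~\ref{thm:markings-r-spin-surfaces} implies $\sigma'\cong\sigma^{\mathrm{glued}}$.
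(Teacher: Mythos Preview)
Your proposal is correct and follows essentially the same approach as the paper: both compare the two $r$-spin structures by matching holonomies along a generating family of curves, and both reduce the crossing contribution at the glued edge to the additivity \eqref{eq:app:glued-edge-index}. The paper additionally cites the $n=1$ case from \cite{Stern:2020oc} and explicitly includes \emph{arcs} between remaining gluing boundaries of $\Sigma'$ (not only closed loops) in its generating family; your sketch treats only closed loops, but the same edge-by-edge argument you give extends verbatim to such arcs, so this is a matter of completeness rather than a genuine gap.
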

\begin{proof}
    In the case $n=1$ this holds by \cite[Prop.\,3.1.14]{Stern:2020oc}.
    Note that our boundary edges are oriented oppositely to  \cite{Stern:2020oc}, so that $s^\mathrm{StSz}_1 = -1-s_1$. In \cite{Stern:2020oc} the index of the glued edge is $s_1^\mathrm{StSz}=s_1^\mathrm{StSz,in}+s_1^\mathrm{StSz,out}+1$, which translates into \eqref{eq:app:glued-edge-index} after substitution.

    For arbitrary $n\ge1$ the proof follows the same idea as the proof of \cite[Prop.\,3.1.14]{Stern:2020oc}, which we sketch here.
    Recall from Section~\ref{sec:geom-oc-r-spin-surf} that isomorphism classes of $r$-spin structures on $\Sigma$ (resp.\, $\Sigma'$) are in bijection with the set of holonomies assigned to a fixed set of loops and arcs in $\Sigma$ (resp.\, $\Sigma'$).
    Let us assume that these loops and arcs agree for $\Sigma$ and $\Sigma'$, except those starting and ending at the chosen gluing boundary components of $\Sigma$ and the one crossing the glued edge in $\Sigma'$. 
    Consider the contribution $h'$ to the holonomy computed for an arc segment crossing the glued edge, and the contributions $h^\mathrm{in}$ and $h^\mathrm{out}$ for the two arc segments starting and ending on the in- and outgoing boundary components shown in Figure~\ref{fig:holonomy-gluing}.
    The holonomy for the arc in $\Sigma'$ with $r$-spin structure $\sigma^\mathrm{glued}$ is $h^\mathrm{in}+h^\mathrm{out}$.
    The computation of holonomies from the combinatorial model in Lemma~\ref{lem:holonomy-around-vertex} can be extended to arcs as in \cite[Eq.\,2.23]{Runkel:2018rs}, but we do not present the details here.
    The $r$-spin structures $\sigma'$ and $\sigma^\mathrm{glued}$ are isomorphic if and only if $h'=h^\mathrm{in}+h^\mathrm{out}$,
    which holds if \eqref{eq:app:glued-edge-index} holds.
\end{proof}

\begin{remark}
    Recall from Theorem~\ref{thm:markings-r-spin-surfaces} the bijection between isomorphism classes  $\mathrm{Spin}^{r}_2(\Sigma)/\mathrm{iso}$ of $r$-spin structures on $\Sigma$ (with fixed holonomies along closed gluing boundary components) and the equivalence classes $\mathrm{Marking}(T_{\Sigma})/\sim$ of admissible markings of a fixed polygonal decomposition $T_{\Sigma}$ of $\Sigma$ such that \eqref{eq:lem:holonomy-around-vertex:2} holds at the images of $v_0$ on closed gluing boundaries.
    The edge index assignment \eqref{eq:app:glued-edge-index} makes the following diagram of bijections commute:
    \begin{equation}
        \begin{tikzcd}
            \mathrm{Spin}^{r}_2(\Sigma)/\mathrm{iso} \ar{r}{\mathrm{glue}} &
            \mathrm{Spin}^{r}_2(\Sigma')/\mathrm{iso} 
            &{[\sigma]}\ar[mapsto]{r}{}
            &{[\sigma^\mathrm{glued}]}\ar[equal]{r}{}&{[\sigma']}\\
            \mathrm{Marking}(T_{\Sigma})/\sim \ar{r}{\eqref{eq:app:glued-edge-index}} \ar{u}{\mathrm{Thm.\,\ref{thm:markings-r-spin-surfaces}}}&
            \mathrm{Marking}(T'_{\Sigma'})/\sim \ar{u}[swap]{\mathrm{Thm.\,\ref{thm:markings-r-spin-surfaces}}}& {[T_{\Sigma}(\sigma)]}\ar[mapsto]{u}{}\ar[mapsto]{rr}{}&&{[T'_{\Sigma'}(\sigma')]}\ar[mapsto]{u}{}
        \end{tikzcd}
    \end{equation}
    This is the only edge index assignment that makes the above diagram commute for every surface.
\end{remark}

\subsection{Invariance of correlators under local moves}\label{app:invariance}

We show invariance of correlators defined in Section~\ref{sec:spin-parity-CFT} under the local moves 
\eqref{move:sheet-translation}-\eqref{move:remove-vertex} on marked polygonal decompositions.
Recall that \eqref{move:sheet-translation} is a repeated application of \eqref{move:edge-or-bulk}.
These are the moves shown in  Figures~\ref{fig:moves-fix}, \ref{fig:moves-PLCW} and~\ref{fig:remove-edge-vertex-defect}.
The graphs of $F$-defects assigned to plaquettes and edges are shown in Figures~\ref{fig:spin-defect-network}, \ref{fig:defects-spin-dn}, and~\ref{fig:face-w-bdry-dn}.
The manipulation rules for networks of $F$-defects are summarised in Figure~\ref{fig:F-defect-manipulation}.

\subsubsection*{Moves involving edges without free boundary or defect lines} 

\begin{lemma}
    The correlators are invariant under the move \eqref{move:edge-or-bulk}.
    \label{lem:move:edge-or-bulk}
\end{lemma}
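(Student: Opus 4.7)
The plan is to reduce the claim to a single local identity for $F$-defect networks, via Proposition~\ref{prop:orientedCFT-corr-welldef}. Since the move \eqref{move:edge-or-bulk} modifies the marked polygonal decomposition only in a neighbourhood of the inner edge $e$, the two corresponding oriented defect world sheets $W_\mathrm{or}(\bbSigma;F)$ coincide outside this neighbourhood. It therefore suffices to show that the two local $F$-defect graph pieces placed at $e$, as prescribed in Figure~\ref{fig:spin-defect-network}\,b) (or in Figure~\ref{fig:defects-spin-dn}\,b) if $e$ is crossed by a defect line), agree as morphisms in $\hCc$: one piece corresponds to orientation $o$ with edge index $s_e$, the other to the reversed orientation with index $-1-s_e$.

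To establish this local equality I would prove the following identity for a single $F$-ribbon crossing $e$: the $F$-ribbon decorated with $N_F^{s_e}$, viewed relative to orientation $o$, equals the same $F$-ribbon decorated with $N_F^{-1-s_e}$ viewed relative to the reversed orientation. Once established for $s_e = 0$, the general case follows by composing with $N_F^{s_e}$ and using that $N_F$ is a Frobenius algebra automorphism, together with the rules in Figure~\ref{fig:F-defect-manipulation}\,(d). The case $s_e = 0$ in turn reduces to the following piece of Frobenius algebra bookkeeping: bend the $F$-ribbon using the unit and counit available from (Figure~\ref{fig:F-defect-manipulation}\,a), invoke the defining relation \eqref{eq:def-Nakayama} of $N_F$ to absorb the resulting loop into a single factor of $N_F^{-1}$, and straighten the remaining diagram using (a), (b) and the associativity moves (c).

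The cases where the edge $e$ additionally carries a defect crossing (Figure~\ref{fig:defects-spin-dn}\,b)) or is an edge of a plaquette touching the free boundary (Figure~\ref{fig:face-w-bdry-dn}\,a), b)) are handled in the same way: the extra structure attached to the defect $X$ or to the boundary label commutes past the local manipulation by naturality of the $F$-action, so the same orientation-reversal identity for the $F$-ribbon alone still applies.

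The main obstacle is purely notational: one has to translate carefully what ``reversing the edge orientation'' means for the local defect graph -- in particular, which strand of the $F$-ribbon is identified with which of the two adjacent polygons, and on which side of the edge the Nakayama power sits. Once this bookkeeping is fixed, the required identity is a routine consequence of the Frobenius and Nakayama axioms in $\hCc$, and no new geometric input beyond Proposition~\ref{prop:orientedCFT-corr-welldef} is needed.
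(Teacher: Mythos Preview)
Your proposal is correct and follows the same strategy as the paper: reduce to a local equality of the two $F$-defect pieces assigned to the edge, then verify it from the algebraic properties of $N_F$. The paper's execution is more direct than yours --- it simply draws the two local configurations and observes they agree because $N_F$ is a morphism of Frobenius algebras, without the intermediate reduction to $s_e=0$ or the bending argument via \eqref{eq:def-Nakayama}.

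Two minor remarks. First, Proposition~\ref{prop:orientedCFT-corr-welldef} is not the relevant reduction principle here: that proposition concerns independence of the auxiliary $B$-ribbon network in the connecting manifold, whereas what you need is the simpler observation that if the two $F$-defect graphs on the world sheet agree locally as morphisms in $\hCc$, they yield identical connecting manifolds and hence identical correlators. Second, the paper organises the boundary and defect variants of move~\eqref{move:edge-or-bulk} into separate lemmas (Lemma~\ref{lem:M2-edge-free-bnd} and the corresponding defect lemma), so your final paragraph addresses statements beyond the present one.
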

\begin{proof}
    We compare what is assigned to the two configurations:
    \begin{align}
        &\includegraphics[scale=1.0]{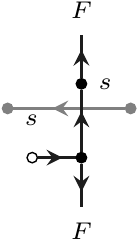}
        &&\includegraphics[scale=1.0]{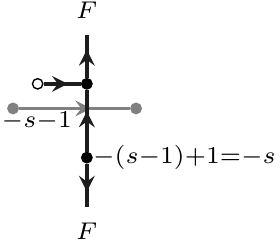}
    \end{align}
    These are equal, as the Nakayama automorphism is a morphism of Frobenius algebras.
\end{proof}
\begin{lemma}
    The correlators are invariant under the move \eqref{move:marked-edge-bulk}.
    \label{lem:move:marked-edge-bulk}
\end{lemma}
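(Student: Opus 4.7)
My plan is to compare, side by side, the local $F$-defect networks on the plaquette before and after the move and show they are equal as morphisms in $\hCc$ using the rules in Figure~\ref{fig:F-defect-manipulation}. Let $p$ be the plaquette whose marked edge is shifted anticlockwise from $e_0$ to the next edge $e_1$. In the original configuration, the local defect graph near the corner between $e_0$ and $e_1$ consists of: (i) a Frobenius (co)product vertex attached to the line that exits through the marked edge $e_0$, as prescribed by Figure~\ref{fig:spin-defect-network}\,a); and (ii) the edge-graph of Figure~\ref{fig:spin-defect-network}\,b) on $e_1$, carrying the index $s_{e_1}$ together with the appropriate power of $N_F$ determined by the orientation of $e_1$. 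After the move, the roles of $e_0$ and $e_1$ are interchanged and the index of $e_1$ is shifted by $\pm 1$.

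First I would reduce to a local picture by noting that every other part of the defect network on $\Sigma$ is unchanged, so by Proposition~\ref{prop:orientedCFT-corr-welldef} it suffices to show that the two local ribbon graphs in the connecting manifold give the same morphism. Next, I would drag the ``tail'' attached to the old marked edge $e_0$ across the single Frobenius vertex that joins it to the rest of the $F$-tree in $p$, using the associativity/coassociativity and unit/counit moves of Figure~\ref{fig:F-defect-manipulation}\,(c),\,(a); this rearranges the tree so that the tail now exits through $e_1$ instead of $e_0$, at the cost of introducing a half-twist of an $F$-line across the corner between $e_0$ and $e_1$. Using Figure~\ref{fig:F-defect-manipulation}\,(d), this half-twist is converted into an insertion of $N_F^{\pm 1}$ on the line crossing $e_1$, with the sign determined by the orientation of $e_1$ relative to $p$.

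The final step is to check that this newly produced factor of $N_F^{\pm 1}$ agrees precisely with the change of edge index by $\pm 1$ prescribed by move~\eqref{move:marked-edge-bulk}: indeed, by the conventions of Figure~\ref{fig:spin-defect-network}\,b), the edge graph on $e_1$ carries $N_F^{s_{e_1}}$ when $e_1$ is oriented one way and $N_F^{-1-s_{e_1}}$ when oriented the other way, so shifting $s_{e_1}$ by $\pm 1$ adds or removes exactly one factor of $N_F^{\pm 1}$ on that edge. Matching these two effects gives the equality of the local defect graphs.

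The main obstacle will be bookkeeping: one must verify that the signs and powers of $N_F$ produced by the associativity/Frobenius moves across the corner match those dictated by the orientation convention of~\eqref{move:marked-edge-bulk} in all four sub-cases (orientation of $e_0$ vs.\ $e_1$, and whether the shift is $+1$ or $-1$). Once the case $e_0 \ne e_1$ with $e_1$ an inner edge is settled, the remaining possibilities (two edges of $p$ are identified, or $e_1$ lies on the free boundary, in which case it carries no index or orientation) follow by the same manipulation after first applying Figure~\ref{fig:F-defect-manipulation}\,(a) to remove the corresponding unit/counit.
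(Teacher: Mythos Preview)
Your approach is essentially the same as the paper's: compare the two local $F$-defect networks produced by the plaquette rule of Figure~\ref{fig:spin-defect-network}\,a) before and after shifting the marked edge, and use the Nakayama identities to match them. Two points are worth noting.

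First, the paper shortens the bookkeeping you anticipate by invoking Lemma~\ref{lem:move:edge-or-bulk} (invariance under \eqref{move:edge-or-bulk}) at the outset to fix a convenient orientation of the relevant edge. This reduces your ``four sub-cases'' to a single configuration, so you do not need to check all orientation combinations separately.

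Second, the ``half-twist'' picture is not quite the right description: the ribbons in $W_\text{or}(\bbSigma;F)$ lie flat in $\Sigma_0$ with 2-orientation agreeing with the surface, so no framing twist is introduced when rerouting. What actually produces the factor $N_F^{\pm 1}$ is the very \emph{definition} of the Nakayama automorphism in~\eqref{eq:def-Nakayama} (a cap/cup bending of an $F$-line through a product--coproduct pair), together with the fact that $N_F$ is a morphism of Frobenius algebras and hence can be pushed through the tree. That is exactly the two-step identity the paper uses. Your reference to the rules in Figure~\ref{fig:F-defect-manipulation}\,(d) is the correct toolbox; just phrase the mechanism as ``definition of $N_F$ and $N_F$ is a Frobenius algebra morphism'' rather than ``half-twist''.
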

\begin{proof}
    By Lemma~\ref{lem:move:edge-or-bulk} the construction is invariant under changing the edge orientation, so we can fix a particular orientation.
    
    We compare what is assigned to the two configurations:
    \begin{align}
        &\includegraphics[scale=1.0]{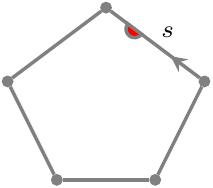}
        && 
        && 
        && 
        &&\includegraphics[scale=1.0]{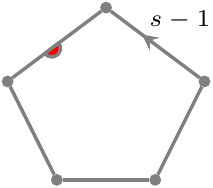}
        \nonumber\\
        &\raisebox{-0.5\height}{\includegraphics[scale=1.0]{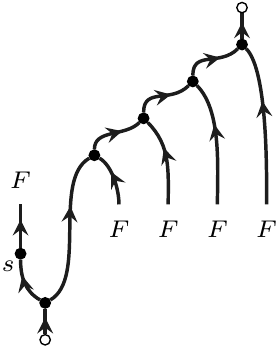}}
        &=&&
        &\raisebox{-0.5\height}{\includegraphics[scale=1.0]{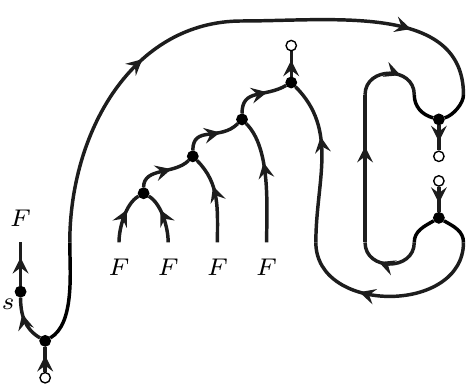}}
        &=&&
        &\raisebox{-0.5\height}{\includegraphics[scale=1.0]{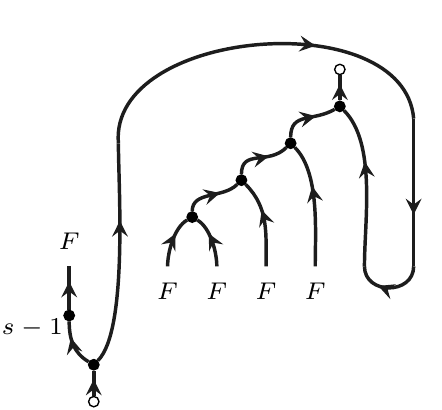}}
    \end{align}
    Here we used the definition of $N$ and that $N$ is a Frobenius algebra morphism.
\end{proof}
\begin{lemma}
    The correlators are invariant under the move \eqref{move:remove-edge}. 
    \label{lem:move:remove-edge-app}
\end{lemma}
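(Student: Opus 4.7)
The plan is to directly compare the $F$-defect graphs assigned to the two configurations in move \eqref{move:remove-edge} and use the manipulation rules in Figure~\ref{fig:F-defect-manipulation} to show that they determine the same element of the space of conformal blocks, by appealing to Proposition~\ref{prop:orientedCFT-corr-welldef} applied to the underlying oriented $B=\One$ theory.

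First I would draw both sides of the move \eqref{move:remove-edge} with their associated $F$-defect networks, using the prescription of Figure~\ref{fig:spin-defect-network}\,a) for the interior of each plaquette and Figure~\ref{fig:spin-defect-network}\,b) for the neighbourhood of each inner edge. On the left-hand side there are two separate plaquettes $p_1$ and $p_2$, each with its own $F$-loop running around the inside of the plaquette and with one $F$-strand crossing the shared edge $e$. On the right-hand side this shared edge has been erased and the two plaquettes merged into a single polygon, with one combined $F$-loop. The specific edge orientation, edge index ($s_e=0$) and the fact that $e$ is the marked edge of exactly one of the two polygons, as fixed by Figure~\ref{fig:moves-PLCW}, ensure that the two pieces of $F$-loop are joined at exactly the right place: namely, after the rules of Figure~\ref{fig:spin-defect-network}, the strand of $F$ crossing $e$ carries no power of $N_F$ and connects the two plaquette loops via a product and a coproduct vertex sitting on either side of $e$.

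The main computation is then a local one, taking place in a small disc around $e$. Using the bubble move in Figure~\ref{fig:F-defect-manipulation}\,b) to collapse the short $F$-strand crossing $e$ (which, together with the two neighbouring product/coproduct vertices, forms an $F$-bubble between the two loops), together with the associativity moves of Figure~\ref{fig:F-defect-manipulation}\,c) to rearrange the resulting tri- and 4-valent junctions, one reduces the left-hand side to a single $F$-loop traversing the interior of the merged polygon, with the outgoing strands going to the same edges as on the right-hand side. Since this manipulation takes place entirely within one disc-neighbourhood of $e$, and since the marked edge of the remaining polygon and the edge indices on the surviving edges coincide on the two sides by the definition of \eqref{move:remove-edge}, the resulting networks are equal as ribbon graphs.

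The hard part of the argument is purely bookkeeping: one must verify carefully that the positions of the marked edges (which determine where the loop is ``broken'' in each plaquette, cf.\ Figure~\ref{fig:spin-defect-network}\,a)), and the Nakayama powers prescribed by the edge indices of the surviving edges, match on both sides after the local reduction. This is straightforward once one works out the rotation conventions in Figure~\ref{fig:spin-defect-network} for the loop base point: the chosen edge of the merged polygon on the right is, by the move, the chosen edge of the polygon on the left whose marked edge is \emph{not} $e$, so that the position of the loop base point is unchanged. Having checked this, Figure~\ref{fig:F-defect-manipulation} produces the desired equality of networks, and hence, by \eqref{eq:orientedCFT-corr-def_withbnd+def} and Proposition~\ref{prop:orientedCFT-corr-welldef}, the equality of correlators claimed in the lemma. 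The case with a defect line passing through the merged polygons, as in Figure~\ref{fig:remove-edge-vertex-defect}, is identical since by construction the defect graph does not interact with the $F$-network except at the prescribed junctions away from $e$.
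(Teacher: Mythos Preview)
Your high-level plan—compare the two $F$-defect networks locally near the removed edge and reduce one to the other by identities valid for the special Frobenius algebra $F$—is the same as the paper's. There are, however, two concrete points where your execution diverges.

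First, the paper opens by invoking the already-established Lemma~\ref{lem:move:marked-edge-bulk} (invariance under \eqref{move:marked-edge-bulk}) to move the marked edge of the left face into a convenient position before carrying out the local comparison. You instead attempt to track the marked-edge positions by hand (``the hard part of the argument is purely bookkeeping''); this is not impossible, but the paper's reduction collapses the remaining identity to a single step and avoids that bookkeeping entirely.

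Second, and more seriously, the local move you invoke is misidentified. The configuration you describe—a single $F$-strand crossing $e$ joining a coproduct vertex on one side to a product vertex on the other—is not a bubble: the bubble move of Figure~\ref{fig:F-defect-manipulation}\,b) is the specialness identity $\mu\circ\Delta=\id_F$, which requires \emph{two} parallel strands between the coproduct and the product. What the paper actually uses here is the \emph{Frobenius relation} together with \emph{(co)unitality}; specialness plays no role in \eqref{move:remove-edge} (it enters only later, in the proof of \eqref{move:remove-vertex}). So the reduction you sketch via the bubble move does not go through as written. Replacing ``bubble move and associativity'' by ``Frobenius relation and (co)unitality'', and preceding this by the marked-edge simplification from Lemma~\ref{lem:move:marked-edge-bulk}, recovers exactly the paper's argument. (Your appeal to Proposition~\ref{prop:orientedCFT-corr-welldef} is also beside the point: that result concerns independence of the $B$-ribbon network, which for $B=\One$ is vacuous; what you need, and what you in effect argue, is just equality of the two $F$-ribbon graphs under the above identities.)
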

\begin{proof}
    By Lemma~\ref{lem:move:marked-edge-bulk} the construction is invariant under changing the marked edge of a face, so we can fix a particular edge of the face on the left. 
    The two sides agree by the Frobenius relation and (co)unitality:
    \begin{align}
        &\includegraphics[scale=1.0]{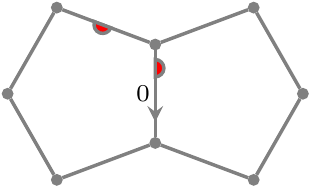}
        && 
        &&\includegraphics[scale=1.0]{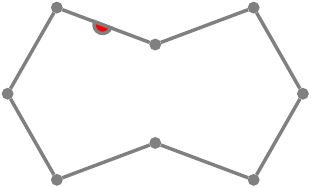}
        \nonumber\\
        &\raisebox{-0.5\height}{\includegraphics[scale=1.0]{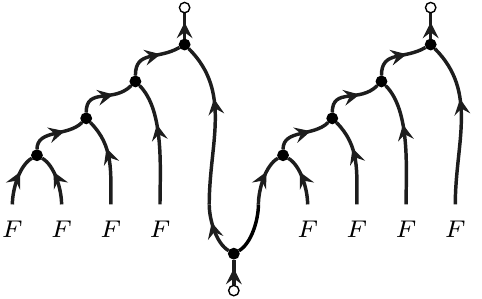}}
        &=&&
        &\raisebox{-0.5\height}{\includegraphics[scale=1.0]{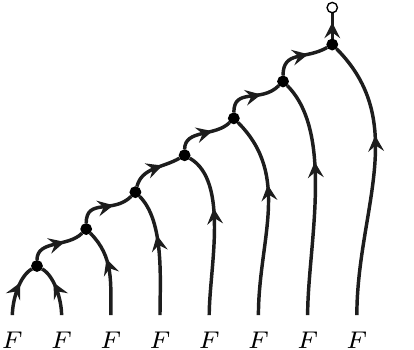}}
        \label{eq:proof-edge-move}
    \end{align}
\end{proof}
\begin{lemma}
    The correlators are invariant under the move \eqref{move:remove-vertex}.
    \label{lem:move:remove-vertex}
\end{lemma}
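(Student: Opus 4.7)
The plan is to handle the two allowed configurations of move \eqref{move:remove-vertex} separately — an inner $2$-valent vertex and a $2$-valent vertex on the free boundary — and in each case reduce the equality of correlators to a small local identity in the network of $F$-defects, in the same spirit as Lemmas~\ref{lem:move:edge-or-bulk}--\ref{lem:move:remove-edge-app}.

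First I would invoke Lemmas~\ref{lem:move:edge-or-bulk} and~\ref{lem:move:marked-edge-bulk} to reduce the check to the canonical edge orientations and marked-edge choices displayed in Figure~\ref{fig:moves-PLCW}, so only a single local configuration needs to be verified in each case. For the inner-vertex case, the two sides of the move differ locally as follows: on the left, two $F$-ribbons meet at the $2$-valent vertex $v$, carrying the powers of $N_F$ prescribed by their edge indices $s$ and $s'$ via the rule in Figure~\ref{fig:spin-defect-network}\,b), while on the right they have been merged into a single $F$-ribbon carrying $N_F^{s+s'}$, matching the gluing prescription \eqref{eq:glued-edge-label}. The two adjacent plaquette decorations (Figure~\ref{fig:spin-defect-network}\,a)) contribute a coproduct and a product on either side of $v$. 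Using the Frobenius (co)unitality identity of Figure~\ref{fig:F-defect-manipulation}\,a) exactly as in \eqref{eq:proof-edge-move}, this junction pair collapses into a straight $F$-line. Using that $N_F$ is a morphism of Frobenius algebras (Figure~\ref{fig:F-defect-manipulation}\,d)), I can then slide the two Nakayama insertions together and combine them via $N_F^s\circ N_F^{s'}=N_F^{s+s'}$, yielding the decoration of the right-hand side.

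For the free-boundary vertex, the relevant local pieces of the defect network are given in Figure~\ref{fig:face-w-bdry-dn}\,a,b); boundary edges carry no edge index and no Nakayama insertion, so the calculation reduces to pure (co)unitality of $F$ acting on the boundary and involves no $N_F$ bookkeeping.

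The main obstacle will be the explicit tracking of the Nakayama powers in the inner case: one has to verify that, given the edge orientations fixed by Figure~\ref{fig:moves-PLCW} and the placement rules of Figures~\ref{fig:spin-defect-network} and~\ref{fig:defects-spin-dn}, the two $N_F$-insertions sit on the same side of the junction so that they genuinely compose (and do not instead meet via a ``bubble'' as in Figure~\ref{fig:F-defect-manipulation}\,b), which would produce $N_F^{s-s'}$). Once this bookkeeping is done, the remaining identity follows from the same elementary Frobenius moves already used in the proofs of Lemmas~\ref{lem:move:edge-or-bulk}--\ref{lem:move:remove-edge-app}.
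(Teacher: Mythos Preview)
Your plan differs from the paper's in one key simplification and has a small gap in the algebraic identity you invoke.

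The paper does not track Nakayama powers at all. Instead, it first uses invariance under \eqref{move:sheet-translation} (available via Lemma~\ref{lem:move:marked-edge-bulk}) to set the edge index of the unsplit edge to~$0$; with this normalisation the split configuration of Figure~\ref{fig:moves-PLCW} carries no $N_F$-insertions, and the residual identity is a pure Frobenius-algebra computation. This sidesteps entirely the bookkeeping you flag as the main obstacle.

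The more substantive issue is the local picture. The defect network is dual to the polygonal decomposition, so nothing sits at the vertex $v$; rather the two edges adjacent to $v$ are crossed by two parallel $F$-strands running between the \emph{same} pair of plaquette centres. Removing $v$ replaces these two parallel strands by a single one. The identity that collapses this is \emph{not} the (co)unitality of Figure~\ref{fig:F-defect-manipulation}\,a) used in~\eqref{eq:proof-edge-move}; rather one needs associativity, the Frobenius relation, and crucially that $F$ is special (the bubble move of Figure~\ref{fig:F-defect-manipulation}\,b)) together with unitality. Your appeal to ``exactly as in~\eqref{eq:proof-edge-move}'' would therefore not close the argument: edge removal and vertex removal are dual operations on the decomposition and need different Frobenius identities on the defect side.

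Finally, Lemma~\ref{lem:move:remove-vertex} sits in the subsection for moves away from free boundary and defect lines; the free-boundary case is handled separately later in the appendix, so you need not treat it here.
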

\begin{proof}
By Lemma~\ref{lem:move:marked-edge-bulk} we have invariance under move \eqref{move:marked-edge-bulk} and hence also under move \eqref{move:sheet-translation}, which we use the set the edge index before spitting to zero.
    The construction assigns the left hand side of \eqref{eq:proof-edge-move} to the two faces before splitting the edge.
    When we split the edge we get
    \begin{align}
        &\includegraphics[scale=1.0]{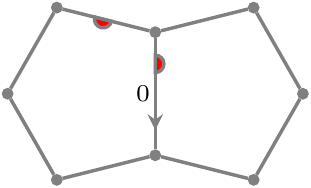}
        && 
        &&\includegraphics[scale=1.0]{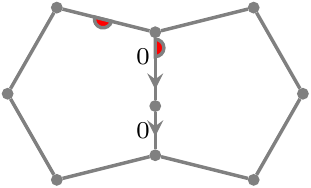}\nonumber\\
        &\raisebox{-0.5\height}{\includegraphics[scale=1.0]{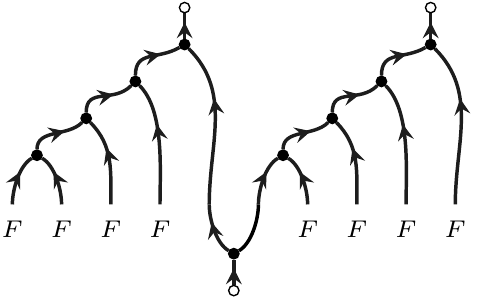}}
        &=&
        &&\raisebox{-0.5\height}{\includegraphics[scale=1.0]{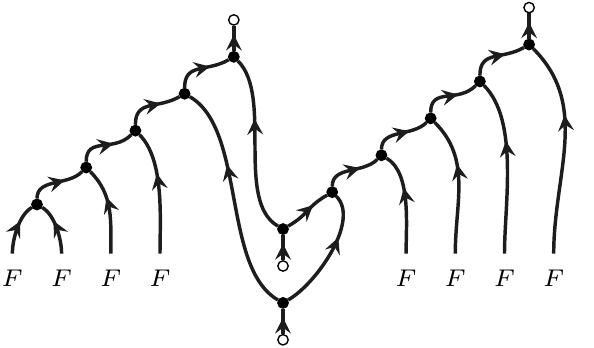}}
    \end{align}
    which is the same by applying associativity, the Frobenius relation and use that $F$ is special and unital.
\end{proof}

\subsubsection*{Moves involving the free boundary}

\begin{lemma}\label{lem:M2-edge-free-bnd}
    The correlators are invariant under  the move \eqref{move:edge-or-bulk} in the presence of boundary edges.
\end{lemma}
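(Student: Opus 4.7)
The plan is to reduce the statement to the purely local computation already carried out in the proof of Lemma~\ref{lem:move:edge-or-bulk}. The move \eqref{move:edge-or-bulk} is strictly local: it reverses the orientation of one edge $e \in E \setminus E^\mathrm{f}$ and replaces its index $s_e$ by $-1-s_e$, leaving all other edges, marked edges, and edge indices untouched. So it suffices to show that flipping the orientation and index of a single non-boundary edge does not alter the defect network on $W_\mathrm{or}(\bbSigma;F)$, even when the adjacent plaquettes happen to contain free-boundary edges.

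First I would unpack the $F$-defect prescription near $e$. Because $e$ is not on the free boundary, Figures~\ref{fig:spin-defect-network}\,b) and~\ref{fig:face-w-bdry-dn}\,a,b) assign to a neighbourhood of $e$ exactly the same local datum as in the purely inner case: a single $F$-strand transverse to $e$ carrying the factor $N_F^{s_e}$, oriented according to the orientation of $e$. The additional $F$-decoration arising from the free-boundary portion of the adjacent plaquette (the attachment of $F$-lines to the boundary module $M$ shown in Figure~\ref{fig:face-w-bdry-dn}) lives on \emph{other} edges of that plaquette and is not modified by the move. Consequently the difference between the two networks compared by the move is confined to the single $F$-strand dual to $e$.

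Next I would execute the local identity exactly as in the proof of Lemma~\ref{lem:move:edge-or-bulk}: reversing the direction of the $e$-dual $F$-strand while simultaneously replacing $N_F^{s_e}$ by $N_F^{-1-s_e}$ yields the same morphism in $\hCc$, because $N_F$ is an isomorphism of Frobenius algebras and its definition \eqref{eq:def-Nakayama} identifies $F$ with its dual at the cost of a shift of $N_F$-powers by $-1$. Since $\hZc_\Cc$ is applied to connecting manifolds that differ only in this local segment, the resulting vectors in $\Bl(\Sigma)$ agree, and the correlators match via \eqref{eq:spinCFT-corr-with-def}.

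The only potential obstacle is making sure that the $F$-strand being rewritten does not already interact with the module action on $M$ inside the plaquette where the move takes place, for otherwise the rewriting would require the $\Zb_2$-equivariance of $M$ rather than merely the Frobenius-morphism property of $N_F$. This is ruled out by the constraints on polygonal decompositions in Section~\ref{sec:surf-w-def}: a plaquette touching the free boundary does so in a single sequence of consecutive edges, and the edge $e$ under consideration is not on that boundary, so its dual $F$-strand meets the module line only after passing through the vertices of the decomposition. The local rewriting therefore takes place strictly away from $M$, and the argument of Lemma~\ref{lem:move:edge-or-bulk} carries over verbatim.
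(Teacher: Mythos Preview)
Your reduction to the inner-edge case rests on a misidentification of the local defect network. The edges to which this lemma applies are inner edges that start or end on the free boundary. The boundary module line $M$ is placed as a defect running parallel to the free boundary (this is the ``parallel line defect'' viewpoint of Section~\ref{sec:surf-w-def}), and it therefore crosses any such edge $e$ between the two adjacent boundary plaquettes. That is precisely what the paper records immediately after the lemma when it says that ``the defect lines for the free boundary always cross them''. The edge-local datum prescribed by Figure~\ref{fig:face-w-bdry-dn}\,b) sits at this crossing and involves the $M$-line carrying a power of $N_M$, not merely an $F$-strand with $N_F^{s_e}$.

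Your final paragraph tries to argue this interaction away, but the geometry you invoke does not hold: the constraint that a boundary plaquette meets $\partial\Sigma$ in a single run of consecutive edges says nothing about where the $M$-line sits relative to the inner edge $e$; since $M$ runs parallel to $\partial\Sigma$, it must pass through $e$ near its boundary vertex to get from one boundary plaquette to the next. Hence the move \eqref{move:edge-or-bulk} changes the $N_M$-decoration on the module line at that crossing, and the identity one actually has to verify compares the $M$-line with $N_M^{s}$ in one orientation to the $M$-line with $N_M^{-1-s}$ in the other. This is not the computation of Lemma~\ref{lem:move:edge-or-bulk}; the paper's proof instead displays the required equality of defect networks \eqref{eq:proof-edge-or-boundary} for the $M$-line directly.
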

\begin{proof}
This amounts to the following equality of defect networks:
    \begin{align}
        \raisebox{-0.5\height}{\includegraphics[scale=1.0]{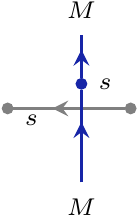}}
        &~~=~~ \raisebox{-0.5\height}{\includegraphics[scale=1.0]{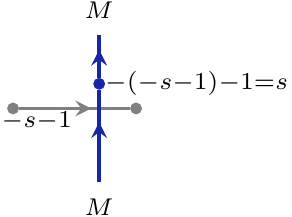}}
        \label{eq:proof-edge-or-boundary}
    \end{align}
\end{proof}
By the previous lemma we can fix the orientations of inner edges so that the defect lines for the free boundary always cross them as shown on the left hand side of \eqref{eq:proof-edge-or-boundary}. Note that the orientation of edges on the gluing boundary is fixed.

Since we fixed the marked edge on every face touching the free boundary to the edge where the boundary leaves the face,
we need to check invariance under moves which keep this choice of marked edges. 

\begin{lemma}
    The correlators are invariant under the move \eqref{move:sheet-translation} in the presence of boundary edges.
.
\end{lemma}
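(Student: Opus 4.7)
The plan is to derive M1 from an iterated application of M3 (Lemma~\ref{lem:move:marked-edge-bulk}), adapted to the constraint that the marked edge on a boundary polygon is fixed to the edge following the last free-boundary edge in the anticlockwise direction. First I would recall how M1 is obtained for interior polygons: applying M3 once at each edge of the polygon during a full anticlockwise loop returns the marked edge to its original position while shifting every edge index by $\pm 1$ according to its orientation. Iterating the loop $x$ times then yields M1 with parameter $x$.

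For a boundary polygon $p$, a literal combinatorial iteration of M3 is forbidden because the intermediate marked edges would need to lie on the free boundary or on non-designated non-boundary edges. However, the content of Lemma~\ref{lem:move:marked-edge-bulk} is a purely local identity between two defect-network configurations, resting only on the four algebraic identities $\mu\circ(N_F\otimes N_F)=N_F\circ\mu$, $\Delta\circ N_F=(N_F\otimes N_F)\circ\Delta$, $\varepsilon\circ N_F=\varepsilon$, and $N_F\circ\eta=\eta$. I would therefore apply these local identities along an imagined M3-loop around $p$: at each step crossing a non-boundary edge, the identity of Lemma~\ref{lem:move:marked-edge-bulk} transfers a $\pm 1$ shift onto that edge's index; at each step crossing a free-boundary edge (where no index exists), the local identity reduces to a trivial equality between two copies of the boundary configuration of Figure~\ref{fig:face-w-bdry-dn}\,b). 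Accumulating the shifts over a full loop gives M1 with parameter $1$, and iterating $x$ times completes the argument.

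The hard part will be verifying that each step of the imagined iteration really does correspond to a local identity of defect networks in the boundary-polygon configuration of Figure~\ref{fig:face-w-bdry-dn}\,a), especially at the two transitions between the free-boundary stretch of edges and the adjacent non-boundary edges, where the surrounding network does not look exactly like the interior situation used in the proof of Lemma~\ref{lem:move:marked-edge-bulk}. If a direct verification there proves cumbersome, a cleaner alternative is to introduce an auxiliary interior edge via the inverse of move~\eqref{move:remove-edge}, splitting $p$ into an interior sub-polygon $p_1$ (to which the already established interior-polygon form of M1 applies directly) and a strictly smaller boundary sub-polygon $p_2$, to which the lemma applies by induction on the number of non-boundary edges. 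The base case --- a polygon with a single non-boundary edge --- is then a direct verification from the Frobenius-morphism property of $N_F$, since the shift by $\pm x$ on the unique non-boundary edge can be absorbed against $\varepsilon\circ N_F^{\pm x}=\varepsilon$ or $N_F^{\pm x}\circ\eta=\eta$ at the neighbouring structure in Figure~\ref{fig:face-w-bdry-dn}.
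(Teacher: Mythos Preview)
There is a real gap: your argument never invokes the $\Zb_2$-equivariant structure $N_M$ on the boundary module $M$, and the proof cannot be completed without it. In the defect network for a boundary polygon (Figure~\ref{fig:face-w-bdry-dn}\,a), the $F$-lines from the non-boundary edges act on the $M$-line via the module action $\rho_l$; they do not terminate in a (co)unit. Hence your base-case absorption ``$\varepsilon\circ N_F^{\pm x}=\varepsilon$ or $N_F^{\pm x}\circ\eta=\eta$'' simply does not apply --- there is no $\varepsilon$ or $\eta$ sitting at that location. The same issue bites in your first approach at precisely the place you flag as ``the hard part'': at the transitions between the free-boundary stretch and the adjacent non-boundary edges, the relevant local structure is $\rho_l$, not $\mu$ or $\Delta$, so the Frobenius-morphism identities for $N_F$ alone are not the right tool. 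What is needed is the compatibility $\rho_l\circ(N_F\otimes N_M)=N_M\circ\rho_l$, i.e.\ the $\Zb_2$-equivariance of $M$, and in particular the fact that the edge-crossing of the $M$-line (Figure~\ref{fig:face-w-bdry-dn}\,b) itself carries a power $N_M^{s_e}$ which is also shifted by M1.

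The paper's proof is accordingly a one-step direct application of this equivariance: the extra $N_F^{\pm x}$ inserted on each incoming $F$-line is pushed through the $F$-action on $M$ into an $N_M^{\pm x}$ on the $M$-line, and these cancel against the corresponding shift of the $N_M^{s_e}$ at the two edges where the $M$-segment enters and leaves the polygon. No reduction to M3, no induction, and no auxiliary edge-splitting is required; once you use equivariance, the statement is immediate.
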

\begin{proof}
    This holds by the $\Zb_2$-equivariant property:
    \begin{align}
        &\raisebox{-0.5\height}{\includegraphics[scale=1.0]{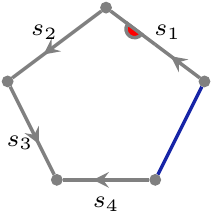}}
        &&\raisebox{-0.5\height}{\includegraphics[scale=1.0]{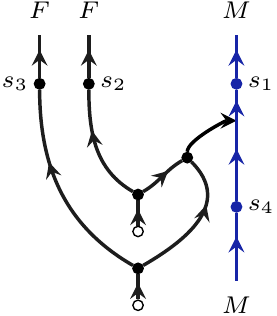}}
        &&=
        &&\raisebox{-0.5\height}{\includegraphics[scale=1.0]{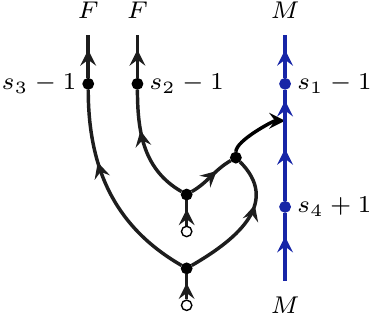}}
        &&\raisebox{-0.5\height}{\includegraphics[scale=1.0]{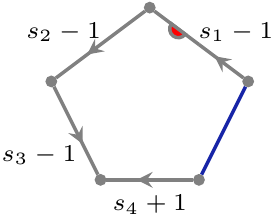}}
    \end{align}
\end{proof}

\begin{lemma}
    The correlators are invariant under the move \eqref{move:remove-edge} in the presence of boundary edges.
    \label{lem:move:remove-edge-boundary}
\end{lemma}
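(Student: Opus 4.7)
The plan is to reduce this to the proof of Lemma~\ref{lem:move:remove-edge-app} by showing that the $F$-defect lines running parallel to the free boundary pass through the move unchanged, so that the local computation near the removed edge is the same Frobenius-relation-plus-counit calculation displayed in \eqref{eq:proof-edge-move}.

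First, using Lemma~\ref{lem:M2-edge-free-bnd} (invariance under \eqref{move:edge-or-bulk} for boundary-adjacent edges) I would fix the orientation of the inner edge $e$ that is being removed so that the two boundary-parallel $F$-lines of the adjacent polygons cross $e$ in the standard way shown on the left of \eqref{eq:proof-edge-or-boundary}. By Lemma~\ref{lem:move:marked-edge-bulk} I may also assume the marked edges of the two adjacent polygons are chosen compatibly with the constraint that for a boundary-touching polygon the marked edge is the one immediately anticlockwise of the last boundary edge; the freedom left after imposing the move-\eqref{move:remove-edge} requirement that $e$ is the marked edge of exactly one of the two polygons is then fixed by this constraint.

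Next I would split into the two configurations: (a) exactly one of the two polygons touches the free boundary, and (b) both polygons touch the free boundary at a common stretch. In case (a), only one boundary-parallel $F$-line is present; it runs inside the boundary-adjacent polygon and does not touch $e$, so the local picture around $e$ is identical to that of Lemma~\ref{lem:move:remove-edge-app}, and the argument in \eqref{eq:proof-edge-move} applies verbatim. In case (b) the boundary-parallel $F$-lines of the two polygons meet at the point where $e$ meets the boundary-parallel lines; here I would use the associativity/unitality moves of Figure~\ref{fig:F-defect-manipulation}(c),(a) to splice these two boundary-parallel lines into a single boundary-parallel line of the merged polygon, and then collapse the remaining vertical $F$-ribbon connecting it to $e$ using the Frobenius relation and speciality of $F$, exactly as in \eqref{eq:proof-edge-move}.

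The main obstacle is bookkeeping in case (b): the marked-edge choice on the two polygons together with the orientation choice must be consistent with the constraint for boundary polygons, and a priori one of the possible marked-edge/orientation configurations needs the Nakayama automorphism to be shuffled across the boundary-parallel line. This is handled by the same argument as in Lemma~\ref{lem:move:marked-edge-bulk}, together with the $\Zb_2$-equivariance of the boundary $F$-action (which is the property that was used in the proof of Lemma~\ref{lem:M2-edge-free-bnd}): the Nakayama automorphism can be moved along the boundary-parallel $F$-line until it recombines with the powers of $N_F$ already present on the glued edge after applying the gluing rule \eqref{eq:glued-edge-label}. After this rearrangement, the identity reduces to the Frobenius-plus-counit identity used in Lemma~\ref{lem:move:remove-edge-app}, completing the proof.
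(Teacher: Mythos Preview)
Your proposal misidentifies the key algebraic structure at the boundary. The line running along the free boundary is not an $F$-defect line but a line labelled by the boundary $F$-module (call it $Y$), with the plaquette $F$-network attaching to it via the left $F$-action. Consequently the ``splicing'' step in your case (b) is not an instance of the Frobenius/associativity moves in Figure~\ref{fig:F-defect-manipulation}\,(a),(c); those are identities for $F$ alone. What is actually needed is the associativity of the $F$-action on $Y$ (the left module axiom), and this is not something you can extract from Lemma~\ref{lem:move:remove-edge-app} or \eqref{eq:proof-edge-move}. Your case (a) has a related problem: even when only one polygon touches the boundary, the $F$-network of that polygon is coupled to the $Y$-line through the module action, so the local picture around the removed edge $e$ is not literally identical to the non-boundary case.

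The paper does not attempt a reduction to Lemma~\ref{lem:move:remove-edge-app} at all. It gives a direct one-step computation: draw the two adjacent boundary plaquettes with their $F$-networks acting on the boundary module, and use associativity of $F$ together with the fact that $Y$ is a left $F$-module to collapse the configuration to the merged polygon (noting that bivalent vertices on the free boundary have no effect on the defect network). The module axiom is precisely the ingredient your plan omits; once you use it, the case split and the Nakayama bookkeeping become unnecessary.
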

\begin{proof}
    Here we use again that $F$ is associative furthermore that $Y$ is a left $F$-module.
    Recall that bivalent vertices on the free boundary have no effect on the defect network.
    \begin{align}
        &\includegraphics[scale=1.0]{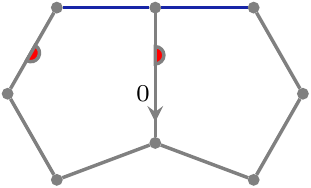}
        &&\includegraphics[scale=1.0]{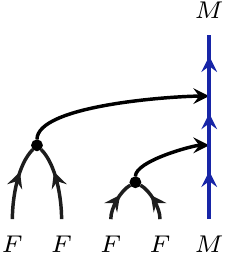}
        &&\raisebox{3em}{=}
        &&\includegraphics[scale=1.0]{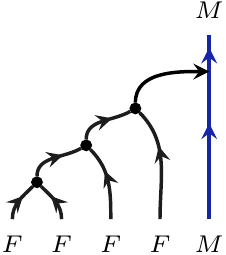}
        &&\includegraphics[scale=1.0]{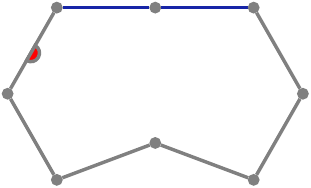}
    \end{align}
\end{proof}

We furthermore have invariance under move \eqref{move:remove-vertex}. The proof is analogous to that of Lemma~\ref{lem:move:remove-vertex-defect} below (in the same way that the proofs of Lemmas~\ref{lem:move:remove-edge-boundary} and~\ref{lem:M1-defect} are), and we skip it here. 

\subsubsection*{Moves involving defect lines} 

The proof of the next lemma is the same as for Lemma~\ref{lem:M2-edge-free-bnd}:

\begin{lemma}
    The correlators are invariant under  the move \eqref{move:edge-or-bulk} in the presence of defect lines.
\end{lemma}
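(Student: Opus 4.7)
The plan is to mimic the proof of Lemma~\ref{lem:M2-edge-free-bnd}, which established the analogous statement for edges adjacent to the free boundary, so that the claim reduces to a local equality of $F$-defect networks at the crossing of the edge $e$ with the defect line. As a preliminary, I would invoke Lemma~\ref{lem:move:edge-or-bulk} to fix a preferred orientation of all other inner edges, so attention is confined to $e$ and the defect line $X \in \Dc(F,F)$ it crosses (with the standard local configuration given in Figure~\ref{fig:defects-spin-dn}\,b).

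Next I would write out the defect network assigned to the two configurations explicitly. On one side the edge carries orientation $o$ and index $s_e$, contributing a factor $N_F^{s_e}$ on the $F$-line threaded through the $F$-action on $X$; on the other side the edge is reversed, carries index $-1-s_e$, and contributes $N_F^{-1-s_e}$ together with the orientation change of the crossing $F$-ribbon. The goal is to show these two local $F$-decorated graphs coincide.

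The key ingredients are: (i) $N_F^{\,2}=\id_F$, so that $N_F^{-1-s_e}=N_F^{s_e}\circ N_F^{-1}=N_F^{s_e-1}$; (ii) $N_F$ is an algebra morphism of $F$, so it commutes freely with the product and coproduct vertices that join the horizontal $F$-line of the edge to the vertical $F$-line dual to the plaquette (this is the same ingredient used already in Lemma~\ref{lem:move:edge-or-bulk}); and (iii) the $\Zb_2$-equivariance of $X$, i.e.\ the relation
\[
\raisebox{-.5\height}{\includegraphics[scale=1.0]{figures/T-Z2-eq-F-bimodule_a.pdf}}
=
\raisebox{-.5\height}{\includegraphics[scale=1.0]{figures/T-Z2-eq-F-bimodule_c.pdf}}
\]
of Section~\ref{sec:alg2}, which allows one to transport a factor of $N_F$ on the $F$-line past the action on $X$ at the expense of an $N_X$ on $X$. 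Combining these, one converts the side with orientation $o$ and index $s_e$ into the side with reversed orientation and index $-1-s_e$: the orientation reversal of the $F$-ribbon is absorbed into a single $N_F^{\pm 1}$ (as in the inner-edge proof), the equivariance move pushes an $N_F$ across the action on $X$, and what remains is $N_F^{\,2}=\id_F$.

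I expect the main obstacle to be purely bookkeeping: ensuring that the power of $N_F$ produced by the orientation reversal of the $F$-ribbon, the one produced by shifting $s_e \mapsto -1-s_e$, and the one produced by using the equivariance of $X$ add up to an even power, so that they collapse via $N_F^{\,2}=\id_F$. Once the signs and conventions are aligned as in Lemma~\ref{lem:move:edge-or-bulk}, no further input is needed, and the statement follows.
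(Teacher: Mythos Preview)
Your strategy is correct and is exactly what the paper does: its entire proof of this lemma is the sentence ``The proof of the next lemma is the same as for Lemma~\ref{lem:M2-edge-free-bnd}.'' So reducing to the local picture equality analogous to \eqref{eq:proof-edge-or-boundary} is precisely the intended argument.

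One remark on your list of ingredients: the paper deliberately does \emph{not} use $N_F^{\,2}=\id_F$ in these local-move verifications (see Remark~\ref{rem:actually-in-Z}, where edge indices are treated as lying in $\Zb$ rather than $\Zb_2$, with an eye toward the $r$-spin generalisation). So your ingredient (i) should be unnecessary once the orientation bookkeeping is done correctly; the equality ought to follow from $N_F$ being a Frobenius algebra morphism alone, exactly as in the proof of Lemma~\ref{lem:move:edge-or-bulk}. Relatedly, if you invoke the $\Zb_2$-equivariance of $X$ as in (iii), you produce an $N_X$ on the $X$-line which you never dispose of --- so either that step is not actually needed (the manipulation stays on the $F$-side of the action vertex, as in the boundary case), or you must say why the stray $N_X$ cancels. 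Sorting this out is the only real content left in your ``bookkeeping'' obstacle.
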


We can thus fix the orientations of inner edges so that the defect lines always cross them as shown on the left hand side of Figure~\ref{fig:defects-spin-dn}\,b).

\begin{lemma}\label{lem:M1-defect}
    The correlators are invariant under  the move \eqref{move:sheet-translation}  in the presence of defect lines.
\end{lemma}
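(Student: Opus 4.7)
The plan is to realise the shift in \eqref{move:sheet-translation} as the insertion of a single $N_F^x$-loop in the interior of the plaquette $p$, and then to absorb that loop using the Frobenius and equivariance data. This generalises the no-defect argument behind Lemma~\ref{lem:move:marked-edge-bulk} by adding one new ingredient: the interaction of the loop with a defect crossing.

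First, I would rewrite the shift $s_e \mapsto s_e \pm x$ on each edge $e \subset \partial p$ (with sign determined by the orientation of $e$ relative to $p$) as the insertion of a factor $N_F^{\pm x}$ on the corresponding $F$-line at the inner side of $e$. Because $N_F$ is a Frobenius algebra morphism, these $N_F^{\pm x}$ factors can be slid along the $F$-ribbons towards the interior of $p$ and merged, through the product and coproduct of $F$ that sit inside $p$, into a single closed $F$-ribbon $\gamma$, coloured by $N_F^x$, which encircles the interior of $p$. The verification that the shifts combine in this way is local and is essentially the calculation already appearing in the proof of Lemma~\ref{lem:move:marked-edge-bulk}, with the additional data on the boundary of $p$ left untouched.

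Second, I would absorb the loop $\gamma$. If $p$ is not crossed by a defect, the interior network consists of a single $F$-vertex and $\gamma$ can be removed using the Frobenius relations and $N_F \circ \eta_F = \eta_F$ (cf.\ Figure~\ref{fig:F-defect-manipulation}\,d), exactly as in the undefected case. If $p$ is crossed by a defect arc labelled by $X \in \mathcal{D}(F,F)$, entering $\partial p$ at some edge $e_1$ and leaving at a distinct edge $e_2$, then $\gamma$ encircles the $X$-arc. Using the $\mathbb{Z}_2$-equivariance of $X$ (namely the compatibility of $N_X$ with $\rho_l$ and $\rho_r$), I would push the two halves of $\gamma$ onto the $X$-arc, converting them into two $N_X^{\pm x}$-insertions located at $e_1$ and $e_2$. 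Because at $e_1$ the $X$-line is incoming into the region enclosed by $\gamma$ while at $e_2$ it is outgoing, the signs at the two crossings are opposite, so the two insertions on the $X$-arc are mutually inverse and cancel, leaving the original defect network inside $p$ unchanged.

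The main obstacle is the careful bookkeeping of orientations: one must verify that for an $X$-arc entering $p$ at $e_1$ and leaving at $e_2$, the signs of the shifts and the sides on which $N_F^{\pm x}$ is inserted on the corresponding $F$-action lines are correlated in such a way that the two resulting $N_X^{\pm x}$-insertions on the $X$-line are genuinely inverse to each other. Once this point is checked against the conventions of Figure~\ref{fig:defects-spin-dn}\,b) and the equivariance axiom, the required cancellation follows directly; no use of $N_F^{\,2}=\id_F$ or $N_X^{\,2}=\id_X$ is needed, which is consistent with the fact that \eqref{move:sheet-translation} is an invariance for all $r$.
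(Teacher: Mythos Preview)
Your argument is correct and rests on the same two ingredients the paper uses: that $N_F$ is a Frobenius algebra morphism, and the $\Zb_2$-equivariance of the bimodule $X$. The paper's proof simply records the invariance as a single diagram equality for the plaquette network (before and after the shift), whereas you organise the same computation as a two-step ``insert an $N_F^x$-twist loop, then absorb it through the defect''. One caution on wording: your $\gamma$ is not literally a new closed $F$-ribbon but rather the locus of the extra $N_F^{\pm x}$-insertions on the already present $F$- (and $X$-) lines; with that reading, your absorption step is precisely the iterated use of $N_X\circ\rho = \rho\circ(N_F\otimes N_X)$ that the paper's single picture encodes, so the two arguments are the same calculation in different packaging.
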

\begin{proof}
    This holds by the $\Zb_2$-equivariant property:
    \begin{align}
        &\raisebox{-0.5\height}{\includegraphics[scale=1.0]{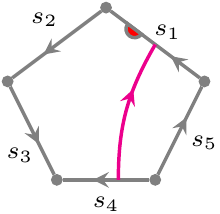}}
        &&\raisebox{-0.5\height}{\includegraphics[scale=1.0]{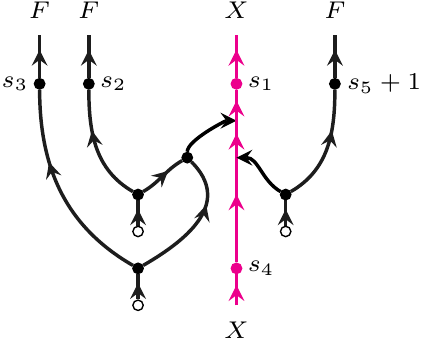}}
        &&=
        &&\raisebox{-0.5\height}{\includegraphics[scale=1.0]{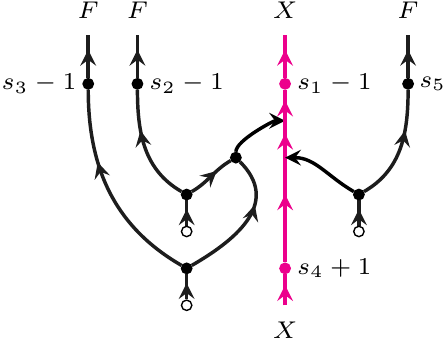}}
        &&\raisebox{-0.5\height}{\includegraphics[scale=1.0]{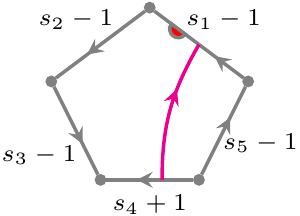}}
    \end{align}
\end{proof}

\begin{lemma}
    The correlators are invariant under the move \eqref{move:remove-edge}  in the presence of defect lines.
    \label{lem:move:remove-edge-defect}
\end{lemma}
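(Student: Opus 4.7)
The plan is to mimic the argument of Lemmas~\ref{lem:move:remove-edge-app} and~\ref{lem:move:remove-edge-boundary}, but now with the extra ingredient that one (or both) of the two polygons on either side of the edge to be removed carries a transversally intersecting defect arc. By Lemma~\ref{lem:M1-defect} we may freely reshuffle edge indices across the two polygons using move \eqref{move:sheet-translation}, and by the already established invariance under \eqref{move:edge-or-bulk} we may fix the orientation of the inner edge so that the configuration matches the prescribed one shown in Figure~\ref{fig:moves-PLCW}. Likewise, by invariance under \eqref{move:marked-edge-bulk} we may assume that the marked edge of the polygon carrying the defect arc coincides with the edge at which the arc leaves that polygon (as required by the rules of Section~\ref{sec:surf-w-def}).

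With these reductions in place, the two local defect networks on either side of the move look as follows. On the ``two-polygon'' side one has an $F$-line running around the inner edge (with possibly a power of $N_F$ coming from the edge index, which we have used \eqref{move:sheet-translation} to set to $0$) joined to a defect arc $X$ crossing one of the polygons via the $F$-action and coaction on $X$; on the ``one-polygon'' side the inner edge has been erased and one is left with the surrounding $F$-network joined directly to the bimodule $X$. The verification that these two pictures define the same morphism in $\hCc$ is then a string-diagram computation that uses, in this order: (i) the Frobenius relation on $F$ to move the comultiplication/multiplication past the inner edge, (ii) associativity and the bimodule axiom for $X$ to slide the $F$-action along the defect line, and (iii) that $F$ is normalised special, so that the resulting $\mu \circ \Delta$ collapses to $\id_F$, together with (co)unitality to absorb the leftover unit/counit. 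Pictorially the computation is identical in shape to \eqref{eq:proof-edge-move} in the proof of Lemma~\ref{lem:move:remove-edge-app}, with the $F$-strand reaching the defect replaced by the corresponding action on $X$, exactly as in the proof of Lemma~\ref{lem:move:remove-edge-boundary} with $Y$ replaced by $X$.

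The only point that is not completely routine is checking that \emph{all} admissible configurations of edge-orientations of the two neighbouring polygons, and of the position of the defect arc relative to the edge being removed, can indeed be reduced to the single model configuration above. This is where we must use the already established moves as reduction steps rather than as mere corollaries. I would handle this by case analysis on whether the defect arc lies in the polygon whose marked edge is being removed or in the other one, using in the second case the rule from Figure~\ref{fig:F-defect-manipulation}\,c) to slide $F$-trivalent vertices around, and using the associativity of the $F$-action on $X$ (Figure~\ref{fig:F-defect-manipulation}\,c)) to change the order in which the surrounding $F$-lines attach to $X$. After this reduction the core computation reduces to the Frobenius/specialness identity exactly as in the earlier lemmas, and this will be the main (but essentially mechanical) technical step.
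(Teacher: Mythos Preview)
Your proposal is correct and follows the same overall strategy as the paper: use the already-established invariance under \eqref{move:edge-or-bulk} and \eqref{move:sheet-translation} to reduce to a single model configuration (the one of Figure~\ref{fig:remove-edge-vertex-defect}), then verify a local string-diagram identity. The paper's proof is much terser than yours: it simply says ``We use the bimodule property of $X$'' and displays the two local $F$-networks side by side.

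Two small comparative remarks. First, step~(iii) of your plan invokes specialness ($\mu\circ\Delta=\id_F$); this is not needed for \eqref{move:remove-edge}. Specialness enters only in the vertex move \eqref{move:remove-vertex} (cf.\ Lemmas~\ref{lem:move:remove-vertex} and~\ref{lem:move:remove-vertex-defect}). In the edge move with defect, the inner edge is the marked edge of one polygon and is crossed by the defect, so the two local $F$-trees meet on $X$ rather than on $F$; the relevant identity is just associativity of the left/right $F$-action on $X$ (together with the compatibility of the two actions), exactly parallel to how Lemma~\ref{lem:move:remove-edge-boundary} uses only associativity of $F$ and the module axiom for $Y$. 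Second, your case analysis on where the defect arc sits is largely pre-empted by the marking constraints of Section~\ref{sec:surf-w-def}: since the defect must leave each polygon at its marked edge and the removed edge is the marked edge of exactly one polygon, the defect is forced to cross the removed edge, which is the single configuration the paper treats.
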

\begin{proof}
    We use the bimodule property of $X$.
    \begin{align}
        &\raisebox{-0.5\height}{\includegraphics[scale=1.0]{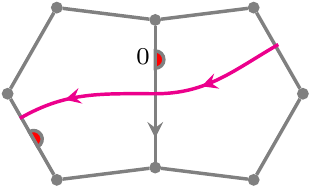}}
        &&\raisebox{-0.5\height}{\includegraphics[scale=1.0]{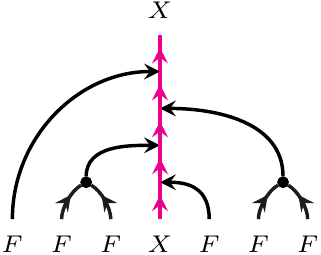}}
        &&=
        &&\raisebox{-0.5\height}{\includegraphics[scale=1.0]{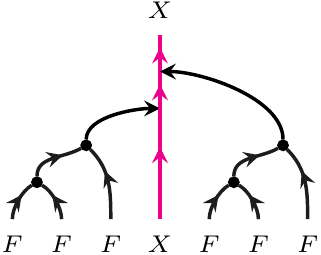}}
        &&\raisebox{-0.5\height}{\includegraphics[scale=1.0]{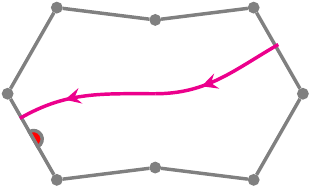}}
        \label{eq:proof-edge-move-defect}
    \end{align}
\end{proof}
\begin{lemma}
    The correlators are invariant under the move \eqref{move:remove-vertex} in the presence of defect lines.
    \label{lem:move:remove-vertex-defect}
\end{lemma}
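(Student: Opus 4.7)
The strategy mirrors that of Lemma~\ref{lem:move:remove-vertex} (the bulk case without defects) and Lemma~\ref{lem:move:remove-edge-defect} (the edge-removal move with a defect crossing). First, by Lemma~\ref{lem:M1-defect} the correlator is invariant under move \eqref{move:sheet-translation}, so we may choose the edge indices on the two edges emerging from the splitting to both be zero, and we fix the edge orientations using the analogue of Lemma~\ref{lem:M2-edge-free-bnd} for defects. In this normalisation no Nakayama automorphisms appear at the bivalent vertex, so the comparison reduces to a purely algebraic identity for the defect-decorated defect graph near the vertex.

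The next step is to write down the graphs on either side of the move: on the one hand, the plaquette pieces from Figure~\ref{fig:defects-spin-dn}\,a),\,b) assembled across the two faces sharing the un-split edge; on the other hand, the corresponding pieces after inserting the bivalent vertex and the extra edge. In both configurations the defect line labelled by $X$ runs through the picture, and the $F$-ribbon network of the combinatorial spin structure attaches to it via the left and right $F$-actions on $X$. I would then simplify the right-hand side by the Frobenius relation and by specialness of $F$ ($\mu \circ \Delta = \id_F$), exactly as in \eqref{eq:proof-edge-move}, collapsing the two new $F$-trivalent vertices back into the original configuration, and use the bimodule property of $X$ (used already in \eqref{eq:proof-edge-move-defect}) to slide the $F$-lines past the $X$-line where necessary.

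Concretely, the computation will look like
\begin{align}
\raisebox{-0.5\height}{\includegraphics[scale=1.0]{figures/T-proof-vertex-move-defect_1.pdf}}
\;=\;
\raisebox{-0.5\height}{\includegraphics[scale=1.0]{figures/T-proof-vertex-move-defect_2.pdf}}
\;=\;
\raisebox{-0.5\height}{\includegraphics[scale=1.0]{figures/T-proof-vertex-move-defect_3.pdf}}
\end{align}
where the first equality applies associativity, the Frobenius relation, and the bimodule axiom for $X$ to rearrange the two new $F$-trivalent vertices around the defect, and the second equality uses specialness and (co)unitality of $F$ to cancel the bubble that appears, reproducing the graph assigned to the single face before the split.

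The only subtle point is bookkeeping: the chosen edge of each of the two new faces must match the conventions fixed in Section~\ref{sec:surf-w-def} (in particular, the defect line leaves a face at its marked edge, and the marked edge of a face touching the boundary is the one after the last boundary edge), and one has to verify that after the move the two resulting configurations of $F$-ribbons are related by precisely those associativity/Frobenius/specialness moves, without any stray Nakayama automorphism. This is exactly what invoking Lemma~\ref{lem:M1-defect} first ensures. I expect this reduction to be the main technical step; once the indices have been normalised to zero, the remaining algebraic identity is the same as in the bulk case of Lemma~\ref{lem:move:remove-vertex}, with the only new input being the bimodule axiom for $X$.
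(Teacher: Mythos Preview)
Your proposal is correct and follows essentially the same approach as the paper: normalise the edge indices via \eqref{move:sheet-translation}, then reduce to an algebraic identity proved using the bimodule property of $X$ together with the Frobenius relation and specialness of $F$. The paper's proof is terser---it simply refers back to the left-hand side of \eqref{eq:proof-edge-move-defect} and states that the equality follows from $X$ being a bimodule and $F$ a special Frobenius algebra---but the content is the same as what you outline.
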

\begin{proof}
    The left hand side is that of \eqref{eq:proof-edge-move-defect}.
    We use that $X$ is a bimodule and $F$ is a special Frobenius algebra.
    \begin{align}
        &\raisebox{-0.5\height}{\includegraphics[scale=1.0]{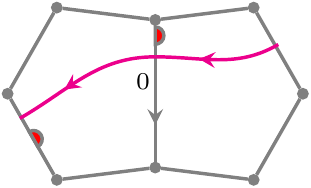}}
        &&\raisebox{-0.5\height}{\includegraphics[scale=1.0]{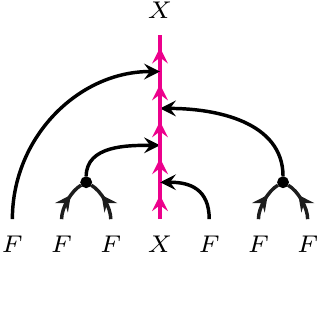}}
        &&=
        &&\raisebox{-0.5\height}{\includegraphics[scale=1.0]{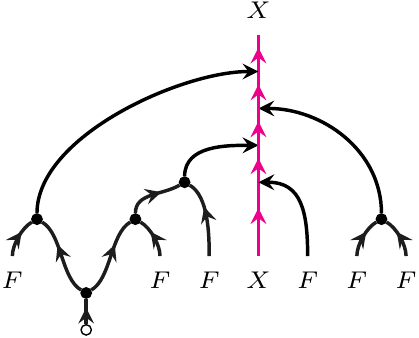}}
        &&\raisebox{-0.5\height}{\includegraphics[scale=1.0]{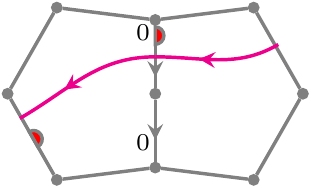}}
    \end{align}
\end{proof}

\subsection{Proof of factorisation}\label{app:proof-of-factorisation}

\subsubsection*{Bulk factorisation with defects}

We follow the strategy of \cite{Fuchs:2012def} and start with explaining how we can reduce the proof of factorisation to the presence of one defect line.
\begin{figure}[tb]
    \centering
    \begin{align*}
    \mathrm{a})&&\mathrm{b})&&\mathrm{c})\\
    &\includegraphics[scale=1.0]{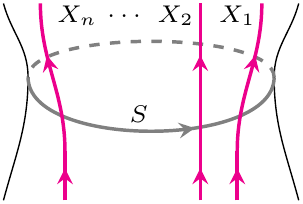}
    &&\includegraphics[scale=1.0]{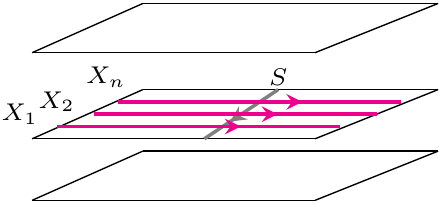}
    &&\includegraphics[scale=1.0]{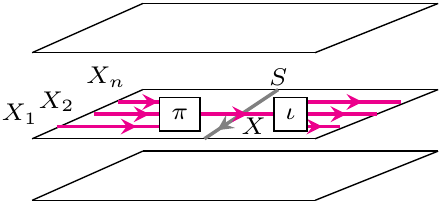}
    \end{align*}
    \caption{a) Embedded circle $S$ crossing $n$ defect lines in $\Sigma^{(n)}$.
    b) Detail of the connecting manifold of $\Sigma^{(n)}$ with embedded circle $S$ crossing $n$ defect lines.
    c) Modified 3-manifold, the morphisms $\iota$ and $\pi$ are the embedding and projection for the relative tensor product. The circle $S$ is now only crossed by one defect line labelled $X$.}
    \label{fig:fuse-parallel-def-lines}
\end{figure}

Consider a world sheet $\Sigma^{(n)}$ and an embedded circle $S$ crossed by $n$ defect lines labelled by bimodules $X_i$ ($i=1,...,n$) as in Figure~\ref{fig:fuse-parallel-def-lines}~a). 
Furthermore consider the connecting manifold $M_{\Sigma^{(n)}}$ and the manifold $M'$ which agrees with $M_{\Sigma^{(n)}}$ except inside a ball containing all crossings of $S$ and the defect lines as shown in Figure~\ref{fig:fuse-parallel-def-lines}~c). There $S$ is crossed by a single defect line labelled by the $B$-$B$-bimodule $X=X_n\otimes_{B_n}\dots\otimes_{B_2}X_1$ ($B=B_1$).
Note that we assumed that all defect lines point in the same direction, which we can do by relabelling by dual bimodules if needed.
The next lemma follows from a similar discussion as in \cite[Sec.\,3.2.3]{Fuchs:2012def}:

\begin{lemma}
    The correlator $\Corr(\Sigma^{(n)})$ can be computed as
    \begin{align}
        \Corr(\Sigma^{(n)})\stackrel{\textrm{def.}}{=}\hat{\Zc}_{\Cc}(M_{\Sigma^{(n)}})
        =\hat{\Zc}_{\Cc}(M')\,.
    \end{align}
\end{lemma}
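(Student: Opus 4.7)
The plan is to show the equality of $3$-manifold invariants $\hat{\Zc}_{\Cc}(M_{\Sigma^{(n)}})=\hat{\Zc}_{\Cc}(M')$ by manipulating the embedded ribbon graph inside the ball $D^3$ in which the two manifolds are allowed to differ. Outside this ball the two bordisms are identical, so by the gluing axiom for the TFT $\hat{\Zc}_{\Cc}$ it suffices to establish the identity for the piece living in $D^3$. In other words, I will cut out a small $3$-ball containing the circle $S$ and all $n$ crossings and show that the two ribbon graphs inside this ball represent the same morphism between the relevant $\hat{\Cc}$-extended boundary surfaces.

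First I would put the $B$-ribbon graph of $M_{\Sigma^{(n)}}$ into a convenient standard form near the ball. By Proposition~\ref{prop:orientedCFT-corr-welldef} the choice of the $B$-network is immaterial, so I may assume that, between any two consecutive defect ribbons $X_i,X_{i+1}$ as they cross the disc bounded by $S$, a single $B_{i+1}$-ribbon connects them, acting from the right on $X_{i+1}$ and from the left on $X_i$. This is exactly the configuration of $B$-lines that, by \eqref{eq:rel-tensor-prod}, realises the idempotent $p$ whose image defines the relative tensor product $X_i\otimes_{B_{i+1}}X_{i+1}$. Iterating, the parallel bunch of $n$ defect ribbons together with the $(n-1)$ connecting $B_i$-ribbons represents the iterated idempotent projecting onto $X_n\otimes_{B_n}\cdots\otimes_{B_2}X_1=X$.

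The key step is then to use the splitting \eqref{eq:split-idempotent-tensor}, $p=\iota\circ\pi$ with $\pi\circ\iota=\id_X$, to replace this idempotent by the composition $\iota\circ\pi$ on each of the two sides of the circle $S$ in $D^3$. Since on one side of $S$ (below the crossings, say) the $n$ parallel ribbons are unconstrained, the idempotent there may be cancelled against a further copy of $\pi\circ\iota=\id_X$, leaving behind just an $X$-ribbon together with the morphisms $\iota$ and $\pi$ at the two ends. This is precisely the ribbon graph depicted in Figure~\ref{fig:fuse-parallel-def-lines}\,c) for $M'$. I also need to verify that the bimodule structures match: the left $B_1$-action on $X$ coincides with that inherited from $X_1$, and the right $B_{n+1}$-action with the one from $X_n$, which is built into the definition of the tensor product bimodule and is automatic from the construction.

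The main obstacle I anticipate is the careful bookkeeping of the $B$-ribbons near the boundary of the $3$-ball: one has to make sure that when one replaces the configuration inside $D^3$ by the fused version, the $B_1$- and $B_{n+1}$-ribbons exiting $D^3$ still hit the same points on $\partial D^3$ and with the same framings, so that the gluing with the exterior of $D^3$ (which has not changed) produces a well-defined bordism. Once the standard form near $\partial D^3$ is fixed and the idempotent $p$ is recognised inside, the rest reduces to the categorical identity $\pi\circ\iota=\id$ and the invariance of the TFT under isotopy of ribbon graphs. A more conceptual formulation, alluded to in \cite[Sec.\,3.2.3]{Fuchs:2012def}, is that both $\hat{\Zc}_{\Cc}(M_{\Sigma^{(n)}})$ and $\hat{\Zc}_{\Cc}(M')$ compute the same morphism built from the universal property of the relative tensor product in the bimodule category, which is what the above manipulation makes explicit.
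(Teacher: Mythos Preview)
Your proposal is correct and is essentially the argument the paper has in mind: the paper does not spell out a proof but simply states that the lemma ``follows from a similar discussion as in \cite[Sec.\,3.2.3]{Fuchs:2012def}'', which is precisely the fuse-parallel-defects-via-the-tensor-product-idempotent argument you describe. Your write-up is in fact more explicit than what appears in the paper; the only thing to tidy up is that the ``unconstrained'' side of $S$ also carries the idempotent once you choose the standard $B$-network there, so the cancellation is really $\pi\circ p = \pi$ (and $p\circ\iota=\iota$) rather than inserting an extra $\pi\circ\iota$, but this is the same identity.
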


This lemma implies that there is no loss of generality if we assume that only one defect line is present where we glue.
From now on we assume that the circle $S$ crosses a single defect line labelled by the $B$-$B$-bimodule $X$,
and we will write $\Sigma' = \Sigma^{(1)}$ for the corresponding world sheet before cutting along $S$.

\begin{figure}[tb]
    \centering
    \begin{align*}
    \mathrm{a})&&\mathrm{b})&\\
    &\includegraphics[scale=1.0]{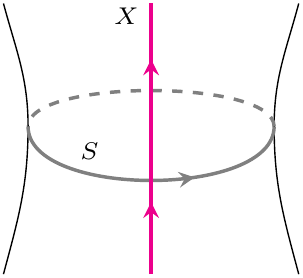}
    &&\includegraphics[scale=1.0]{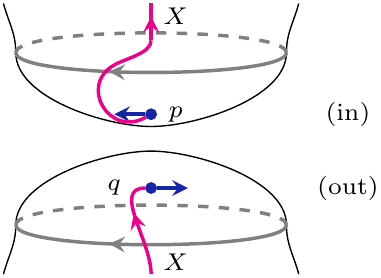}
    \end{align*}
    \caption{Cutting/gluing a world sheet along an embedded circle $S$.} 
    \label{fig:cut-surface}
\end{figure}

Let $U,\bar{V}\in\Cc$, $\epsilon\in\{\pm\}$ and let
\begin{align}
    \Jc_0,\Jc_1\subset\hat{\Cc}((U\otimes\bar{V})\boxtimes K^{\epsilon},X)\quad\text{ and }\quad
    \bar{\Jc}_0,\bar{\Jc}_1\subset\hat{\Cc}(X,(U\otimes\bar{V})\boxtimes K^{\epsilon})
\end{align}
denote an eigenbasis of the idempotents $Q_{0}^\mathrm{(in)}$ and $Q_{0}^\mathrm{(out)}$ from \eqref{eq:idempot-Q-action} with eigenvalues 0 and 1 respectively
which are dual with respect to the non-degenerate pairing \eqref{eq:M-out-M-in-pairing}:
\begin{align}
    \langle \bar{\beta},\alpha\rangle _\mathrm{bulk}=\delta_{\alpha,\beta} 
\end{align}
for $\alpha\in\Jc_i$ and $\bar{\beta}\in\bar{\Jc}_j$ with $i,j\in\{0,1\}$.
Note that by \eqref{eq:Q-pairing-aux1} this pairing is compatible with the idempotents $Q_{0}^\mathrm{(in/out)}$.
By construction $\Jc_1$ is a basis of the multiplicity space $\Mc_{0,\epsilon}^\mathrm{(in)}(U,\bar{V};X)$ and 
\begin{align}
    \Jc:=\Jc_0\cup\Jc_1
\end{align}
is a basis of $\hat{\Cc}((U\otimes\bar{V})\boxtimes K^{\epsilon},X)$.
Similarly we set $\bar{\Jc}:=\bar{\Jc}_0\cup\bar{\Jc}_1$.

Consider the world sheet  $\Sigma(U,\bar V, \eps, \alpha, \bar\beta)$ from \eqref{eq:cut-world-sheet-closed} and the glued world sheet $\Sigma'$ 
with $\alpha\in\Jc_1$ and $\bar{\beta}\in\bar{\Jc}_1$, see Figure~\ref{fig:cut-surface}.

\begin{figure}[tb]
    \centering
    \begin{align*}
    &\raisebox{-0.5\height}{\includegraphics[scale=1.0]{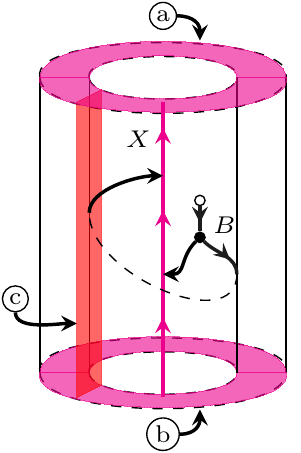}}
    &&\raisebox{-0.5\height}{\includegraphics[scale=1.0]{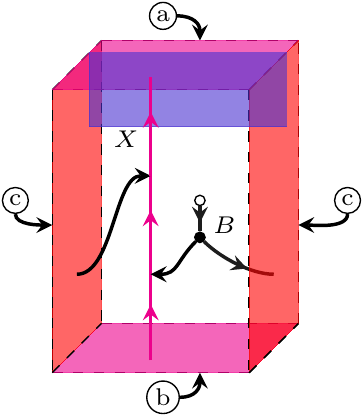}}
    &&\raisebox{-0.5\height}{\includegraphics[scale=1.0]{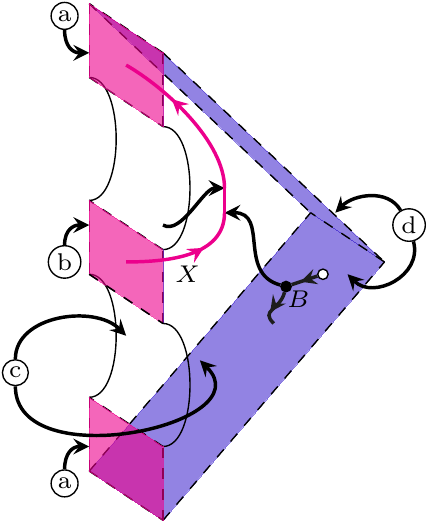}}
    \end{align*}
    \caption{Wedge representation of a detail of the connecting manifold $M_{\Sigma'}$ of the glued surface.} 
    \label{fig:M-sigma-wedge}
\end{figure}
\begin{figure}[tb]
    \centering
    \begin{align*}
        \mathrm{a})\quad&M_{\Sigma'}: &\mathrm{b})\quad&M_{\Sigma',T^2}: &\mathrm{c})&\quad M_{\Sigma}: &\mathrm{d})\quad&G_{U,\bar{V},\epsilon}\circ M_{\Sigma}: \\
    &\raisebox{-0.5\height}{\includegraphics[scale=1.0]{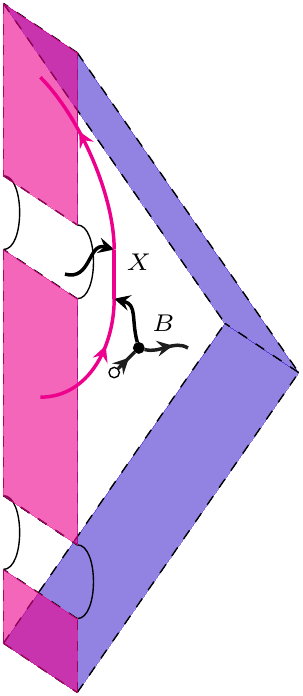}}
    &&\raisebox{-0.5\height}{\includegraphics[scale=1.0]{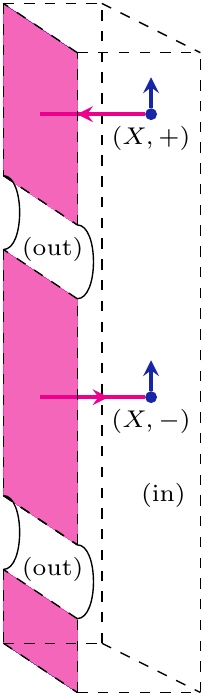}}
    &&\raisebox{-0.5\height}{\includegraphics[scale=1.0]{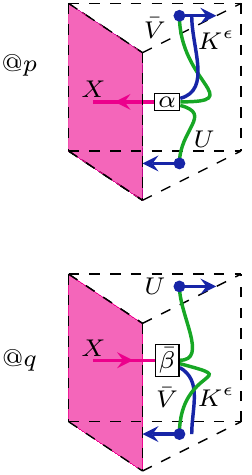}}
    &&\raisebox{-0.5\height}{\includegraphics[scale=1.0]{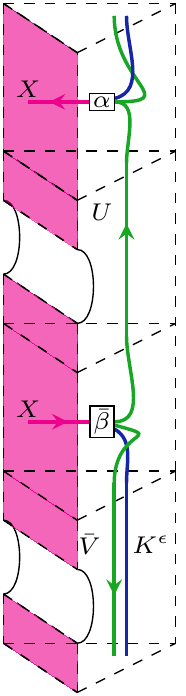}}
    \end{align*}
    \caption{Details of connecting manifolds 
    $M_{\Sigma'}$  (Part~$a$), 
    $M_{\Sigma',T^2}$  (Part~$b$), 
    $M_{\Sigma}=M_{\Sigma(U,\bar V, \eps, \alpha, \bar\beta)}$  (Part~$c$) and $G_{U,\bar{V},\epsilon}\circ M_{\Sigma}=G_{U,\bar{V},\epsilon}\circ M_{\Sigma(U,\bar V, \eps, \alpha, \bar\beta)}$ (Part~$d$).
    \\
    In c) we have rotated one of the tangent vectors in each of the pieces shown by $180^\circ$ relative to the definition in Figure~\ref{fig:bulk-insertion-oriented} in order to be consistent with the paper plane framing implicit when drawing ribbons just as lines.
    }
    \label{fig:cut-glued-surface-cm-detail}
\end{figure}

\begin{figure}[tb]
    \begin{align*}
        \Tc_{X}=
        \raisebox{-0.5\height}{\includegraphics[scale=1.0]{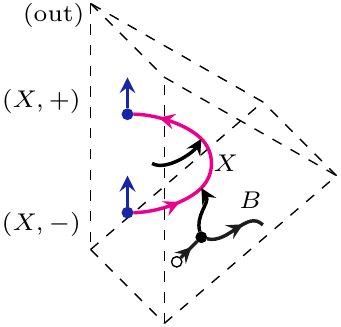}}\quad
        \Tc_{X;U,\bar{V},\epsilon}^{\alpha,\bar{\beta}}=
        \raisebox{-0.5\height}{\includegraphics[scale=1.0]{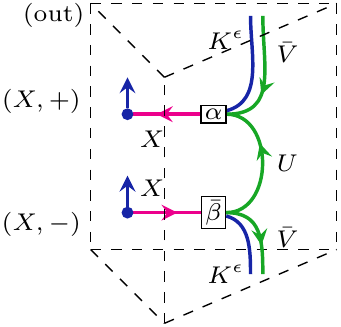}}\quad
        \bar{\Tc}_{X;U,\bar{V},\epsilon}^{\bar{\alpha},\beta}=
        \raisebox{-0.5\height}{\includegraphics[scale=1.0]{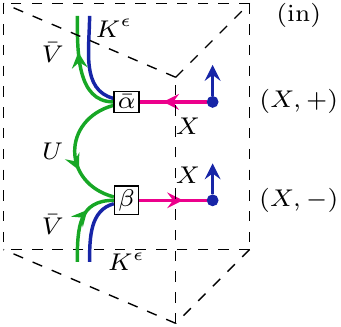}}\quad
    \end{align*}
    \caption{The solid tori $\Tc_{X},\Tc_{X;U,\bar{V},\epsilon}^{\alpha,\bar{\beta}}\colon\emptyset\lra T^2_X$ and $\bar{\Tc}_{X;U,\bar{V},\epsilon}^{\bar{\alpha},\beta}\colon T^2_X\lra\emptyset$.}
    \label{fig:tori-X-x}
\end{figure}

Recall the gluing manifold $G_{U,\bar{V},\epsilon}$ from \eqref{eq:gluing-mfd-closed} and 
consider the connecting manifolds $M_{\Sigma(U,\bar V, \eps, \alpha, \bar\beta)}$, $M_{\Sigma'}$ and $G_{U,\bar{V},\epsilon}\circ M_{\Sigma(U,\bar V, \eps, \alpha, \bar\beta)}$ in Figures~\ref{fig:M-sigma-wedge}~and~\ref{fig:cut-glued-surface-cm-detail}.  
We define the solid tori $\Tc_{X},\Tc_{X;U,\bar{V},\epsilon}^{\alpha,\bar{\beta}}\colon\emptyset\lra T^2_X$ as in Figure~\ref{fig:tori-X-x}
and define $M_{\Sigma',T^2}\colon T^2_X\lra\Sigma'$ 
by cutting out the solid torus $\Tc_{X}$ from $M_{\Sigma'}$:
\begin{align}\label{eq:Msig-T2-def_app}
    M_{\Sigma'}=M_{\Sigma',T^2}\circ \Tc_{X}\,.
\end{align}
By looking at Figure~\ref{fig:cut-glued-surface-cm-detail} we directly get:

\begin{lemma}
    We have
    \begin{align}
        G_{U,\bar{V},\epsilon}\circ M_{\Sigma(U,\bar V, \eps, \alpha, \bar\beta)}= M_{\Sigma',T^2}\circ\Tc_{X;U,\bar{V},\epsilon}^{\alpha,\bar{\beta}}\,. 
    \end{align}
    \label{lem:cut-out-torus}
\end{lemma}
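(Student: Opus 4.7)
The plan is to prove the equality as bordisms with embedded ribbon graphs by unpacking all three pieces and identifying them piece by piece. Both sides are bordisms $\emptyset \lra \widetilde{\Sigma'}$, so we must produce a diffeomorphism of 3-manifolds preserving the embedded $\hCc$-coloured ribbon graphs.

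First, I would unpack $G_{U,\bar V,\eps}$ using \eqref{eq:gluing-mfd-closed}: it is the cylinder $(\widetilde\Sigma'\setminus \mathrm{Inn}(A_{S_+}\sqcup A_{S_-}))\times [0,1]$ with the two wedge pieces $g_{U,+}$ and $g_{\bar V,\eps}$ of Figure~\ref{fig:gluing-mfd-detail} glued in along $\partial A_{S_\pm}\times [0,1]$. Compose this with $M_{\Sigma(U,\bar V,\eps,\alpha,\bar\beta)} = \Sigma \times [-1,1]/\!\sim$. Away from a neighbourhood of the cut circle $S$, the composition is a trivial cylindrical extension of $M_{\Sigma(\dots)}$ that can be absorbed by a diffeomorphism, and by construction it matches precisely the connecting manifold of $\Sigma'$ in that region. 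The content of the lemma is therefore local around $S$.

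Next, I would focus on the neighbourhood of $S$ in $\widetilde{\Sigma'}$. Here the double $\widetilde\Sigma$ before gluing contains four discs $D_{p_\pm},D_{q_\pm}$ (around the insertion points created by cutting), and the composition $G_{U,\bar V,\eps}\circ M_{\Sigma(\dots)}$ glues two wedges onto the cylinder over these discs, identifying the appropriate boundary faces according to the wedge presentation in Figure~\ref{fig:gluing-mfd-detail}\,b). By carefully tracking the identifications marked \mycircled{c} in the wedge picture, this piece is diffeomorphic to a solid torus whose core is the image of $S$ in $\widetilde{\Sigma'}$. The embedded ribbon graph in this solid torus consists of: the $X$-labelled defect ribbon running along the core (coming from the $B$-defect structure of $M_{\Sigma(\dots)}$ near the cut), together with the vertical ribbons for the insertions $(U,\bar V,\eps,\alpha)$ and $(U,\bar V,\eps,\bar\beta)$ that end on the wedges and become closed loops after the wedge identification. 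A comparison with Figure~\ref{fig:tori-X-x} shows that this is exactly $\Tc_{X;U,\bar V,\eps}^{\alpha,\bar\beta}$.

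Finally, the complement of this solid torus in $G_{U,\bar V,\eps}\circ M_{\Sigma(\dots)}$ is, by definition \eqref{eq:Msig-T2-def_app}, the bordism $M_{\Sigma',T^2}\colon T^2_X \lra \widetilde{\Sigma'}$: its 3-manifold is $M_{\Sigma'}$ minus the open solid torus $\Tc_X$, and its ribbon graph is the $F$- and $X$-network of $M_{\Sigma'}$ away from the cut neighbourhood. This gives the desired decomposition
\[
G_{U,\bar V,\eps}\circ M_{\Sigma(U,\bar V,\eps,\alpha,\bar\beta)} \;=\; M_{\Sigma',T^2}\circ \Tc_{X;U,\bar V,\eps}^{\alpha,\bar\beta}\,.
\]
The main obstacle I anticipate is not the global topology, which is essentially a picture-chase, but the careful bookkeeping of ribbon orientations, framings and the placement of the $K^\eps$ parity factor on $p_-$ versus $p_+$ (cf.\ Remark~\ref{rem:Keps-with-p-}), so that the ribbon graphs inside the two solid tori on the two sides of the equation match on the nose rather than only up to an overall sign; resolving this amounts to checking the convention from Figure~\ref{fig:example-rotate-coord}\,b) at each of the four punctures simultaneously.
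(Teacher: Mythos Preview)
Your proposal is correct and follows essentially the same route as the paper, which simply states that the equality is read off directly from Figure~\ref{fig:cut-glued-surface-cm-detail}; you have spelled out in words the picture-chase that the paper leaves implicit. One small slip: in your final paragraph you refer to the ``$F$- and $X$-network'' inside $M_{\Sigma'}$, but this lemma sits in the proof of the oriented factorisation (Theorem~\ref{thm:paritCFT-glue}) where the relevant algebra is $B$, not $F$; otherwise the argument is fine.
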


In order to be able to compare correlators on $\Sigma(U,\bar V, \eps, \alpha, \bar\beta)$ and $\Sigma'$ we need to consider how the invariants of the solid tori in Figure~\ref{fig:tori-X-x} are related.
For this we need some preparation. 

\begin{lemma}
    The vectors
    \begin{align}
        v_{U,\bar{V},\epsilon}^{\alpha,\bar{\beta}}=\hat{\Zc}_{\Cc}(\Tc_{X;U,\bar{V},\epsilon}^{\alpha,\bar{\beta}})\quad\text{and}\quad
        \bar{v}_{U,\bar{V},\epsilon}^{\bar{\alpha},\beta}=
        \frac{1}{(\hat{\Zc}_{\Cc}(S^3))^2\epsilon\dim(U)\dim(\bar{V})}\,
        \hat{\Zc}_{\Cc}(\bar{\Tc}_{X;U,\bar{V},\epsilon}^{\bar{\alpha},\beta})\,,
    \label{eq:torus-state-space-basis}
    \end{align}
    for $\alpha\in\Jc$ and $\bar{\beta}\in\bar{\Jc}$
and $U,\bar V$ irreducible objects in $\Cc$,
    form a dual basis pair of $\hat{\Zc}_{\Cc}(T^2_{X})$ and $\hat{\Zc}_{\Cc}(T^2_{X})^*$ respectively.
    \label{lem:torus-state-space-basis}
\end{lemma}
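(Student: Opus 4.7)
The plan is to verify the duality relation directly by evaluating $\hZc_\Cc$ on the closed 3-manifold that arises from composing $\bar\Tc_{X;U',\bar V',\eps'}^{\bar\alpha',\beta'}$ with $\Tc_{X;U,\bar V,\eps}^{\alpha,\bar\beta}$, and then to match the cardinality of tuples with $\dim \hZc_\Cc(T^2_X)$ to upgrade the non-degenerate pairing to a basis/dual-basis statement.

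First I would identify the glued closed 3-manifold, obtained by capping one solid torus with the (orientation-reversed) other along $T^2_X$. Inside it sits a ribbon graph with the $X$-labelled cores from both sides joined through the coupons $\alpha,\bar\beta$ (from $\Tc$) and $\bar\alpha',\beta'$ (from $\bar\Tc$), together with loops labelled by the simples $U,\bar V,K^\eps$ and $U',\bar V',K^{\eps'}$ threaded through those coupons in a Hopf-like configuration. Using the ribbon calculus in $\hCc=\Cc\boxtimes\SVect$ and Schur's lemma, the $U,\bar V,K^\eps$-loops decouple from the $X$-strand: unless $(U,\bar V,\eps)=(U',\bar V',\eps')$ the invariant vanishes, giving the factors $\delta_{UU'}\delta_{\bar V\bar V'}\delta_{\eps\eps'}$, and the residual Hopf-type configuration contributes $\hZc_\Cc(S^3)^2\cdot\eps\dim(U)\dim(\bar V)$ (with the parity sign coming from $\dim K^\eps=\eps$ in $\SVect$).

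The surviving $X$-loop then closes through the four coupons and reduces, after unfolding, to a trace of $\bar\alpha'\circ\alpha$ composed with $\beta'\circ\bar\beta$ over $(U\otimes\bar V)\boxtimes K^\eps$. By construction of the idempotents $Q_0^{(\mathrm{in})}, Q_0^{(\mathrm{out})}$ and the fact that the $X$-strand has been wrapped around in the solid torus (producing these very projectors in the graph), this trace lands exactly on the eigenspaces spanned by $\Jc_1,\bar\Jc_1$; since $\Jc,\bar\Jc$ were chosen dual under \eqref{eq:M-out-M-in-pairing}, it collapses to $\delta_{\alpha\alpha'}\delta_{\bar\beta\beta'}$. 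The normalisation prefactor $\bigl(\hZc_\Cc(S^3)^2\,\eps\dim(U)\dim(\bar V)\bigr)^{-1}$ in the definition of $\bar v$ is designed precisely to cancel the residual Hopf-link factor, yielding the desired Kronecker deltas.

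To conclude that this is a basis/dual-basis pair, I would compare cardinalities using \eqref{eq:hatZ-on-objects} together with semisimplicity of $\hCc$: the tuples $(U,\bar V,\eps,\alpha,\bar\beta)$ enumerate exactly a basis of the Hom-space that appears in the expression for $\hZc_\Cc(T^2_X)$, so the $v$'s are of the correct cardinality; combined with the non-degenerate pairing established above, linear independence and spanning follow. The main obstacle is the graphical bookkeeping in the first step — reducing the composed ribbon graph to a decoupled Hopf-link plus a clean trace over the $X$-strand — which requires careful tracking of the coupons, of the orientations of the tori being glued, and of the parity signs introduced by the $\SVect$-factor when simple ribbons are untwisted.
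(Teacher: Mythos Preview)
Your overall strategy---glue the two solid tori to a closed 3-manifold, simplify the ribbon graph using that $U,\bar V,K^\eps$ are simple, and read off Kronecker deltas from the trace pairing---is correct and is what the paper does for the duality check. For the basis claim the paper takes a slightly different route: it writes out an explicit chain of isomorphisms for $\hZc_\Cc(T^2_X)$ starting from \eqref{eq:hatZ-on-objects} (for a genus-one surface with two $X$-marked points) and identifies its inverse as $\alpha\otimes\bar\beta\mapsto v^{\alpha,\bar\beta}$. Your cardinality argument is a valid alternative.

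There is, however, a genuine confusion in your middle paragraph. You assert that the $X$-strand wrapping in the solid torus ``produces these very projectors $Q_0^{(\mathrm{in})},Q_0^{(\mathrm{out})}$ in the graph'' so that the trace ``lands exactly on the eigenspaces spanned by $\Jc_1,\bar\Jc_1$''. This is false: the solid tori $\Tc_{X;U,\bar V,\eps}^{\alpha,\bar\beta}$ do \emph{not} implement those projectors. The projectors live in the separate bordism $\Pc_X$ and only enter later, in Lemma~\ref{lem:P-idempotent}. If your claim were correct, every $v^{\alpha,\bar\beta}$ with $\alpha\in\Jc_0$ or $\bar\beta\in\bar\Jc_0$ would pair to zero against all $\bar v$'s, which would contradict the very statement you are proving: the lemma asserts a basis indexed by \emph{all} of $\Jc\times\bar\Jc$, not just $\Jc_1\times\bar\Jc_1$. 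What actually happens is simpler: after the Schur step the remaining graph evaluates directly to the trace pairings, and these collapse to $\delta_{\alpha\alpha'}\delta_{\beta\beta'}$ solely because $\Jc,\bar\Jc$ were chosen dual under the full (unprojected) pairing \eqref{eq:M-out-M-in-pairing}.

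A smaller imprecision: the glued closed manifold is $S^2\times S^1$, not $S^3$ with a Hopf link. The factors of $\hZc_\Cc(S^3)$ are not produced by evaluating any link invariant; they enter only through the normalisation built into $\langle-,-\rangle_\mathrm{bulk}$, and the computation finishes by using $\hZc_\Cc(S^2\times S^1)=\dim\hZc_\Cc(S^2)=1$.
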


\begin{proof}
    Using \eqref{eq:hatZ-on-objects} we can write 
    \begin{align}
        \begin{aligned}
            \hat{\Zc}_{\Cc}(T^2_X)&
            \cong
            \bigoplus_{\epsilon\in\{\pm\}}\hat{\Cc}(X\otimes X^*,L\boxtimes K^{\epsilon})\otimes K^{\epsilon}\\
            &\cong \bigoplus_{\epsilon\in\{\pm\}}\hat{\Cc}(\One,X^*\otimes(L\boxtimes K^{\epsilon})\otimes X)\otimes K^{\epsilon}\\
            &\cong \bigoplus_{\epsilon\in\{\pm\}}\bigoplus_{i}\hat{\Cc}(\One,X^*\otimes (S_i\otimes S_i^*)\boxtimes K^{\epsilon}\otimes X)\otimes K^{\epsilon}\\
            &\overset{(*)}\cong \bigoplus_{\epsilon\in\{\pm\}}\bigoplus_{i,j}
            \hat{\Cc}(\One,X^*\otimes (S_i\otimes S_j)\boxtimes K^{\epsilon})\otimes 
            \hat{\Cc}(\One,(S_j^*\otimes S_i^*)\boxtimes K^{\epsilon}\otimes X)\otimes K^{\epsilon}\\
            &\cong \bigoplus_{\epsilon\in\{\pm\}}\bigoplus_{i,j}
            \hat{\Cc}(X, (S_i\otimes S_j)\boxtimes K^{\epsilon})\otimes 
            \hat{\Cc}((S_i\otimes S_j)\boxtimes K^{\epsilon}, X)\otimes K^{\epsilon}\,.
        \end{aligned}
    \end{align}
    In step $(*)$ we used \cite[Lem.\,IV.2.2.2]{Turaev:1994}.
    The inverse chain of isomorphisms sends $\alpha\otimes\bar{\beta}\lmt 
    v_{S_i,S_j,\epsilon}^{\alpha,\bar{\beta}}$.
    
    Now we check that we indeed have a dual basis. We have
    \begin{align}
        \bar{v}_{U,\bar{V},\epsilon}^{\bar{\alpha},\beta}
        \circ 
        v_{U',\bar{V}',\epsilon'}^{\alpha',\bar{\beta}'}
        = \frac{1}{(\hat{\Zc}_{\Cc}(S^3))^2\epsilon\dim(U)\dim(\bar{V})}\,\hat{\Zc}_{\Cc}(\bar{\Tc}_{X;U,\bar{V},\epsilon}^{\bar{\alpha},\beta}\circ 
        \Tc_{X;U',\bar{V}',\epsilon'}^{\alpha',\bar{\beta}'})
    \end{align}
    The 3-manifold on the right hand side is $S^2\times S^1$ with a ribbon graph:
    \begin{align}
        \raisebox{-0.5\height}{\includegraphics[scale=1.0]{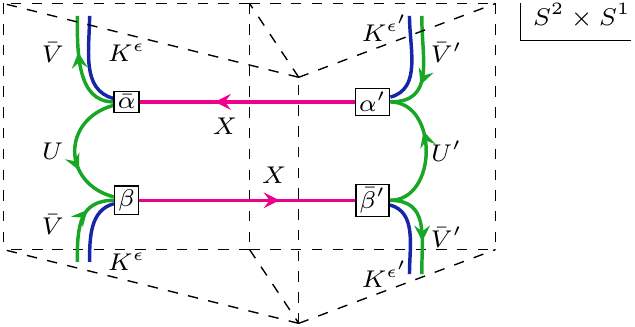}}
    \end{align}
    Using the identity
    \begin{align}
        \raisebox{-0.5\height}{\includegraphics[scale=0.8]{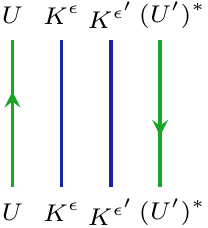}}
        =\frac{\delta_{\eps,\eps'}\,\delta_{U,U'}}{\epsilon \dim(U)}
        \raisebox{-0.5\height}{\includegraphics[scale=0.8]{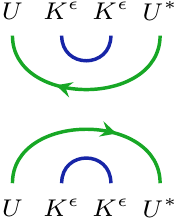}}
    + \text{(other terms with intermediate simple object $\ncong \One$)}
    \end{align}
    and that  the dimension of the state space on $S^2$ with a single insertion of the simple object $S_i$ is
    $\dim(\hat{\Zc}(S^2(i))=\delta_{S_i,\One}$ we compute
    \begin{align}
    \begin{aligned}
        \bar{v}_{U,\bar{V},\epsilon}^{\bar{\alpha},\beta}
        \circ v_{U',\bar{V}',\epsilon'}^{\alpha',\bar{\beta}'} 
        &=\frac{1}{(\hZc_{\Cc}(S^3))^2 \epsilon \dim(U)\dim(\bar{V})}\hZc_{\Cc}\Bigg( \, \raisebox{-0.5\height}{\includegraphics[scale=1.0]{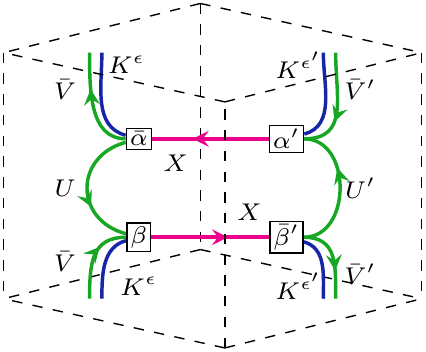}}\, \Bigg)\\
        &=\frac{\delta_{\eps,\eps'}\,\delta_{U,U'}\,\delta_{\bar V,\bar V'}}{(
        \hZc_{\Cc}(S^3) \epsilon \dim(U)\dim(\bar{V}))^2}\hZc_{\Cc}\Bigg( \, \raisebox{-0.5\height}{\includegraphics[scale=1.0]{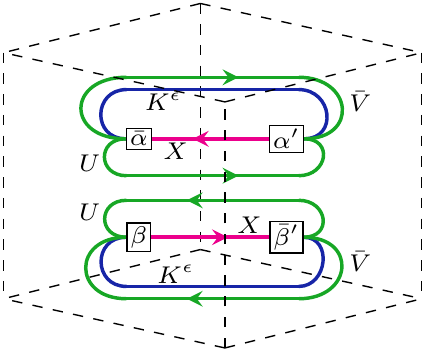}}\, \Bigg)\\
        &=
    \delta_{\eps,\eps'}\,\delta_{U,U'}\,\delta_{\bar V,\bar V'}
    \langle \alpha,\bar{\alpha}'\rangle_\mathrm{bulk}
        \langle \beta,\bar{\beta}'\rangle_\mathrm{bulk}\hat{\Zc}_{\Cc}(S^2\times S^1)
        \\
        &=\delta_{\eps,\eps'}\,\delta_{U,U'}\,\delta_{\bar V,\bar V'} \,\delta_{\alpha,\alpha'} \,\delta_{\beta,\beta'}\, \underbrace{\dim(\hat{\Zc}_{\Cc}(S^2))}_{=1} \,.
    \end{aligned}
    \end{align}
\end{proof}

\begin{figure}
    \centering
    \includegraphics[scale=1.0]{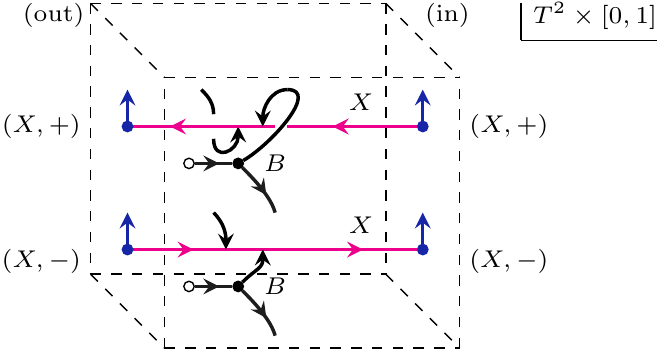}
    \caption{The bordism $\Pc_{X}\colon T^2_X\lra T^2_X$}
    \label{fig:PXx}
\end{figure}

\begin{lemma}
Consider the bordism $\Pc_{X}\colon T^2_X\lra T^2_X$ in Figure~\ref{fig:PXx}.
    \begin{enumerate}
        \item $\hat{\Zc}_{\Cc}(\Pc_{X})$ is idempotent and 
        \begin{align}
        \hat{\Zc}_{\Cc}(\Pc_{X})\circ\hat{\Zc}_{\Cc}(\Tc_{X})=\hat{\Zc}_{\Cc}(\Tc_{X})\,.
    \label{eq:lem:P-idempotent:1}
        \end{align}
    \label{lem:P-idempotent:1}
        \item 
        For $\alpha\in\Jc$ and $\bar{\beta}\in\bar{\Jc}$
        we have
        \begin{align}
            \hat{\Zc}_{\Cc}(\Pc_{X})\circ
            v_{X;U,\bar{V},\epsilon}^{\alpha,\bar{\beta}}=
            \delta_{\alpha \in \mathcal{J}_1}\, \delta_{\bar{\beta} \in \bar{\mathcal{J}}_1}\, 
            v_{X;U,\bar{V},\epsilon}^{\alpha,\bar{\beta}}\,,
    \label{eq:lem:P-idempotent:2}
        \end{align}
    \label{lem:P-idempotent:2}
        \item 
        The vectors $\{v_{X;U,\bar{V},\epsilon}^{\alpha,\bar{\beta}} \}_{\alpha\in\Jc_1,\bar{\beta}\in\bar{\Jc}_1}$ span the image of $\hat{\Zc}_{\Cc}(\Pc_{X})$.
    \label{lem:P-idempotent:3}
    \end{enumerate}
    \label{lem:P-idempotent}
\end{lemma}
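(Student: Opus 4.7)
The strategy is to read off from the definition of $\Pc_X$ that, after composing with either $\Tc_X$ or a basis element $\Tc_{X;U,\bar V,\epsilon}^{\alpha,\bar\beta}$ of Lemma~\ref{lem:torus-state-space-basis}, the resulting ribbon graph can be simplified so as to produce exactly the projector $Q_0^{(\mathrm{in})}$ acting on $\alpha$ and $Q_0^{(\mathrm{out})}$ acting on $\bar\beta$ (with $Q_n^{(\mathrm{in/out})}$ as in \eqref{eq:idempot-Q-action}, evaluated at $n=0$ since $B$ is symmetric). Once this is established, parts \ref{lem:P-idempotent:1}--\ref{lem:P-idempotent:3} follow in sequence in a fairly mechanical fashion.

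For part~\ref{lem:P-idempotent:2}, the idea is to compute $\hZc_\Cc(\Pc_X \circ \Tc_{X;U,\bar V,\epsilon}^{\alpha,\bar\beta})$ using the monoidality of the TFT functor and the explicit form of the ribbon graph in $\Pc_X$. Because $\Pc_X$ is built as a cylinder $T^2_X\times[0,1]$ with an additional $B$-ribbon structure that encircles the $X$-core once, this composite can be deformed, using the bimodule axioms and the Frobenius / specialness conditions of $B$, into the torus $\Tc_{X;U,\bar V,\epsilon}^{\alpha',\bar\beta'}$ with
\be
\alpha' = Q_0^{(\mathrm{in})}(\alpha) ~,\qquad \bar\beta' = Q_0^{(\mathrm{out})}(\bar\beta)~.
\ee
Since by construction $\mathcal{J}_i$ and $\bar{\mathcal{J}}_i$ are eigenbases of these idempotents with eigenvalues $i\in\{0,1\}$, and since the map $(\alpha,\bar\beta)\mapsto v_{X;U,\bar V,\epsilon}^{\alpha,\bar\beta}$ is bilinear, this proves \eqref{eq:lem:P-idempotent:2}. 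The main technical obstacle is the explicit deformation of the $B$-ribbon graph on the glued torus into the Frobenius-algebra string diagram defining $Q_0^{(\mathrm{in})}$, $Q_0^{(\mathrm{out})}$; this is a routine but careful manipulation of the kind carried out in \cite{Fuchs:2002tft,Fjelstad:2005ua}, and uses that changes of the internal $B$-network do not affect the invariant (Proposition~\ref{prop:orientedCFT-corr-welldef}).

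For part~\ref{lem:P-idempotent:1}, idempotency is then immediate: by Lemma~\ref{lem:torus-state-space-basis}, the vectors $v_{X;U,\bar V,\epsilon}^{\alpha,\bar\beta}$ form a basis of $\hZc_\Cc(T^2_X)$, and by part~\ref{lem:P-idempotent:2} the linear map $\hZc_\Cc(\Pc_X)$ acts diagonally in this basis with eigenvalues $0$ or $1$, hence squares to itself. The identity $\hZc_\Cc(\Pc_X)\circ\hZc_\Cc(\Tc_X)=\hZc_\Cc(\Tc_X)$ can be verified directly at the level of ribbon graphs: in $\Pc_X\circ\Tc_X$ the additional $B$-loop wrapping the $X$-core can be removed using the bimodule property of $X$, the Frobenius relations and the specialness of $B$, leaving exactly $\Tc_X$.

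Finally, part~\ref{lem:P-idempotent:3} is a direct consequence of parts~\ref{lem:P-idempotent:1} and~\ref{lem:P-idempotent:2}: the image of a diagonalisable idempotent is spanned by its eigenvectors of eigenvalue~$1$, which by part~\ref{lem:P-idempotent:2} are precisely the vectors $v_{X;U,\bar V,\epsilon}^{\alpha,\bar\beta}$ with $\alpha\in\mathcal{J}_1$ and $\bar\beta\in\bar{\mathcal{J}}_1$. The hardest step is thus the explicit string-diagrammatic identification in part~\ref{lem:P-idempotent:2}; all the other statements are either bookkeeping or immediate consequences.
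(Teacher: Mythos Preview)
Your proposal is correct and follows essentially the same route as the paper: the core computation is part~\ref{lem:P-idempotent:2}, where composing $\Pc_X$ with a basis solid torus produces $Q_0^{(\mathrm{in})}(\alpha)$ and $Q_0^{(\mathrm{out})}(\bar\beta)$ in place of $\alpha,\bar\beta$, and parts~\ref{lem:P-idempotent:1} and~\ref{lem:P-idempotent:3} are then easy consequences. The only minor difference is that the paper verifies idempotency in part~\ref{lem:P-idempotent:1} directly at the level of ribbon graphs (using specialness of $B$ to collapse the doubled $B$-loop), whereas you deduce it from part~\ref{lem:P-idempotent:2} via diagonalisability in the basis of Lemma~\ref{lem:torus-state-space-basis}; both arguments are valid and neither is materially harder than the other.
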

\begin{proof}
    \textit{Part~\ref{lem:P-idempotent:1}:} is a straightforward computation using that $B$ is special.
  
    \textit{Part~\ref{lem:P-idempotent:2}:}
    We compute
    $\hat{\Zc}_{\Cc}(\Pc_{X}) \circ v_{X;U,\bar{V},\epsilon}^{\alpha,\bar{\beta}}$
    \begin{align}
        \raisebox{-0.5\height}{\includegraphics[scale=1.0]{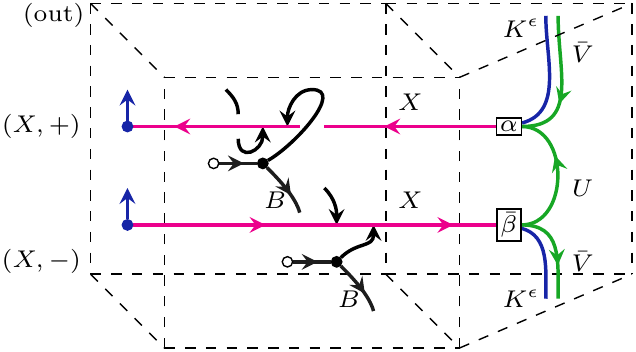}}
        \,=
        \raisebox{-0.5\height}{\includegraphics[scale=1.0]{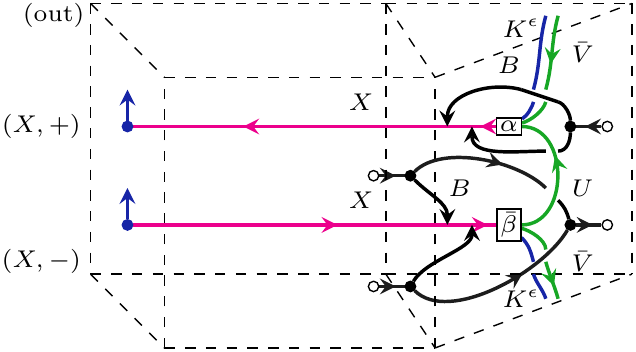}}
        \label{eq:PXx-v}
    \end{align}
    which is $v_{X;U,\bar{V},\epsilon}^{\alpha,\bar{\beta}}$ if $\alpha\in\Jc_1$ and $\bar{\beta}\in\bar{\Jc}_1$, and which is $0$ otherwise.

    \textit{Part~\ref{lem:P-idempotent:3}:}
Clear from Lemma~\ref{lem:torus-state-space-basis} and Part~\ref{lem:P-idempotent:2}.
\end{proof}
    
\begin{lemma}
    We have
    \begin{align}
        \hat{\Zc}_{\Cc}(\Tc_{X})
        &=\sum_{U,\bar{V},\epsilon}\sum_{\alpha\in\Jc_1} \hat{\Zc}_{\Cc}(\Tc_{X;U,\bar{V},\epsilon}^{\alpha,\bar{\alpha}})\,. 
    \end{align}
    \label{lem:torus}
\end{lemma}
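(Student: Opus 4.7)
\textit{Proof proposal.}
The plan is to expand $\hat{\Zc}_{\Cc}(\Tc_{X})$ in the basis of $\hat{\Zc}_{\Cc}(T^2_X)$ from Lemma~\ref{lem:torus-state-space-basis} and then compute the coefficients by pairing with the dual basis. First, by Lemma~\ref{lem:P-idempotent}\eqref{eq:lem:P-idempotent:1}, the vector $\hat{\Zc}_{\Cc}(\Tc_{X})$ lies in the image of the idempotent $\hat{\Zc}_{\Cc}(\Pc_{X})$. By Lemma~\ref{lem:P-idempotent}\eqref{eq:lem:P-idempotent:2} and part~\ref{lem:P-idempotent:3}, this image is spanned precisely by the vectors $v_{U,\bar{V},\epsilon}^{\alpha,\bar{\beta}}$ with $\alpha\in\Jc_1$, $\bar{\beta}\in\bar{\Jc}_1$. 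Consequently we may write
\begin{equation*}
\hat{\Zc}_{\Cc}(\Tc_{X})
= \sum_{U,\bar V,\eps}\sum_{\alpha\in\Jc_1,\,\bar\beta\in\bar\Jc_1}
c^{\alpha,\bar\beta}_{U,\bar V,\eps} \; v^{\alpha,\bar\beta}_{U,\bar V,\eps}\,,
\end{equation*}
with $U,\bar V$ running over the chosen representatives of simple objects of $\Cc$, and the task reduces to showing that $c^{\alpha,\bar\beta}_{U,\bar V,\eps}=\delta_{\alpha,\beta}$ (where $\bar\alpha$ is the element of $\bar\Jc_1$ dual to $\alpha$ under the bulk pairing \eqref{eq:M-out-M-in-pairing}).

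Next, I would extract the coefficients by applying the dual basis elements from Lemma~\ref{lem:torus-state-space-basis}, namely
\begin{equation*}
c^{\alpha,\bar\beta}_{U,\bar V,\eps}
= \bar v^{\bar\alpha,\beta}_{U,\bar V,\eps}\bigl(\hat\Zc_\Cc(\Tc_X)\bigr)
= \frac{1}{(\hat\Zc_\Cc(S^3))^{2}\,\eps\dim(U)\dim(\bar V)}\;
\hat\Zc_\Cc\!\left(\bar\Tc^{\bar\alpha,\beta}_{X;U,\bar V,\eps}\circ \Tc_X\right).
\end{equation*}
The gluing $\bar\Tc^{\bar\alpha,\beta}_{X;U,\bar V,\eps}\circ\Tc_X$ is $S^2\times S^1$ with an embedded ribbon graph: a single $X$-labelled loop (the cores of the two tori get glued) together with the two coupons $\bar\alpha$ and $\beta$ and the $U$, $\bar V$, $K^\eps$ ribbons they carry; the structure is entirely analogous to the composition computed in the proof of Lemma~\ref{lem:torus-state-space-basis}, except that the second torus contributes no field insertions of its own.

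The core of the argument will then be a graphical simplification of this ribbon graph, essentially the same move used at the end of the proof of Lemma~\ref{lem:torus-state-space-basis}: the $X$-loop together with the coupons $\bar\alpha, \beta$ collapses, using that the state space of $S^2$ with a single simple insertion is one-dimensional only if that simple is $\One$, into the product of the bulk trace pairing of $\bar\beta$ with $\alpha$ and a factor of $\hat\Zc_\Cc(S^2\times S^1)=1$. Concretely one finds
\begin{equation*}
\hat\Zc_\Cc\!\left(\bar\Tc^{\bar\alpha,\beta}_{X;U,\bar V,\eps}\circ \Tc_X\right)
~=~ (\hat\Zc_\Cc(S^3))^{2}\,\eps\dim(U)\dim(\bar V) \cdot \langle\bar\beta,\alpha\rangle_\mathrm{bulk}\,,
\end{equation*}
so that $c^{\alpha,\bar\beta}_{U,\bar V,\eps}=\langle\bar\beta,\alpha\rangle_\mathrm{bulk}=\delta_{\alpha,\beta}$ by the chosen duality of bases, which yields the claimed formula.

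The main obstacle will be step three: carefully carrying out the diagrammatic reduction inside $S^2\times S^1$, including tracking the framings, the orientations, and the half-twists at the insertion points, and confirming that the resulting scalar is exactly the bulk pairing \eqref{eq:M-out-M-in-pairing} evaluated on $\bar\beta$ and $\alpha$ (rather than, say, its composition with some $Q^\mathrm{(in)}_0$ or a Nakayama twist). Once that identification is made, the sum over $\bar\beta$ collapses to the diagonal $\bar\beta=\bar\alpha$ and the stated equality follows immediately.
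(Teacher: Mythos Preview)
Your strategy is exactly the paper's: expand $\hat\Zc_\Cc(\Tc_X)$ in the basis of Lemma~\ref{lem:torus-state-space-basis}, use Lemma~\ref{lem:P-idempotent} to restrict the sum to $\Jc_1,\bar\Jc_1$, and read off the coefficients by pairing with the dual basis. The conclusion $c^{\alpha,\bar\beta}=\delta_{\alpha,\beta}$ is also the paper's.

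There is, however, a concrete error in the middle step. The closed manifold $\bar\Tc^{\bar\alpha,\beta}_{X;U,\bar V,\eps}\circ\Tc_X$ is \emph{not} $S^2\times S^1$ but $S^3$. The analogy with the computation in Lemma~\ref{lem:torus-state-space-basis} is misleading: although $\Tc_X$ and $\Tc^{\alpha,\bar\beta}_{X;U,\bar V,\eps}$ are both solid tori bounding $T^2_X$, they are \emph{different} Dehn fillings (their meridian discs kill complementary cycles). This is forced by Lemma~\ref{lem:cut-out-torus}: $\Tc_X$ is cut straight out of $M_{\Sigma'}$, while $\Tc^{\alpha,\bar\beta}$ arises after the surgery performed by the gluing bordism $G_{U,\bar V,\eps}$. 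Hence $\bar\Tc\circ\Tc^{\alpha,\bar\beta}=S^2\times S^1$ whereas $\bar\Tc\circ\Tc_X=S^3$, and your picture ``the cores of the two tori get glued'' does not describe the actual geometry.

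Once you use $S^3$, the third step becomes easier than you feared: the $S^2$-state-space trick is irrelevant, and the invariant is simply $\hat\Zc_\Cc(S^3)$ times the value of the closed ribbon graph in $\hCc$, which is $\mathrm{tr}_{U\otimes\bar V\boxtimes K^\eps}(\bar\alpha\circ\beta)$. One factor of $\hat\Zc_\Cc(S^3)$ cancels against the normalisation of $\bar v$, leaving precisely the bulk pairing and hence $\delta_{\alpha,\beta}$. No framings, half-twists, or projectors intervene.
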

\begin{proof}
    We can express $\hat{\Zc}_{\Cc}(\Tc_{X})$ as a linear combination 
    \begin{align}
        \hat{\Zc}_{\Cc}(\Tc_{X})
        \overset{(1)}=
        \sum_{U,\bar{V},\epsilon}\sum_{\substack{\alpha\in\Jc,\\ \bar{\beta}\in\bar{\Jc}}} K_{U,\bar{V},\epsilon}^{\alpha,\bar{\beta}} \,
    v_{U,\bar{V},\epsilon}^{\alpha,\bar{\beta}}
        \overset{(2)}=
        \sum_{U,\bar{V},\epsilon}\sum_{\substack{\alpha\in\Jc_1,\\ \bar{\beta}\in\bar{\Jc}_1}} K_{U,\bar{V},\epsilon}^{\alpha,\bar{\beta}} \,
    v_{U,\bar{V},\epsilon}^{\alpha,\bar{\beta}}
        \label{eq:torus-coeff}
    \end{align}
Step 1 uses the basis from Lemma~\ref{lem:torus-state-space-basis}. Step 2 follows from Lemma~\ref{lem:P-idempotent}, which states that $\hat{\Zc}_{\Cc}(\Tc_{X})$ is in the image of the idempotent $\hat{\Zc}_{\Cc}(\Pc_{X})$ and hence the sum over basis elements can be restricted from $\Jc$ to $\Jc_1$.

    To determine the coefficients $K_{U,\bar{V},\epsilon}^{\alpha,\bar{\beta}}$ we evaluate a dual basis element in $\hat{\Zc}_{\Cc}(T^2_{X})^*$ from Lemma~\ref{lem:torus-state-space-basis} on \eqref{eq:torus-coeff}:

    \begin{align}
    K_{R,\bar{S},\nu}^{\gamma,\bar{\delta}}
    =
        \bar{v}_{X;R,\bar{S},\nu}^{\bar{\gamma},\delta}\circ
        \hat{\Zc}_{\Cc}(\Tc_{X})&=
        \frac{1}{(\hat{\Zc}_{\Cc}(S^3))^2 \nu\dim(R)\dim(\bar{S})}\,\hat{\Zc}_{\Cc}(\bar{\Tc}_{X;R,\bar{S},\nu}^{\bar{\gamma},\delta})\circ
        \hat{\Zc}_{\Cc}(\Tc_{X}) ~.
        \label{eq:torus-coeff-eval}
    \end{align}
    On the right hand side we have $\hat{\Zc}_{\Cc}$ evaluated on $S^3$ with a ribbon graph in it, which is 
    \begin{align}
        \raisebox{-0.5\height}{\includegraphics[scale=1.0]{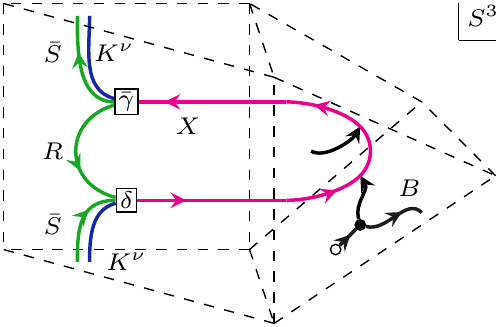}}
        =
        \raisebox{-0.5\height}{\includegraphics[scale=1.0]{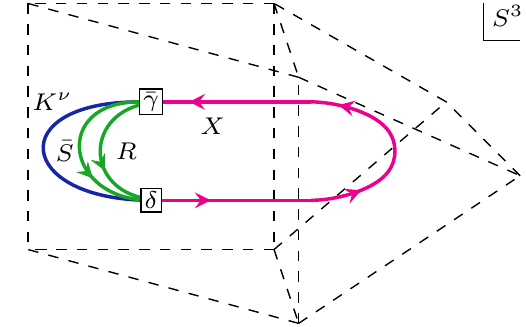}}
        \,. 
    \end{align}
    So altogether using \eqref{eq:M-out-M-in-pairing} we have
    \begin{align}
        K_{R,\bar{S},\nu}^{\gamma,\bar{\delta}}=
        \frac{1}{\hat{\Zc}_{\Cc}(S^3)
        \nu\dim(R)\dim(\bar{S})}
        \,\mathrm{tr}_{R\otimes\bar{S}\boxtimes K^\nu}(
        \bar{\gamma}\circ \delta
        )=\langle \bar{\gamma},\delta \rangle_\mathrm{bulk}=\delta_{\gamma,\delta}\,.
    \end{align}
\end{proof}

\begin{proof}[Proof of Theorem~\ref{thm:paritCFT-glue} (closed case)]
   Combining Lemmas~\ref{lem:cut-out-torus}~and~\ref{lem:torus} we obtain
   \begin{align}
   \begin{aligned}
    \Corr^\mathrm{or}_B(\Sigma') &\stackrel{\text{def.}}{=} 
    \hZc_{\Cc}(M_{\Sigma'})
    \stackrel{\eqref{eq:Msig-T2-def_app}}{=}
    \hZc_{\Cc}(M_{\Sigma',T^2})\circ \hZc_{\Cc}(\Tc_X)\\
    &\stackrel{\text{Lem.\,\ref{lem:torus}}}{=}
    \hZc_{\Cc}(M_{\Sigma',T^2})\circ \Big(
    \sum_{U,\bar{V},\epsilon}\sum_{\alpha\in\Jc_1}
    \hat{\Zc}_{\Cc}(\Tc_{X;U,\bar{V},\epsilon}^{\alpha,\bar{\alpha}})
    \Big)\\
    &\stackrel{\text{Lem.\,\ref{lem:cut-out-torus}}}{=}
    \sum_{U,\bar V,\eps}\sum_{\alpha\in\Jc_1}
    \hZc_{\Cc}(G_{U,\bar V, \eps}) \circ 
    \hZc_{\Cc}(M_{\Sigma(U,\bar V, \eps,\alpha,\bar{\alpha})})\\
    &\stackrel{\text{def.}}{=} 
    \sum_{U,\bar V,\eps}\sum_{\alpha\in\Jc_1}
    \hZc_{\Cc}(G_{U,\bar V, \eps}) (\Corr^\mathrm{or}_B(\Sigma(U,\bar V,\eps,\alpha,\bar\alpha)))~.
   \end{aligned}
   \end{align}
We did not discuss the gluing anomaly in terms of Maslov indices in detail here, but it is shown in \cite[Thm.\,3.9]{Felder:1999mq} and in \cite[Sec.\,5.1]{Fjelstad:2005ua} that the relevant Maslov indices vanish in this computation.
\end{proof}

\subsubsection*{Boundary factorisation with defects}

\begin{figure}[tb]
    \centering
    \begin{align*}
        \Corr_B^\mathrm{or}\Bigg(
        \raisebox{-0.5\height}{\includegraphics[scale=1.0]{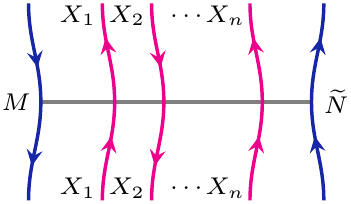}}
        \Bigg)
        =
        \Corr_B^\mathrm{or}\Bigg(
        \raisebox{-0.5\height}{\includegraphics[scale=1.0]{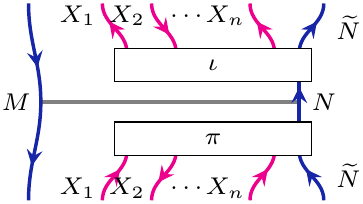}}
        \Bigg)
    \end{align*}
    \caption{Reducing the boundary factorisation to the case without defect lines.
Here, $N = X_1 \otimes_B \cdots \otimes_B X_n \otimes_B \widetilde N$.
    }
    \label{fig:fact-bdry-reduced}
\end{figure}
\begin{figure}[tb]
    \centering
    \begin{align*}
    \Sigma:~\raisebox{-0.5\height}{\includegraphics[scale=1.0]{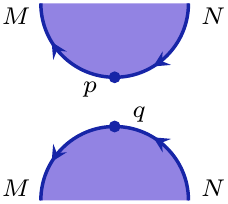}}
    \quad
    \Sigma':~\raisebox{-0.5\height}{\includegraphics[scale=1.0]{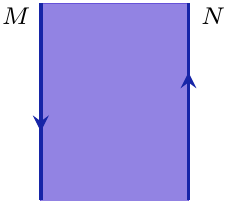}}
    \end{align*}
    \caption{The cut world sheet $\Sigma=\Sigma(W, \eps, \alpha, \bar\beta)$ and the glued world sheet $\Sigma'$.}
    \label{fig:fact-bdry-ws}
\end{figure}
\begin{figure}[tb]
    \centering
    \begin{align*}
    M_{\Sigma}:\raisebox{-0.5\height}{\includegraphics[scale=1.0]{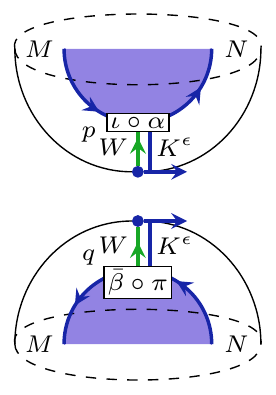}}
    \,
    M_{\Sigma'}:\raisebox{-0.5\height}{\includegraphics[scale=1.0]{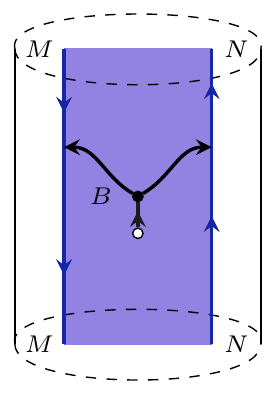}}
    \,
    G_{W,\epsilon}:\raisebox{-0.5\height}{\includegraphics[scale=1.0]{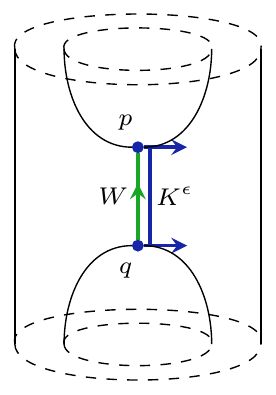}}
    \,
    G_{W,\epsilon}\circ M_{\Sigma}:\raisebox{-0.5\height}{\includegraphics[scale=1.0]{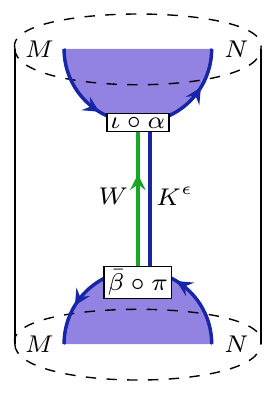}}
    \end{align*}
    \caption{The connecting manifolds of $\Sigma=\Sigma(W, \eps, \alpha, \bar\beta)$ and $\Sigma'$. The shaded areas show the world sheets.}
    \label{fig:fact-bdry-cm}
\end{figure}
Similarly as in the case of bulk factorisation, we can reduce the proof of boundary factorisation to the situation where we glue a world sheet $\Sigma$ at free boundary insertions where no defect lines meet, see Figure~\ref{fig:fact-bdry-reduced}.
From now on we assume this. Let 
\begin{equation}
    \Jc ~\subset~\hCc(W \otimes K^\eps,M^* \otimes_B N)   
    \quad \text{and} \quad
    \bar{\Jc} ~\subset~\hCc(M^* \otimes_B N,W \otimes K^\eps)
\end{equation}
denote bases of the in- and outgoing boundary multiplicity spaces, dual with respect to the pairing $\langle-,-\rangle_\mathrm{bnd}$ in \eqref{eq:Hom-out-Hom-in-pairing}.
Consider the world sheet $\Sigma(W, \eps, \alpha, \bar\beta)$ for $\alpha\in\Jc$ and $\bar{\beta}\in\bar{\Jc}$ from \eqref{eq:cut-world-sheet-open}
and its gluing $\Sigma'$ in Figure~\ref{fig:fact-bdry-ws}. 
The corresponding connecting manifolds and the gluing manifold are in Figure~\ref{fig:fact-bdry-cm}.
We will need the following lemma.

\begin{lemma}
    We have
    \begin{align}
        \raisebox{-0.5\height}{\includegraphics[scale=1.0]{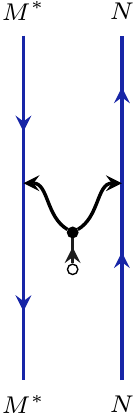}}
        =
        \raisebox{-0.5\height}{\includegraphics[scale=1.0]{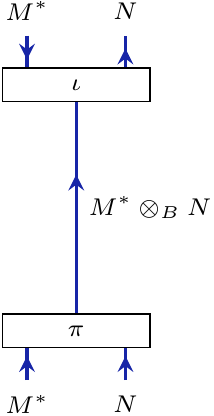}}
        =\sum_{W,\epsilon}\sum_{\alpha\in\Jc}
        \raisebox{-0.5\height}{\includegraphics[scale=1.0]{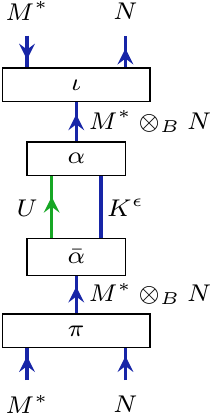}}
        \label{eq:MlMr-proj-basis}
    \end{align}
    \label{lem:projector-basis-bdry}
\end{lemma}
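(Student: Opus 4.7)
The plan is to establish the two equalities in \eqref{eq:MlMr-proj-basis} separately. For the first equality, I would recognise both diagrams as two graphical presentations of the splitting idempotent $p = \iota \circ \pi$ for the relative tensor product $M^* \otimes_B N$, cf.\ \eqref{eq:rel-tensor-prod} and \eqref{eq:split-idempotent-tensor}. The equality then follows by the same standard manipulations of $B$-lines that were already used in the invariance proofs of Appendix~\ref{app:invariance}: the Frobenius relation, (co)associativity of $B$, (co)unitality combined with specialness, and the (bi)module axioms for $M$ and $N$ (using the right $B$-action on $M^*$ defined by \eqref{eq:dual-action}). No new input beyond symmetric special Frobeniusness of $B$ is needed here.

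For the second equality, the central tool is semisimplicity of the pivotal fusion category $\hat{\Cc} = \Cc \boxtimes \SVect$, whose simple objects (up to isomorphism) are $\{ S_i \boxtimes K^\epsilon \}_{i \in \Ic,\,\epsilon \in \{\pm\}}$. For any $X \in \hat{\Cc}$ semisimplicity yields a resolution of the identity
$$\id_X \;=\; \sum_{W,\,\epsilon}\;\sum_{\alpha} \alpha \circ \bar\alpha,$$
where $W$ runs over representatives of simples of $\Cc$, $\epsilon \in \{\pm\}$, and $\{\alpha\} \subset \hat{\Cc}(W \otimes K^\epsilon, X)$, $\{\bar\alpha\} \subset \hat{\Cc}(X, W \otimes K^\epsilon)$ are bases dual with respect to the non-degenerate trace pairing $\langle-,-\rangle_\mathrm{bnd}$ of \eqref{eq:Hom-out-Hom-in-pairing}. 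Applying this to $X = M^* \otimes_B N$ (with the bases $\Jc, \bar\Jc$ fixed in the hypothesis) and then composing on the outside with $\iota$ and $\pi$ gives, after translating $\alpha \circ \bar\alpha$ back into graphical form and closing off the $W \otimes K^\epsilon$ line, exactly the right-hand side of \eqref{eq:MlMr-proj-basis}.

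The main obstacle I anticipate is purely bookkeeping: verifying that the normalisation $1/(\epsilon \dim W)$ built into the pairing \eqref{eq:Hom-out-Hom-in-pairing} matches the implicit factor produced graphically when one closes the $W \otimes K^\epsilon$ loop in \eqref{eq:MlMr-proj-basis}. This is the same kind of care needed to check dual bases in Lemma~\ref{lem:torus-state-space-basis}, and once the factors are tracked correctly the desired identity follows directly from the resolution of the identity in the semisimple category $\hat{\Cc}$.
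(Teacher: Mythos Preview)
Your proposal is correct and follows essentially the same approach as the paper: the first equality is the splitting $\iota\circ\pi$ of the idempotent for $M^*\otimes_B N$, and the second is the resolution of the identity $\id_{M^*\otimes_B N}=\sum_{W,\epsilon,\alpha}\alpha\circ\bar\alpha$ coming from semisimplicity of $\hat\Cc$, sandwiched between $\iota$ and $\pi$. The only presentational difference is that the paper verifies this resolution explicitly by checking $p\circ\beta=\beta$ via $\bar\alpha\circ\beta=\delta_{\alpha,\beta}\,\id_{W\otimes K^\epsilon}$ (which absorbs the $1/(\epsilon\dim W)$ normalisation), rather than invoking semisimplicity abstractly; your anticipated bookkeeping check is exactly this step.
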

\begin{proof}
    The left hand side of \eqref{eq:MlMr-proj-basis} is the idempotent projecting onto $M^*\otimes_B N$, and we can split the idempotent as $\iota\circ\pi$.
    For the second equation we define
    \begin{align}
        p:=
        \sum_{W,\epsilon} \sum_{\alpha\in\Jc} \alpha \circ \bar{\alpha}
    \end{align}
    and we show that $p=\id_{M^*\otimes_B N}$. This is equivalent to $p\circ\beta=\beta$ for any $\beta\in\Jc$ and for a fixed simple object $W'\boxtimes K^{\epsilon'}$. We compute
    \begin{align}
        p\circ\beta=
        \sum_{W,\epsilon} \sum_{\alpha\in\Jc} \alpha \circ \bar{\alpha}\circ\beta
        \overset{(*)}=\sum_{W,\epsilon} \sum_{\alpha\in\Jc} \alpha\, \delta_{W,W'}\delta_{\epsilon,\epsilon'}\delta_{\alpha,\beta} = \beta\,.
    \end{align}
In step $(*)$ we used that $\bar{\Jc}$ is a basis dual to $\Jc$ with respect to $\langle-,\rangle_\mathrm{bnd}$ as follows:
    \begin{align}
        \bar{\alpha}\circ\beta=\frac{\mathrm{tr}_{W'\boxtimes K^{\epsilon'}}(\bar{\alpha}\circ\beta)}{\epsilon'\dim(W')}\,\id_{W'\boxtimes K^{\epsilon'}}
        =\langle\bar{\alpha},\beta\rangle_\mathrm{bnd}\,\id_{W'\boxtimes K^{\epsilon'}}
        =\delta_{\alpha,\beta}\,\id_{W'\boxtimes K^{\epsilon'}}\,.
    \end{align}
\end{proof}

\begin{proof}[Proof of Theorem~\ref{thm:paritCFT-glue} (open case)]
    Using Lemma~\ref{lem:projector-basis-bdry} we can compute
    \begin{align}
    \begin{aligned}
        \Corr^\mathrm{or}_B(\Sigma')  \stackrel{\text{def.}}{=}
        \hZc_{\Cc}(M_{\Sigma'})\stackrel{\text{Lem.\,\ref{lem:projector-basis-bdry}}}{=}&
        \sum_{W,\eps}\sum_{\alpha\in\Jc} 
    \hZc_{\Cc}(G_{W, \eps}\circ M_{\Sigma(W,\eps,\alpha,\bar\alpha)})\\ =&
        \sum_{W,\eps}\sum_{\alpha\in\Jc}
    \hZc_{\Cc}(G_{W, \eps}) 
        \hZc_{\Cc}(M_{\Sigma(W,\eps,\alpha,\bar\alpha)})\\
        \stackrel{\text{def.}}{=}&
    \sum_{W,\eps}\sum_{\alpha\in\Jc}\hZc_{\Cc}(G_{W, \eps}) (\Corr^\mathrm{or}_B(\Sigma(W,\eps,\alpha,\bar\alpha)))\,.
    \end{aligned}
    \end{align}
As in the closed case, there is no gluing anomaly as relevant Maslov indices vanish, see \cite[Thm.\,3.10]{Felder:1999mq} and  \cite[Sec.\,4.1]{Fjelstad:2005ua}.
\end{proof}

\subsection{Algebras and multiplicity spaces for a self-dual invertible object}\label{app:alg-mult-selfdual}

Throughout this appendix we fix $G \in \Cc$ such that $G \otimes G \cong \One$. It follows that $G$ is simple, invertible, and self-dual. 
In this appendix we recall how to construct algebras from $G$ and compute the associated multiplicity and state spaces. We follow \cite{Fuchs:2004r3} (though there it is assumed that the quantum dimension is one), and also \cite{Creutzig:2015buk} where $\dim(G)$ is not constrained.

We write $\dim(G)$ for the quantum dimension of $G$ and $\theta_G$ for its twist eigenvalue. The quantum dimension is multiplicative, so $G \otimes G \cong \One$ implies that $\dim(G) \in  \{\pm1\}$.
The self-braiding of $G$ is given by 
\begin{equation}\label{eq:app-GG-selfbraid}
    \sigma_{G,G} = \dim(G) \, \theta_G \cdot \id_{G\otimes G}~.
\end{equation}
as can be seen by taking the partial trace over one factor of $G$ on both sides.
The self-braiding defines a quadratic form on $\Zb_2$, and hence can only take values in $\{ \pm 1 , \pm i \}$. 
This also constrains $\theta_G$ to be in $\{ \pm 1 , \pm i \}$. 
The quadratic form $\sigma_{G,G}$ determines the braided monoidal structure on the subcategory generated by $\{ \One , G \}$ via direct sums, see e.g.\ \cite[Sec.\,2]{Fuchs:2004r3}. 
It thus also fixes the braided monoidal structure on the subcategory generated by $\{ \One , G \boxtimes K^\nu \}$

\subsubsection*{Algebras}

Let $\nu\in\{\pm\}$ and denote with 
\begin{align}
     A_{\nu}:=\One\oplus G\boxtimes K^{\nu}~=~
     \begin{cases}
     \One\oplus G\in\Cc &\text{; if }\nu=+~,\\
     \One\oplus \Pi G\in\hCc &\text{; if } \nu=-~.
     \end{cases}
\end{align}
We have:

\begin{lemma}
\begin{enumerate}
    \item 
There is a special Frobenius algebra structure on $A_\nu$ if and only if the twist on $G$ satisfies $\theta_G^2=1$. In this case, the special Frobenius algebra structure is unique up to isomorphism.
\label{lem:sFa_1+G:1}
\end{enumerate}
Assume now that $\theta_G^2 = 1$, so that $A_\nu$ is a special Frobenius algebra. Abbreviate $\hat G := G \boxtimes K^\nu$
\begin{enumerate}
\setcounter{enumi}{1}
    \item 
The restriction $\mu_{\hat G,\hat G}$ of the product to $\hat G\otimes \hat G$ satisfies
\begin{equation}\label{eq:A_nu-commrel}
\mu_{\hat G,\hat G}\circ\sigma_{\hat G,\hat G}=\nu\dim(G)\,\theta_G\cdot\mu_{\hat G,\hat G}~.
\end{equation}
In particular $A_\nu$ is commutative iff $\dim(G)\,\theta_G = \nu$.
\label{lem:sFa_1+G:2}

    \item The Nakayama automorphism is given by
\begin{equation}\label{eq:A_nu-Nakayama}
 N_{A_{\nu}}=\id_{\One}+ \nu\dim(G)\cdot \id_{\hat G} ~.   
\end{equation}
In particular $N_{A_{\nu}}^2=\id_{A_{\nu}}$ and $A_\nu$ is symmetric iff $\dim(G)=\nu$.
\label{lem:sFa_1+G:3}

    \item We have $\dim(A_{\nu})=1+ \nu \dim(G)$ and $D(A_{\nu})=2$.
\label{lem:sFa_1+G:4}

\end{enumerate}
\label{lem:sFa_1+G}
\end{lemma}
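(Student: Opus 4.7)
Write $\hat G = G \boxtimes K^\nu$, so $A_\nu = \One \oplus \hat G$. Since $G \otimes G \cong \One$ and $K^\nu \otimes K^\nu = K^+$, we have $\hat G \otimes \hat G \cong \One$ and $\hat G$ is its own dual. The plan is to analyse the four statements in turn, with most of the work concentrated in part~\ref{lem:sFa_1+G:1}.

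For part~\ref{lem:sFa_1+G:1} I would fix an isomorphism $m \colon \hat G \otimes \hat G \xrightarrow{\sim} \One$; the multiplication $\mu$ on $A_\nu$ is forced on the $\One$-containing summands by unitality, and on $\hat G \otimes \hat G$ it must be a non-zero scalar multiple of $m$ (rescaling $m$ absorbs this scalar). All Hom-spaces involved are one-dimensional because $G$, $\hat G$ and $\One$ are simple, so associativity reduces to a \emph{single scalar equation} comparing the two composites $\hat G^{\otimes 3} \to \hat G$ built from $m$ via the associator. Using $\hat G = \hat G^*$ this scalar can be expressed in terms of the evaluation and coevaluation morphisms of $\hat G$ together with the twist; spelling it out gives precisely the condition $\theta_G^{\,2}=1$ (this is essentially the Frobenius--Schur identity for a self-dual simple object, with the parity sign from $\SVect$ dropping out because $K^\nu \otimes K^\nu = K^+$). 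Given associativity, the counit $\varepsilon$ is defined to vanish on $\hat G$ and to take a prescribed non-zero value $c$ on $\One$; this produces a non-degenerate invariant pairing, and Frobenius then determines $\Delta$ uniquely. Uniqueness up to isomorphism comes from the single free scaling of $m$, which is fixed by the specialness condition $\mu \circ \Delta = \id$. The main technical work is verifying that the associativity obstruction is exactly $\theta_G^{\,2}$; this is the step I expect to be the most involved, since one must carefully track the pivotal structure on $\hat G$ and the behaviour of $\ev_{\hat G}$ under the associator.

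Part~\ref{lem:sFa_1+G:2} follows by a direct calculation: in $\hCc = \Cc \boxtimes \SVect$ the braiding is the product of the braidings of its factors, so
\[
\sigma_{\hat G,\hat G} \;=\; \sigma_{G,G} \boxtimes \sigma_{K^\nu,K^\nu} \;\overset{\eqref{eq:app-GG-selfbraid}}{=}\; \bigl(\dim(G)\,\theta_G\bigr)\cdot \nu \cdot \id_{\hat G \otimes \hat G},
\]
using that the symmetric braiding on $K^\nu \otimes K^\nu$ acts by $\nu \in \{\pm1\}$. Composing with $\mu_{\hat G,\hat G}$ gives \eqref{eq:A_nu-commrel}, and commutativity $\mu \circ \sigma = \mu$ is then equivalent to $\nu \dim(G)\theta_G = 1$, i.e.\ $\dim(G)\theta_G=\nu$.

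For part~\ref{lem:sFa_1+G:3}, since $N_{A_\nu}$ is an automorphism of the Frobenius algebra and $\One$, $\hat G$ are the simple summands, $N_{A_\nu}$ acts on each as a scalar, with $N_{A_\nu}|_\One = \id$ by unitality. To compute the scalar on $\hat G$ I would evaluate the string diagram for $N$ in \eqref{eq:def-Nakayama} restricted to the $\hat G$-summand: unfolding the definition, the Nakayama becomes the composition of $\mu_{\hat G,\hat G}$ with its ``flipped'' version, which by part~\ref{lem:sFa_1+G:2} and the use of the pivotal $\ev/\tev$ on $\hat G$ produces the scalar $\nu \dim(G)$ (the $\theta_G$ contributions cancel because $\theta_G^{\,2}=1$ under the assumption of part~\ref{lem:sFa_1+G:1}, and the remaining factor $\nu \dim(G)$ measures the mismatch between the two duality morphisms on $\hat G$). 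Squaring gives $N_{A_\nu}^{\,2}=\id$, and $N_{A_\nu}=\id$ iff $\nu\dim(G)=1$.

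Finally, part~\ref{lem:sFa_1+G:4} is routine. Additivity of quantum dimensions gives
\[
\dim(A_\nu) \;=\; \dim(\One) + \dim(\hat G) \;=\; 1 + \nu\dim(G).
\]
For $D(A_\nu) = \varepsilon\circ\eta$, I would compute $\mu\circ\Delta$ on the unit: expanding $\Delta(\eta)$ via dual bases with respect to the pairing $\langle x,y\rangle = \varepsilon(\mu(x,y))$ (one basis element from $\One$, one from $\hat G$, with pairing values $c$ and $\lambda c$ where $\mu_{\hat G,\hat G} = \lambda \cdot (\text{iso})$), the specialness condition $\mu\circ\Delta = \id_A$ forces $c=2$, hence $D(A_\nu) = \varepsilon\circ\eta = 2$.
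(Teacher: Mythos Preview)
Your proposal is correct and follows essentially the same route as the paper. Two minor differences worth noting: for part~3 the paper first writes down the explicit coproduct and counit
\[
\Delta = \tfrac12\bigl(\mu_{\One,\One}^{-1}+\mu_{\One,\hat G}^{-1}+\mu_{\hat G,\One}^{-1}+\mu_{\hat G,\hat G}^{-1}\bigr),\qquad
\varepsilon = 2\,(\pi_\One\circ\eta)^{-1}\circ\pi_\One,
\]
and then evaluates the string diagram for $N_{A_\nu}|_g$ directly, obtaining $\dim(g)\cdot\id_g$ after the braiding scalars $c_{g,g}\,c_{g,g}^{-1}$ cancel \emph{identically} (so $\theta_G^2=1$ is not used at this step, contrary to your parenthetical remark); for part~4 the paper reads off $D(A_\nu)=2$ from the explicit $N_{A_\nu}$ via the modified trace in \eqref{eq:tdim-Y}, whereas you obtain it by fixing the counit normalisation from specialness---both are fine and equivalent.
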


\begin{proof}
    \textit{Part~\ref{lem:sFa_1+G:1}:} From \cite[Lem.\,3.10]{Fuchs:2004r3}:
The special Frobenius algebra structure exists iff the associator on the subcategory generated by $\{ \One , \hat G \}$ is trivialisable, which happens iff $\theta_G^2=1$, and so an associative product exists only in this case. Since $\mathrm{H}^2(\Zb_2,\Cb^\times)=0$, there is up to isomorphism only one product (which allows for a non-degenerate pairing) and hence also only one special Frobenius algebra structure.

    \textit{Part~\ref{lem:sFa_1+G:2}} is immediate from the self-braiding \eqref{eq:app-GG-selfbraid}.
    
    \textit{Part~\ref{lem:sFa_1+G:3}:}
Denote the components of the product by
$\mu = \mu_{\One,\One} + \mu_{\One,\hat G} + \mu_{\hat G,\One} + \mu_{\hat G,\hat G}$. 
The explicit form of the coproduct and counit is
    \begin{align}
        \Delta =\frac{1}{2}\cdot\left(
        \mu_{\One,\One}^{-1}+ \mu_{\One,\hat G}^{-1}+ \mu_{\hat G,\One}^{-1}+ \mu_{\hat G,\hat G}^{-1}\right)\,,
        \quad
        \varepsilon=2\cdot(\pi_\One \circ\eta)^{-1}\circ\pi_\One\,,
        \label{eq:Anu-coproduct-counit}
    \end{align}
where $\pi_\One \colon  A_\nu \lra \One$ is the projection onto the summand $\One$. From this we can compute the Nakayama automorphism restricted to $g \in \{\One,\hat G\}$ as 
    \begin{align}
        \raisebox{-0.5\height}{\includegraphics[scale=1.0]{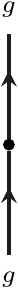}}
        =
        \raisebox{-0.5\height}{\includegraphics[scale=1.0]{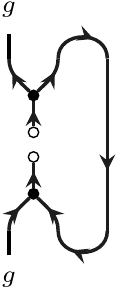}}
        =
        \raisebox{-0.5\height}{\includegraphics[scale=1.0]{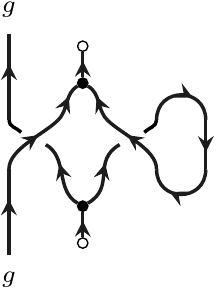}}
        =
        \raisebox{-0.5\height}{\includegraphics[scale=1.0]{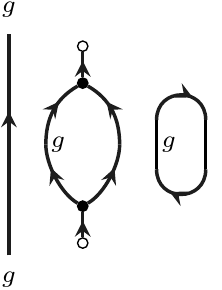}}
        c_{g,g} c_{g,g}^{-1}=\dim(g)
        \raisebox{-0.5\height}{\includegraphics[scale=1.0]{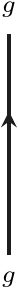}}
        \label{eq:B-symm-dimG-1} ~,
    \end{align}
    where we used that $\sigma_{g,g}=c_{g,g} \id_{g\otimes g}$ for some $c_{g,g}\in\Cb^\times$ and that
    \begin{align}
        \raisebox{-0.5\height}{\includegraphics[scale=1.0]{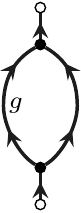}}
        =\eps \circ\mu_{g,g}\circ\frac12\mu_{g,g}^{-1}\circ\eta=1
        \label{eq:special-projected}
    \end{align}

    \textit{Part~\ref{lem:sFa_1+G:4}} is clear from the explicit expression for $N_{A_\nu}$.
\end{proof}

\subsubsection*{Multiplicity spaces}

Define the \textit{monodromy charge} $q_U$ of a simple object $U \in \Cc$ with respect to $G$ to be \cite[Sec.\,3.5]{Fuchs:2004r3}
\begin{equation}\label{eq:qU-charge-def}
    \sigma_{U, G}=q_U\cdot \sigma_{G, U}^{-1} 
\quad\Leftrightarrow\quad
        q_U = \frac{\theta_{G \otimes U}}{\theta_G \, \theta_U}
    ~.
\end{equation}
Recall the definition of the $\mathsf{s}$-matrix from \eqref{eq:modcat-s-matrix-def} in terms of invariants of the Hopf link.

\begin{lemma}\label{lem:monodromy-expression}
For $U\in\Cc$ simple we have
\begin{equation}
    q_U = \frac{\mathsf{s}_{U,G}}{\dim(U)\dim(G)}~\in~\{\pm1\}
    \quad \text{and} \quad
    q_G = 1 ~.
\end{equation}
\end{lemma}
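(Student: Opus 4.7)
The plan is to derive each assertion from the defining relation \eqref{eq:qU-charge-def}, which I rewrite as $\sigma_{G,U}\circ\sigma_{U,G} = q_U\,\id_{U\otimes G}$. This scalar form is legitimate because $U$ simple and $G$ invertible imply that $U\otimes G$ is simple, so $\End(U\otimes G)=\Cb\,\id$.

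For the first formula I take the categorical trace of both sides. Combining the definition of the Hopf link invariant in \eqref{eq:modcat-s-matrix-def} with $\Tr_{U\otimes G}(\id)=\dim(U)\dim(G)$ yields $\mathsf{s}_{U,G}=q_U\dim(U)\dim(G)$, from which the claimed identity follows immediately after dividing by the nonzero factor $\dim(U)\dim(G)$. The range $q_U\in\{\pm 1\}$ then comes from a hexagon computation: the hexagon axioms produce
\[
\sigma_{G\otimes G,\,U}\circ\sigma_{U,\,G\otimes G} \;=\; (\sigma_{G,U}\otimes \id_G)\circ(\id_G\otimes \sigma_{G,U}\sigma_{U,G})\circ(\sigma_{U,G}\otimes \id_G)\;=\;q_U^{\,2}\,\id\,,
\]
while naturality of the braiding along any chosen isomorphism $G\otimes G\xrightarrow{\sim}\One$ identifies the left hand side with $\sigma_{\One,U}\circ\sigma_{U,\One}=\id$, so $q_U^{\,2}=1$.

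For $q_G$ I specialise to $U=G$ and use the self-braiding formula \eqref{eq:app-GG-selfbraid}: since $\sigma_{G,G}=\dim(G)\theta_G\,\id$ and $\dim(G)^2=1$, the square is $\sigma_{G,G}^{\,2}=\theta_G^{\,2}\,\id_{G\otimes G}$, so $\mathsf{s}_{G,G}=\theta_G^{\,2}\cdot\dim(G\otimes G)=\theta_G^{\,2}$ and hence $q_G=\theta_G^{\,2}/\dim(G)^2=\theta_G^{\,2}$. In every case in which the lemma is applied in this paper, Lemma~\ref{lem:sFa_1+G}\,(\ref{lem:sFa_1+G:1}) ensures $\theta_G^{\,2}=1$ (this is exactly the condition for $A_\nu$ to carry a special Frobenius algebra structure), giving $q_G=1$.

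The only mild obstacle is the hexagon step for $q_U^{\,2}=1$: one must carefully chase the chosen isomorphism $G\otimes G\cong\One$ through the braiding-naturality squares to identify $\sigma_{U,G\otimes G}$ and $\sigma_{G\otimes G,U}$ with the coherence isomorphisms for $\sigma_{U,\One}$ and $\sigma_{\One,U}$. Everything else is a direct unpacking of the definitions of $q_U$, $\mathsf{s}$, and $\sigma_{G,G}$.
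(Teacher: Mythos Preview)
Your argument is correct and follows the same route as the paper: the first identity comes from tracing the double braiding, the fact that $q_U^{\,2}=1$ comes from the hexagon relations (the paper carries this out graphically in \eqref{eq:qU-2-aux1}--\eqref{eq:comp-q_U}, which is exactly your hexagon computation drawn as string diagrams), and $q_G$ is read off from the self-braiding formula \eqref{eq:app-GG-selfbraid}. Your displayed hexagon identity is in fact correct as written, so the ``mild obstacle'' you flag is not an issue.

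On the last point you are actually more careful than the paper. The paper's one-line ``follows directly from comparing \eqref{eq:app-GG-selfbraid} and \eqref{eq:qU-charge-def}'' yields only $q_G=\theta_G^{\,2}$ (equivalently $q_G=\theta_G^{-2}$, since $\theta_G^{\,4}=1$), and you correctly observe that $q_G=1$ then uses the hypothesis $\theta_G^{\,2}=1$ from Lemma~\ref{lem:sFa_1+G}\,(\ref{lem:sFa_1+G:1}). This hypothesis is the standing assumption in all places where the lemma is invoked (e.g.\ Remark~\ref{rem:Anu-CFT-properties}), so nothing is lost, but it is good that you made it explicit; without it the claim $q_G=1$ fails (e.g.\ for the semion, where $\theta_G=i$ and $q_G=-1$).
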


\begin{proof}
The first equality is immediate from the definition of $\mathsf{s}_{U,V}$ as a trace over the double braiding. That $q_U \in \{\pm 1\}$ can be seen as follows.
First note that
\begin{align}
    \raisebox{-0.5\height}{\includegraphics[scale=1.0]{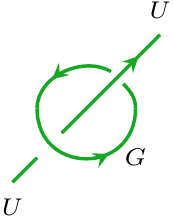}}
    &=
    q_U
    \raisebox{-0.5\height}{\includegraphics[scale=1.0]{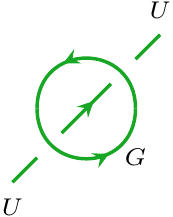}}
    =
    \dim(G)\, q_U
    \raisebox{-0.5\height}{\includegraphics[scale=1.0]{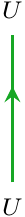}}
    \label{eq:qU-2-aux1}
\end{align}        
This can be used to compute
\begin{align}
    q_U^{\,2}
    \raisebox{-0.5\height}{\includegraphics[scale=1.0]{figures/T-comp-q_U_3.pdf}}
    &=
    \big(\dim(G)\, q_U \big)^2
    \raisebox{-0.5\height}{\includegraphics[scale=1.0]{figures/T-comp-q_U_3.pdf}}
    \overset{\eqref{eq:qU-2-aux1}}=
    \raisebox{-0.5\height}{\includegraphics[scale=1.0]{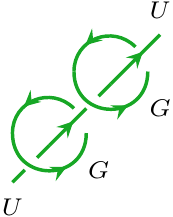}}
    =
    \raisebox{-0.5\height}{\includegraphics[scale=1.0]{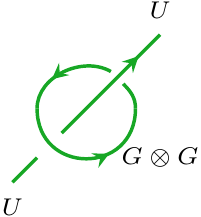}}
    =
    \raisebox{-0.5\height}{\includegraphics[scale=1.0]{figures/T-comp-q_U_3.pdf}}
    \,.
    \label{eq:comp-q_U}
\end{align}
That $q_G = 1$  follows directly from comparing \eqref{eq:app-GG-selfbraid} and \eqref{eq:qU-charge-def}.    
\end{proof}

\begin{remark}
    By fixing $G\in\Cc$ one obtains a $\Zb_2$-grading on $\Cc$ via the monodromy charges $q_U$ in \eqref{eq:qU-charge-def}.
    When $\theta_G=-1$, a modular fusion category with this structure is an example of a \textit{spin modular category} 
    in the sense of \cite{Blanchet:2003,Beliakova:2014spin}, see also \cite{Bruillard:2016yio},
    and can be used for the construction of invariants of 3-dimensional spin manifolds.
\end{remark}

After these preparations, we can compute the multiplicity spaces $\Mc^{(\mathrm{in})}_{x,\epsilon}(U,\bar{V};A_\nu)$ from \eqref{eq:mult-space-image-Q} for the algebra $A_\nu \in \hCc$.

\begin{lemma}\label{lem:1+G_mult_compute}
The multiplicity spaces for $A_{\nu}$ are given by $\Mc^{(\mathrm{in})}_{x,\epsilon}(S_i,\bar{S}_j;A_{\nu})= M_\One \oplus M_G$, where
\begin{align}
    M_\One &= 
    \begin{cases}
    \Cc(S_i\otimes\bar{S}_j,\One)&\text{; if $\epsilon=+$ and $q_i=(\nu\dim(G))^x$}\\
    \{0\} &\text{; else}
    \end{cases}
    \quad , 
\nonumber\\
    M_G &= 
    \begin{cases}
    \Cc(S_i\otimes\bar{S}_j,G)&\text{; if $\epsilon=\nu$ and $q_i=(\nu\dim(G))^{x+1} \theta_G$}\\
    \{0\} &\text{; else}
    \end{cases}
    \quad .
\end{align}
\end{lemma}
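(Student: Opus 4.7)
The plan is to evaluate the idempotent $Q_x^{(\mathrm{in})}$ explicitly on each summand of the decomposition
\begin{equation*}
\hCc\big(S_i\otimes\bar{S}_j\otimes K^\epsilon,\, A_\nu\big) ~=~
\hCc\big(S_i\otimes\bar{S}_j\otimes K^\epsilon,\, \One\big)~\oplus~
\hCc\big(S_i\otimes\bar{S}_j\otimes K^\epsilon,\, \hat{G}\big)\,,
\end{equation*}
where $\hat{G}:=G\boxtimes K^\nu$. The parity constraint is immediate: since $\One=\One\boxtimes K^+$ is parity-even and $\hat G$ has parity $\nu$, the first summand is nonzero only for $\epsilon=+$ and equals $\Cc(S_i\otimes\bar{S}_j,\One)$ in that case, while the second is nonzero only for $\epsilon=\nu$ and equals $\Cc(S_i\otimes\bar{S}_j,G)$. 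So the content of the lemma is that $Q_x^{(\mathrm{in})}$ acts as the identity or as zero on each summand, depending on the stated monodromy condition.

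First I would argue that $Q_x^{(\mathrm{in})}$ preserves this decomposition. This follows because all building blocks entering the definition~\eqref{eq:idempot-Q-action} of $Q_x^{(\mathrm{in})}$ -- the (co)product of $A_\nu$, the braiding, and the Nakayama automorphism $N_{A_\nu}$ -- preserve the $\Zb_2$-grading of $A_\nu$ by $\{\One,\hat G\}$. Indeed, by Lemma~\ref{lem:sFa_1+G}\,(\ref{lem:sFa_1+G:3}), $N_{A_\nu}$ is a scalar on each summand, and the (co)product of $A_\nu=\One\oplus\hat G$ is $\Zb_2$-graded by construction.

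Next I would compute the scalar by which $Q_x^{(\mathrm{in})}$ acts on each summand. On the $\One$-summand, $N_{A_\nu}^{\,x}$ acts trivially, so the scalar comes entirely from the part of the graphical expression in~\eqref{eq:idempot-Q-action} that wraps an $A_\nu$-loop around the incoming $S_i\otimes\bar S_j\otimes K^\epsilon$-ribbon. Splitting that loop into its two summands via $\id_{A_\nu}=\pi_\One\circ\eta_\One + \pi_{\hat G}\circ\iota_{\hat G}$ and using that $A_\nu$ is normalised special (so the $\One$-summand of the loop contributes $1$), the $\hat G$-summand contributes a factor obtained from double-braiding $\hat G$ around $S_i\otimes \bar S_j\otimes K^\epsilon$. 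In the category $\hCc=\Cc\boxtimes\SVect$ this double-braiding equals the monodromy charge $q_i$ (since $q_j=q_i$ whenever $\Cc(S_i\otimes\bar S_j,\One)\neq 0$ or $\Cc(S_i\otimes\bar S_j,G)\neq 0$, as both require $S_j\cong S_i^*$ or $S_j\cong(G\otimes S_i)^*$, both having the same monodromy) times a factor of $\nu$ coming from braiding $K^\nu$ past $K^\epsilon$ (equal to $1$ for $\epsilon=+$). Combined with the normalisation $\frac12\mu\circ\Delta=\frac12\id$ and a factor $\dim(G)\theta_G$ from the self-braiding of $\hat G$ as in \eqref{eq:app-GG-selfbraid}, a straightforward bookkeeping -- analogous to the calculation in \cite{Fuchs:2004r3} -- gives
\begin{equation*}
Q_x^{(\mathrm{in})}\big|_{\text{$\One$-summand}} ~=~ \tfrac12\big(1 + \nu\dim(G)\cdot q_i \cdot (\nu\dim(G))^{x}\cdot \text{sign factors}\big)\cdot \id\,,
\end{equation*}
with an analogous expression on the $\hat G$-summand. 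The details of the sign bookkeeping are what need to be done carefully; the goal is to verify that the bracketed expression equals $2$ (hence the projector is $\id$) iff the monodromy condition stated in the lemma holds, and vanishes otherwise.

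The main obstacle is keeping track of all the individual sign contributions from: (i) the power $N_{A_\nu}^{\,x} = (\nu\dim G)^x$ on $\hat G$, (ii) the $\dim(G)\theta_G$ appearing in the self-braiding of $\hat G$ via~\eqref{eq:app-GG-selfbraid} (which also enters the explicit coproduct $\Delta_{\One\to\hat G\otimes\hat G}=\frac12\mu_{\hat G,\hat G}^{-1}$ from~\eqref{eq:Anu-coproduct-counit}), (iii) the monodromy charge $q_i$ from encircling $S_i$ with a $G$-line, and (iv) the $\SVect$-braiding signs when moving $K^\nu$ past $K^\epsilon$. Collecting all factors carefully and comparing with the cases $\nu\dim(G)=\pm 1$ and $\epsilon\in\{+,\nu\}$ should match the four conditions in the statement: $M_\One$ is nontrivial iff $\epsilon=+$ and $q_i=(\nu\dim G)^x$ (compatibility of the $\One$-loop), and $M_G$ is nontrivial iff $\epsilon=\nu$ and $q_i=(\nu\dim G)^{x+1}\theta_G$ (the extra factor of $\theta_G$ arising because the $\hat G$-loop now carries an additional self-braiding with the outgoing $G$-strand).
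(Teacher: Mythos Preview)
Your approach is the same as the paper's: decompose $\hCc(S_i\otimes\bar S_j\otimes K^\epsilon,A_\nu)$ into its $\One$- and $\hat G$-components, observe that the parity constraint forces $\epsilon=+$ resp.\ $\epsilon=\nu$, and then show $Q_x^{(\mathrm{in})}$ acts as a scalar $\tfrac12(1+\cdots)$ on each component. The paper carries this out by a short string-diagram computation: it splits the $A_\nu$-strand in the idempotent into its $\One$- and $\hat G$-parts, uses $N_{A_\nu}|_{\hat G}=\nu\dim(G)$ and $\sigma_{U,G}=q_U\,\sigma_{G,U}^{-1}$ to extract the factor $q_U(\nu\dim G)^x$, and then closes the loop using $\tfrac12\eta\circ\varepsilon=\pi_\One$ together with the commutativity relation $\mu_{\hat G,\hat G}\circ\sigma_{\hat G,\hat G}=\nu\dim(G)\,\theta_G\cdot\mu_{\hat G,\hat G}$ and the special property. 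The result is exactly $\tfrac12(1+q_i(\nu\dim G)^x)\,\pi_\One\circ\phi+\tfrac12(1+q_i(\nu\dim G)^{x+1}\theta_G)\,\pi_{\hat G}\circ\phi$.

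Your proposal stops precisely where the content lies: you defer the sign analysis to ``straightforward bookkeeping'' and ``collecting all factors carefully'', but this bookkeeping \emph{is} the lemma. Two concrete issues in your sketch: first, the factor ``$\nu$ from braiding $K^\nu$ past $K^\epsilon$'' does not arise, since in the diagram for $Q_x^{(\mathrm{in})}$ the $A_\nu$-loop braids only with $U=S_i$ and $\bar V=\bar S_j$, both of which lie in $\Cc$ and are parity-even; the $K^\epsilon$-factor sits outside the loop. Second, your displayed formula for the $\One$-summand already has one $\nu\dim(G)$ too many compared to the correct answer $\tfrac12(1+q_i(\nu\dim G)^x)$. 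The extra $\nu\dim(G)\,\theta_G$ from the self-braiding/commutativity relation only enters for the $\hat G$-summand, where the loop must recombine with the outgoing $\hat G$-strand via $\mu_{\hat G,\hat G}$; on the $\One$-summand the loop closes via $\mu_{\hat G,\hat G}^{-1}\circ\mu_{\hat G,\hat G}=\id$ with no such factor.
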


\begin{proof}
    Let $\phi\in\hCc(S_i\otimes\bar{S}_j\otimes K^\epsilon,A_{\nu})$.
    We compute $Q^{(\mathrm{in})}_{x}(\phi)$:
    \begin{align}
        \vcenter{\hbox{\includegraphics[scale=1.0]{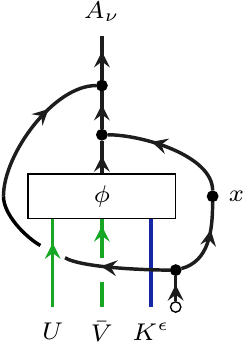}}}
        & \stackrel{(1)}{=} 
        \vcenter{\hbox{\includegraphics[scale=1.0]{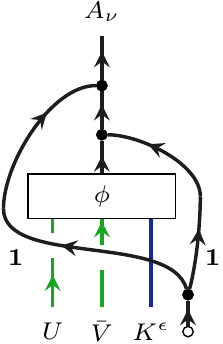}}}
        + q_U (\nu \mathrm{dim}(G))^x
        \vcenter{\hbox{\includegraphics[scale=1.0]{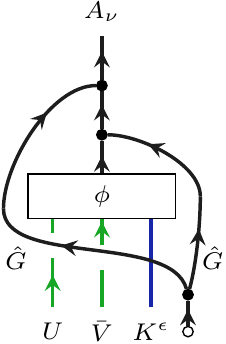}}} \nonumber
        \\& 
        \stackrel{(2)}{=} 
        \frac{1}{2}\, 
        \vcenter{\hbox{\includegraphics[scale=1.0]{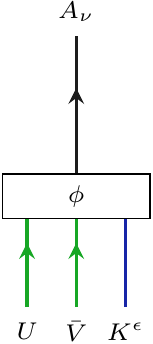}}}
        + q_U (\nu \mathrm{dim}(G))^x \Bigg(
        \vcenter{\hbox{\includegraphics[scale=1.0]{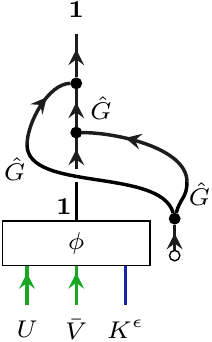}}}
        +
        \vcenter{\hbox{\includegraphics[scale=1.0]{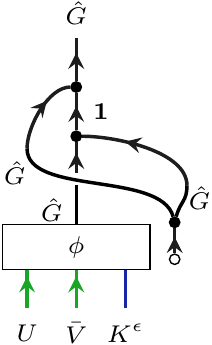}}}
        \Bigg) \nonumber
        \\&\stackrel{(3)}{=} \frac{1}{2} \Bigg(\ 
        \vcenter{\hbox{\includegraphics[scale=1.0]{figures/T-phi-A.pdf}}}
        + q_U (\nu \mathrm{dim}(G))^x \Bigg(\ 
        \vcenter{\hbox{\includegraphics[scale=1.0]{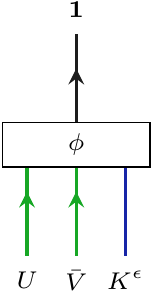}}}
        + \nu \mathrm{dim}(G) \theta_G\,
        \vcenter{\hbox{\includegraphics[scale=1.0]{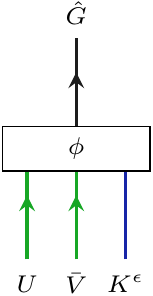}}}
        \Bigg) \Bigg)\nonumber
        \\&= \frac{1}{2}\Big( 1+q_U(\nu\mathrm{dim}(G))^x \Big)
        \vcenter{\hbox{\includegraphics[scale=1.0]{figures/T-phi-1.pdf}}}
        + \frac{1}{2}\Big( 1+q_U(\nu\mathrm{dim}(G))^{x+1} \theta_G\Big)
        \vcenter{\hbox{\includegraphics[scale=1.0]{figures/T-phi-G.pdf}}}
    \end{align}
    where in $(1)$ we used \eqref{eq:A_nu-Nakayama} and \eqref{eq:qU-charge-def}, 
    in $(2)$ and in $(3)$ we used that the projector to $\One$ is $\frac{1}{2}\eta\circ\eps$,
    and in $(3)$ we substituted \eqref{eq:A_nu-commrel} and used \eqref{eq:special-projected}.
    From this we see that $Q^{(\mathrm{in})}_{x}$ is the projection onto the subspace given in the statement of the lemma.
  \end{proof}

\subsection{Some details on the Bershadsky-Polyakov algebra at $k=-1/2$}\label{app:BP-details}

\subsubsection{Spectral flow}

``Spectral flow'' is the name for a family of maps $\sigma_s$ between representations of the BP algebra labelled by $s\in \frac12\mathbb Z$ which, in the case $k=-1/2$, acts on the generators as
\cite{Fehily_2021}
\begin{align}\label{eq:spectral-flow}
    \sigma_s(L_m) = L_m - s J_m + \frac{s^2}3 \delta_{m,0} 
    \;,\;\;
    \sigma_s(J_m) = J_m - \frac{2s}3 \delta_{m,0}
    \;,\;\;
    \sigma_s(G^\pm_r) = G^\pm_{r\mp s}
    \;.
\end{align}
If $|j,h\rangle$ is a NS highest weight state with eigenvalues $(j,h)$ satisfying
\begin{align}
    L_0 |j,h\rangle = h |j,h\rangle\;,\;\;
    J_0 |j,h\rangle = j |j,h\rangle\;,\;\;
    G^+_{1/2} |j,h\rangle = G^-_{1/2} |j,h\rangle = 0;,
\end{align}
then $|j,h\rangle$ is a R highest weight state under the action $\sigma_{1/2}$ with eigenvalues $(j-\frac 13,h-\frac j2+\frac 1{12})$
\begin{align}
    \sigma_{1/2}(L_0)|j,h\rangle &= (L_0 - \tfrac 12J_0 + \tfrac1{12}) |j,h\rangle
    = (h-\frac 12j+\frac1{12}) |j,h\rangle
    \;,\;\; \nonumber\\
    \sigma_{1/2}(J_0)|j,h\rangle &= (J_0 -\tfrac 13)|j,h\rangle
    = (j-\tfrac 13)|j,h\rangle
    \;,\;\;\\
    \sigma_{1/2}(G^-_0)|j,h\rangle &= G^-_{1/2}|j,h\rangle = 0\;,\;\;
    \sigma_{1/2}(G^+_1)|j,h\rangle = G^+_{1/2}|j,h\rangle = 0\;.
    \nonumber
\end{align}

\subsubsection{Change of grading}

We presented the BP algebra in equations \eqref{eq:PB-alg-rels-1}
in the form in which the generators $G^\pm$ have weight 3/2. 
It is, however, possible to make a change of basis to new generators in which $G^\pm$ have integer weights but $J$ is no longer a primary field. The new basis has generators $\tilde L_m,\tilde J_m,\tilde G^\pm_m$ where (in the case $k=-1/2$)
$\tilde L_m$ satisfy the Virasoro algebra with central charge $\tilde c$:
\begin{align}\label{eq:rdef2}
    \tilde J_m = J_m\;,\;\; \tilde G^\pm_m = G^\pm_{m\mp \frac 12}\;,\;\; 
    \tilde L_m = L_m - \frac 12 (m+1) J_m 
    \;,\;\;
    \tilde c = c+2 = \frac 85
    \;.
\end{align}
With this choice of Virasoro algebra, $\tilde G^+$ is primary of weight 1, $\tilde G^-$ is primary of weight 2 but $\tilde J$ is no longer primary:
\begin{align}\label{eq:tildePB-PB}
    [\tilde L_m,\tilde G^+_r] = -r\,\tilde G^+_{m+r}
    \;,\;\;
    [\tilde L_m,\tilde G^-_r] = (m- r)\,\tilde G^-_{m+r}
    \;,\;\;
    [\tilde L_m,\tilde J_n] = -n \tilde J_{m+n} - \frac 13 m(m+1)\delta_{m+n,0} \;.
\end{align}
Since the fields $G^\pm$ have integer weight, the modes of $\tilde G^\pm_m$ should have integer labels $m$, and hence the modes $G^\pm_m$ have half-integer modes. As a consequence, representations of the even-graded algebra correspond to NS representations of the ``standard'' algebra. When it comes to characters, however the difference between $c$ and $\tilde c$, and the required shifts when mapping the zero modes of $\tilde T$  and $\tilde J$  from the cylinder to the plane (the shifts are the ubiquitous $\tilde L_0 - \tilde c/24$ and the perhaps unfamiliar $\tilde J_0 - 1/3$) exactly replicate the effect of the spectral flow $\sigma_{1/2}$  and the characters of the even-graded algebra are identical to the Ramond sector characters.
This is consistent with the fact that the R sector closes onto itself under modular transformations.

We note that the vacuum module in the even-graded theory is not self-conjugate
and so falls outside the class of theories to which the TFT construction as we present it here is applicable. 

\subsubsection{Representations}

The representations of the integer-graded algebra at $k=-1/2$ are given in \cite[Prop.\,2.4]{Arakawa_2013}, in 1-1 correspondence with the representations of affine $sl(3)$ at level 2. They are indexed in \cite{Arakawa_2013} by a pair of integers $(ij)$ with $i,j>0$, $i+j<5$, but many properties are more apparent if instead we use the labels $[mi]$ where $m=5-i-j$.
For NS representations of the half-integer graded algebra this agrees with the notation of \cite{Fehily_2021} if $\lambda_1=m,\lambda_2=i$,
and one finds e.g.\ that the NS representation $[im]$ is conjugate to the representation $[mi]$.

Relating the highest weight condition in \cite{Arakawa_2013}
to that in the NS algebra, we find that the highest weights in \cite{Arakawa_2013} are indeed highest weights of the half-integer graded NS algebra:
\begin{align}\label{eq:hwconditions}
    \tilde G^-_0 \psi = 
    G^-_{1/2}\psi = 0
    \;,\;\;
    \tilde G^+_1 \psi = 
    G^+_{1/2}\psi=0
    \;.
\end{align}
This means that a HW state of the integer graded algebra satisfying \eqref{eq:hwconditions} with eigenvalues $(\tilde j,\tilde h)$ corresponds to a HW state of the NS algebra with $(j=\tilde j,h = \tilde h + \frac 12\tilde j)$.

This also agrees with the list of NS and R representations given in \cite[Fig.\,2]{Fehily_2021}, provided one takes into account that our convention for the highest weight state in the R sector is opposite to theirs (we take it to be annihilated by $G_0^-$), so that $j$ differs by 1 for $\bar\phi^s$ and $u^s$ and 2 for $\bar\psi^s$. 

In Table~\ref{tab:Ara-param}, we give the values of
$(ij)$, $[mi]$, $(\tilde j,\tilde h)$ (denoted $(\xi_{ij},\chi_{ij})$ in \cite{Arakawa_2013}) and the corresponding $NS$ values $(j,h)_{NS}$ 
and the values of $(j,h)_R$ for the R representations obtained in turn from these by spectral flow.

The (full) characters $\Tr_{M}(z^{J_0} q^{L_0 - c/24})$ could in principle be obtained from the Hamiltonian reduction using formulae in~\cite{Arakawa_van_Ekeren} or~\cite{Fehily_2021}; here, in tables~\ref{tab:W32NSrepschars} and~\ref{tab:W32Rrepschars}, we give the leading terms from a direct calculation of the rank of the inner product matrix at each $L_0$-level. 

\renewcommand{\floatpagefraction}{1.0}

Note that the modular transformation formulae given in \cite{Arakawa_van_Ekeren} do not apply here as we are not taking the $W_3^{(2)}$ algebra in a ``good even grading'', which would imply all the fields have integral weights as discussed above. We obtained the explicit $\mathsf{S}$-matrix given in \eqref{eq:PB-explicit-S} numerically from the approximate characters. The exact entries are confirmed by matching against entries of the $\mathsf{S}$-matrix of $\widehat{su}(3)_2$.

\begin{table}[htb]
\begin{centering}
\renewcommand{\arraystretch}{1.5}
\begin{tabular}{c||c|c|c|c|c|c|}
    $(ij), [mi]$ &
    $(13), [11]$ & $(12),[21]$ & $(11),[31]$ & $(22),[12]$ & $(21),[22]$ & $(31),[13]$ \\ 
    $(\tilde j,\tilde h)= (\xi_{ij},\chi_{ij})$ & 
    $(0,0)$ & 
    $(\frac 13,-\frac 2{15})$ & 
    $(\frac 23,0)$ & 
    $(-\frac 13,\frac 15)$ & 
    $(0,\frac 15)$ & 
    $(-\frac 23,\frac 23)$ 
    \\ \hline
    $(j,h)_{NS} = (\tilde j,\tilde h+\frac 12\tilde j)$
    & $(0,0)$ 
    & $(\frac 13, \frac 1{30})$ 
    & $(\frac 23,\frac 13)$
    & $(-\frac 13,\frac 1{30})$ 
    & $(0,\frac 15)$
    & $(-\frac 23,\frac 13)$
    \\
    \hbox{name} &
    $\id$ & $\phi$ & $\psi$ & $\bar\phi$ & $u$ & $\bar\psi$
    \\ \hline
    $(j,h)_{R} = (\tilde j-\frac 13,\tilde h+\frac 1{12})$
    & $(-\frac 13,\frac 1{12})$ 
    & $(0,-\frac 1{20})$ 
    & $(\frac 13,\frac 1{12})$
    & $(-\frac 23,\frac{17}{60})$ 
    & $(-\frac 13,\frac{17}{60})$
    & $(-1,\frac 34)$
    \\
     \hbox{name} &
    $\id^s$ & $\phi^s$ & $\psi^s$ & $\bar\phi^s$ & $u^s$ & $\bar\psi^s$
\end{tabular}
    \caption{Representations of $BP_{-1/2}$. The first row is for the integer-graded algebra in the notation of \cite{Arakawa_2013}, and the second and third rows are for the half-integer graded algebra as used in section \ref{sec:BP-example} in the NS and R sectors respectively.}
    \label{tab:Ara-param}
\end{centering}
\end{table}

\begin{table}[htb]
\begin{centering}
\renewcommand{\arraystretch}{1.5}
\begin{tabular}{c||r@{\,}l}
$M$ 
& \multicolumn{2}{l}{$\Tr_M\big(\,(\pm)^G z^{J_0} q^{L_0 - c/24}\,\big)$}
\\\hline
id &$ q^{-\frac 1{60}} $&$\big(1 + q 
\pm (z{+}\frac 1z)q^{\frac 32} + 3 q^2
\pm 2(z {+} \frac 1z)q^{\frac 52} 
+ (5 {+} z^2 {+} \frac 1{z^2}) q^3
 \pm 4(z {+} \frac 1z) q^{\frac 72} + \ldots\big)$ \\
$\phi$ & $z^{\frac 13} q^{\frac{1}{60}} 
 $&$\big(1 
 \pm \frac 1z q^{\frac 12}
 + 2 q
 \pm (z {+} \frac 2z)q^{\frac 32}
 + (4 {+} \frac1{z^2}) q^2
 \pm (2z {+} \frac 5z)q^{\frac 52}
 +  (8 {+} \frac 2{z^2} ) q^3
 \pm (5z {+} \frac 9z)q^{\frac 72} + \ldots\big) $\\
$\bar\phi$ & 
$z^{-\frac{1}{3}} q^{\frac{1}{60}}$&$ \big(1 \pm z q^{\frac 12}  + 2 q  \pm (2z{+}\frac 1z)q^{\frac 32}+ (4 {+} z^2)q^2 
\pm (5z{+}\frac 2z) q^{\frac 52}
+ (8 {+} 2 z^2)q^3 \pm 
   ( 9z{+} \frac 5z) q^{\frac 72}  + \ldots\big)$\\
$u$ &
$q^{\frac{11}{60}} $&$ \big(1
\pm(z {+} \frac 1z)q^{\frac 12}
+ 2 q
\pm2(z{+} \frac 1z) q^{\frac 32}
+ 5 q^2 
\pm 4(z {+}\frac 1z)q^{\frac 52}+ 
(z^2{+}9{+}\frac 1{z^2}) q^3
\pm  8(z{+}\frac 1z)q^{\frac 72} + \ldots\big)$
\\
$\psi$ &
$z^{\frac 23} q^{\frac{19}{60}} $
&$ \big(1
\pm \frac 1z q^{\frac 12}
+ (1{+}\frac 1{z^2}) q 
\pm \frac 2z q^{3/2}
+ (3{+}\frac{1}{z^2}) q^2 
\pm (z{+}\frac 4z) q^{\frac 52}
+ (5{+}\frac 3{z^2})q^3   
\pm (2z{+}\frac 8z)q^{\frac 72} + \ldots\big)$
\\
$\bar \psi$ &
$z^{-\frac 23} q^{\frac{19}{60}} $
&$ \big(1
\pm z q^{\frac 12}
+ (1{+}z^2) q 
\pm 2z q^{3/2}
+ (3{+}{z^2}) q^2 
\pm (4z{+}\frac 1z) q^{\frac 52}
+ (5{+}3{z^2})q^3   
\pm (8z{+}\frac 2z)q^{\frac 72} + \ldots\big)$
\end{tabular}
    \caption{Characters of the NS representations of $BP_{-1/2}$}
    \label{tab:W32NSrepschars}
\end{centering}
\end{table}
\begin{table}[hbt]
\begin{centering}
\renewcommand{\arraystretch}{1.5}
\begin{tabular}{c||r@{\,}l}
$M$ 
& \multicolumn{2}{l}{$\Tr_M\big(\,(\pm)^G z^{J_0} q^{L_0 - c/24}\,\big)$}
\\\hline
$\id^s$ 
&$ z^{-\frac 13} q^{\frac 1{15}} $ 
&$  \big(1 + (1\pm z)q + (3\pm 2z+z^2\pm\frac 1z) q^2 + (5\pm 4z\pm\frac 2z  
+ z^2 ) q^3  + \ldots\big)$ \\
$\phi^s$ 
& $q^{-\frac 1{15}}$ 
&  $\big(1 + (2\pm z\pm \frac 1z) q + (4 \pm 2z\pm \frac 2z)q^2 + (8 \pm 5z\pm \frac 5z)q^3  +\ldots\big) $\\
$\bar\phi^s$ & 
$z^{-\frac 23} q^{\frac 4{15}} $
& $\big(
(1 \pm z)
+ (2 \pm 2z + z^2) q 
+ (4 \pm \frac 1z \pm 5z + 2 z^2)q^2 
+ (8 \pm \frac 2z\pm 9z 
+ 4 z^2 ) q^3
   + \ldots\big)$
   \\
$u^s$ &
$ z^{-\frac 13} q^{\frac{4}{15}} 
$&$\big( 
(1\pm z) + (2\pm\frac 1z \pm 2z) q 
+ (5 \pm \frac 2z \pm 4z + z^2) q^2
+ (9 \pm\frac 4z \pm 8z + 2 {z^2}) q^3  + \ldots\big)$\\
$\psi^s$ &
$z^{\frac 13} q^{\frac 1{15}} $&$
\big(1 + (1\pm\frac 1z) q + 
(3 \pm \frac 2z\pm z+\frac 1{z^2}) q^2 + 
(5 \pm\frac 4z \pm 2z + \frac 1{z^2} 
)q^3   + \ldots\big)$
\\
$\bar \psi^s$ &
$z^{-1} q^{\frac{11}{15}}
$&$ \big( ( 1 \pm z + z^2) + (1 \pm 2z + z^2 )q 
+  (3 \pm 4z + 3 z^2)q^2 
+  (5 \pm \frac 1z \pm 8z + 5 z^2  \pm z^3)q^3  +
  \ldots \big)$
  \end{tabular}
    \caption{Characters of the R representations of $BP_{-1/2}$}
    \label{tab:W32Rrepschars}
\end{centering}
\end{table}

\clearpage

\newcommand\arxiv[2]      {\href{https://arXiv.org/abs/#1}{#2}}
\newcommand\doi[2]        {\href{https://dx.doi.org/#1}{#2}}

\small


\begin{thebibliography}{FrFRS2}

\bibitem[Ab]{Abe:2005}
T.~Abe,
{\it A $\Zb_2$-orbifold model of the symplectic fermionic vertex operator superalgebra},
\doi{10.1007/s00209-006-0048-5}{Mathematische Zeitschrift {\bf 255}  (2007) 755--792}
\arxiv{math/0503472}{[math.QA/0503472]}.

\bibitem[ALW]{Aasen:2017ubm}
D.~Aasen, E.~Lake and K.~Walker,
\textit{Fermion condensation and super pivotal categories},
\doi{10.1063/1.5045669}{J.\ Math.\ Phys.\ \textbf{60} (2019) 121901},
\arxiv{1709.01941}{[1709.01941 [cond-mat.str-el]]}.

\bibitem[Ar]{Arakawa_2013}
T.~Arakawa,
\textit{Rationality of Bershadsky-Polyakov vertex algebras},
\doi{10.1007/s00220-013-1780-4}{Commun.\ Math.\ Phys.\ \textbf{323} (2013) 627--633}
\arxiv{1005.0185}{[1005.0185 [math.QA]]}.

\bibitem[AU]{Anderson:2011}
J.E.~Andersen and K.~Ueno, 
\textit{Construction of the Witten–Reshetikhin–Turaev TQFT from conformal field theory},
\doi{10.1007/s00222-014-0555-7}{Invent.\ math.\ \textbf{201} (2015) 519--559},
\arxiv{1110.5027}{[1110.5027 [math.QA]]}.

\bibitem[AvE]{Arakawa_van_Ekeren}
T.~Arakawa and J.~van Ekeren
\textit{Rationality and Fusion Rules of Exceptional W-Algebras}
\arxiv{1905.11473}{1905.11473 [math.RT]}.

\bibitem[BBC]{Beliakova:2014spin}
A.~Beliakova, C.~Blanchet and E.~Contreras,
\textit{Spin Modular Categories},
\doi{10.4171/QT/95}{Quantum Topol.\ \textbf{8} (2017) 459-–504},
\arxiv{1411.4232}{[1411.4232 [math.GT]]}.

\bibitem[Be]{Bershadsky:1990bg}
M.~Bershadsky,
\textit{Conformal field theories via Hamiltonian reduction},
\doi{10.1007/BF02102729}{Commun.\ Math.\ Phys.\ \textbf{139} (1991) 71--82}

\bibitem[BGH+]{Bruillard:2016yio}
P.~Bruillard, C.~Galindo, T.~Hagge, S.H.~Ng, J.Y.~Plavnik, E.C.~Rowell and Z.~Wang,
\textit{Fermionic Modular Categories and the 16-fold Way},
\doi{10.1063/1.4982048}{J.\ Math.\ Phys.\ \textbf{58} (2017) 041704},
\arxiv{1603.09294}{[1603.09294 [math.QA]]}.

\bibitem[BGK]{Bhardwaj:2016clt}
L.~Bhardwaj, D.~Gaiotto and A.~Kapustin,
\textit{State sum constructions of spin-TFTs and string net constructions of fermionic phases of matter},
\doi{10.1007/JHEP04(2017)096}{JHEP \textbf{04} (2017) 096},
\arxiv{1605.01640}{[1605.01640 [cond-mat.str-el]]}.

\bibitem[BK]{baki}
B.~Bakalov and A.A.~Kirillov,  
{\it Lectures on Tensor Categories and Modular Functors},
American Mathematical Society, Providence, 2001. 

\bibitem[Bl]{Blanchet:2003}
C.~Blanchet,
\textit{A spin decomposition of the Verlinde formula for type A modular categories},
\doi{10.1007/s00220-005-1341-6}{Commun.\ Math.\ Phys.\ \textbf{257} (2005) 1--28},
\arxiv{math/0303240}{[math.QA/0303240]}.
 	
\bibitem[BPZ]{Belavin:1984vu}
A.A.~Belavin, A.M.~Polyakov and A.B.~Zamolodchikov,
\textit{Infinite Conformal Symmetry in Two-Dimensional Quantum Field Theory},
\doi{10.1016/0550-3213(84)90052-X}{Nucl.\ Phys.\ B \textbf{241} (1984) 333--380}.

\bibitem[BS]{Smith:2021luc}
P.~Boyle Smith,
\textit{Boundary States and Anomalous Symmetries of Fermionic Minimal Models},
\arxiv{2102.02203}{2102.02203 [hep-th]}.

\bibitem[BZ]{BenZvi:2009pcm}
D.D.~Ben-Zvi,
\textit{Moduli Spaces}, in: Gowers, T., Barrow-Green J. and Leader, I.\ (ed)
Princeton Companion to Mathematics, Princeton University Press, 2009.

\bibitem[Ca]{Cardy:1989ir}
J.L.~Cardy,
\textit{Boundary Conditions, Fusion Rules and the Verlinde Formula},
\doi{10.1016/0550-3213(89)90521-X}{Nucl.\ Phys.\ B \textbf{324} (1989) 581--596}.

\bibitem[CCKX]{Chang:2022hud}
C.~M.~Chang, J.~Chen, K.~Kikuchi and F.~Xu,
\textit{Topological Defect Lines in Two Dimensional Fermionic CFTs},
\arxiv{2208.02757}{[2208.02757 [hep-th]]}.

\bibitem[CIZ]{Cappelli:1986hf}
A.~Cappelli, C.~Itzykson and J.B.~Zuber,
\textit{Modular Invariant Partition Functions in Two-Dimensions},
\doi{10.1016/0550-3213(87)90155-6}{Nucl.\ Phys.\ B \textbf{280} (1987) 445--465}.


\bibitem[CKL]{Creutzig:2015buk}
T.~Creutzig, S.~Kanade and A.R.~Linshaw,
\textit{Simple current extensions beyond semi-simplicity},
\doi{10.1142/s0219199719500019}{Commun.\ Contemp.\ Math.\ \textbf{22} (2020) 1950001},
\arxiv{1511.08754}{[1511.08754 [math.QA]]}.

\bibitem[CKM]{Creutzig:2017}
T.~Creutzig, S.~Kanade and R.~McRae,
\textit{Tensor categories for vertex operator superalgebra extensions},
\arxiv{1705.05017}{1705.05017 [math.QA]}

\bibitem[CMRSS]{Carqueville:2021edn}
N.~Carqueville, V.~Mulevicius, I.~Runkel, G.~Schaumann and D.~Scherl,
\textit{Reshetikhin-Turaev TQFTs close under generalised orbifolds},
\arxiv{2109.04754}{2109.04754 [math.QA]}.

\bibitem[CR]{Carqueville:2012orb}
N.~Carqueville and I.~Runkel, 
\textit{Orbifold completion of defect bicategories}, 
\doi{10.4171/QT/76}{Quantum Topology \textbf{7}:2 (2016), 203--279}, 
\href{https://arxiv.org/abs/1210.6363}{[1210.6363 [math.QA]]}.

\bibitem[CRS1]{CRS1}
N.~Carqueville, I.~Runkel, G.~Schaumann, 
\textit{Orbifolds of $n$-dimensional defect TQFTs}, 
\doi{10.2140/gt.2019.23.781}{Geom.\ Topol.\ \textbf{23} (2019) 781--864},
\arxiv{1705.06085}{[1705.06085 [math.QA]]}.

\bibitem[CRS2]{CRS2}
N.~Carqueville, I.~Runkel, G.~Schaumann, 
\textit{Line and surface defects in Reshetikhin-Turaev TQFT},  
\doi{10.4171/QT/121}{Quantum Topology \textbf{10} (2019) 399--439}, 
\arxiv{1710.10214}{[1710.10214 [math.QA]]}.

\bibitem[CS]{Carqueville:2021cfa}
N.~Carqueville and L.~Szegedy,
\textit{Fully extended $r$-spin TQFTs},
\arxiv{2107.02046}{2107.02046 [math.QA]}.

\bibitem[CZ]{Chatterjee:1993ca}
R.~Chatterjee and A.B.~Zamolodchikov,
\textit{Local magnetization in critical Ising model with boundary magnetic field},
\doi{10.1142/S0217732394002082}{Mod.\ Phys.\ Lett.\ A {\bf 9} (1994) 2227--2234},
\arxiv{hep-th/9311165}{[hep-th/9311165]}.

\bibitem[DFMS]{Dixon:1986qv}
L.~J.~Dixon, D.~Friedan, E.J.~Martinec and S.H.~Shenker,
\textit{The Conformal Field Theory of Orbifolds},
\doi{10.1016/0550-3213(87)90676-6}{Nucl.\ Phys.\ B \textbf{282} (1987) 13--73}.

\bibitem[DGT]{Damiolini:2019}
C.~Damiolini, A.~Gibney, N.~Tarasca,
\textit{On factorization and vector bundles of conformal blocks from vertex algebras},
\arxiv{1909.04683}{1909.04683 [math.AG]}.

\bibitem[DKR]{DKR}
A.~Davydov, L.~Kong and I.~Runkel, 
\textit{Field theories with defects and the centre functor}, 
in ``Mathematical Foundations of Quantum Field Theory and Perturbative String Theory'',
\href{https://bookstore.ams.org/pspum-83/}{Proc.\ Symp.\ Pure Math.\ \textbf{83} (2011) 71--130}, \arxiv{1107.0495}{[1107.0495 [math.QA]]}. 
\bibitem[DMS]{YBk}
P.~Di~Francesco, P.~Mathieu and D.~S\'en\'echal,
{\emph{Conformal Field Theory}},
Springer-Verlag New York (1997).

\bibitem[DNO]{Davydov:2011}
A.~Davydov, D.~Nikshych and V.~Ostrik,
\textit{On the structure of the Witt group of braided fusion categories},
\doi{10.1007/s00029-012-0093-3}{Sel.\ Math.\ New Ser.\ \textbf{19} (2013) 237--269},
\arxiv{1109.5558}{[1109.5558 [math.QA]]}.

\bibitem[DVVV]{Dijkgraaf:1989hb}
R.~Dijkgraaf, C.~Vafa, E.P.~Verlinde and H.L.~Verlinde,
\textit{The Operator Algebra of Orbifold Models},
\doi{10.1007/BF01238812}{Commun.\ Math.\ Phys.\ \textbf{123} (1989) 485--526}.

\bibitem[DW]{Dijkgraaf:1989pz}
R.~Dijkgraaf and E.~Witten,
\textit{Topological Gauge Theories and Group Cohomology},
\doi{10.1007/BF02096988}{Commun.\ Math.\ Phys.\ \textbf{129} (1990) 393--429}.

\bibitem[EGNO]{Etingof:2016tensor}
P.~Etingof, S.~Gelaki, D.~Nikshych and V.~Ostrik, 
\textit{Tensor categories}, Mathematical Surveys and Monographs \textbf{205}, AMS, 2015,
~\url{http://www-math.mit.edu/~etingof/egnobookfinal.pdf}.

\bibitem[EW]{Ebisu:2021acm}
H.~Ebisu and M.~Watanabe,
\textit{Fermionization of conformal boundary states},
\doi{10.1103/PhysRevB.104.195124}{Phys.\ Rev.\ B \textbf{104} (2021) 195124},
\arxiv{2103.01101}{[2103.01101 [hep-th]]}.

\bibitem[FB]{Frenkel:2004jn}
E.~Frenkel and D.~Ben-Zvi,
\textit{Vertex algebras and algebraic curves}, 2nd ed., AMS, 2004.

\bibitem[FFFS]{Felder:1999mq}
G.~Felder, J.~Fr\"ohlich, J.~Fuchs and C.~Schweigert,
\textit{Correlation functions and boundary conditions in RCFT and three-dimensional topology},
\doi{10.1023/A:1014903315415}{Compos.\ Math.\ \textbf{131} (2002) 189--237},
\arxiv{hep-th/9912239}{[hep-th/9912239]}.

\bibitem[FFS]{Fuchs:2012def}
 J.~Fjelstad, J.~Fuchs, and C.~Stigner, 
 \textit{RCFT with defects: Factorization and fundamental world sheets}.
\doi{10.1016/j.nuclphysb.2012.05.011}{Nucl.\ Phys.\ B \textbf{863} (2012) 213--259},
\arxiv{1202.3929}{[1202.3929 [hep-th]]}.

\bibitem[FGP]{Furlan:1989ra}
P.~Furlan, A.~C.~Ganchev and V.~B.~Petkova,
\textit{Fusion Matrices and $C < 1$ (Quasi)local Conformal Theories},
\doi{10.1142/S0217751X90001252}{Int. J. Mod. Phys. A \textbf{5} (1990) 2721--2736} [erratum: Int.\ J.\ Mod.\ Phys. A \textbf{5} (1990) 3641]

\bibitem[FGR]{Farsad:2017eef}
V.~Farsad, A.M.~Gainutdinov and I.~Runkel,
\textit{The symplectic fermion ribbon quasi-Hopf algebra and the $SL(2,\Zb)$-action on its centre},
\doi{10.1016/j.aim.2022.108247}{Adv.\ Math.\ \textbf{400} (2022) 108247},
\arxiv{1706.08164}{[1706.08164 [math.QA]]}.

\bibitem[FjFRS1]{Fjelstad:2005ua}
J.~Fjelstad, J.~Fuchs, I.~Runkel and C.~Schweigert,
\textit{TFT construction of RCFT correlators. V. Proof of modular invariance and factorisation},
\href{https://www.emis.de/journals/TAC/volumes/16/16/16-16abs.html}{Theor.\ Appl.\ Categor.\ \textbf{16} (2006) 342--433},
\arxiv{hep-th/0503194}{[hep-th/0503194]}.

\bibitem[FjFRS2]{Fjelstad:2006aw}
J.~Fjelstad, J.~Fuchs, I.~Runkel and C.~Schweigert,
\textit{Uniqueness of open/closed rational CFT with given algebra of open states},
\doi{10.4310/ATMP.2008.v12.n6.a4}{Adv.\ Theor.\ Math.\ Phys.\ \textbf{12} (2008) 1283--1375},
\arxiv{hep-th/0612306}{[hep-th/0612306]}.

\bibitem[FKR]{Fehily_2021}
Z.~Fehily, K.~Kawasetsu and D.~Ridout,
\textit{Classifying relaxed highest-weight modules for admissible-level Bershadsky-Polyakov algebras},
\doi{10.1007/s00220-021-04008-y}{Commun.\ Math.\ Phys.\ \textbf{385} (2021) 859--904}
\arxiv{2007.03917}{[2007.03917 [math.RT]]}.

\bibitem[FrFRS1]{FFRS2}
J.~Fr\"ohlich, J.~Fuchs, I.~Runkel, C.~Schweigert,
\textit{Duality and defects in rational conformal field theory},
\doi{10.1016/j.nuclphysb.2006.11.017}{Nucl.\ Phys.\ B \textbf{763} (2007) 354--430},
\arxiv{hep-th/0607247}{[hep-th/0607247]}.

\bibitem[FrFRS2]{Frohlich:2009gb}
J.~Fr\"ohlich, J.~Fuchs, I.~Runkel and C.~Schweigert,
\textit{Defect lines, dualities, and generalised orbifolds},
In \doi{10.1142/9789814304634_0056}{16th International Congress on Mathematical Physics (2009)},
[\arxiv{0909.5013}{0909.5013 [math-ph]}].

\bibitem[FRS1]{Fuchs:2002tft}
J.~Fuchs, I.~Runkel, and C.~Schweigert,
\textit{TFT construction of RCFT correlators I: Partition functions},
\doi{10.1016/S0550-3213(02)00744-7}{Nucl.\ Phys.\ B \textbf{646} (2002) 353--497},
\arxiv{hep-th/0204148}{[hep-th/0204148]}.

\bibitem[FRS2]{Fuchs:2004r3}
J.~Fuchs, I.~Runkel, and C.~Schweigert,
\textit{TFT construction of RCFT correlators III: Simple currents},
\doi{10.1016/j.nuclphysb.2004.05.014}{Nucl.\ Phys.\ B \textbf{694} (2004) 277--353},
\arxiv{hep-th/0403157}{[hep-th/0403157]}.

\bibitem[FRS3]{Fuchs:2004tft4}
 J.~Fuchs, I.~Runkel and C.~Schweigert, 
 \textit{TFT construction of RCFT correlators IV: 
 Structure constants and correlation functions},
\doi{10.1016/j.nuclphysb.2005.03.018}{Nucl.\ Phys.\ B \textbf{715} (2005) 539--638},
\arxiv{hep-th/0412290}{[hep-th/0412290]}.

\bibitem[FSc]{Fuchs:2001qc}
J.~Fuchs and C.~Schweigert,
\textit{Category theory for conformal boundary conditions},
Fields Inst.\ Commun. \textbf{39} (2003) 25,
\arxiv{math/0106050}{[math.CT/0106050]}.

\bibitem[FSt]{Fuchs:2008fa}
 J.~Fuchs and C.~Stigner, 
 \textit{On Frobenius algebras in rigid monoidal
  categories},
Arab. J.\ Sci.\ Eng.\ \textbf{33-2C} (2008) 175--191,
\arxiv{0901.4886}{[0901.4886 [math.CT]]}.

\bibitem[FSV]{FSV}
J.~Fuchs, C.~Schweigert, A.~Valentino,
\textit{Bicategories for boundary conditions and for surface defects in 3-d TFT}, 
\doi{10.1007/s00220-013-1723-0}{Commun.\ Math.\ Phys.\ \textbf{321} (2013) 543--575}, 
\arxiv{1203.4568}{[1203.4568 [hep-th]]}.

\bibitem[FTZ]{Fukusumi:2021zme}
Y.~Fukusumi, Y.~Tachikawa and Y.~Zheng,
\textit{Fermionization and boundary states in 1+1 dimensions},
\doi{10.21468/SciPostPhys.11.4.082}{SciPost Phys.\ \textbf{11} (2021) 082},
\arxiv{2103.00746}{[2103.00746 [hep-th]]}.

\bibitem[Ga]{Gaiotto:2012np}
D.~Gaiotto,
\textit{Domain Walls for Two-Dimensional Renormalization Group Flows},
\doi{10.1007/JHEP12(2012)103}{JHEP \textbf{12} (2012) \textbf{103}},
\arxiv{1201.0767}{[1201.0767 [hep-th]]}.

\bibitem[GKa]{Gaiotto:2015zta}
D.~Gaiotto and A.~Kapustin,
\textit{Spin TQFTs and fermionic phases of matter},
\doi{10.1142/S0217751X16450445}{Int.\ J.\ Mod.\ Phys. A \textbf{31} (2016) 1645044},
\arxiv{1505.05856}{[1505.05856 [cond-mat.str-el]]}.

\bibitem[GKu]{Gaiotto:2020iye}
D.~Gaiotto and J.~Kulp,
\textit{Orbifold groupoids},
\doi{10.1007/JHEP02(2021)132}{JHEP \textbf{02} (2021) 132},
\arxiv{2008.05960}{[2008.05960 [hep-th]]}.

\bibitem[GZ]{Ghoshal:1993tm}
S.~Ghoshal and A.B.~Zamolodchikov,
\textit{Boundary S-Matrix and Boundary State in Two-Dimensional
		Integrable Quantum Field Theory},
\doi{10.1142/S0217751X94001552}{Int.\ J.\ Mod.\ Phys.\ {\bf A9} (1994) 3841--3886},
\arxiv{hep-th/9306002}{[hep-th/9306002]}.

\bibitem[HKL]{Huang:2014ixa}
Y.~Z.~Huang, A.A.~Kirillov and J.~Lepowsky,
\textit{Braided tensor categories and extensions of vertex operator algebras},
\doi{10.1007/s00220-015-2292-1}{Commun.\ Math.\ Phys.\ \textbf{337} (2015) 1143--1159},
\arxiv{1406.3420}{[1406.3420 [math.QA]]}.

\bibitem[HNT]{Hsieh:2020uwb}
C.-T.~Hsieh, Y.~Nakayama and Y.~Tachikawa,
\textit{Fermionic minimal models},
\doi{10.1103/PhysRevLett.126.195701}{Phys. Rev. Lett. \textbf{126} (2021) 195701},
\arxiv{2002.12283}{[2002.12283 [cond-mat.str-el]]}.

\bibitem[Hu]{Huang2005}
Y.-Z.~Huang, 
{\it Rigidity and modularity of vertex tensor categories}, 
\doi{10.1142/S0219199708003083}{Commun.\ Contemp.\ Math.\ {\bf 10} (2008) 871--911},
\arxiv{math/0502533}{[math.QA/0502533]}.

\bibitem[IW]{Iles:2014gra}
N.J.~Iles and G.M.T.~Watts,
\textit{Modular properties of characters of the W$_{3}$ algebra},
\doi{10.1007/JHEP01(2016)089}{JHEP \textbf{01} (2016) 089},
\arxiv{1411.4039}{[1411.4039 [hep-th]]}.

\bibitem[Ka]{Kausch:1995py}
  H.G.~Kausch,
  {\it Curiosities at c = -2},
  \arxiv{hep-th/9510149}{hep-th/9510149}.

\bibitem[KCXC]{Kikuchi:2022jbl}
K.~Kikuchi, J.~Chen, F.~Xu and C.M.~Chang,
\textit{Emergent SUSY in two dimensions},
\arxiv{2204.03247}{2204.03247 [hep-th]}.

\bibitem[Ki]{Kirillov:2012pl}
A.A.~Kirillov, 
\textit{On piecewise linear cell decompositions},
\doi{10.2140/agt.2012.12.95}{Algebr.\ Geom.\ Topol.\ \textbf{12} (2012) 95--108},
\arxiv{1009.4227}{[1009.4227 [math.GT]]}.

\bibitem[KR]{Kong:2009inh}
L.~Kong and I.~Runkel,
\textit{Cardy algebras and sewing constraints. I.},
\doi{10.1007/s00220-009-0901-6}{Commun. Math. Phys. \textbf{292} (2009), 871--912},
\arxiv{0807.3356}{[0807.3356 [math.QA]]}.

\bibitem[KS]{KS}
A.~Kapustin, N.~Saulina, 
\textit{Surface operators in 3d topological field theory and 2d rational conformal field theory}, 
in ``Mathematical Foundations of Quantum Field Theory and Perturbative String Theory'',
\href{https://bookstore.ams.org/pspum-83/}{Proc.\ Symp.\ Pure Math.\ \textbf{83} (2011) 175--198},
\arxiv{1012.0911}{[1012.0911 [hep-th]]}.

\bibitem[KTT]{Karch:2019lnn}
A.~Karch, D.~Tong and C.~Turner,
\textit{A Web of 2d Dualities: ${\bf Z}_2$ Gauge Fields and Arf Invariants},
\doi{10.21468/SciPostPhys.7.1.007}{SciPost Phys.\ \textbf{7} (2019) 007},
\arxiv{1902.05550}{[1902.05550 [hep-th]]}.

\bibitem[Ku]{Kulp:2020iet}
J.~Kulp,
\textit{Two More Fermionic Minimal Models},
\doi{10.1007/JHEP03(2021)124}{JHEP \textbf{03} (2021) 124},
\arxiv{2003.04278}{[2003.04278 [hep-th]]}.

\bibitem[La]{Lallouche:2016phd}
 M.~Lallouche. {\em {Internal Reshetikhin-Turaev Topological Quantum Field
  Theories}}. PhD thesis, Université de Montpellier, 2016,
  \url{{https://www.theses.fr/2016MONTS015}}.

\bibitem[LL]{Lepowski:2004}
J.~Lepowsky and H.~Li,
\textit{Introduction to Vertex Operator Algebras and Their Representations},
\doi{10.1007/978-0-8176-8186-9}{Springer, 2004}

\bibitem[LM]{Lawson:1989}
H.B.~Lawson, M.-L.~Michelsohn, 
\textit{Spin Geometry}, Princeton University Press, 1989.

\bibitem[LP]{Lauda:2005wn}
A.D.~Lauda and H.~Pfeiffer,
\textit{Open-closed strings: Two-dimensional extended TQFTs and Frobenius algebras},
\doi{j.topol.2007.11.005}{Topology Appl.\ \textbf{155} (2008) 623--666},
\arxiv{math/0510664}{[math.AT/0510664]}.

\bibitem[MS]{Moore:1989}
G.~Moore and N.~Seiberg,
\textit{Lectures on RCFT}, in: Lee, H.C.\ (ed)
\doi{10.1007/978-1-4615-3802-8_8}{Physics, Geometry and Topology. NATO ASI Series \textbf{238}, Springer, 1990}

\bibitem[Mu]{Mulevicius:2022gce}
V.~Mulevicius,
\textit{Condensation inversion and Witt equivalence via generalised orbifolds},
\arxiv{2206.02611}{2206.02611 [math.QA]}.

\bibitem[MW]{MW}
I.~Makabe and G.M.T.~Watts
\textit{Defects in the Tri-critical Ising model},
\doi{10.1007/JHEP09(2017)013}{J.\ High Energ.\ Phys.\ (2017) 2017:13},
\arxiv{1703.09148}{[1703.09148 [hep-th]]}.


\bibitem[No]{Novak:2015phd}
 S.~Novak,
 \textit{Lattice topological field theories in two dimensions}, 
 PhD thesis, Universit\"at Hamburg, 2015,
 \url{https://ediss.sub.uni-hamburg.de/volltexte/2015/7527}.

\bibitem[NR1]{Novak:2014sp}
 S.~Novak and I.~Runkel, 
\textit{State sum construction of two-dimensional
  topological quantum field theories on spin surfaces},
\doi{10.1142/S0218216515500285}{J.\ Knot 
Theor.\ Ramif.\ \textbf{24} (2015) 1550028},
\arxiv{1402.2839}{[1402.2839 [math.QA]]}.

\bibitem[NR2]{Novak:2015ela}
 S.~Novak and I.~Runkel, 
\textit{Spin from defects in two-dimensional quantum field theory},
\doi{10.1063/1.5129435}{J.\ Math.\ Phys.\ \textbf{61} (2020) 063510},
\arxiv{1506.07547}{[1506.07547 [hep-th]]}.

\bibitem[Pe]{Petkova:1988cy}
V.B.~Petkova,
\textit{Two-dimensional (half) integer spin conformal theories with central charge $C < 1$},
\doi{10.1142/S0217751X88001235}{Int.\ J.\ Mod.\ Phys.\ A \textbf{3} (1988) 2945--2958}.

\bibitem[Po]{Polyakov:1989dm}
A.M.~Polyakov,
\textit{Gauge Transformations and Diffeomorphisms},
\doi{10.1142/S0217751X90000386}{Int.\ J.\ Mod.\ Phys.\ A \textbf{5} (1990) 833--842}.

\bibitem[PZ]{Petkova:2000ip}
V.B.~Petkova and J.B.~Zuber,
{\em Generalised twisted partition functions},
\doi{10.1016/S0370-2693(01)00276-3}{Phys.\ Lett.\ B {\bf 504} (2001) 157--164},
\arxiv{hep-th/0011021}{[hep-th/0011021]}.

\bibitem[RS]{Runkel:2018rs}
I.~Runkel and L.~Szegedy, 
\textit{Topological field theory on r-spin surfaces and the Arf invariant},
\doi{10.1063/5.0037826}{J.\ Math.\ Phys.\ \textbf{62} (2021) 102302},
\arxiv{1802.09978}{[1802.09978 [math.QA]]}.

\bibitem[Ru]{Runkel:2012cf}
I.~Runkel,
{\it A braided monoidal category for free super-bosons},
\doi{10.1063/1.4868467}{J.\ Math.\ Phys.\ {\bf 55} 041702 (2014)}
\arxiv{1209.5554}{[1209.5554 [math.QA]]}.

\bibitem[RWa]{Runkel:2020zgg}
I.~Runkel and G.M.T.~Watts,
\textit{Fermionic CFTs and classifying algebras},
\doi{10.1007/JHEP06(2020)025}{JHEP \textbf{06} (2020) 025},
\arxiv{2001.05055}{[2001.05055 [hep-th]]}.

\bibitem[R-Wi]{Randal:2014rs}
 O.~Randal-Williams, 
 \textit{Homology of the moduli spaces and mapping class
  groups of framed, r-Spin and Pin surfaces},
\doi{10.1112/jtopol/jtt029}{J.\ Topol.\ \textbf{7} (2014) 155--186},
\arxiv{1001.5366}{[1001.5366 [math.GT]]}.

\bibitem[S-P]{Schommer-Pries:2011}
C.J.\ Schommer-Pries,
\textit{The Classification of Two-Dimensional Extended Topological Field Theories},
\arxiv{1112.1000}{1112.1000 [math.AT]}.

\bibitem[StSz]{Stern:2020oc}
W.H.~Stern and L.~Szegedy, 
\textit{Topological field theories on open-closed
  $r$-spin surfaces},
\arxiv{2004.14181}{2004.14181
  [math.QA]}.

\bibitem[Sz]{Szegedy:2018phd}
L.~Szegedy, 
\textit{State-sum construction of two-dimensional functorial field theories}, 
PhD thesis, Universit\"at Hamburg, 2018,
\url{https://ediss.sub.uni-hamburg.de/handle/ediss/7848}.

\bibitem[Tu]{Turaev:1994} V.G.~Turaev, 
{\it Quantum Invariants of Knots and 3-Manifolds},
de Gruyter, 1994.

\bibitem[TW]{Thorngren:2019iar}
R.~Thorngren and Y.~Wang,
\textit{Fusion Category Symmetry I: Anomaly In-Flow and Gapped Phases},
\arxiv{1912.02817}{1912.02817 [hep-th]}.

\bibitem[Xu]{Xu:1998}
Xu, 
\textit{Introduction to Vertex Operator Superalgebras and their Modules}, 
\doi{10.1007/978-94-015-9097-6}{Springer, 1998}.

\bibitem[Za]{Zamolodchikov:1985wn}
A.B.~Zamolodchikov,
\textit{Infinite additional symmetries in two-dimensional conformal quantum field theory},
\doi{10.1007/BF01036128}{Theor.\ Math.\ Phys.\ \textbf{65} (1985), 1205--1213}.

\bibitem[Zh]{Zhu1996}
Y.-C.~Zhu, 
\textit{Modular invariance of vertex operator algebras}, 
\doi{10.1090/S0894-0347-96-00182-8}{J.\ Amer.\ Math.\ Soc.\ {\bf 9} (1996) 237--302}.

\end{thebibliography}
\end{document}